\def\COMPLETE{YES}
\tikzset{terminal/.style=...}
\newcommand{\WHENCOMPLETE}[1]{\ifx\COMPLETE\THISNAMEISDEFINITELYNOTBOUND{}\else{#1}\fi}
\newcommand{\UNLESSCOMPLETE}[1]{\ifx\COMPLETE\THISNAMEISDEFINITELYNOTBOUND{#1}\else{}\fi}
\definecolor{pagecounter}{rgb}{0.5,0.5,0.5}
\definecolor{darkred}{rgb}{0.6,0,0}
\definecolor{darkgreen}{rgb}{0,0.6,0}
\definecolor{darkblue}{rgb}{0,0,0.6}
\definecolor{darkyellow}{rgb}{0.4,0.4,0}
\definecolor{purple}{rgb}{0.4,0,1}
\definecolor{brown}{rgb}{0.6,0.1,0}
\definecolor{white}{rgb}{1,1,1}
\definecolor{black}{rgb}{0,0,0}
\newcommand{\urlsmall}[1]{{\scriptsize\url{#1}}}
\newcommand{\PREMISEWHICHCOULDBEPROVEN}[1]{}
\newcommand{\IGNORE}[1]{}
\newcommand{\INVISIBLE}[1]{{\textcolor{white}{{#1}}}}
\newcommand{\BANANABRACKETS}[1]
  {\mbox{$\llparenthesis {#1} \rrparenthesis$}}
\newcommand{\LIFT}[1]
  {\lfloor{#1}\rfloor}
\newcommand{\EQD}[0]{\triangleq}
\newcommand{\FUTURE}[1]{\mathcal{T}({#1})} 
\definecolor{linkcol}{rgb}{0,0,0.4} 
\definecolor{citecol}{rgb}{0.5,0,0} 
\def\cleardoublepage{\clearpage\if@twoside \ifodd\c@page\else%
  \hbox{}%
  \thispagestyle{empty}
  \newpage%
  \if@twocolumn\hbox{}\newpage\fi\fi\fi}
\newcommand{\SPACER}[0]{\\\INVISIBLE{\small{\tiny{hack!}}}}
\newcommand{\SPACERP}[0]{\SPACER$|\ $}
\newcommand{\SPACERF}[0]{\SPACER\INVISIBLE{$|\ $}}
\newcommand{\DISJOINTUNION}[0]{\uplus}
\newcommand{\BUNDLENAME}[0]
  {\CODE{bundle}\xspace}
\newcommand{\LETNAME}[0]
  {\CODE{let}\xspace}
\newcommand{\PRIMITIVENAME}[0]
  {\CODE{primitive}\xspace}
\newcommand{\CALLNAME}[0]
  {\CODE{call}\xspace}
\newcommand{\CALLINDIRECTNAME}[0]
  {\CODE{call-{\allowbreak}indirect}\xspace}
\newcommand{\IFNAME}[0]
  {\CODE{if}\xspace}
\newcommand{\THENNAME}[0]
  {\CODE{then}\xspace}
\newcommand{\ELSENAME}[0]
  {\CODE{else}\xspace}
\newcommand{\FORKNAME}[0]
  {\CODE{fork}\xspace}
\newcommand{\JOINNAME}[0]
  {\CODE{join}\xspace}
\newcommand{\DEFINEPROCEDURENAME}[0]
  {\CODE{procedure}\xspace}
\newcommand{\RED}[1]
  {\textcolor{red}{#1}}
\newcommand{\EPSILON}[0]{$\varepsilon$\xspace}
\newcommand{\EPSILONZERO}[0]{$\varepsilon_{0}$\xspace}
\newcommand{\EPSILONONE}[0]{$\varepsilon_{1}$\xspace}
\newcommand{\EPSILONZEROHOLE}[0]{$\varepsilon_{0}^{\HOLE}$\xspace}
\newcommand{\LAMBDACALCULUS}[0]{\mbox{$\lambda$-calculus}\xspace}
\newcommand{\PICALCULUS}[0]{\mbox{$\pi$-calculus}\xspace}
\newcommand{\HOLEDES}[0]{\ensuremath{\SET{E}_{\HOLE}}}
\newcommand{\iem}[1]{\index{#1}{\em{#1}}}
\newcommand{\textiti}[1]{\index{#1}{\textit{#1}}}
\newcommand{\ind}[1]{\index{#1}{#1}}
\newcommand{\idef}[1]{\index{#1}{\sl{#1}}}
\newcommand{\TDEF}[1]{\idef{#1}}
\newcommand{\QUOTATION}[3]
      {\begin{quotation}
        {\em #1}\\
        {\INVISIBLE{}\hfill---~{#2}, {#3}}
      \end{quotation}}
\newcommand{\TOENEW}[0]
  {\Longrightarrow_{\SET{E}}}
\newcommand{\TOE}[0]
  {\LINEBREAK\longrightarrow_{\SET{E}}\LINEBREAK}
\newcommand{\TOEP}[0]
  {\LINEBREAK\longrightarrow^{+}_{\SET{E}}\LINEBREAK}
\newcommand{\TOET}[0]
  {\LINEBREAK\longrightarrow_{\SET{E}}^{+}\LINEBREAK}
\newcommand{\TOEST}[0]
  {\LINEBREAK\longrightarrow_{\SET{E}}^{\slashed{\parallel}+}\LINEBREAK}
\newcommand{\TOERT}[0]
  {\LINEBREAK\longrightarrow_{\SET{E}}^{*}\LINEBREAK}
\newcommand{\TOES}[0]
  {\longrightarrow_{\SET{E}}^{\slashed{\parallel}}}
\newcommand{\EVENTUALLYFAILSBECAUSEOF}[1]
  {\Downarrow_{\SET{E}}\FAILURE_{#1}}
\newcommand{\DOESNTFAILBECAUSEOFENVIRONMENTS}[0]
  {\DOESNTFAILBECAUSEOF{\SET{X}}}
\newcommand{\FAILURE}[0]
  {\textrm{\rm \large \textreferencemark}}
\newcommand{\FAILSBECAUSEOF}[1]
  {\longrightarrow_{\SET{E}}\FAILURE_{#1}}
\newcommand{\DOESNTFAILBECAUSEOF}[1]
  {{\slashed{\longrightarrow}}_{\SET{E}}\FAILURE_{#1}}
\newcommand{\FAILSBECAUSEOFDIMENSION}[0]
  {\FAILSBECAUSEOF{\#}}
\newcommand{\FAILSBECAUSEOFPRIMITIVE}[0]
  {\FAILSBECAUSEOF{\SET{P}}}
\newcommand{\FAILSBECAUSEOFENVIRONMENTS}[0]
  {\FAILSBECAUSEOF{\SET{X}}}
\newcommand{\FAILS}[0]
  {\FAILSBECAUSEOF{}}
\newcommand{\DIVERGESE}[0]
  {\Uparrow_{\SET{E}}}
\newcommand{\CONVERGES}[0]
  {\Downarrow}
\newcommand{\CONVERGESE}[0]
  {\CONVERGES_{\SET{E}}}
\newcommand{\CONVERGESES}[0]
  {\CONVERGES_{\SET{E}}^{\slashed{\parallel}}}
\newcommand{\PARALLELCONFIGURATION}[2]
  {{#1} \rhd {#2}}
\newcommand{\PCONF}[2]
  {\PARALLELCONFIGURATION{#1}{#2}}
\newcommand{\REWRITEP}[5]
  {{#1}\ \vdash\ {\PCONF{#2}{#3}}\ \TOE\ {\PCONF{#4}{#5}}}
\newcommand{\SET}[1]
  {\mathbb{#1}}
\newcommand{\NATURALS}{\SET{N}}
\newcommand{\UNION}{\cup}
\newcommand{\SECTION}[0]
  {§}
\newcommand{\SECTIONS}[0]
  {§§}
\newcommand{\UNFILLEDUNPOSITIONEDQED}{\ensuremath{\square}}
\newcommand{\FILLEDUNPOSITIONEDQED}{\ensuremath{\blacksquare}}
\newcommand{\FILLEDQED}{\hfill \ensuremath{\FILLEDUNPOSITIONEDQED}}
\newcommand{\PROOFQED}{\FILLEDQED}
\newcommand{\CODE}[1]{{\rm \texttt{#1}}}
\newcommand{\FILE}[1]{\texttt{#1}}
\newcommand{\UNIT}[0]{$\bullet$}
\newcommand{\HASDIMENSION}{:_{\#}}
\newcommand{\LINEBREAK}[0]
  {\allowbreak}
\newcommand{\DHANDLE}[2]
  {\LINEBREAK{#2}_{#1}}
\newcommand{\DCONSTANT}[2]
  {\LINEBREAK{#2}_{#1}}
\newcommand{\DVARIABLE}[2]
  {\LINEBREAK{#2}_{#1}}
\newcommand{\DBUNDLE}[2]
  {\LINEBREAK[\BUNDLENAME\xspace\ \LINEBREAK{#2}]_{#1}}
\newcommand{\DPRIMITIVE}[3]
  {\LINEBREAK[\PRIMITIVENAME\ \LINEBREAK{{\tt #2}}\ \LINEBREAK{#3}]_{#1}}
\newcommand{\DCALL}[3]
  {\LINEBREAK[\CALLNAME\ \LINEBREAK{{{#2}}}\ \xspace\LINEBREAK{#3}]_{#1}}
\newcommand{\DCALLINDIRECT}[3]
  {\LINEBREAK[\CALLINDIRECTNAME\ \LINEBREAK{{{#2}}}\ \xspace\LINEBREAK{#3}]_{#1}}
\newcommand{\DIFIN}[5]
  {\LINEBREAK[\IFNAME\ \LINEBREAK{#2} \LINEBREAK\in \LINEBREAK\{{#3}\}\ \LINEBREAK\CODE{then}\ \LINEBREAK{#4}\ \LINEBREAK\CODE{else}\ \LINEBREAK{#5}]_{#1}}
\newcommand{\DLET}[4]
  {\LINEBREAK[\LETNAME\ \LINEBREAK{#2}\ \text{\LINEBREAK{\tt be}}\ \LINEBREAK{#3}\ \text{\LINEBREAK{\tt in}}\ \LINEBREAK{#4}]_{#1}}
\newcommand{\DFORK}[3]
  {\LINEBREAK[\FORKNAME\ \LINEBREAK{#2}\ \LINEBREAK{#3}]_{#1}}
\newcommand{\DJOIN}[2]
  {\LINEBREAK[\JOINNAME\ \LINEBREAK{#2}]_{#1}}
\newcommand{\DDEFINENONPROCEDURE}[3]
  {\LINEBREAK[\DEFINENONPROCEDURENAME\ \LINEBREAK{#2}\ \LINEBREAK{#3}]_{#1}}
\newcommand{\DDEFINEPROCEDURE}[3]
  {\LINEBREAK[\DEFINEPROCEDURENAME\CODE{\ \LINEBREAK(}{{{#1}}\ \xspace\LINEBREAK{#2}}\CODE{)\ \LINEBREAK}{#3}]}
\newcommand{\DCALLWITHHOLE}[2]
  {\LINEBREAK[\CODE{call}\ {#2}\ \LINEBREAK{\HOLE}]_{#1}}
\newcommand{\DFORKWITHHOLE}[2]
  {\LINEBREAK[\FORKNAME\ \LINEBREAK{#2}\ \LINEBREAK\HOLE]_{#1}}
\newcommand{\DBUNDLEWITHHOLE}[1]
  {\LINEBREAK[\BUNDLENAME\ \LINEBREAK\HOLE]_{#1}}
\newcommand{\DPRIMITIVEWITHHOLE}[2]
  {\LINEBREAK[\PRIMITIVENAME\ \LINEBREAK{#2}\ \LINEBREAK\HOLE]_{#1}}
\newcommand{\SEQUENCE}[1]
  {\langle {#1} \rangle}
\newcommand{\EMPTYSEQUENCE}[0]
  {\SEQUENCE{}}
\newcommand{\SQMEET}[0]
  {\sqcap}
\newcommand{\SQJOIN}[0]
  {\sqcup}
\newcommand{\SUCHTHAT}[0]
  {:}
\newcommand{\SQL}[0]
  {\sqsubset}
\newcommand{\SQLE}[0]
  {\sqsubseteq}
\newcommand{\SQGE}[0]
  {\sqsupseteq}
\newcommand{\NSQLE}[0]
  {\nsqsubseteq}
\newcommand{\NATURALSB}[0]
  {\NATURALS_{\bot}}
\newcommand{\NATURALSTB}[0]
  {\NATURALS_{\bot}^{\top}}
\newcommand{\NATURALSBT}[0]
  {\NATURALSTB}
\newcommand{\EXPRESSIONS}{\SET{E}}
\newcommand{\PROCEDURES}{\SET{F}}
\newcommand{\PARTIAL}{\rightharpoonup}
\newcommand{\HOLE}{\Box}
\newcommand*{\SUPERIMPOSE}[2]
            {\mathrel{\mathchoice{\hbox{\hbox to 0pt{$\displaystyle{#1}$\hss}$\displaystyle{#2}$}}
                {\hbox{\hbox to 0pt{$\textstyle{#1}$\hss}$\textstyle{#2}$}}
                {\hbox{\hbox to 0pt{$\scriptstyle{#1}$\hss}$\scriptstyle{#2}$}}
                {\hbox{\hbox to 0pt{$\scriptscriptstyle{#1}$\hss}$\scriptscriptstyle{#2}$}}}}
\newcommand{\PARAMETER}[0]
  {\mathunderscore}
\newcommand{\xqed}[1]{%
  \leavevmode\unskip\penalty9999 \hbox{}\nobreak\hfill
  \quad\hbox{\ensuremath{#1}}}
\def\QEDDEFINITION{\xqed{\UNFILLEDUNPOSITIONEDQED}}
\def\QEDNOPROOF{\xqed{\UNFILLEDUNPOSITIONEDQED}}
\def\QEDPROOF{\xqed{\FILLEDUNPOSITIONEDQED}}
\def\QEDAXIOM{\xqed{\UNFILLEDUNPOSITIONEDQED}}
\def\QEDIMPLEMENTATIONNOTE{\QEDDEFINITION}
\def\QEDSYNTACTICCONVENTION{\QEDIMPLEMENTATIONNOTE}
\newcommand{\EMPTYSET}[0]{\varnothing}
\newcommand{\EMPTYSTACK}[0]{\EMPTYSEQUENCE}
\newcommand{\VALUESEPARATOR}[0]{\text{\ensuremath{\wr}}} 
\newcommand{\ACTIVATIONSEPARATOR}[0]{\ensuremath{\ddagger}}
\newcommand{\UNIFY}[0]
  {\equiv}
\newcommand{\UNIFIES}[0]
  {\UNIFY}
\newcommand{\DOESNTUNIFY}[0]
  {\ensuremath{\mathrel{\slashed{\equiv}}}}
\newcommand{\STATE}
  {state\xspace}
\newcommand{\STATES}
  {states\xspace}
\newcommand{\UPDATEENVIRONMENT}[3]
  {{#1}[{#2} \mapsto {#3}]}
\newcommand{\UPDATESTATE}[3]
  {{#1}[{}_{#2}^{#3}]}
\newcommand{\UPDATESTATEIN}[4]
  {\UPDATESTATE{#1}{#2}{{#3}~\mapsto~{#4}}}
\newcommand{\UNNUMBEREDCHAPTER}[1]{\chapter*{#1}\addstarredchapter{#1}}
\newcommand{\UNNUMBEREDSECTION}[1]{\section*{#1}\addcontentsline{toc}{section}{\protect\numberline{}#1}}
\newcommand{\EPSILONGC}[0]{\CODE{epsilongc}\xspace}
\newcommand{\FATBRACKETSWITH}[2]
  {E_{\SET{#1}}\BANANABRACKETS{#2}}
\newcommand{\LOTSA}[1]
  {{#1}s}
\newcommand{\EXPANDE}[1]{\FATBRACKETSWITH{E}{#1}}
\newcommand{\EXPANDCS}[1]{\FATBRACKETSWITH{\LOTSA{C}}{#1}}
\newcommand{\EXPANDXS}[1]{\FATBRACKETSWITH{\LOTSA{X}}{#1}}
\newcommand{\EXPANDES}[1]{\FATBRACKETSWITH{\LOTSA{E}}{#1}}
\newcommand{\BOOTSTRAPPHASE}[1]{{\em ({#1})}}
\newcommand{\WHATEVER}[1]{\CODE{#1}} 
\newcommand{\BETWEENTINYANDSMALL}{\fontsize{9}{11}\selectfont}
\def\TITLE{GNU epsilon}
\def\FRENCHTITLE{GNU epsilon --- un langage de programmation extensible}
\def\SUBTITLE{an extensible programming language}
\let\headruleORIG\headrule
\renewcommand{\headrule}{\color{black} \headruleORIG}
\let\minitocORIG\minitoc
\renewcommand{\minitoc}{\minitocORIG \vspace{1.5em}}
\newtheorem{axiom}[theorem]{Axiom}
\newtheorem{implementationnote}[theorem]{Implementation Note}
\newtheorem{syntacticconvention}[theorem]{Syntactic Convention}
\newcommand*\cleartoleftpage{%
  \clearpage
  \ifodd\value{page}\hbox{}\newpage\fi
}
\begin{document}
\VerbatimFootnotes 

\nocite{growing-a-language}
\nocite{history-of-t}

\pagenumbering{roman}
\WHENCOMPLETE{
\begin{titlepage}
\begin{center}
\noindent {\large {\em UNIVERSITÉ PARIS 13 --- INSTITUT GALILÉE}} \\
\vspace*{1cm}
\noindent {\em \large \underline{THESIS}} \\
\noindent {To obtain the title of} \\
\noindent {\Large DOCTOR OF SCIENCE OF UNIVERSITÉ PARIS 13} \\
\noindent {\bf Specialty: Computer Science} \\
\vspace*{3.5cm}
\noindent {\huge \bf \TITLE} \\
\vspace*{0.2cm}
\noindent {\Large \bf \SUBTITLE} \\
\vspace*{1cm}
\noindent {Presented by {\bf Luca \textsc{Saiu}} to be defended in public on November 19\textsuperscript{th} 2012} \\
\vspace*{3.5cm}
\end{center}
\begin{center}
\begin{tabular}{lll}
{\large \bf Jury:}      &                                       & \\
\\
\textit{Reviewers:}
                        & Emmanuel \textsc{Chailloux}           & Université Pierre et Marie Curie -- Paris 6\\
                        & Michel \textsc{Mauny}                 & ENSTA ParisTech\\
\textit{Advisors:}      & Christophe \textsc{Fouqueré}          & LIPN, Université Paris 13\\
                        & Jean-Vincent \textsc{Loddo}           & LIPN, Université Paris 13\\
\textit{Examiners:}     & Roberto \textsc{Di Cosmo}             & PPS, Université Paris Diderot -- Paris 7\\
                        & Manuel \textsc{Serrano}               & INRIA Sophia-Antipolis\\
                        & Basile \textsc{Starynkevitch}         & CEA LIST/DILS \\
                        & Peter \textsc{Van Roy}                & Université catholique de Louvain, Belgium
                        
\end{tabular}
\\
\vspace*{1.5cm}
\begin{small}
\begin{tabular}{c}
\noindent{\em Titre français~: \FRENCHTITLE} \\
\hline
\noindent{\bf Laboratoire d’Informatique de Paris-Nord, UMR 7030 --- CNRS, Univ. Paris 13}\\
\end{tabular}
\end{small}
\end{center}
\end{titlepage}

\titlepage
\dominitoc
\cleardoublepage
\setcounter{page}{3} 
}


\color{white}
\UNNUMBEREDCHAPTER{Abstract}
\color{black}
\vskip-2.75cm

\UNNUMBEREDSECTION{Abstract}

\index{abstract}

\textit{Reductionism} is a viable strategy for designing and implementing practical
programming languages, leading to solutions which are easier to extend,
experiment with and formally analyze.

We formally specify and implement an extensible programming language, based on
a \textit{minimalistic first-order imperative core language} plus strong
\textit{abstraction mechanisms}, \textit{reflection} and
\textit{self-modification} features.  The language can be extended to very high
levels: by using Lisp-style \textit{macros} and code-to-code
\textit{transforms} which automatically rewrite high-level expressions into
core forms, we define closures and first-class continuations on top of the core.

Non-self-modifying programs can be analyzed and formally reasoned upon, thanks
to the language simple semantics.  We formally develop a \textit{static
  analysis} and prove a \textit{soundness property} with respect to the
dynamic semantics.

We develop a \textit{parallel garbage collector} suitable to multi-core
machines to permit efficient execution of parallel programs.

\paragraph{Keywords:}
programming, language, extensibility, macro, transformation, reflection, bootstrap, interpretation, compilation, parallelism, concurrency, garbage collection

\vskip2cm


\UNNUMBEREDSECTION{Résumé}
\index{french}
\index{résumé}

\index{résumé}

Le \TDEF{réductionnisme} est une technique réaliste de conception et
implantation de vrais langages de programmation, et conduit à des
solutions plus faciles à étendre, expérimenter et analyser.

Nous spécifions formellement et implantons un langage de programmation
extensible, basé sur un \textit{langage-noyau minimaliste impératif du
  premier ordre}, équipé de \textit{mécanismes d'abstraction} forts et
avec des possibilités de \textit{réflexion} et
\textit{auto-modification}.  Le langage peut être étendu à des niveaux
très hauts~: en utilisant des \textit{macros} à la Lisp et des
\textit{transformations de code à code} réécrivant les expressions
étendues en expressions-noyau, nous définissons les clôtures et les
continuations de première classe au dessus du noyau.

Les programmes qui ne s'auto-modifient pas peuvent être analysés
formellement, grâce à la simplicité de la sémantique.  Nous
développons formellement un exemple d'\textit{analyse statique} et
nous prouvons une \textit{propriété de soundness} par apport à la
sémantique dynamique.

Nous développons un \textit{ramasse-miettes parallèle} qui convient aux machines
multi-cœurs, pour permettre l'exécution efficace de programmes parallèles.

\paragraph{Mots-clés~:}
programmation, langage, extensibilité, macro, transformation, reflection, \textit{bootstrap}, interprétation, compilation, parallélisme, concurrence, ramasse-miettes

\rm 

\rm 


\color{white}
\UNNUMBEREDCHAPTER{Dedication}
\color{black}

\begin{quotation}
A large, crowded maze of a building that is just one part of one branch of the
local administration, in the Paris neighborhood.  Under the Summer heat I've
been standing there or somewhere very close since the early morning, awake
since before 6am just for the privilege of being near the front of the line.
It's finally my turn, after half a day spent waiting.  And now she tells me
that no, my \textit{avis d'imposition fiscal} is not a valid
\textit{justificatif de domicile}.  And who cares if they had told me the
opposite in her very office: she has no intention of listening to my
complaints.  I'll have to return another day, with a signed copy of my
landlord's identity card.

After I get back to the main hall near the entrance to arrange that next
appointment I must look as irritated as I am.  The woman at the desk asks me
what happened.  When I repeat to her what I've been told just a few minutes
before, she explodes.  \textit{---What!?  Come with me.}  Shouting that she'll
be right back to the people waiting behind me, she abandons her place and
angrily storms away to another office.  I follow her.

We sit. Between an half-muttered insult to her colleague and the next she asks
me for my papers one by one, and checks each of them.  She has to come back to
her own work: fueled by adrenaline she's thorough but efficient.
Last, I hand her my \textit{avis d'imposition fiscal}.  She compares the
address, looks at the date, and skims the rest. \textit{---Yes,
  it's perfectly fine!  That---} labeling her colleague by a final one-word
definition.  She signs my dossier herself, overriding or simply ignoring the
other's authority.  I'll have to go pay the tax at the cash register, yes,
right there on the left and yes, then I'm done.  I barely have the time to
thank her before she runs back to her desk.  \vskip2.5cm
\INVISIBLE{[We need more rage in the world.  May her example inspire others.]}
\end{quotation}
To that blondish, forty-something woman who was working in a government
building near Paris during the Summer of 2010, whatever her name is, I dedicate
this work.
\\
\\
May her rage inspire others to do the right thing.
\begin{flushright}
Luca Saiu, December 2012
\end{flushright}

\UNNUMBEREDCHAPTER{Acknowledgments}
At the beginning of it all, one day of March, I left the place where I was
born, alone on my already old car loaded with everything I owned, headed for
Paris.  Not speaking the language at the time I arrived in France as an
awkward foreigner depending on
others' kindness.  Luckily many people were
indeed kind to me.

And soon enough I fell in love with France: to my eyes it looked like
everything a good country should be.  Yet, every time I expressed this
enthusiasm (in English) to my first French friends, I was gently reminded that
my vision was partial, my experience too limited.  In other words I was a naïf
simpleton who hadn't understood anything yet --- they said it
more softly,
\textit{ça va sans dire}, but the idea was clear.

My optimistic naïveté, the will of trusting strangers \textit{because it's the
  right thing} is more of a philosophical choice of mine --- which is to say,
isn't actually real.  But for the rest I have to admit that those people,
advising me to be more cautious in my judgements, were not wrong.

Years have since passed.  I met many people in France, all across the spectrum
from the constructively angry civil servant of the dedication
to her
opponent
\textit{the lady of many nicknames}.  Of everybody I met here
I'd say that more people were closer to the former.
Everything considered, I'm grateful for the opportunities I got at
Université Paris 13.
\\
\\
Thanks to Jean-Vincent Loddo, who trusted me and made all of this possible by
hiring me to work on
Marionnet
during my first six
months in France, and then co-directed my doctorate.  Jean-Vincent has also
been a friend, very reliable and always patient, knowing when to insist
and when to let me be.  I've never been able to approach my other thesis
co-director, Christophe Fouqueré, quite as closely from a human point of
view.  This is entirely my fault, since Christophe has shown the same good
qualities and has been just as helpful and understanding as Jean-Vincent
whenever I've asked; he even spontaneously offered me concrete help when
I was dealing with practical problems, such as the first time my car got broken-in
and I had to go to the police, still not speaking French --- and then
there are the other episodes he remembers.  To both Christophe and
Jean-Vincent, thanks.  Thanks for real.
\\
\\
I particularly wish to thank some other people in the lab for their
friendliness and warmth: Laura Giambruno, Sébastien Guérif, Pierre Boudes
(who also offered to help in the same break-in case, and not only then), and Franck
Butelle.  Christian Codognet, Laure Petrucci and Adrian Tanasa also deserve a
high place in this list.  And Sophie Toulouse, and Hayat Cheballah.

I'll miss the many roommates I had in office A207 during these years:
Jean-Vincent, and then in particular Virgile Mogbil, Daniele Terreni and Giulio Manzonetto.
And Sylviane Schwer.

I had an especially good relationship with the other people working on the
Marionnet project: of course Jean-Vincent, and then especially Jonathan
Roudiere, Marco Stronati, Abdelouahed, and Franck Butelle.

Sébastien Guérif is a friend and possibly the best colleague I've ever had:
meticulous, very hardworking, interested in feedback.  I've learned a lot
about teaching while doing
class and lab exercises for his \textit{Programmation Impérative} course.  I
also enjoyed the company of the other colleagues doing exercises for
Sébastien's courses, particularly Daniel Kayser, Christophe Tollu, Antoine
Rozenknop, Hanane Allaoua, Manisha Pujari, and again Laura Giambruno.

Thanks to Jean-Yves Moyen as well, for approving
more or less
every idea I had
in organizing the \textit{Programmation Fonctionnelle Avancée} course for two
years,
implicitly
accepting to scale down his role to just lab exercises, despite his experience.
It was fun.

Some students of mine were actually interested in what I had to say.  I wish to
thank them for that sparkle in their eyes saying ``that's cool'' at the
same time as ``I get it''.  Thanks to my bad students as well, if they really
tried --- but no thanks at all to cheaters.

I fondly remember a couple idle afternoons spent with Christine Recanati
speaking about philosophy.
I had other agreeable exchanges,
geeky, intellectual or simply human, with
James Avery,
Roberto Wolfler Calvo,
Jean-Christophe Dubacq,
Jalila Sadki Fenzar,
Mario Valencia Pabon,
Marco Pedicini.
Thanks to Micaela Mayero, Patrick Baillot, Damiano Mazza, Érick Alphonse and
Yue Ma as well.
\\
\\
Thanks to Morena Olivieri, who's a good person but doesn't believe it yet.
Thanks for setting me in a good enough mood to finally decide to quit smoking.
I can see in retrospect how stressful quitting was; Morena and many others
encouraged and supported me during that time, particularly José Marchesi and
Matteo Golfarini.

I quit in 2010 but of course I still like many smokers, or people who were
smokers at the time.  Thanks to the friends and colleagues at the lab who
helped me socialize and learn the language when meeting outside for a cigarette
and some human companionship: first of all Sophie Toulouse and Hayat Cheballah,
but also
Jonathan Roudiere (again),
Jonathan Van Puymbrouck,
Pierre Boudes,
Amine Hemdane,
Vlady Ravelomanana,
Frédéric Toumazet,
Pascal Coupey,
Ferhan Pekergin,
Erwan Moreau,
Paolo Di Giamberardino,
Hichem Kenniche,
Haïfa Zargayouna,
and the funny guy with a white ponytail who introduced himself as ``a mathematician''.
\\
\\
Thanks to the old-time friends from the University of Pisa, for the good memories;
those who staid, those who escaped like I did, and those who still talk about
escaping someday.
Thanks in particular to Matteo Golfarini for being a good friend, and also for
having me so many times in his place near Madrid; and to my other good friends
Carlo Bertolli (who by the way also let me stay in his place in London) and Francesco
Nidito.

Some of the others I've not seen in years, but I'm sure they won't feel
offended for being included here: Riccardo Vagli, Dario Russo, Marco Righi,
Maria Cristina Favini Berti, Alessio Baldaccini, Antonio Mirarchi, Giandomenico
Napolitano, Di\-mi\-tri Dini, Robert Alfonsi, Massimo Cecchi, Federico Ruzzier,
Andrea Venturi, Erika Rossi, Marco Peccianti, Alessio Mazzanti, Sandra Zimei,
Marlis Valentini,
Eliana Anderlini.  And of course Massimiliano Brocchini, who first spoke about
me to Jean-Vincent during his time in France.

Then there are some new and found-again friends such as Marco Stronati, Enrico
Rubboli and Roberto Pasini --- Yet more people who ended up escaping, now
that I'm thinking of it.

And Gabriella, who still remembers me. Thanks.
\\
\\
Thanks to Richard Stallman for changing the world by starting the free software
movement, for setting an example with his integrity, and for GNU.  Several
fellow GNU hackers,
particularly José Marchesi and Alfred Szmidt, have also become close friends to
me; and I wish to remember the former ``rabbit'' from the UK, who might
or might not want to be mentioned by name here.

I remember with pleasure a few beautiful conversations stretching from the
evening nearly to the next morning, for example the debate on Algebra we had in
Spain with José, Alfred and Aleksander Morgado, going on long after the pub
closed and we were sent out at some crazy hour like 4am.  Or the long discussions
about GCC optimizations and LTO in a Bruxelles hotel hall with Dodji Seketeli,
Alfred, José and Laurent Guerby, over what was apparently the only bad beer to
be found in Belgium; I had to speak at FOSDEM the next morning --- and in the
end, after getting barely any sleep, the talk even came out nice.

Of the other GNU people I have to remember at least Nacho Gonzáles,
Giuseppe Scrivano,
Ludovic Courtès, Andy Wingo,
Sylvain Beucler,
Bruno Haible (who noticed some surprising similarity between \EPSILONZERO and the M4 preprocessor),
Werner Koch,
Marcus Brinkmann,
Neal Walfield (who had graciously offered to proofread this thesis, even if in the
end I finished too late to let him),
Henrik Sandklef,
Simon Josefsson,
Juan-Pedro Bolívar Puente,
Juan Antonio Añel Cabanelas,
Reuben Thomas,
Ralf the Autotools guy,
Andreas Enge,
Daiki Ueno, Neil Jerram, Stepan Kasal,
Paolo Carlini.
Christian Grothoff, Nathan Evans and the Lilypond guy showed genuine interest
in epsilon and kept asking me about it in The Hague.
And of course I can't forget Fred and George, who deserve to be mentioned for
their crucial non-programming contribution to GNU recutils.
When I was hit by a small emergency René Mérou was kind enough
to lend me money after knowing me for less than one day, an act of trust
justified by our common ideas.
I still find this sense of commonality very moving.

Karl Berry was extremely patient in helping me with legal counseling about the
epsilon copyright status; for this I also have to thank Christophe Fouqueré
(again), Donald Robertson, the FSF lawyers, José Marchesi and Richard Stallman.

Thanks to the people at the FSF, FSFE, FFII, April and La Quadrature du Net.
Keep going, friends.
Even if we've not won all battles yet, we're writing history.
\\
\\
I'll be forever grateful to my masters, who changed my life.
I've recently rediscovered one of
my
first ones, Nemo Galletti, from whose work I
learned procedural abstraction at the age of ten\footnote{The story at
  \url{https://blog.ageinghacker.net/posts/5/} has a nice second part that I've
  not found the time to write yet.}.  Some outstanding professors at the
University of Pisa, particularly Marco Vanneschi and Giorgio Levi, shaped
much of the good part of what I am now.
From Marco Bellia, my Master's advisor in Pisa,
I regret having learned less than I could have.
I was inspired by the frighteningly talented OCaml people
Xavier Leroy, Damien Doligez, Jérôme Vouillon, Pierre Weis;
and most of all by
the great masters I've never met:
Abelson and Sussman,
Chuck Moore,
Paul Graham,
Richard Gabriel.
And I'll remember John McCarthy.
\\
A long time ago when I was young and inexperienced, after reading what some
people I admired were writing about Lisp, I decided to study it.  I started with
Steele's \textit{CLtL2} \cite{cltl2}, but without getting much at my first reading.  I
was to become a convinced Lisper only years later, after first understanding functional
programming from \textit{Functional Programming using Caml Light}
\cite{functional-programming-using-caml-light--mauny} by Michel Mauny; that's
the book from which I learned ``everything''.
I was very impressed following my first meeting with Mr.\@ Mauny in Paris, many
years later, casually discussing with him without knowing his identity
after somebody's seminar.  Jean-Vincent asked me if I knew who that man who'd just left was.  I
said no.  ``Michel Mauny''.  I turned my head, but he was already gone.
Now that this ordeal is over and I have no more conflicts of interest, I feel I
can finally let Mr.\@ Mauny know this.  Thanks.
\\
\\
Among the other jury people I've particularly enjoyed the voluminous, useful
and friendly feedback I got from Basile Starynkevitch --- an unofficial third
review.  Along with Manuel Serrano, Basile actually tried the software and
reported a couple minor bugs.  Basile, one of these days I'll actually e-mail
your own great master, as you suggested so many times.

And thanks to Emmanuel Chailloux, who even
accepted to meet me \textit{on August 17\textsuperscript{th}}
to have his paper copy of this document.
\\
\\
Juanma Díaz acquired and set up the server machine we share with Matteo and did
the initial installation of the virtual system images running on top of it;
later he also migrated the host a couple times.  As a command-line guy I
administer my virtual machine myself, but I have to recognize that it's also
thanks to Juanma if my personal host
\href{http://ageinghacker.net}{\nolinkurl{ageinghacker.net}}
has been more
reliable than many ``professional'' servers.
\\
\\
Thanks to my landlords Tina and Mohamed Serhane for the rare virtue of being
cheery and discreet at the same time.
To Nadège.
To Filippo Bellissima, still my favorite philosopher, for his good will.
To James Randi (who has something interesting to say about PhDs) and the JREF
people, for keeping up their mission against woo-woo.  To Randall Munroe,
Mohammed Jones and Scott Adams for their webcomics.  Scott Pakin taught me a
neat \TeX\ hack on \href{news:comp.text.tex}{\nolinkurl{comp.text.tex}}, even
if I didn't use it in this final version of the document.
\\
\\
Thanks to Richard Stallman for GNU (again) and particularly for GNU Emacs, to Donald
Knuth for \TeX, to Leslie Lamport for \LaTeX, to Lars Magne Ingebrigtsen for
Gnus (using such a great client for e-mail and Usenet has actually made me a
happier man), to Michael Stapelberg for \texttt{i3}; plus the thousands of
other people who contributed.

Thanks to Wikimedia Foundation, to the fellow contributors to the
English Wikipedia and to the contributors (I don't dare myself yet) to the
French Wiktionary.
\\
\\
Thanks to the people at INRIA Saclay where I'm a postdoc now:
Fabrice Le Fessant, who trusted me, and the colleagues
Çağdaş
Bozman, Pierre Chambart and Michael Laporte.

I apologize once more to Christophe Cérin, Jean-Paul Smets and Camille Coti
for backing down from their postdoc offer at the last minute.  If I hadn't had
this unforeseen exciting opportunity at INRIA, I'd have been happy to work with
them instead.
\\
\\
There are a few more people I'd really want to name, but since I'm honestly not
sure if they would appreciate it I'll avoid mentioning them.  In any case they
will not mistake my respect of their privacy for an omission out of spite.
There are indeed many people I'd be supposed to thank here ``by custom'' and
yet are glaringly absent from this list
\textit{because they don't deserve to be in it}
 --- which sums up pretty well what I think of customs; but I wouldn't
imagine for one moment that people who are dear to me could doubt my sincere
affection for them.
\\
\\
\INVISIBLE{[The world is burning.  Run!]}
\\
I regard the design of \EPSILON as a quite personal issue, which I
think will be obvious from the document.  It's the expression of my philosophy of
what a programming language should be, for in the end the place in the
design space where one chooses to look ends up being more a matter of personal
preference than anything else.  I've developed my opinions in years of reading
and discussing, mostly influenced by the Scheme, Forth and Common Lisp
communities.  Interestingly, very few of the people working physically close to
me ever seemed to share my views.  When comparing opinions --- which has
been useful in any case --- I've resisted interferences and rejected many
suggestions; most strongly, the mistaken idea that static typing should always
come at the expense of everything else.  The final shape of the \EPSILONZERO
and \EPSILONONE languages as described here is a product of \textit{my} ideas,
something I'm willing to take responsibility for, pros and cons and all.

Jean-Vincent and Christophe, thanks for the freedom you gave me to explore and
to do what I believed.  All in all, it was fun.
\begin{flushright}
Luca Saiu, December 2012
\end{flushright}

\tableofcontents

\mainmatter
\pagenumbering{arabic}
\chapter{Introduction}
\label{chapter-introduction}


Reductionism is a viable
 strategy
 for designing and implementing practical
programming languages, leading to solutions which are easier to extend,
experiment with and formally analyze.


\minitoc

Programming languages have proliferated nearly since the beginning of Computer Science.
However, despite the sheer number of dialects with different syntaxes and details,
 there is
still comparatively little variety in programming models and
paradigms --- yet programming problems remain at least as hard as ever.

In order to really innovate in this field researchers need extensible languages which
are easy to modify and experiment with, but at the same time not limited to
simplified idealizations.
Bringing the same idea out of the lab and into practice, an expert
end-user should be able to bend and adapt the language to make it fit her
problem, rather than the opposite.
\\
\\
For this to be possible a language has to start out simple and open-ended:
able to express different paradigms, yet not hardwired for any;
easy to reason about in a rigorous way when needed, without
being unconditionally constrained.

\section{Programming language taxonomy}
\label{taxonomy}
Languages may be classified along at least three mostly orthogonal axes:
 {\em
  paradigm}, {\em typing policy} and {\em concurrency model}. In the following
we limit ourselves to a quick overview; reviews articles such as 
\cite{programming-paradigm-for-dummies} contain a much more 
detailed topology, with extensive examples.
\\
Furthermore, not all of the relevant concepts have satisfactory formal
definitions; but in this whirlwind tour we are going
to renounce most pretenses of being exact, accepting to speak of very
general concepts in terms somewhat vague.

\subsection{Paradigm}
Many popular languages such as C are \TDEF{imperative}. Imperative
languages, based on destructive mutation of state and explicit control flow,
trace their origin to Turing machines, and in practice are easy to understand
in terms of the underlying machine language.

\TDEF{Functional} languages such as Haskell, ML and Lisp, shunning or at least
limiting the occurrences of assignment statements, are basically sugared versions of some
\LAMBDACALCULUS variant; their level of abstraction is much farther away
from the hardware than imperative languages. Most functional languages are
\TDEF{higher-order}, i.e.\@ they allow to pass functions as parameters to other
functions, and to return functions as results.

At an even higher level, \TDEF{relational} and \TDEF{constraint} programming
attempt to support a declarative, rather than algorithmic, style by dealing
with sequences of data in an extensional fashion and having the user exploit
data relations instead of building explicit data structures. Such languages
tend to be based on particularly clean and simple mathematical theories such as
relational algebra (the SQL query language) or some subset of the Predicate
Calculus (\ind{Prolog}\footnote{Despite not being meant as general-purpose
  languages, we argue that database query languages are actually much better
  examples of declarative non-algorithmic programming than logic languages:
  query languages allow to reason about objects and their relations, {\em
    completely abstracting away} from data structures and even more importantly
  search strategies, i.e.\ algorithms. By contrast programming in Prolog in
  practice requires to constantly keep in mind its operational semantics, for
  reasons of efficiency and even correctness: for example just reordering two
  Horn clauses, which from a logic point of view simply yields an uninteresting
  equivalent variation, can easily change complexity from linear to
  exponential or dramatically alter termination properties and the number of
  results.}).

\TDEF{Object-oriented} languages such as Smalltalk are more pragmatic:
they encourage modelling data structures upon real-world entities by making the
``behavior'' or a computational object a function of the object identity, and
making it easy to
define related classes of objects by only specifying their differences.

Other
 families including \TDEF{concatenative languages} such as \ind{Forth} and
 \ind{Postscript},
and
\TDEF{array languages} like \ind{APL} can be more or less directly traced back
to one of the four main groups.

\subsection{\ind{Typing policy}}
Another orthogonal attribute of programming languages is their support for
\ind{typing}: programs written in \TDEF{statically-typed} languages are
mechanically analyzed prior to execution in order to check that some \ind{soundness
property} is satisfied, thus preventing certain errors from ever occurring at
runtime: the compiler will simply reject any ``suspicious'' program ---
invariably including some false positives. \ind{ML} and \ind{Haskell} are examples of statically
typed languages with \TDEF{strong} type systems; many popular languages such as \ind{C}, \ind{C++}
and \ind{Java} are also statically-typed, but their very complex semantics do not allow the
extensive static checks which are relatively easy to perform in
\ind{functional} languages\footnote{It could be argued that the idea of passing
  parameters to and receiving results from a function lends itself to reasoning
  about compatibility; sequential side effects, on the other hand, always ``compose
  well'' with one another in a superficial sense, but may lead to more subtle
  violations of implied invariants.},
and many more runtime errors remain possible: we speak of \TDEF{weak} static
type checking.

By contrast \TDEF{dynamically-typed} languages such as \ind{Lisp}, \ind{Perl} and \ind{Python}
perform checks {\em at runtime} before executing each operation subject to
failure, typically at some cost in performance but gaining expressivity in
comparison to the static-typing case; of course dynamic typing by itself cannot
statically guarantee any soundness property.

Some low-level languages such as \ind{Forth} and most assemblies are \TDEF{untyped}: each
datum is interpreted as-is without any check or conversion, assuming that it is always
a valid operand for its operator; such languages trading safety for
efficiency may not be suitable for all applications, yet they also definitely
have a place in programming.
\\
\\
A subset of statically and strongly-typed languages including ML and
Haskell employs \TDEF{type inference} to automatically reconstruct type
declarations from programs, rather than have the programmer provide them; this is convenient,
but since
 type inference is undecidable for the most powerful type systems
 \cite{basic-polymorphic-type-checking}, relying on it alone reduces the expressivity of the language.
Type inference is harder to employ in non-functional languages, and
possibly because of cultural biases it is not widely
used with
 weak type systems.

Even if mainstream languages seem resistant to adopt any such technique, the
idea of static checking can be extended from typing to other properties
computable via (necessarily partial for nontrivial languages) \ind{static analyses} such as \ind{termination},
time and space \ind{complexity}, or \ind{escaping}.

\subsection{Concurrency model}
Another aspect which we must at least cite constitutes another whole
axis in the language space topology: the \TDEF{model of
concurrency} (synchronous versus asynchronous, message-passing versus shared
state) also has a deep impact on the language semantics, and not only on the
implementation of the runtime system.

The concurrency model of all the mainstream languages named above is
\TDEF{asynchronous shared-state}: concurrent ``threads'' read and mutate the same
global state, explicitly synchronizing accesses when needed.
The other important concurrency model is \TDEF{message-passing}: threads or
processes don't share state, but cooperate by exchanging messages.
Erlang supports message-passing only; in the other languages mentioned above message-passing
can be implemented on top of shared state, or is available as a thin ``wrapper'' over
the inter-process communication primitives provided by the operating system.
Languages with a \TDEF{synchronous} concurrency level (at the software level)
are a research topic \cite{esterel,sl,lucid-synchrone} but have yet to see major application.
\\
\\
Synchronous models are better suited to formal reasoning:
in his formal calculi for concurrency
CCS \cite{ccs--milner} and
\PICALCULUS\ \cite{pi-calculus--original--first-part--milner,pi-calculus--original--second-part--milner}
Milner considered synchronous communication as primitive, and represented asynchronous
processes by adding (synchronous) ``queue processes''.

On the other hand modern parallel hardware is strongly asynchronous, and
truly parallel synchronous implementations on top of it tend to be 
prohibitively inefficient.

Older-generation languages tend to support no notion of concurrency at all.

\section{Hybridization and complexity}
\label{naive-question}\textit{Why are there so many programming languages? Couldn't they just agree on one?}
We have all heard this naïve question.

The fact that such a question can only come from a beginner is evident from our
experience of how coding in even surprisingly similar languages ``feels''
different: for example about\footnote{The different \iem{funarg} problem
  \cite{funarg} workarounds bear a much more limited impact in practice.} 
 the only real semantic difference between \ind{Pascal} and \ind{C} is
the different \textit{strength} of their type systems --- static in both cases; yet the
subjective ``experiences'' of writing in the two languages are {\em far} apart,
as any programmer having used both can witness.
That said, we also have to recognize that many differences between languages
are in fact incidental, due to backward compatibility concerns or cultural inertia.

Another, deeper, answer to the beginner question is that different problems
call for different languages. But then, why not merging the greatest
possible number of features from different styles into one ``perfect'' language?
In fact there exists such a trend:
languages inspire and influence one another, and
some recent ones such as Oz \cite{oz} make a point of offering support for
{\em as many different paradigms as possible}; but even without looking at such
extreme examples, a move towards \iem{hybridization} is evident in most recent
languages: contemporary languages such as \ind{C++}, \ind{Java} and \ind{C\#}, but also the
popular ``\ind{scripting language}s'' \ind{Python}, \ind{Perl}, \ind{JavaScript}, incorporate at least
{\em two} paradigms --- imperative and object-oriented, with some elements
of functional programming being more slowly accepted into the mainstream; in
fact it could be argued that object-oriented programming is itself firmly
rooted in the imperative paradigm, with only some restricted patterns taken from functional
languages\footnote{The idea of \iem{late binding} at the heart of object-orientation
would be easy to emulate with data structures containing functions;
in fact virtual method tables are typically implemented as chained
arrays of pointers to functions, where functions have access to a ``struct''
holding field values: in other words, chained \iem{closure} arrays: if
method lookup fails in one class, a link is followed and the next one is tried,
up the inheritance chain --- or occasionally \textit{sideways}, in case
of multiple inheritance.
As a different kind of hybridization, other non-imperative languages now include
object-oriented features: the ML dialect OCaml
\cite{RemyVouillon98,Chailloux:2000}
managed to also add objects in a mostly-functional language and still keep its
type system strong and static, at the cost of some complexity.
More pragmatically, most modern SQL systems include some kind of
object-oriented extension, more or less well-integrated in the relational
paradigm.
}.
Most object-oriented languages also have hybrid type systems, with some static
{\em and} some dynamic checks. Often {\em both} message-passing and shared-state are
available as concurrency models.

\subsection{Hybridization limits}
There are clear limits to hybridization: some features regarded as desirable in
different communities are mutually incompatible, if not opposite. For
example both having a static typing discipline and \textit{not} having one may be
reasonably argued to be useful features: one solution permits to prove run-time
properties of a program before running it, the other improves expressivity.
In the same spirit, the useful properties of purely
functional languages \cite{functional-programming,haskell98} would be
destroyed by adding an assignment operator.
\\
\\
But even if we forget for a moment that many possible sets of features are just
incompatible, designing a strongly hybrid language entails giving up on finding
simple
 answers to programming problems, and just hoping that programmers will be
better prepared for the unknown with a bigger \ind{toolbox}: the bigger,
the better. This pragmatic approach hits
its limit when a language becomes too big to be \iem{intellectually manageable}
--- at which stage the language may
or may not
be adequate for most tasks.

As a further objection against hybridization, working with such large \ind{chimera}s
makes harder to experiment with language features by building prototypes
--- in fact it may not be by chance that most such experimentation has
historically taken place in Lisp dialects, which we will see to be the closest to
our model.

\section{\em Growing a language}

Guy L. Steele dealt with the issue of the ``size'' of programming languages in his
famous OOPSLA 1998 keynote talk ``\ind{Growing a language}'' \cite{growing-a-language}. In a
wonderfully
deconstructionist exploit, Steele constrained his own English
to follow the same rigid rules of formal languages in which \textit{every
  non-primitive ``word'' needs to be explicitly defined before use}.
By taking as primitives only English monosyllables he tried to communicate the
feel of using a very small (programming) language.
\\
\\
The main point of the speech was the idea of working with a language powerful
enough to be \textit{\ind{evolve}d} by the user community under the
coordination of a maintainer; and possibly even more important, \textit{the user herself}
would bend the language to her needs, as part of the daily
practice of programming.

Even after several polysyllable definitions Steele's original prose retains its
peculiar charm:
\label{growing-a-language-quotation}
\label{growing-a-language--quotation}
\QUOTATION
{[...]
a language design of the old school is a pattern for programs. But now we need
to `go meta.' We should now think of a language design as a pattern for language designs,
a tool for making more tools of the same kind.
[...]
My point is that a good programmer in these times does not just write programs. A
good programmer builds a working vocabulary. In other words, a good programmer
does language design, though not from scratch, but by building on the frame of a base
language.
}
{Guy L. Steele Jr.}
{\cite{growing-a-language}}
As the initial iteration of this process Steele
proposed Java with some minor changes --- thus at least a middle-sized language; in his
opinion intentionally small languages such as the Lisp dialect Scheme, originally his own
brainchild \cite{scheme,rrs},
would remain hopelessly inadequate for
modern tasks, as he tried to
suggest with the one-syllable metaphor.
\\
\\
More than a decade has since passed, and the envisaged extension of Java by the
community has not materialized\footnote{Ironically, something close to Steele's vision of
  mostly-decentralized extensions has materialized \textit{in Scheme} with SRFIs \cite{scheme-SRFI};
  development is active, at least in the comparatively small scale of
  the Scheme community.}.
\\
\\
The idea of ``growing a language'' remains a valid strategy, if not
even the only realistic one. Without
overlooking
 this important engineering insight, we find it worth to
spend some words on what we do \textit{not} agree with in Steele's
presentation.
Anyway, since much of the controversy will center around Java, to Steele's credit
we must at least cite his later contributions based on
\ind{Fortress} \cite{a-growable-language--abstract}, sharing the same idea of a
``growth plan'' but with a more suitable core language.  That is the point: what
makes Fortress a better match than Java for the task?  And what yet superior
alternative can we extrapolate from this trend?

\subsection{Procedural and syntactic abstraction}
\label{procedural-and-syntactic-abstraction}
In our opinion it is not by chance that the crucial insight for finding the
missing ingredient was provided in \cite{wizard}, co-authored by
Gerald J. Sussman
---
the other father of Scheme.

Our sketch of a language topology is reasonable but fails to
really capture the actual expressive power of a language, as until this moment
we have ignored a fundamental and orthogonal class of language features:
\TDEF{means of abstraction}, called ``patterns'' in Steele's quotation
above.
\\
Starting from the very first chapter, \cite{wizard} speaks about means of
abstraction as ways of \TDEF{naming} patterns of code, possibly with
parameters, so that they may be re-used at will as if they were primitive.  In
other words a mean of abstraction allows to \TDEF{factor} away some code, so
that we can reason at a higher level and ignore irrelevant details unless
needed.  The idea is by necessity vague as it potentially extends to any point
of our language topology and beyond, yet we do not feel much danger of
ambiguity: any programmer will promptly recognize abstraction features. We are
speaking about \TDEF{procedural abstractions} (including all the obvious
generalizations to functions, predicates, modules and classes);
and, as a separate group, \TDEF{syntactic abstractions} such as macros.
No other example of a syntactic abstraction feature comes to mind and in fact no other
case is in common use\footnote{Some historical Lisp dialects used \textiti{fexprs}
  \cite{against-fexprs--pitman}
  as an alternative to macros; so does a new dialect called Kernel
  \cite{fexprs--shutt}, resurrecting them thirty years later.
  Fexprs are also discussed at some length as an implementation device in \cite[pp.~25-26]{rrs}.
}
 --- but we are going to introduce another
kind in
\SECTION\ref{transform-informal-idea}, and more fully in
\SECTION\ref{transform-chapter}. 
Modern high-level
languages support more or less adequate procedural
  abstractions, including higher order (C++ supports ``lambdas'' as per its latest Standard \cite{c++0x}
  and even Java should follow suit),
 but most are still very weak in syntactic
abstraction; we are now proceeding to explain why this is important.

\subsection{Syntactic abstraction and core-based languages: macros}
In order to clarify
what we mean by syntactic
abstraction, we are now going to informally present a classic example.

\label{macro-examples--introduction}
Let us assume an imperative language similar to
Pascal or C, with a \texttt{while}{\allowbreak}..{\allowbreak}\texttt{do}{\allowbreak}..{\allowbreak}\texttt{done} loop but no
\texttt{repeat}..\texttt{until}; we want to extend the language so that we
can write:
\begin{Verbatim}
procedure print_at_least_once (n):
  variable x = 1;
  repeat
    print("x is ");
    print(x);
    newline();
    x := x + 1;
  until x > n;
end.
\end{Verbatim}
With a suitable macro system we could define \texttt{repeat}..\texttt{until} as
syntactic sugar, so that for any sequence of statements
$s$ and any expression $e$, the loop ``\CODE{repeat $s$ until $e$}'' is rewritten
into ``\CODE{$s$;\ while not ($e$) do $s$;\ done}''.  Handwaving away some trivial
details which are not relevant here, we could say that the macro definition is:
\begin{Verbatim}
macro repeat <s> until <e>:
  <s>;
  while not (<e>) do
    <s>;
  done;
end.
\end{Verbatim}
Using the \texttt{repeat}..\texttt{until} macro, the macroexpander stage of the
compiler would automatically rewrite the above definition of
\texttt{print\_at\_least\_once} into:
\pagebreak
\begin{Verbatim}
procedure print_at_least_once (n):
  variable x = 1;
  print("x is ");
  print(x);
  newline();
  x := x + 1;
  while not (x > n) do
    print("x is ");
    print(x);
    newline();
    x := x + 1;
  done;
end.
\end{Verbatim}
Unsurprisingly enough, the rewritten code contains a repetition: the macro call
``factors away'' an undesired regularity which would make the code harder to
maintain if written directly in the extended form; even in such a trivial example as
this the code using the macro looks easier to read, and its purpose more
explicit.  Also notice how macroexpansion had only \textit{local} effect: the
macro call has been replaced, but not its surrounding code.

It is worth stressing how our \texttt{repeat}..\texttt{until} loop functionality
can \textit{not} be defined as a procedure, unless the language supports higher
order or some more exotic language feature such as passing statements as
 parameters; and even
where those features were available, notice how macroexpansion might still
produce a more efficient result and possibly be safer, as it takes place
entirely \textit{before} runtime.
\\
Our sample macro simply glues together pieces of code without
performing any substantial computation. This is not always the case:
several languages, mostly Lisp dialects, have \textit{Turing-complete}
  macro systems \cite{r6rs,common-lisp,emacs-lisp}. Other languages such as C
are limited to weak token-based preprocessors \cite{c99} whose power does not
reach much further than defining \texttt{repeat}..\texttt{until} (with uglier
syntax). Other languages, Java included, do not support syntactic abstraction at
all.  We argue that precisely this weakness of Java
has prevented Steele's plan from materializing.
\\
Indeed the \TDEF{bottom-up}
programming style in which the language is extended to suit the problem, as
implicit in Steele's quotation,
is very typical in the Lisp world --- and quite alien to most other communities.
\\
\\
Macros are so helpful in building syntactic sugar for existing languages
that some Lisp dialects such as Scheme
 are in
fact \TDEF{core-based}
\cite[\SECTION1.9, \SECTION B]{r6rs}: 
implementations may choose to develop at a low level some core forms
and define the rest of the language as a set of macros,
eventually rewriting programs into combinations of core forms only.
\\
\\
As an obvious such example, a block in a higher-order functional language can
always be rewritten into the immediate application of an anonymous function
with the bound variables as its formals and the block body as its body, and
with the bound expressions as actual parameters:
``\CODE{let a = 1 + 2, b = 3 + 4 in a + b}''
 has the same operational\footnote{Despite the vagueness entailed by not having specified any
  particular semantics we have to at least recognize the fact that what is
  equivalent \textit{operationally} might not be under a corresponding \textit{static}
    semantics; in particular we might have good reasons for using different
    type rules in the case of the expanded form. Such an observation is not new at all:
    Milner had already recognized in
    \cite[\SECTION3.5]{a-theory-of-type-polymorphism-in-programming--milner}
    what is now called
    \textiti{let-polymorphism} \cite[\SECTION22.7]{type-theory--pierce}.
    Here we intentionally disregard any static semantics, delaying the
    justification to 
    \SECTION\ref{chapter-static-semantics}.
    \IGNORE{I'm not very fond of this phrase ``operational behavior'', but here's
    something I've learned from Dale Miller: \textit{saying ``equivalent''
      without an adverb is a symptom of sloppy thinking}. I think these were
    exactly his words, and I completely agree.  Now, something like
    ``operationally equivalent'' is what I really mean, but the context is so
    vague that I can't really refer to the operational semantics of
    anything. And what about that ``operational'' adjective? Saying just
    ``behavior'' without an adjective sounds just as sloppy.  \\ What I mean,
    of course, is that my rewritten \CODE{let} works as intended also with a
    call-by-value, left-to-right strategy (and right-to-left too, by the way).}
    This same remark also applies to the following examples in this subsection.}
behavior as
``\CODE{($\lambda$ a b .\ a + b) (1 + 2) (3 + 4)}''; since the language needs
$\lambda$ and function application anyway, we can define \CODE{let} as a
macro rather than having it as a primitive form, obtaining a simpler core language.
\\
As the number of bound variables is arbitrary, the sample macro language we
showed above can not express the rewrite quite adequately; yet the task is
considered very ordinary in Lisp dialects, which to ease metaprogramming \textit{use the same syntax for
programs and data structures}.  Deferring the explanation of details
to \SECTION\ref{macro-chapter}, we show here a possible definition of
\CODE{let} just to highlight its simplicity\footnote{We have omitted
  the check verifying that bindings are two elements long. Apart from
  that, the definition is perfectly realistic.}:
\begin{Verbatim}
(define-macro (let bindings . body)
  `((lambda ,(map car bindings)
      ,@body)
    ,@(map cadr bindings)))
\end{Verbatim}
This version of \CODE{let} binds variables ``in parallel'': defined
expressions have no visibility of bound variables.
The alternative ``sequential-binding'' block known as \CODE{let*} in Lisp is also easy
to obtain by macroexpansion, in this case by rewriting it into nested trivial
parallel-binding blocks in which the \textit{outermost} \CODE{let} binds the
\textit{first}
variable bound by \CODE{let*}.  Recursive macros can still be very simple:
\begin{Verbatim}
(define-macro (let* bindings . body)
  (if (null? bindings)
      `(let () ,@body)
      `(let ((,(caar bindings) ,(cadar bindings)))
         (let* ,(cdr bindings)
           ,@body))))
\end{Verbatim}
\CODE{let} follows the intended evaluation order under a
\textit{call-by-value} strategy: fully reducing all the operands before the
application forces to evaluate the body only after all bound expressions. Much
in the same way, \CODE{let*} constrains the evaluation order so that bound
expressions are reduced top-to-bottom.
\\
\\
There are many other similar examples: the short-circuiting left-to-right
versions of the \CODE{and} and \CODE{or} operators are easy to express with
conditionals: ``\CODE{a and b}'' has the same behavior as ``\CODE{if a then b else
  false}'', while ``\CODE{a or b}'' can be rewritten into ``\CODE{if a then true else
  b}''.
\\
\\
We can go even further: why having a conditional at all? By Church-encoding
booleans so that
``\CODE{true}'' is
``\CODE{$\lambda x y . x$}''
and
``\CODE{false}'' is ``\CODE{$\lambda x y . y$}'',
and taking ``\UNIT'' as the element of a unit type (or indeed as any object),
we can define
``\CODE{if a then b else c}'' as syntactic sugar for
``\CODE{((a ($\lambda z$.b) ($\lambda z$.c)) \UNIT)}'', for some variable $z$ not
occurring free in \CODE{b} and \CODE{c}. Again, $\lambda$ and the 
function application with \UNIT\ allow the conditional to reduce as expected also under
call-by-value.

\subsection{Transforms as syntactic abstraction}
\label{transform-informal-idea}
\label{transformation-informal-idea}
\label{transformations-informal-idea}
The \TDEF{continuation} of a subexpression of a program is a function or
procedure with side effects which, given the result of the subexpression,
returns the final result of the program \cite{scheme,compiling-with-continuations}.

Programs can be \textit{automatically} rewritten into \TDEF{Continuation-Passing Style}, a normal
form making all continuations explicit as $\lambda$-terms.  For example, let the continuation of 
$a + 2$ be $\kappa$; then by using one of the transformations in \cite{transforms--leroy} we obtain
$(\lambda x . (\lambda y . \kappa (x + y)) 2) a$ as its CPS version.  It is not too difficult to see how both versions
 yield the same result:
\begin{itemize}
\item First we evaluate $a$, passing it to its continuation $\lambda x . ((\lambda y . \kappa (x + y)) 2)$;
\item the continuation of $a$ binds it to $x$, and passes $2$ to its continuation \linebreak$\lambda y . (\kappa (x + y))$;
\item the continuation of $2$ binds it to $y$, evaluates the sum of $x$ and $y$, and passes the result on to $\kappa$;
\item $\kappa$ provides the sum of $x$ and $y$ to the rest of the computation, which will finally yield the result.
\end{itemize}
One desirable feature of the CPS form is its independence from the evaluation
strategy: in particular, the reduction sequence above holds for both
call-by-value and call-by-name.  Because of this and other useful properties,
CPS may be convenient to use in compilers as an intermediate form to perform some
semantic-preserving optimizations
\cite{rabbit--steele,compiling-with-continuations,orbit--kranz-phd-thesis}.  However, since our main
interest here is program expressivity, we are particularly interested in
\TDEF{first-class continuations}: a syntactic form such as \CODE{call/cc}
permits to access continuations \textit{as data} in an untransformed program, performing ``jumps''
into or out from expressions \cite{r6rs}.  The \CODE{call/cc} form can also be rewritten into
an ordinary $\lambda$-term along with the rest, by the same transformation which turns the program
into CPS.
\\
\\
First-class continuations are famously counter-intuitive and difficult to
employ directly, but they can simulate powerful control features such as
\textit{exceptions}, \textit{coroutines}, \textit{generators}, and even
\textit{backtracking}; by using macros we can \textit{syntactically abstract}
away the implementation of these control features, and provide a simple
high-level syntax.

A prerequisite for doing this is, of course, hiding the transformed program
from the user.  As already evident from the trivial example above, CPS-transformed
programs are long and tedious to read: the user should simply use
\CODE{call/cc} (or, better, syntactic forms reducing to \CODE{call/cc} uses) in
\textit{direct-style}, untransformed programs.  We already stated that the transformation can
be automatic, and that it supports \CODE{call/cc} as well --- hence such a kind
of syntactic abstraction is clearly possible.  But
\textit{can we define a CPS transformation using macros?}
\\
The answer is \textbf{no}: the result of CPS-transforming an expression depends
on its \textit{context}, while macros only have access to their parameters.  If
we are to support global program rewritings such as CPS, we need to introduce a
second syntactic abstraction feature: we call it the
\TDEF{transform}\footnote{Some controversy remains among English purists about
  the use of the term ``transform'' versus ``transformation''; according to
  some, a transform is \textit{the result of} a transformation.  We admit that
  even in Computer Science the use of the term ``transform'' for \textit{the
    function operating the transformation} is not universal, but we still
  prefer the shorter form.}.  Transforms map syntactic objects into other
syntactic objects, and can be employed either before defining a global object,
or retroactively on an existing program.

Transforms can be quite useful \cite{transforms--leroy}: just as it is the case for CPS, the
\textit{Closure Conversion} process rewriting $\lambda$-terms into explicit \textit{closures}
\cite{defunctionalization,modern-compiler-implementation-in-ml} may require
contextual information: unless we want to close over globals\footnote{ML
  dialects do in fact close over ``global'' variables as well, but we find that
  this choice complicates interactive programming, sometimes yielding ``unexpected'' results in case of variable redefinitions.  As a matter of personal taste we tend to prefer the solution of Common Lisp and Scheme which close over nonlocals only, not including globals.  In practice we suppose that the ML behavior is dictated by the need not to invalidate the results of previous type inferences whenever a global is redefined.}, when building a closure we need to know the set
of variables which are bound at that program point.

To reiterate, by composing two transforms it is possible \textit{to build a
  language supporting first-class continuations based on a core not even
containing anonymous $\lambda$-terms}.
\\
\\
Program transformations are commonly seen in formal mathematical presentations,
but to our knowledge they have not been available as an abstraction tool in any
general-purpose programming language up to this point.



\subsection{Why reductionism}
It should be clear by now that taking the core-based approach to its
logical extreme yields a very simple core language.  Anyway before committing
to that route it may be worth to pause and consider our ultimate purpose, and
of course the tradeoffs involved.
\\
\\
First of all a small core language is easy to reason about, particularly if program
analysis is automated ---
which had better be,
at a time when
programs as written by humans can be millions of lines long.
 Moreover the core language will tend to be
easier to ``get right'', as a small number of features will reduce the chances
of unforeseen bad interactions. A naïve implementation will also be quick to
build.

\label{tracking-source-locations--introduction}On the other hand, as the core language will not be usable in practice without
several layers of extensions, the problem of tracking source locations becomes
relevant: a user will normally write in the extended language and expect the programming
system to refer the extended program \textit{as she wrote it} in terms of file names,
line numbers and syntactic forms: for example error messages referring the final
transformed program would prove very hard to understand\footnote{Actually,
  simple partial solutions to this problem have been known for a long
  time. For example the C language preprocessor generates a
  \CODE{\#line} directive in the output mentioning the original file name
  and line whenever the source of the output changes; this is enough information
  for the
  compiler to map each element of the single stream of code it receives back to its
  original location. Anyway debugging C code using macros remains notoriously
  hard: one reason is the lack of a similar output-to-input mapping at the
  level of syntactic forms.}.
Just a little more subtly, static analyses will often need to refer to the non-primitive
forms of some higher-level intermediate languages, instead of the
core\footnote{Again, an example is \CODE{let} under Hindley-Milner type inference,
  which could be defined with a macro such as the one above, but would benefit
  from being typed with \textit{let-polymorphism}, differently from generic
  function calls.  As another example,
  a CPS transform can encode \textit{first-class
  continuations} into anonymous functions --- but in a static type system,
  continuations would need their own separate typing rules.}:
hiding details is the whole point of abstractions.
The obvious implementation will also be inefficient, as evident from the
examples in \SECTION\ref{macro-examples--introduction}
and \SECTION\ref{transformations-informal-idea}; however, since the code will start small
and manageable, it will be less hard to introduce optimizations where needed.
\\
\\
We have mentioned pros and cons; in fact we argue that \textit{all} the
objections above can be answered,
but even this it not
essential in our view:
we believe that any or all of
the problems above would still be
offset
by the crucial advantage of obtaining an \TDEF{open-ended} language,
able to grow in unanticipated directions.

In order to achieve this ultimate goal we need \textit{both} a small core
language, and strong syntactic abstractions.

The choice above is a conscious restriction on the region of the design space
we are setting to explore.  Other choices are certainly possible, and a couple
have been tackled in the past, with interesting results.

\subsection{Related languages}
The most famous
example of this design as well as a source of inspiration for this work, again,
is Scheme \cite{r6rs}: anyway we have to remark how the core contains much more
than a functional language\footnote{Tom Lord's failed proposal for the new
Scheme Standard
--- before his alleged expulsion from Working Group 1 ---
would have been philosophically closer to our
vision, despite the very different core.  Lord's core language ``WG0 Scheme'' would have used \textit{fexprs} \cite{against-fexprs--pitman,fexprs--shutt} and reified
environments.  His own recount is at \url{http://lambda-the-ultimate.org/node/3861#comment-57967}.}, including complex features such as first-class
continuations, which could \textit{not} be re-implemented by using Scheme's
syntactic abstraction alone.  Despite its beauty Scheme's syntactic abstraction
system itself is also quite complex, based as it is on hygienic macros
\cite{hygienic-macros--kohlbecker}. The macro system does not look easy to factor,
as shown by the experience of \ind{\CODE{psyntax}}: \CODE{psyntax} is a
compiler translating Scheme with macros into pure Scheme, which is itself
written in Scheme \textit{with macros} and hence needs to be bootstrapped with
a pre-compiled version.  \CODE{Psyntax} is elegant but, by the author's
admission \cite[\SECTION18]{syntax-case}, not trivial.
\\
\\
As our second example of a reductionistic design, Forth
\cite{forth94,forth200x} is about as obvious: its basic mechanism is imperative
state mutation involving a fixed set of global stacks, with no need for actual
expressions: ``\CODE{42}'' is \textit{an imperative statement pushing a number
  on a stack}, and the ``\CODE{+}'' \TDEF{word} replaces the two topmost stack
elements with their sum.  All words, predefined or not, are zero-parameter
zero-result procedures with imperative stack effects.  Even control features
such as loops are defined as stack operations, involving the stack normally
used for procedure return addresses.

Defining a word involves temporarily switching to a ``compile state'', in which
each encountered word is taken from the program input stream and appended to
the current definition rather than being executed immediately; to implement
this, all word definitions begin and end with the state-switching words
``\CODE{:}''\ and ``\CODE{;}''. The words\footnote{\RED{2016 correction: I have
    since learned that ``\CODE{)}'' in its role shown above is not actually a
    word.  In actuality the definition of ``\CODE{(}'' discards all the input
    text up to and including the first ``\CODE{)}'' character, in a way
    analogous to, for example, ``\CODE{s"}'' and ``\CODE{"}''; this solution
    avoids the need for the comment closing character to be delimited with
    whitespace.  The comparison with
    ``\CODE{:}''\ and ``\CODE{;}'' remains valid, despite the slightly
    different nature of the comment solution.}} ``\CODE{(}'' and ``\CODE{)}'',
respectively opening and closing a comment, work in the same way.  No syntactic
structure exists at a scale larger than individual words.


Forth is an unusual language somewhat defying classification, and it is
debatable whether abstraction features such as changing ``state'' even count as
syntactic --- it might be the case that very strong procedural abstraction may
partially compensate for the absence of syntactic abstraction features, like
higher order does in functional programming\footnote{Our favorite example is
  \CODE{with-mutex} (for example as in \cite[\SECTION{Mutexes and Condition
      Variables}]{guile}) in a shared-state concurrent language with
  exceptions: \CODE{with-mutex} executes some given statements in a critical
  section so that a mutex is acquired at entry and released at exit,
  {\textit{including when an uncaught exception causes a jump out of the
    critical section}}.  It is easy to define \CODE{with-mutex} as a macro, and
  where macros are not available it can still be simulated using a higher-order
  procedure, at some loss of elegance.  Without either of these features, the
  user is forced to duplicate code.}.
Syntactic of not, Forth abstraction features have proved to be effective and
they do allow to abstract to a high level, despite the language being often
used on the bare metal without even an operating system.
Due to its simplicity, like Lisp,
Forth has been independently re-implemented many times, by building some core
primitive words in assembly and then writing the rest of the system in itself.
The language is so
small that \textit{hardware} implementations exist
\cite{stack-computers--koopman,greenarrays}, and its key proponents such as Chuck
Moore exhibit a cultural tendency to reject all conventional software as bloated
and hopelessly overcomplicated \cite{chuck}.
\\
\\
Finally some classic object-oriented systems such as Smalltalk are fairly
minimalistic as well, and provide relatively strong procedural abstraction.
Their example suggests \TDEF{reflection} as a further strategy to help build
complex programs: in object-oriented systems the state of each computational
entity in the running program is available to the program itself, which can
query objects for their interfaces at run time.  Such runtime type information
also provides the foundation of dynamic method dispatching, but we find this
style of late binding to be less interesting for our purposes, serving better
as a practical modelling tool than as a foundation for a core language.

\section{Our solution}
What kind of language do we want?  In such a vast design space there is no
clear answer, and committing a decision appears dangerous.  Even with no
breakthrough in sight at this particular time, our topology
of \SECTION\ref{taxonomy} might even get enriched by entirely new dimensions in
the future.

{Of course we \textit{do} have opinions about what kind of language would
  be better to solve the software crisis, and we will not even try too hard to hide
  our personal preferences as the description unfolds; but opinions are not science.}
Lacking a silver bullet,
the best course of action seems to leave our design
open-ended and follow in software the lesson of RISC, eschewing any particular
focus or specialization in exchange for wider applicability.
\\
\\
In order to achieve Steele's vision in \cite{growing-a-language}, our language:
\begin{itemize}
\item
will be built on a very small core, like Forth, but in a form more amenable to formal reasoning;
\item
will provide strong syntactic abstraction features, like Scheme does, plus transforms, in the interest of expressivity;
\item
will provide reflection, like object-oriented systems;
\item
\label{static-analyses-are-not-mandatory}
will not depend on either static or dynamic type checks in the core; such systems can be added as extensions;
\item
is meant to be practical, and efficiently implementable.
\end{itemize}
We call our language ``\EPSILON'', following the convention of naming small
variables in Mathematical Analysis.  When written  in the Latin alphabet
as ``epsilon'', the initial ``e'' in its name should always be lowercase.

A \TDEF{personality} is a language made of the \EPSILON core plus extensions,
in analogy with research operating systems implementing several different APIs
on top of the same microkernel \cite{personalities}.  Personalities may reach
very far from the core, as our transform examples above show.
\\
\\
We anticipate the development of complex\footnote{The composition of extensions
  is a difficult problem, for which no general solution is apparent:
  see \SECTION\ref{extension-composition}.} and widely diverging personalities, viewing the
emergence of incompatible dialects, so feared in some communities, rather as a
sign of health.

It is unfortunate but very possible that in a setting where a
strongly extensible language is adopted, a new separation of programmers and
meta-programmers (as personality developers) will follow the existing divide
between programmers and language implementors; we do not claim to be able to
cover such a cultural gap, but we mean to substantially ease the work of the
second group.

The Programming Languages discipline needs more experimentation and
prototyping.  Let people play, and the language will grow.
\section{Summary}

Programming languages have traditionally been diverse in paradigm, typing
policy and concurrency model.  The current trend of hybridization makes
languages more expressive, but also much harder to reason about; moreover most
current languages remain difficult to extend and
 lacking in syntactic abstraction
features.

In keeping with the philosophy of Scheme but bringing it much further, we
propose the new programming language \EPSILON as an example of an alternative
style of language definition in which strong abstraction capabilities allow the
end user to express the needed linguistic features as translations into an
extremely simple core language which is easy to
 reason about.
A \TDEF{personality} is a library of language extensions, in fact defining
a new language in \EPSILON\ itself.

We argue in favor of language experimentation, recognizing dialect
proliferation as beneficial in the long term.

\chapter{The core language\ \EPSILONZERO}
\label{chapter-epsilonzero}
\label{chapter-epsilon-zero}
\label{epsilon-zero-chapter}
\label{epsilonzero-chapter}
In this chapter we are going to formally describe the core language
\EPSILONZERO by giving a small-step operational semantics for it and stating under which
conditions an implementation is bound to behave according to the semantics.

As the foundation of much of the rest of this work, the specification will be used
in \SECTION\ref{reflection-chapter} for describing the language reflective features,
in \SECTION\ref{dimension-analysis-chapter} to prove correct a static analysis,
and in \SECTION\ref{syntactic-extension-chapter} for defining syntactic
extension semantics.

\minitoc

\section{Features and rationale}
Our core language \EPSILONZERO must be easy to reason about and efficiently
compilable, has to include reflective features providing access to the program
itself, and allow for parallelism.
On the other hand the language does \textit{not} need to be especially friendly to
human users, since programmers will normally access it by means of higher-level
syntactic extensions.

Satisfying such a
set of requirements yields an
idiosyncratic language whose extreme simplicity risks being overlooked at a
first glance, obscured by some slightly unusual design choices.
\\
\\
Before formally specifying \EPSILONZERO's syntax and semantics, it is worth to clarify the
rationale of some design decisions.

\subsection{First order}
\EPSILONZERO is a first-order call-by-value expression-based imperative
language with mutually-recursive procedures accepting zero or more parameters
and returning zero or more results, where procedures are globally defined in a
flat namespace.

The language is \TDEF{first-order}: no anonymous procedures exist, and
procedures can not be passed as parameters or returned as results\footnote{We
  are going to relax this restriction in the implementation for efficiency's
  sake; anyway, as shown in \SECTION\ref{call-indirect},
  it will be trivial to automatically transform any program using ``procedure
  pointers'' into an equivalent first-order program.  Closures
  will be remarkably more involved to define (see \SECTION\ref{closure-conversion-transform}).}.

Variable references are trivial to resolve: a block form is provided for
binding \TDEF{local} variables, which take precedence over procedure
\TDEF{parameters}, which in their turn take precedence over \TDEF{global
  variables}.  Since variables bound in other procedures are never accessible
\textit{no other scoping rule is necessary}.

Expressions return values and are allowed to have side effects; no looping form
exists, and since recursion is only permitted at the top level among global
procedures no explicit fixpoint operator is needed.  This sets apart
\EPSILONZERO from most functional languages, as the language of \EPSILONZERO
expressions not referring global procedures is \textit{not} Turing-complete.
\\
\\
The language exhibits relatively low-level features, making it easy to write a
simple compiler with a clear efficiency model for non-self-modifying programs;
Control Flow Analysis is also trivial, since all callees are explicitly
identified by name at call sites.  No escape mechanism such as exceptions,
\CODE{longjmp} or first-class continuations is provided at this level, so
evaluation strictly follows an intuitive stack discipline; in fact \EPSILONZERO
is \textit{stack-implementable}, and
after macroexpansion and transforms have run, the residual \EPSILONZERO program
does not necessarily require garbage collection.

\subsection{Reflection}

The current set of procedures is part of the global state of \EPSILONZERO, and
procedure definitions are accessible to the program for both reading and
writing; this allows a program to analyze and modify itself.
\\
\\
\textit{Compilation} in \EPSILONZERO consists in examining the current global
state in terms of data and procedures, and producing as output a low-level
program which, when executed, will reproduce the current state, in a style
reminiscent of some Smalltalk systems and the Emacs \CODE{unexec} hack
\cite[\SECTION{}Building Emacs]{emacs-lisp}.

An \EPSILONZERO compiler can be an ordinary set of \EPSILONZERO definitions,
running on top of the interpreter; in this sense we can say that the compiler,
if any, is part of the program being compiled, rather than an external tool;
and in the same way the user is free to build other \textit{meta-}level tools
such as code analyzers, transformers or optimizers.

\subsection{Handles}
\label{handle-introduction}
Since a program has to reason about itself, \EPSILONZERO needs some mechanism for
unambiguously referring to \TDEF{program points}, also distinguishing different
occurrences of otherwise identical syntactic forms.  For this reason each syntactic form
contains a unique identifier which we call \TDEF{handle}, the only requirement
being that each expression of a program
have 
a different handle.
\\
At the implementation level it is reasonable to think of handles as unboxed
integers or pointers to unique objects, but the specific nature of handles as a
data type is immaterial: in practice the only relevant feature of a handle is
its identity.

Handles are contained in expressions at all nesting levels, so that
subexpressions at any depth may be referred by global names.
\\
\\
It is easy to associate information to handles, typically using global tables.

\subsection{Primitives}
The language specification should be complemented with a set of ``predefined''
primitive operators and data types for such operators to act upon, integer
arithmetics being the obvious example; other useful primitives include
memory allocation and side effects, and input/output.
Primitives may accept parameters, return results and affect the global state
but, the rationale being analyzability, they may \textit{not} alter program
control; this prevents ``jumping'' operations of the kind of exceptions, \CODE{longjmp} \cite{c99} and \CODE{call/cc}
\cite{r6rs} from being implemented \textit{as primitives}.

In the following we will not assume any particular set of primitives, limiting
ourselves to some reasonable constraints which the particular primitives have
to satisfy.

We do not dwell further on the specification of primitives which are to be
implemented at low level, in practice using C or assembly language; a formal
semantics of such low-level definitions is outside the scope of the present work.
{When working with any particular \EPSILONZERO or \EPSILON program, we will
always take the set of available primitives as fixed.}

\subsection{Bundles}
We allow \EPSILONZERO procedures and expressions to \TDEF{return any number of
  results}, including zero; such as decision being more a concession to efficiency
\cite{compiling-standard-ml-to-java-bytecodes,types-are-calling-conventions},
than an attempt at restoring symmetry between
input and output.

A \TDEF{bundle} is an ordered sequence of values which may be the result of a
computation.
\label{bundles-are-expressible-but-not-denotable}
The only feature distinguishing a bundle from an ordinary $n$-uple
or list is the fact that bundles are not treated as data structures and in
particular are \textit{expressible but not denotable}: an \EPSILONZERO variable
can only refer to \textit{one} object, even if an expression is allowed to
return a bundle of any size; the rationale being, of course, that no
bundle data structures need to be expensively allocated and destroyed at
runtime: each separate bundle element will be simply assigned a stack cell or
register, possibly not even consecutively: no single ``value'' represents the
whole bundle.

In this sense bundles bear resemblance to the the Common Lisp and Scheme
\textit{multiple values} feature \cite{common-lisp,r6rs}, with the important
difference that in \EPSILONZERO callers do \emph{not} ignore all results
except the first one by default.

In order to work on bundle components, \EPSILONZERO's block form binds
up to
\textit{as many variables as the dimension of the bundle} that its bound expression
evaluates to; hence \EPSILONZERO blocks also serve the purpose of ``destructuring''
bundles, which in practice simply means locally naming their components.

\label{quotient-remainder-example}
For example if a ``\CODE{quotient-remainder}'' primitive returned both the quotient and the
remainder of two naturals
(and indeed many hardware architectures provide such a machine instruction) a
block could compute the quotient and remainder of some parameters, name the
results respectively $x$ and $y$, and evaluate a body in an 
environment where such variables are visible. It is worth to stress that at
runtime this naming does \textit{not} entail any moving, copying or -- worse -- memory allocation.

\label{let-blocks-can-ignore-some-values}
It is useful in practice not to always name \textit{all} the components of a
bundle, in particular for using nested blocks to simulate a statement sequence
where the results of the intermediate steps are irrelevant; more in general,
often one wants to ignore the result of a subexpression.
\\
\\
Bundles do complicate somewhat expression composition and are a possible cause of
errors, but the performance gain they offer seems hard to
obtain automatically by compiler optimization only.
Recursive procedures returning more than one result seem a
quite compelling example.

Of course personality implementors aiming for very simple extensions are always
free
\textit{not} to use bundles, or for that matter any other \EPSILONZERO feature.

\subsection{Parallel features}
The parallel features of \EPSILONZERO appear mundane
compared to some of the points above, limited as they are to creating
\TDEF{futures} associated to asynchronous \TDEF{threads}, and extracting the result of
a given future when waiting for its computation to terminate.

The system lends itself to both shared memory and message-passing, depending on
primitives. In complex personalities aiming at high efficiency on large parallel
machines or clusters, it is reasonable to expect that both styles will be used
at different levels.
\\
\\
Again, parallel features introduce some complications into \EPSILONZERO but are
too ``fundamental'' to be left out and then meaningfully reintroduced as
language extensions.

\section{Syntax}
We are now ready to formally specify the syntax of \EPSILONZERO expressions,
and establish some terminology about subexpressions.

\label{where-we-speak-of-numerable-sets}
Let
the \TDEF{set of variables} $\SET{X}$,
the \TDEF{set of procedure names} $\SET{F}$,
the \TDEF{set of primitive names} $\SET{\Pi}$,
the \TDEF{set of handles} $\SET{H}$
and
the \TDEF{set of thread identifiers} $\SET{T}$
be any numerable sets.  By convention we will use the following metavariables,
possibly with decorations, to represent objects of the respective sets:
$x \in \SET{X}$ for \TDEF{variables},
$f \in \SET{F}$ for \TDEF{procedure names},
$\pi \in \SET{\Pi}$ for \TDEF{primitive names},
$h \in \SET{H}$ for \TDEF{handles},
and
$t \in \SET{T}$ for \TDEF{thread identifiers}.



\label{where-we-hint-at-the-value-notation}
Since the actual nature of ``values'' is irrelevant for
the purposes of this chapter but has some other deep ramifications, we postpone
its discussion until \SECTION\ref{the-nature-of-values}; as of now we
simply speak of a \TDEF{set of values} $\SET{C}$, using the metavariable
$c \in \SET{C}$
for representing its elements.

For our examples in this chapter it will suffice to
just use natural numbers, writing ``$\mathcal{N}(n)$'' to represent
$n \in \NATURALS$,
booleans $b \in \{\CODE{\#t}, \CODE{\#f}\}$ written as ``$\mathcal{B}(b)$''
\textit{pointers} or memory addresses $a$ written as ``$\mathcal{A}(a)$''
and
\textit{thread identifiers} or \TDEF{futures} $t$ written as ``$\mathcal{T}(t)$''.


\begin{definition}[\EPSILONZERO syntax]
\label{epsilonzero-syntax}
\label{epsilonzero-grammar}
\label{epsilonzero-syntax-definition}
\label{epsilonzero-grammar-definition}
We define an \TDEF{\EPSILONZERO expression}
according to the following grammar:
\\
{{\rm
$e ::=$
\SPACERF$\DVARIABLE{h}{x}$
\SPACERP$\DCONSTANT{h}{c}$
\SPACERP$\DLET{h}{x^{*}}{e}{e}$
\SPACERP$\DCALL{h}{f}{e^{*}}$
\SPACERP$\DPRIMITIVE{h}{\pi}{e^{*}}$
\SPACERP$\DIFIN{h}{e}{c^{*}}{e}{e}$
\SPACERP$\DFORK{h}{f}{e^{*}}$
\SPACERP$\DJOIN{h}{e}$
\SPACERP$\DBUNDLE{h}{e^{*}}$}}

We call each separate production right-hand side an \TDEF{expression form} or \TDEF{form}.

We call the first two cases a \TDEF{variable} and a literal \TDEF{constant},
respectively. A \LETNAME \TDEF{block} contains zero or more distinct \TDEF{bound variables}, a
\TDEF{bound expression}, and a \TDEF{body}. A \TDEF{procedure call} \CALLNAME
expression mentions a \TDEF{procedure name} and zero or more \TDEF{actual
  parameters}. Very similarly, a \TDEF{primitive call} mentions a
\TDEF{primitive name}, and zero or more actual parameters. The
\TDEF{conditional} form \IFNAME comprises a
\TDEF{discriminand} expression, zero or more \TDEF{conditional cases}, and finally the
\TDEF{then branch} and \TDEF{else branch} expressions. A \TDEF{fork} expression
has the
same syntax as a procedure call, while a \TDEF{join} expression simply contains one
\TDEF{future expression}. A \TDEF{bundle expression} contains zero or more
\TDEF{bundle items}.

Each expression and its subexpressions, at all levels, contain unique
\TDEF{handles}.

We define $\SET{E}$ to be the set of all \EPSILONZERO expressions; we use the
metavariable $e$, possibly with decorations, to represent its elements.
\QEDDEFINITION
\end{definition}
The grammar in Definition~\ref{epsilonzero-syntax-definition} should be hardly
surprising at this point, except possibly for the shape of the
conditional and fork expressions.
\\
\\
The conditional expression shape is actually another small concession to
efficiency: operationally, the discriminand expression is evaluated and
compared to all the given conditional cases: if the discriminand evaluates to
one of the given constants then the conditional reduces to the then branch,
otherwise it reduces to the else branch. In many cases this kind of expression,
when nested, is easy to optimize into multi-way conditional branches
implemented as jump tables or balanced comparison trees.

Of course an \EPSILONZERO\ \IFNAME expression can also simulate an ordinary two-way
McCarthy conditional,
by using a boolean literal as its only conditional
case. Writing the \textit{false} literal as \CODE{\#f} in the style of Scheme, we can
simulate the two-way conditional ``$(e \to e', e'')$''
by
${\DIFIN{{h_{0}}}{e}{\mathcal{B}(\CODE{\#f})}{e''}{e'}}$, for some handle $h_0$.
Of course reasonable personalities will define their own friendlier conditionals.
\\
\\
\label{fork-and-nonlocals}As for the \FORKNAME expression, at a first look the form given above might appear
gratuitously complicated compared to an alternative containing only one
``asynchronous expression''. Anyway such an
alternative would be difficult to evaluate, as the asynchronous expression
could then refer variables bound in the original thread, which would effectively
become nonlocals. For this reason, as elsewhere in \EPSILONZERO, we chose 
a more constrained syntax
without too much fear of
inconveniencing the user: personalities will provide higher-level fork operators.
\\
\\
Facilities for \textit{defining} procedures will be dealt with
in \SECTION\ref{reflection-chapter}.

\subsection{Meta-syntactic conventions for expressions}
Since every syntactic object contains a handle, independently of the syntactic
category or the specific case, when identifying particular components of a
syntactic form instance we may explicitly specify a (meta-)handle in a
\textit{sub-}expression of a given expression, despite referring to the
sub-expression itself just with a meta-variable; for example $h_2$ represents
the handle of the \JOINNAME\ future expression in
${\DJOIN{h_{1}}{\DHANDLE{h_2}{e}}}$, regardless of the future expression specific
``shape''.
We will also omit subscripts in
meta-variables which already contain (meta-)handles unambiguously identifying
instances: for example we will simply write
${\DPRIMITIVE{h_0}{+}{\DCONSTANT{h_1}{e}\ \DCONSTANT{h_2}{e}}}$
instead of the heavier
${\DPRIMITIVE{h_0}{+}{\DCONSTANT{h_1}{e_1}\ \DCONSTANT{h_2}{e_2}}}$. When only
one identifier appears as a subscript, it should always be interpreted as a
handle rather than a metavariable decoration.

We will usually name (meta-)handle indices in a top-to-bottom left-to-right
order according to the expression syntax; we may let indices
start from either $0$ or $1$, according to which option provides more notational
convenience, such as avoiding the occasional ``$+\ 1$'' in
subscripts. For example we prefer writing an $n$-element multiple expression as
$\DBUNDLE{h_0}{\DHANDLE{h_1}{e}...\DHANDLE{h_{n}}{e}}$ rather than as
$\DBUNDLE{h_1}{\DHANDLE{h_2}{e}...\DHANDLE{h_{n+1}}{e}}$.
Since we use handles only to identify occurrences of syntactic
forms, their actual value is always immaterial:
starting indices in meta-handle
sequences can be just as arbitrary.

\section{Semantics and the real world}
\label{epsilonzero-semantics-and-the-real-world}
Before finally specifying \EPSILONZERO's semantics it is worth to add one last
remark to prevent some misunderstandings, explicitly delimiting the cases in
which an implementation is compelled to respect the semantics.  This point is
crucial if we are to speak about actual programs running in actual machines,
rather than just another formal calculus whose terms unfold into other terms in
a Platonic universe where memory is infinite and checking for error conditions
is for free.

As \EPSILONZERO is the underlying common language all personalities ultimately
reduce to, it also represents
\textit{an efficiency upper bound}:
a program written in
a higher-level personality will only run as fast as its translation into
\EPSILONZERO; hence the critical need for speed, also offsetting the cost of
making some implementations unfriendly and \textit{unforgiving} of mistakes.
But thankfully an unforgiving implementation does not need to be the only one,
and when developing an application a user will benefit from the feedback of a
\textit{slower interpreter} failing in a more descriptive way than a
``\CODE{Segmentation fault}'' message, and possibly allowing some form of
debugging.

The nature itself of failure needs to be carefully stated here: what we refer
to as ``failure'' 
(\SECTION\ref{epsilonzero-failure})
or ``resource overflow''
(\SECTION\ref{resource-limits})
in the semantics
does \textit{not} necessarily translate into a
dynamic check at the implementation level:

\begin{implementationnote}[implementation guarantees]
\label{failure-implementation-note}
\label{implementation-guarantees-implementation-note}
A conforming implementation will behave according to the semantics
provided that the execution never reaches an error configuration
and never exceeds any resource limit; otherwise, the implementation behavior is
unspecified.
\QEDIMPLEMENTATIONNOTE
\end{implementationnote}
Not failing and not overflowing resources is just a \textit{sufficient}
condition for an implementation to respect the semantics; in the implementation
a program violating one of these condition is allowed to crash or silently return
any result, possibly even the correct one: \textit{no guarantees at all}.
If a personality implementor wants to specify some behavior in one of these cases
then it's her responsibility to perform \textit{static checks on the input
  code} or to include \textit{dynamic checks in the generated \EPSILONZERO
  code}, in order to prevent the conditions for unspecified behavior from
occurring.
\\
\\
It may be worth to state explicitly that, as a trivial consequence of
Implementation Note~\ref{implementation-guarantees-implementation-note},
an execution consuming an unbounded quantity of \textit{any} resource has
unspecified implementation behavior.

\subsection{Resource limits}
\label{resource-limits}
\label{resource-limits-different-from-the-word-size}
As it is easy to imagine, our first mention of \textit{numerable sets}
in \SECTION\ref{where-we-speak-of-numerable-sets} already hid a caveat: only a finite
number of distinct values will be representable in an implementation, due to
the finite nature of address spaces and word sizes.


There exist other remarkable cases of resource limits: for example
the amount of available virtual memory, often dramatically smaller than what
the address space allows for; operating systems also usually constrain the
number of concurrent threads; an implementation might limit the stack size
to a constant value. All of these cases will be covered by Implementation
Note~\ref{dynamic-resource-limit-implementation-note}.
\\
\\
In the following we will state which resources may be limited by
implementations in explicit \TDEF{Implementation Notes} such as the following
one; as already explained in
Implementation Note~\ref{implementation-guarantees-implementation-note},
an implementation is not forced to detect the failure at
runtime, and may proceed with undefined behavior in case of resource overflow.

\begin{implementationnote}[Syntactic resource limits]
\label{syntactic-resource-limit-implementation-note}
Each instance of the following items occupies some \TDEF{memory} in an
implementation; an implementation will pose a limit to the sum total of all the used
memory at any given time, possibly multiplied by a logarithmic factor.
\begin{itemize}
\item variable and procedure names, the cost being proportional to the name length;
\item the number of handles;
\item expression syntactic complexity.
\QEDIMPLEMENTATIONNOTE
\end{itemize}
\end{implementationnote}
In Implementation Notes dealing with resource limits such as
Implementation~Note~\ref{syntactic-resource-limit-implementation-note} above,
we deliberately ignore constant terms: for example if the physical resource
occupation of $n$ items of a certain kind is $n \cdot a + k$ units, an
implementation may simply declare each item to take $a$ units and the total
resource availability to be $k$ units lower than its actual dimension.
\\
Moreover, since avoiding resource overflows is only a sufficient condition
for an implementation to respect the stated semantics, we allow Implementation
Notes to describe resource occupation \textit{as an upper bound}.

At the cost of sounding pedantic, we stress that the statement above
\textit{does not} limit our reasoning about resources to asymptotic
approximations; in fact, where a given implementation instantiates the precise
costs and resource availability, it is possible to reason about whether a
program ``fits'' the implementation on which it runs --- the rationale of course
being that a program overflowing resources is not any better than an incorrect one.

\section{Configurations}
We are now ready to formally define the mathematical structures used in
\EPSILONZERO's dynamic semantics.
 
\subsection{The global \STATE}
A \TDEF{global state} or simply \TDEF{state}, always represented with a
possibly decorated $\Gamma$ metavariable, represents
\textit{the instantaneous condition of an execution}; an execution may
access, reading and also destructively mutating parts of
a state.  We call ``$\SET{\Gamma}$'' the set of all possible states.

\label{state-environment-introduction}
A state is a inherently composite object made of several \TDEF{state
  environments};
by ``environment'' we
simply mean a mathematical function mapping keys into values.  We do not list
all state environments here, since the need of some of them will not be apparent
until later.  In order to lighten our notation and to allow for
yet-unspecified components without depending on some arbitrary order, we also
avoid traditional projection and update operators, opting instead for a
notation referring components \textit{by name}, as if the state were a single
``record'' of environments.

\subsubsection{Notational conventions for states and environments}
\label{notaional-conventions-for-states-and-environments}
It is often convenient to exploit the \textit{set-of-pairs} nature of relations to
represent environments in an extensional style, as in 
``$\{x_1 \mapsto q_1, ..., x_n \mapsto q_n\}$''; an interesting particular
case is the empty environment, which is to say a nowhere-defined function,
which we write as the empty set ``$\EMPTYSET$''.
\\
\\
If $\vartheta$ is the state environment named $n$ in $\Gamma$ then we may write
$\Gamma_{n}$ to mean $\vartheta$; and of course we also employ the ordinary notation for
function application by writing, for example, $\Gamma_{n}(\CODE{x}) = q$
or equivalently $\Gamma_{n} : \CODE{x} \mapsto q$.

As per the standard update notation, we write
``$\UPDATEENVIRONMENT{\vartheta}{\CODE{x}}{q}$''
to represent an environment equal to $\vartheta$ everywhere on its domain
except on $\CODE{x}$, which the updated environment maps to $q$.

Extending the standard notation, we will also deal with \textit{updated
  environments in the state}: in other words we build a state identical to a
given one save for one environment, which has been updated in its turn; we will
write ``$\UPDATESTATEIN{\Gamma}{n}{\CODE{x}}{q}$'' to represent
 the updated
\STATE identical to $\Gamma$ except for the environment named $n$, which will
be $\UPDATEENVIRONMENT{\Gamma_n}{\CODE{x}}{q}$ instead of $\Gamma_{n}$.

{We also write $\UPDATESTATE{\Gamma}{n}{\vartheta}$ to represent
  a state identical to the state $\Gamma$ except for the state environment
  named $n$, entirely replaced by the environment $\vartheta$.}
\\
\\
Our use of brackets for updated \STATES is distinct from the usual
environment update notation, which we \textit{also} adopt: we write
``$\eta[\xi]$'' to mean an environment identical to $\eta$ everywhere except on
the domain of $\xi$, where instead it is identical to $\xi$.
\\
\\
Notice that, unless we are dealing with \textit{meta}-labels such as
``$n$'' here
in \SECTION\ref{notaional-conventions-for-states-and-environments}, we always
write state environment labels in typewriter font: this makes it clear that we
are establishing a label for some state environment at its first mention,
without the need of detailing every time how one label represents the
similarly-named state environment, when the association is always obvious from
the context anyway.

\subsection{Global and local environments}
\label{global-environments}\label{global-environment}
\label{local-environments}\label{local-environment}
The \TDEF{global environment} is a state environment mapping global variable
names into values, and can be thought of as a partial function $\SET{X} \PARTIAL \SET{C}$.

The global environment keeps track of globally-visible objects (\TDEF{globals}
or \TDEF{non-procedures}), which are
always accessible by a variable name unless shadowed by a procedure parameter
or a local variable, which instead are bound in \TDEF{local environments}: local
environments, also $\SET{X} \PARTIAL \SET{C}$ functions, take precedence over the
global environment, and or course they are \textit{not} state environments; we
use the $\rho$ metavariable for local environments, possibly with decorations.

For example, when evaluated in a state
$\UPDATESTATEIN{\Gamma}{\CODE{global-environment}}{x}{\mathcal{N}(42)}$
and the local environment $\EMPTYSET$,
the variable $x$ in the expression $\DHANDLE{h_0}{x}$ will refer
the value $\mathcal{N}(42)$; but if instead the local environment was
$\{x \mapsto \mathcal{N}(10)\}$, then $\mathcal{N}(10)$ would take
precedence over the global value in $x$. 

\subsection{Memory}
\label{memory-introduction}
\label{memory-state-environment}

\EPSILONZERO expressions are allowed to perform imperative operations
on mutable data structures: in particular expressions may
\textit{read} or \textit{update} cells of memory buffers, which can be \textit{allocated} and \textit{destroyed}.

Such operations rely on the \CODE{memory} state environment $\SET{A} \PARTIAL \SET{C}^{*}$ as
mapping \TDEF{addresses} into mutable word sequences or \TDEF{buffers};
we might occasionally refer to each buffer element as a \TDEF{memory cell}.
\\
\\
It is important to notice that the memory state environment models
\textit{the heap} in the implementation, of which each cell makes up
\textit{one word}: here we are dealing with cells which can be
allocated and destroyed with any strategy, rather than a simple LIFO policy.

No data structures such as conses, tuples and arrays are hardwired in
\EPSILONZERO, but memory makes it easy to define such objects in a
personality.  The fact that dynamically-created structures are ``made
of'' memory entails their mutability in a natural way.  Immutability,
if one chooses to enforce it for some class of data in a high-level
personality, can be realized with dynamic or static checks which
prevent updating\footnote{A more radical strategy could involve a
  syntactic ``extension'' un-defining or otherwise making inaccessible
  the operators needed for the update.} --- but in \EPSILONZERO all
memory cells are freely mutable, so as not to restrict the user in any
way.
\\
\\
At this point the reader may already be suspecting that the global environment
could be used for simulating memory; while ---assuming the availability of certain
primitives--- that intuition is
correct, \SECTION\ref{static-programs} will provide a strong
argument in favor of having a separate memory state environment.

\subsection{Procedures}
\label{procedure-state-environment}
A state also keeps track of the current set of procedure definitions.

The \TDEF{procedure state environment} is a
$\SET{F} \PARTIAL (\SET{X}^{*} \times \SET{E})$
partial function mapping each procedure name into a pair
holding its zero or more formal parameters and the procedure body; for example, if
a procedure named $f$ has formals $x_1...x_n$ and body $\DHANDLE{h}{e}$ in the
state $\Gamma$
we write
``$\Gamma_{\CODE{procedures}}:{f} \mapsto ({\SEQUENCE{x_1...x_n}}, {\DHANDLE{h}{e}})$'';
we may also omit angle brackets when no ambiguity can arise, in this case writing
``$\Gamma_{\CODE{procedures}}:{f} \mapsto (x_1...x_n, {\DHANDLE{h}{e}})$''.

We remind the reader that, since \EPSILONZERO is a first-order language and all
procedures are global \textit{no nonlocals can exist}; for this reason there is
no need for closure environments at this level.
\\
\\
Up to this point we have dealt with the syntax of \EPSILONZERO expressions
only, stating that global mutually-recursive procedures are also somehow
available but without specifying any way of \textit{defining} them.  Because of
interactions with the rest of the system the issue turns out to be more
delicate than one could imagine, and we defer its full treatment
to \SECTION\ref{reflection-chapter}; what we can hint at now is that procedures
can be defined with \textit{primitives} and other procedures, as explained below.

\subsection{Primitives}
\label{epsilonzero-primitive-introduction}
We call \TDEF{primitives} a set of low-level routines accessible from
\EPSILONZERO
expressions, used for computation, program reflection or side effects.
Primitives range from simple arithmetic operations such as \CODE{+}
to reflection and procedure definition operations, potentially also involving
destructive state updates.
\\
\\
In an implementation primitives are routines implemented in a low-level
language such as C or directly in Assembly.  This does not mean however that a
primitive is allowed to ``do anything'': primitives
must not disrupt the program control flow by performing jumps or non-local
exits or reentries \textit{à la} \CODE{longjmp} or \CODE{call/cc}: primitives
may affect the global state but have to behave in a procedural fashion,
always giving control back to their caller;
primitive behavior can actually be modeled by \textit{partial functions} taking a fixed
number of parameters and returning a fixed number of results --- including an
input and output state.  Such
higher-order functional specification is a consequence of the fact that
primitives, unlike procedures, are not directly implemented in \EPSILONZERO and
hence lack high-level bodies or any treatable ``source''.
\\
\\
A \TDEF{primitive function with in-dimension $n$ and out-dimension $m$}
($n, m \in \NATURALS$ and $n, m \geq 0$) is a partial function
$(\SET{C}^{n} \times \SET{\Gamma}) \PARTIAL (\SET{C}^{m} \times \SET{\Gamma})$,
mapping an $n$-uple of values and a state
into an $m$-uple of values and another state;
a \TDEF{primitive} is a triple comprising the primitive function, its
in-dimension and out-dimension,
which respects Axiom~\ref{primitive-totality-axiom}.
We call $\SET{P}$ the set of all primitives, with
$\SET{P} \subset \bigcup_{n,m \in \NATURALS}\{\SEQUENCE{p, n, m } \ |\ p \in (\SET{C}^{n} \times \SET{\Gamma}) \PARTIAL (\SET{C}^{m} \times \SET{\Gamma})\}$.

The \TDEF{primitive environment} state environment
$\SET{\Pi} \PARTIAL \SET{P}$
maps each primitive name into a primitive.

Axiom~\ref{primitive-totality-axiom}, defined in
\SECTION\ref{where-primitive-totality-axiom-is-defined}
and only needed for technical reasons, will just
affirm that primitive success and failure are mutually exclusive.
\\
\\
From now on we will bend our notation a little further by  writing
``$\Gamma_{\CODE{primitives}}(\pi)$\linebreak$(c_{1}, ..., c_{n}, \Gamma) = \SEQUENCE{c'_{1}, ..., c'_{m}, \Gamma'}$''
or
``$\Gamma_{\CODE{primitives}}(\pi) \HASDIMENSION n \to m$''
as needed, 
 to avoid useless pedantries such as 
``$p(\SEQUENCE{c_{1}, ..., c_{n}}, \Gamma) = (\SEQUENCE{c'_{1}, ..., c'_{m}}, \Gamma')$
where
$\Gamma_{\CODE{primitives}}(\pi) = (p, n, m)$''.

As an example, considering the \CODE{quotient-remainder} primitive of \SECTION\ref{quotient-remainder-example}
in some state $\Gamma$
we could write
``$\Gamma_{\CODE{primitives}}(\CODE{quotient-remainder}) (\mathcal{N}(13),\LINEBREAK \mathcal{N}(3), \Gamma) =$\linebreak$\SEQUENCE{\mathcal{N}(4), \mathcal{N}(1), \Gamma}$''
meaning that the quotient and remainder of the naturals $13$ and $3$ are (respectively) the naturals $4$ and $1$,
and that the primitive does not affect the global state; we could also write ``$\Gamma_{\CODE{primitives}}(\CODE{quotient-remainder}) \HASDIMENSION 2 \to 2$'', by which we would mean that \CODE{quotient-remainder} has two parameters and two results --- which does not prevent the primitive function from being partial, as indeed it is.
\label{primitive-dimension-christophe}
Where the particular state is obvious from the context or irrelevant, we even write
``$\pi \HASDIMENSION n \to m$''
to mean
$\Gamma_{\CODE{primitives}}(\pi) \HASDIMENSION n \to m$, for the appropriate $\Gamma$.

As a further and possibly more interesting case, we just hint at the fact
that \textit{memory operations} such as allocating buffers and loading
and storing words are performed by appropriate \textit{primitives}:
this will let us keep the semantics simple, ignoring the details of memory, and
treating memory operations as just another instance of effects on the global state.
\\
\\
Specifying a complete set of ``default'' primitives is out of the scope of this
work, but
\SECTION\ref{syntactic-extension-chapter} will informally introduce most
primitives currently used in the implementation, while
\SECTION\ref{reflection-chapter} will
deal with reflection and program-updating in relation with primitives.
\\
\\
{We may informally speak of \TDEF{applicables}, when abstracting away the distinction
between procedures and primitives.}

\subsection{Holed expressions}
\label{epsilonzero-holed-expressions}
In our dynamic semantics we need to capture intermediate computation snapshots
in which an expression is in the middle of being evaluated.
\\
\\
We define below an extended \EPSILONZERO expression grammar, where the \TDEF{hole} ``$\HOLE$'' stands for a
subexpression which is yet to be fully evaluated.

\begin{definition}[\EPSILONZEROHOLE syntax]
\label{epsilonzerohole-syntax}
\label{epsilonzerohole-grammar}
\label{epsilonzerohole-syntax-definition}
\label{epsilonzerohole-grammar-definition}
We define the set $\HOLEDES$ of \TDEF{possibly-holed expressions} or ``\TDEF{\EPSILONZEROHOLE
  expressions}'' by the following grammar:
\\
{\rm
$e_\HOLE ::=$
\SPACERF$e$
\SPACERP$\DLET{h}{x^{*}}{\HOLE}{e}$
\SPACERP$\DCALLWITHHOLE{h}{f}$
\SPACERP$\DPRIMITIVEWITHHOLE{h}{\pi}$
\SPACERP$\DIFIN{h}{\HOLE}{c^{*}}{e}{e}$
\SPACERP$\DBUNDLEWITHHOLE{h}$
\SPACERP$\DFORKWITHHOLE{h}{f}$
\SPACERP$\DJOIN{h}{\HOLE}$
}
\\
Syntactic cases are respectively named: \TDEF{non-holed expression},
\TDEF{holed block} or \TDEF{holed \LETNAME}, \TDEF{holed call} or \TDEF{holed
  procedure call}, \TDEF{holed primitive} or \TDEF{holed primitive call},
\TDEF{holed conditional}, \TDEF{holed \BUNDLENAME}, \TDEF{holed \FORKNAME} and
\TDEF{holed \JOINNAME}.

All cases save the first represent \TDEF{properly holed expressions}.
\QEDDEFINITION
\end{definition}
Notice that holes do not occur in all possible expression contexts: this issue is
related to \textit{tail contexts}.

Moreover, as the nonterminal $e_\HOLE$ never occurs in a production right-hand
side, no holed expression can contain other properly holed expressions: this
``single hole'' property reflects \EPSILONZERO's deterministic sequential
evaluation strategy.

\subsection{Stacks}
\label{epsilonzero-stacks}
Rather than resorting to the traditional small-step semantics style
\cite[\SECTION2.6]{semantics--winskel} in which the computed parts of an
expression are replaced with values, here we adopt a more realistic and
lower-level model using explicit stacks and keeping track of ``return points'';
this should already be clear at this point from
\SECTION\ref{epsilonzero-holed-expressions}.
\\
\\
We keep two separate aligned stacks per thread for describing evaluation, one stack
representing the dynamic nesting of partially-evaluated expression forms
and the other representing the dynamic nesting of values; we respectively
call them
the \TDEF{main stack} or even simply \TDEF{the stack}, and the \TDEF{value
  stack}:
\begin{itemize}
\item
The main stack is a sequence of pairs, each pair containing a holed expression
and its associated local environment (\SECTION\ref{local-environments}):
the set of all possible main stacks is $\SET{S} \triangleq (\HOLEDES \times (\SET{X} \PARTIAL \SET{C}))^{*}$;
\item
The value stack is a sequence of objects, each of which being one of a value,
the \TDEF{value separator} ``$\VALUESEPARATOR$'', or the \TDEF{activation separator} ``$\ACTIVATIONSEPARATOR$''.
Value stacks belong to $\SET{V} \triangleq (\SET{C} \DISJOINTUNION \{\VALUESEPARATOR, \ACTIVATIONSEPARATOR\})^{*}$.
\end{itemize}
A two-stack solution is particularly appropriate because of bundles and is
visually intuitive, but of course efficient implementations for
conventional machines will reasonably use a single stack per thread.
\\
\\
We write stacks \textit{horizontally, with the top on the left}: this is
analogous to list syntax in Lisp and functional languages, and the
opposite of Forth conventions.

We usually represent main stacks with the metavariables $S$ and value
stack with the metavariables $V$, possibly decorated.

\subsection{Futures}
\label{epsilonzero-futures}
As we have already hinted at in \SECTION\ref{epsilonzero-stacks} and will be
made more clear in \SECTION\ref{epsilonzero-dynamic-semantics-section},
evaluation in \EPSILONZERO needs \textit{two stacks per thread}, along with the
global state.
\\
\\
The ``main'' thread of a computation is called the \TDEF{foreground} thread;
the global state holds information about \textit{all the others}.

We call \TDEF{future state environment} the state environment \CODE{futures} holding thread
information. Such an environment simply maps each thread identifier into its
stack and value stack, and belongs to
$\SET{T} \PARTIAL (\SET{S} \times \SET{V})$.

\begin{implementationnote}[global state resource limits]
\label{global-state-resource-limit-implementation-note}
In an implementation the following resource limits hold:
\begin{itemize}
\item
each global environment binding occupies a constant amount of the \TDEF{memory} resource;
\item
each memory cell occupies a constant amount of the \TDEF{memory} resource;
\item
each defined procedure occupies a constant amount of the \TDEF{memory} resource;
\item
each defined primitive occupies a constant amount of the \TDEF{memory} resource;
\item
each thread which is either running or being waited by a \JOINNAME expression
in a background or foreground thread (\SECTION\ref{semantics-commentary})
occupies a constant amount of \TDEF{memory}.

An implementation may also limit the \TDEF{number of threads} running or
being waited for (as above) existing at any given moment, independently from
memory usage.
\end{itemize}
In all the cases above, some implementations may also scale the total amount of
occupied resource by a logarithmic factor.
\QEDIMPLEMENTATIONNOTE
\end{implementationnote}

\subsection{Configurations}
A \TDEF{configuration} contains information about the \TDEF{foreground thread},
and a global state; of course the global state, among the rest, holds information
about the other ``\TDEF{background}'' threads.

The set of all configurations is
$\SET{S} \times \SET{V} \times \SET{\Gamma}$; we usually represent
configurations with the letter $\chi$, possibly decorated; since configurations
are potentially complex when we show their three components we always omit 
commas to reduce the visual clutter.
\\
\\
Evaluating a given expression $\DHANDLE{h}{e}$ in a given state $\Gamma$ entails
building an \TDEF{initial configuration}
$(\DHANDLE{h}{e}, \EMPTYSET)\ \VALUESEPARATOR\ \Gamma$:
An initial configuration always has a main stack made by just a non-holed expression
coupled with an empty local environment, and a value stack
made of just one ``$\VALUESEPARATOR$'' separator.

\TDEF{Final success configurations} contain an empty main stack and a value stack
$\VALUESEPARATOR c_{n}c_{n-1}...c_{2}c_{1} \VALUESEPARATOR$
holding the zero or more elements of the result bundle in a reversed sequence, preceded and followed
by a ``$\VALUESEPARATOR$'' separator --- the bundle inversion phenomenon being
a consequence of the LIFO evaluation style.
\\
\\
For example,
assuming a ``reasonable'' \CODE{+} primitive and some state $\Gamma$,
we expect that by evaluating starting from the initial configuration
$(\DPRIMITIVE{h_0}{\CODE{+}}{{\DHANDLE{h_1}{\mathcal{N}(2)}}\ {\DHANDLE{h_2}{\mathcal{N}(3)}}},$\linebreak$\EMPTYSET)\ \VALUESEPARATOR\ \Gamma$
we eventually reach a final success configuration
$\EMPTYSEQUENCE\ \VALUESEPARATOR \mathcal{N}(5) \VALUESEPARATOR\ \Gamma'$;
it is possible to have $\Gamma \neq \Gamma'$ because of background
threads already started in $\Gamma$.

\section{Small-step dynamic semantics}
\label{epsilonzero-dynamic-semantics-section}

We are now finally ready to specify \EPSILONZERO's dynamic semantics.

\subsection{Small-step reduction}
We need to formalize the intuitive notion of reduction.
Given two configurations $\chi$ and $\chi'$, we say that 
\TDEF{$\chi$ reduces to $\chi'$} and we write
``$\chi \TOE \chi'$'' according to the following definition:

\begin{definition}[small-step reduction]
\label{epsilonzero-semantics-definition}
\label{epsilonzero-semantics}
We define the \TDEF{small-step evaluation}
relation\\
$\PARAMETER \PARAMETER \PARAMETER \TOE \PARAMETER \PARAMETER \PARAMETER
\subseteq
(\SET{S} \times \SET{V} \times \SET{\Gamma}) \times (\SET{S} \times \SET{V} \times \SET{\Gamma})$
according to the rules at
pp.~\pageref{epsilonzero-semantics-rules--beginning}-\pageref{epsilonzero-semantics-rules--end}.
In the rules we always assume $n \geq 0$, with the convention that
an indexed sequence with left index 1 and right index 0 is empty.

Each rule has associated a name, written on the left in brackets.
\QEDDEFINITION
\end{definition}

\begin{landscape}
\thispagestyle{unobstrusive} 
\label{epsilonzero-semantics-rules--beginning}
\begin{prooftree}
      \LeftLabel{$[{constant}]$}
      \RightLabel{}
      \AxiomC{}
      \UnaryInfC{$
        (\DCONSTANT{h}{c},\ \rho).S
        \ \ \VALUESEPARATOR V\ \ \Gamma
        \TOE
        S
        \ \ \VALUESEPARATOR c \VALUESEPARATOR V\ \ \Gamma
        $}
\end{prooftree}

\begin{prooftree}
      \LeftLabel{$[{variable}]$}
      \RightLabel{$\Gamma_{\CODE{global-environment}}[\rho] : x \mapsto c$}
      \AxiomC{}
      \UnaryInfC{$
        (\DVARIABLE{h}{x},\ \rho).S
        \ \ \VALUESEPARATOR V\ \ \Gamma
        \TOE
        S
        \ \ \VALUESEPARATOR c \VALUESEPARATOR V\ \ \Gamma
        $}
\end{prooftree}

\begin{prooftree}
      \LeftLabel{$[\CODE{let}_e]$}
      \RightLabel{}
      \AxiomC{}
      \UnaryInfC{$
        (\DLET{h_0}{x_1...x_n}{\DHANDLE{h_1}{e}}{\DHANDLE{h_2}{e}},\ \rho).S
        \ \ \VALUESEPARATOR V\ \ \Gamma
        \TOE
        (\DHANDLE{h_1}{e},\ \rho).(\DLET{h_0}{x_1...x_n}{\HOLE}{\DHANDLE{h_2}{e}},\ \rho).S
        \ \ \VALUESEPARATOR V\ \ \Gamma
        $}
\end{prooftree}

\begin{prooftree}
      \LeftLabel{$[\CODE{let}_c]$}
      \RightLabel{$m \geq n$}
      \AxiomC{}
      \UnaryInfC{$
        (\DLET{h_0}{x_1...x_n}{\HOLE}{\DHANDLE{h_2}{e}},\ \rho).S
        \ \ \VALUESEPARATOR c_{m}c_{m-1}...c_{2}c_{1} \VALUESEPARATOR V\ \ \Gamma
        \TOE
        (\DHANDLE{h_2}{e},\ \rho[x_1 \mapsto c_1, x_2 \mapsto c_{2}, ...,x_n \mapsto c_n]).S
        \ \ \VALUESEPARATOR V\ \ \Gamma
        $}
\end{prooftree}

\begin{prooftree}
      \LeftLabel{$[\CODE{call}_e]$}
      \RightLabel{}
      \AxiomC{}
      \UnaryInfC{$
        (\DCALL{h_0}{f}{{\DHANDLE{h_1}{e}}...{\DHANDLE{h_n}{e}}},\ \rho).S
        \ \ \VALUESEPARATOR V\ \ \Gamma
        \TOE
        ({\DHANDLE{h_1}{e}},\ \rho)...({\DHANDLE{h_n}{e}},\ \rho).({\DCALLWITHHOLE{h_0}{f}},\ \EMPTYSET).S
        \ \ \VALUESEPARATOR \ACTIVATIONSEPARATOR V\ \ \Gamma
        $}
\end{prooftree}

\begin{prooftree}
      \LeftLabel{$[\CODE{call}_c]$}
      \RightLabel{$\Gamma_{\CODE{procedures}}:{f} \mapsto ({x_1...x_n}, {\DHANDLE{h}{e}})$}
      \AxiomC{}
      \UnaryInfC{$
        ({\DCALLWITHHOLE{h_0}{f}},\ \rho).S
        \ \ \VALUESEPARATOR c_{n} \VALUESEPARATOR c_{n-1} \VALUESEPARATOR ... \VALUESEPARATOR c_2 \VALUESEPARATOR c_1 \VALUESEPARATOR \ACTIVATIONSEPARATOR V\ \ \Gamma
        \TOE
        ({\DHANDLE{h}{e}},\ \rho[x_1 \mapsto c_1, x_2 \mapsto c_2, ..., x_{n-1} \mapsto c_{n-1}, x_n \mapsto c_n]).S
        \ \ \VALUESEPARATOR V\ \Gamma
        $}
\end{prooftree}

\begin{prooftree}
      \LeftLabel{$[\CODE{primitive}_e]$}
      \RightLabel{}
      \AxiomC{}
      \UnaryInfC{$
        (\DPRIMITIVE{h_0}{\pi}{{\DHANDLE{h_1}{e}}...{\DHANDLE{h_n}{e}}},\ \rho).S
        \ \ \VALUESEPARATOR V\ \ \Gamma
        \TOE
        ({\DHANDLE{h_1}{e}},\ \rho)...({\DHANDLE{h_n}{e}},\ \rho).(\DPRIMITIVEWITHHOLE{h_0}{\pi},\ \EMPTYSET).S
        \ \ \VALUESEPARATOR \ACTIVATIONSEPARATOR V\ \ \Gamma
        $}
\end{prooftree}

\begin{prooftree}
      \LeftLabel{$[\CODE{primitive}_c]$}
      \RightLabel{
        $\Gamma_{\CODE{primitives}}(\pi) (c_{1}, ..., c_{n}, \Gamma) = \SEQUENCE{c'_{1}, ..., c'_{m}, \Gamma'}$}
      \AxiomC{}
      \UnaryInfC{$
        (\DPRIMITIVEWITHHOLE{h_0}{\pi},\ \rho).S
        \ \ \VALUESEPARATOR c_{n} \VALUESEPARATOR c_{n-1} \VALUESEPARATOR ... \VALUESEPARATOR c_2 \VALUESEPARATOR c_1 \VALUESEPARATOR \ACTIVATIONSEPARATOR V\ \ \Gamma
        \TOE
        S
        \ \ \VALUESEPARATOR c'_{m} c'_{m-1} ... c'_2 c'_{1} \VALUESEPARATOR V\ \ \Gamma'
        $}
\end{prooftree}

\begin{prooftree}
      \LeftLabel{$[\CODE{if}_e]$}
      \RightLabel{}
      \AxiomC{}
      \UnaryInfC{$
        (\DIFIN{h_0}{\DHANDLE{h_1}{e}}{c_1...c_n}{\DHANDLE{h_2}{e}}{\DHANDLE{h_3}{e}},\ \rho).S
        \ \ \VALUESEPARATOR V\ \ \Gamma
        \TOE
        (\DHANDLE{h_1}{e},\ \rho).(\DIFIN{h_0}{\HOLE}{c_1...c_n}{\DHANDLE{h_2}{e}}{\DHANDLE{h_3}{e}},\ \rho).S
        \ \ \VALUESEPARATOR V\ \ \Gamma
        $}
\end{prooftree}

\begin{prooftree}
      \LeftLabel{$[\CODE{if}_{c}^{\in}]$}
      \RightLabel{$c \in \{c_1...c_n\}$}
      \AxiomC{}
      \UnaryInfC{$
        (\DIFIN{h_0}{\HOLE}{c_1...c_n}{\DHANDLE{h_2}{e}}{\DHANDLE{h_3}{e}},\ \rho).S
        \ \ \VALUESEPARATOR c \VALUESEPARATOR V\ \ \Gamma
        \TOE
        ({\DHANDLE{h_2}{e}},\ \rho).S
        \ \ \VALUESEPARATOR V\ \ \Gamma
        $}
\end{prooftree}
\begin{prooftree}
      \LeftLabel{$[\CODE{if}_{c}^{\notin}]$}
      \RightLabel{$c \notin \{c_1...c_n\}$}
      \AxiomC{}
      \UnaryInfC{$
        (\DIFIN{h_0}{\HOLE}{c_1...c_n}{\DHANDLE{h_2}{e}}{\DHANDLE{h_3}{e}},\ \rho).S
        \ \ \VALUESEPARATOR c \VALUESEPARATOR V\ \ \Gamma
        \TOE
        ({\DHANDLE{h_3}{e}},\ \rho).S
        \ \ \VALUESEPARATOR V\ \ \Gamma
        $}
\end{prooftree}

\begin{prooftree}
      \LeftLabel{$[\CODE{bundle}_e]$}
      \RightLabel{}
      \AxiomC{}
      \UnaryInfC{$
        (\DBUNDLE{h_0}{{\DHANDLE{h_1}{e}}...{\DHANDLE{h_n}{e}}},\ \rho).S
        \ \ \VALUESEPARATOR V\ \ \Gamma
        \TOE
        ({\DHANDLE{h_1}{e}},\ \rho)...({\DHANDLE{h_n}{e}},\ \rho).(\DBUNDLEWITHHOLE{h_0},\ \EMPTYSET).S
        \ \ \VALUESEPARATOR \ACTIVATIONSEPARATOR V\ \ \Gamma
        $}
\end{prooftree}

\begin{prooftree}
      \LeftLabel{$[\CODE{bundle}_c]$}
      \RightLabel{}
      \AxiomC{\PREMISEWHICHCOULDBEPROVEN{{$\rho = \EMPTYSET$ in reachable configurations}}}
      \UnaryInfC{$
        (\DBUNDLEWITHHOLE{h_0},\ \rho).S
        \ \ \VALUESEPARATOR c_{n} \VALUESEPARATOR c_{n-1} \VALUESEPARATOR ... \VALUESEPARATOR c_{2} \VALUESEPARATOR c_{1} \VALUESEPARATOR \ACTIVATIONSEPARATOR V\ \ \Gamma
        \TOE
        S
        \ \ \VALUESEPARATOR c_{n} c_{n-1} ... c_{2} c_{1} \VALUESEPARATOR V\ \ \Gamma
        $}
\end{prooftree}

\begin{prooftree}
      \LeftLabel{$[\CODE{fork}_e]$}
      \RightLabel{}
      \AxiomC{}
      \UnaryInfC{$
        (\DFORK{h_0}{f}{{\DHANDLE{h_1}{e}}...{\DHANDLE{h_n}{e}}},\ \rho).S
        \ \ \VALUESEPARATOR V\ \ \Gamma
        \TOE
        ({\DHANDLE{h_1}{e}},\ \rho)...({\DHANDLE{h_n}{e}},\ \rho).(\DFORKWITHHOLE{h_0}{f},\ \EMPTYSET).S
        \ \ \VALUESEPARATOR \ACTIVATIONSEPARATOR V\ \ \Gamma
        $}
\end{prooftree}

\begin{prooftree}
      \LeftLabel{$[\CODE{fork}_c]$}
      \RightLabel{\small ${\text{fresh }t\text{, }}{\Gamma_{\CODE{procedures}}:f \mapsto ({x_0...x_n}, {\DHANDLE{h}{e}})}$}
      \AxiomC{}
      \UnaryInfC{$
        (\DFORKWITHHOLE{h_0}{f},\ \rho).S
        \ \ \VALUESEPARATOR c_{n} \VALUESEPARATOR c_{n-1} \VALUESEPARATOR ... \VALUESEPARATOR c_{2} \VALUESEPARATOR c_{1} \VALUESEPARATOR \ACTIVATIONSEPARATOR V\ \ \Gamma
        \TOE
        S
        \ \ \VALUESEPARATOR \FUTURE{t} \VALUESEPARATOR V
        \ \ \Gamma[{}_{{\CODE{futures}}}^{t \mapsto (({\DHANDLE{h}{e}},\ \rho[x_0 \mapsto \FUTURE{t}, x_1 \mapsto c_1, ..., x_n \mapsto c_n])\ \VALUESEPARATOR)}])
        $}
\end{prooftree}

\begin{prooftree}
      \LeftLabel{$[\CODE{join}_e]$}
      \RightLabel{}
      \AxiomC{}
      \UnaryInfC{$
        (\DJOIN{h_0}{\DHANDLE{h_1}{e}},\ \rho).S
        \ \ \VALUESEPARATOR V\ \ \Gamma
        \TOE
        (\DHANDLE{h_1}{e},\ \rho).(\DJOIN{h_0}{\HOLE},\ \rho).S
        \ \ \VALUESEPARATOR V\ \ \Gamma
        $}
\end{prooftree}
\begin{prooftree}
      \LeftLabel{$[\CODE{join}_c]$}
      \RightLabel{$\Gamma_{\CODE{futures}} : t \mapsto (\EMPTYSEQUENCE,\ \VALUESEPARATOR c_t \VALUESEPARATOR)$}
      \AxiomC{}
      \UnaryInfC{$
        (\DJOIN{h_0}{\HOLE},\ \rho).S
        \ \ \VALUESEPARATOR \FUTURE{t} \VALUESEPARATOR V\ \ \Gamma
        \TOE
        S
        \ \ \VALUESEPARATOR c_t \VALUESEPARATOR V\ \ \Gamma
        $}
\end{prooftree}
\begin{prooftree}
      \LeftLabel{$[\parallel]$}
      \RightLabel{$\Gamma_{\CODE{futures}} : t \mapsto (S_t,\ V_t)$}
      \AxiomC{$
        S_t\ \ V_t\ \ \Gamma
        \TOE
        S_t'\ \ V_t'\ \ \Gamma'
        $}
      \UnaryInfC{$
        S\ \ V\ \ \Gamma
        \TOE
        S\ \ V\ \ \Gamma'[{}_{\CODE{futures}}^{t \mapsto (S_t',\ V_t')}]
        $}
\end{prooftree}
\label{epsilonzero-semantics-rules--end}\INVISIBLE{[hack! (label-target)]}
\thispagestyle{unobstrusive} 
\end{landscape}

It is easy to classify rules into four sets according to the holed expression
case in the top stack pair, if any.  We have:
\begin{itemize}
\item the \TDEF{basic rules} $[constant]$ and $[variable]$;
\item \TDEF{expansive rules}, one per non-holed expression case, named after the form with an
  ``$e$'' subscript;
\item \TDEF{contractive rules}, one per holed expression case (except for the
  conditional, which needs two contractive rules), named after the form with a
  ``$c$'' subscript;
\item the \TDEF{parallel rule} $[{\parallel}]$ in the end, standing apart from all the others.
\end{itemize}
The core ideas of the evaluation are simple, and strongly rooted on the inductive
nature of the expression syntax.  Rule groups help to highlight the quite
pleasant symmetry of the system:
\begin{itemize}
\item
\emph{base}: we evaluate a ``basic'' expression found on the top of the stack by
popping it and pushing a corresponding value onto the value stack;
\item
\emph{expansion}: before we can evaluate a non-basic expression on the top of the stack we need to
evaluate its sub-expressions: so we replace the expression with its holed
counterpart, and push its subexpressions on the stack on top of it, in an
order such that the \textit{first} one to be evaluated end 
up \textit{on the top};
\item
\emph{contraction}: if a holed expression is on the top of the stack, this
means that we have just finished evaluating its subexpressions: pop their
values from the value stack, pop the holed expression from the stack, and
proceed: according to the case this can mean pushing a further subexpression
onto the stack or pushing results onto the value stack;
\item
\emph{parallelism}: the parallel rule lets us concurrently perform a reduction in a background thread,
whenever possible.
\end{itemize}
The \textit{LIFO} policy outlined above enforces a rigid \textit{call-by-value, depth-first
left-to-right evaluation strategy}.  We find that having such a simple and
predictable evaluation order is very useful for both programming and reasoning
about programs.
\\
\\
\label{semantics-commentary}$[constant]$ is trivial.

In $[variable]$ it should be noted how the (topmost) local environment
prevails over the current global environment in the variable rules.  Of course
the rule cannot fire if the variable is unbound.

$[\CODE{let}_e]$ is simple enough: a \LETNAME block is
evaluated by first pushing the \LETNAME-bound expression $\DHANDLE{h_1}{e}$;
when such evaluation eventually ends producing a bundle in the value stack the
\textit{\LETNAME contractive rule} can fire, assuming the bundle dimension is
sufficient: then the holed \LETNAME expression is replaced by the \LETNAME body
on the stack, with an updated environment in which the first $n$ bundle
components are named, and all $m$ of them are popped off the value stack, 
implementing the behavior described in \SECTION\ref{let-blocks-can-ignore-some-values}.
It should
be remarked that the holed \LETNAME expression ``disappears from
the stack'' as soon as its body is pushed.  This behavior is useful for
potentially \textit{tail-position subexpressions}:
after we reduce a \LETNAME block to its
body, the \LETNAME block itself can be disposed of, saving stack space.

\label{calling-a-nonexisting-procedure}
$[\CODE{call}_e]$\footnote{\RED{Trivial error fixed in 2014: the original text mistakenly
  said ``$[\CODE{let}_c]$'' instead of ``$[\CODE{call}_e]$''.}} just consists in replacing the \CALLNAME
expression on the top with its holed counterpart (with an immaterial local
environment), and pushing actuals on top of it, so that they will be evaluated
starting from the leftmost one, all in the same local environment of the
call. When actuals are evaluated the \textit{\CALLNAME contractive rule} has
the opportunity to fire, provided that the value stack contains a topmost
activation with exactly as many 1-dimension bundles as the (current!) number of
parameters of the called procedure, and of course provided that a procedure
with the appropriate name exists.  If that is the case the holed \CALLNAME
expression is replaced with the procedure body, and the local environment with
an environment containing \textit{only} the parameter bindings. It is crucial
here \textit{not} to extend the call-time local environment, since we want to
prevent nonlocal visibility, for efficiency reasons.  In a similar vein to the
\LETNAME case, a tail-position holed \CALLNAME is \textit{replaced} by the
called procedure body.

$[\CODE{primitive}_e]$ and $[\CODE{primitive}_c]$ are very similar to
their \CALLNAME counterparts; the contractive rule cannot fire if the primitive
name is not bound, or the primitive function is undefined. Notice that the
primitive function is allowed to return a new global state, and the contractive
rule effectively establishes it for the resulting configuration.

$[\CODE{if}_e]$ simply replaces the topmost expression with
a holed conditional, pushing the discriminand subexpression on top of it; when
the discriminand is completely evaluated, either of the two \textit{\IFNAME
  contractive rules} $[\CODE{if}_c^{\in}]$ and $[\CODE{if}_c^{\notin}]$ may fire, provided the discriminand yielded a 1-dimension
bundle: if the value belongs to the conditional case set, the \THENNAME
subexpression replaces the holed \IFNAME; otherwise, the \ELSENAME
subexpression does.  The conditional expression is replaced by one of the
branch subexpressions without consuming stack space, which is useful in tail
contexts.

$[\CODE{bundle}_e]$ resembles $[\CODE{call}_e]$ and
$[\CODE{primitive}_e]$; again the empty environment associated to the
\BUNDLENAME holed expression is immaterial.  $[\CODE{bundle}_c]$, if
the correct number of 1-dimension bundles is on the top of the value stack,
replaces them all with a
single bundle holding all the values.

$[\CODE{fork}_e]$ is essentially identical to
$[\CODE{call}_e]$, $[\CODE{primitive}_e]$ and $[\CODE{bundle}_e]$.
$[\CODE{fork}_c]$ is more interesting: if the actual parameter result bundles
are 1-dimensioned and correct in number, they are simply replaced by one
\textit{future} on the value stack, and the \FORKNAME evaluation terminates
immediately: the concurrent evaluation will take place asynchronously in a new
thread created for the purpose, and associated to the future identifier in the
future state environment.  Notice that the thread identifier is also visible to
the new thread as the zeroth parameter, to be used in personalities for
``\CODE{self-thread-name}'' forms or thread-local variables.

$[\CODE{join}_e]$ replaces the topmost expression with its
holed counterpart, pushing its future expression over it;
$[\CODE{join}_c]$, provided that a 1-dimension bundle is on the top of the
stack, that the bundle contains a future value \textit{and the thread corresponding
  to the future terminated}, returns the result from the thread.
$[\CODE{join}_c]$ cannot fire until the asynchronous thread has terminated.

$[\parallel]$, provided that that a configuration obtained by
making a background thread the foreground thread \textit{could} reduce, allows
to perform the reduction ``concurrently'', in the future state environment.
\\
\\
It is easy to see that in the contractive rules
$[\CODE{call}_c]$, $[\CODE{primitive}_c]$, $[\CODE{bundle}_c]$ and
$[\CODE{fork}_c]$ the local environment associated to the holed
expression will in practice always be empty, for reachable configurations.
\\
\\
The role of ``$\VALUESEPARATOR$'' value separators should be clear at this
point: the values of the same bundle are stored on the value stack
sequentially, \textit{without} any separators in between --- and again in
reverse order, because of the LIFO strategy.  Separators help establish the
correct conditions for a rule to fire, so that \textit{no bundle of the wrong
  dimension can be used}.

The motivation for activation separators ``$\ACTIVATIONSEPARATOR$'' is similar but
slightly more subtle: the problem is being able to distinguish a local, temporary
bundle from a surrounding bundle which is being built on the value stack.
Without such explicit markers it would be possible to pop the ``wrong'' number of
values from the value stack.

\label{changing-arity-on-the-go}
Moreover a procedure can be \textit{redefined}, or even defined for the first
time, by one of its actual parameters.  We only define semantics if the number
of the passed parameters is correct, but their good number cannot be determined
until\footnote{\RED{2014 correction: the original text said ``after'' instead
of ``until''.}}
all of them have been evaluated: hence, before letting a contractive rule
fire, we have to check that the topmost objects in the value stack be
\textit{all and only} the actual values.
\\
\\
At this point it may be worth to remind the reader of Implementation
Note~\ref{implementation-guarantees-implementation-note}: markers do not necessarily
need to be represented and checked for at run time in an efficient
implementation; quite the opposite, by specifying that some case yields an
error we free ourselves from any implementation constraint.

For this reason we intentionally let, for example, ``wrong arity'' be an error
condition
(\SECTION\ref{epsilonzero-failure})
instead of specifying some ``fallback behaviour'' such as ignoring extra
arguments or providing defaults for missing ones: in practice an efficient
implementation will need to reserve stack frame slots or registers for return
addresses, garbage collection structures or for some other implementation
bookkeeping purpose: of course passing the wrong number of parameters will
likely interfere with these conventions.  We do not want this to be made more
difficult or less efficient just because of the need of implementing a specific
behavior, whose utility was dubious in the first place.
\\
\\
Unfortunately an implementation cannot let configurations grow to an arbitrary complexity:

\begin{implementationnote}[dynamic execution resource limits]
\label{dynamic-resource-limit-implementation-note}
Each instance of the following items occupies some \TDEF{memory} in an
implementation (See Implementation~Note~\ref{syntactic-resource-limit-implementation-note}):
\begin{itemize}
\item a stack item;
\item a value stack item;
\item a local environment binding.
\end{itemize}
Some implementations may further limit the stack item and value stack item
number to another smaller constant, independently from memory usage.
\QEDIMPLEMENTATIONNOTE
\end{implementationnote}

\subsection{Sequential reduction}
\label{sequential-reduction}
As we have just remarked in \SECTION\ref{semantics-commentary}, we
value the predictability of \EPSILONZERO semantics, with its
well-specified evaluation strategy.  In the same vein
\textit{determinism} in an evaluation relation is a desirable
property.

It is easy to observe that, save for the parallel rule, the reduction
relation \textit{is} in fact trivially deterministic, up to the (immaterial) choice of
thread identifiers.

\begin{definition}[sequential small-step reduction]
\label{epsilonzero-sequential-semantics-definition}
\label{epsilonzero-sequential-semantics}
We define the \TDEF{sequential small-step evaluation}
relation
$\PARAMETER \PARAMETER \PARAMETER \TOES \PARAMETER \PARAMETER \PARAMETER
\subseteq
(\SET{S} \times \SET{V} \times \SET{\Gamma}) \times (\SET{S} \times \SET{V} \times \SET{\Gamma})$
according to the rules on
pp.~\pageref{epsilonzero-semantics-rules--beginning}-\pageref{epsilonzero-semantics-rules--end},
minus the parallel rule.
\QEDDEFINITION
\end{definition}
Interestingly, a sequential reduction can still work with \FORKNAME and
\JOINNAME, and futures can be passed around and even created anew or joined if
their result is ready: since the only source of non-determinism is the actual
concurrent reduction, as long as no background thread ``advances'' it is
possible to work with futures using only $\PARAMETER \PARAMETER \PARAMETER
\TOES \PARAMETER \PARAMETER \PARAMETER$.

\subsection{Failure}
\label{epsilonzero-failure}
In \SECTION\ref{semantics-commentary} we have explicitly shown that there are
cases in which the small-step semantics is undefined because rule premises
cannot be satisfied.
After having formalized the notion of ``correct reduction'', here we are
going to exactly specify and classify failure conditions.


\begin{definition}[error configurations]\label{error-configuration-definition}\label{failure-definition}
We define the error configuration relations
\TDEF{fails because of environments}, written as ``$\PARAMETER\ \PARAMETER\ \PARAMETER \FAILSBECAUSEOFENVIRONMENTS$''
and
\TDEF{fails because of dimension}, written as
``$\PARAMETER\ \PARAMETER\ \PARAMETER \FAILSBECAUSEOFDIMENSION$'',
all subsets of the set of configurations
$\SET{S} \times \SET{V} \times \SET{\Gamma}$,
by the following rules:

{\rm 
\begin{prooftree}
      \LeftLabel{}
      \RightLabel{\small $x \notin dom(\Gamma_{\CODE{global-environment}}[\rho])$}
      \AxiomC{}
      \UnaryInfC{$
        (\DVARIABLE{h}{x},\ \rho).S
        \ \VALUESEPARATOR V\ \Gamma
        \FAILSBECAUSEOFENVIRONMENTS
        $}
\end{prooftree}

\begin{prooftree}
      \LeftLabel{}
      \RightLabel{\small $\nexists m \SUCHTHAT
        (m \geq{}n
        \wedge
        V \UNIFIES \ \VALUESEPARATOR c_{m}c_{m-1}...c_{2}c_{1}\VALUESEPARATOR V')$}
      \AxiomC{}
      \UnaryInfC{$
        (\DLET{h_0}{x_1...x_n}{\HOLE}{\DHANDLE{h_2}{e}},\ \rho).S
        \ V\ \Gamma
        \FAILSBECAUSEOFDIMENSION
        $}
\end{prooftree}

\begin{prooftree}
      \LeftLabel{}
      \RightLabel{\small
        $\neg
        (\Gamma_{\CODE{procedures}}:{f} \mapsto ({x_1...x_n}, {\DHANDLE{h}{e}})
        \wedge
        V \UNIFIES \ \VALUESEPARATOR c_{n} \VALUESEPARATOR c_{n-1} \VALUESEPARATOR ... \VALUESEPARATOR c_2 \VALUESEPARATOR c_1 \VALUESEPARATOR \ACTIVATIONSEPARATOR V')$}
      \AxiomC{\PREMISEWHICHCOULDBEPROVEN{$\rho = \EMPTYSET$ in reachable configurations}}
      \UnaryInfC{$
        ({\DCALLWITHHOLE{h_0}{f}},\ \rho).S
        \ V\ \Gamma
        \FAILSBECAUSEOFDIMENSION
        $}
\end{prooftree}

\begin{prooftree}
      \LeftLabel{}
      \RightLabel{\small $\Gamma_{\CODE{primitives}}(\pi) \HASDIMENSION n \to m
        \wedge
        V \DOESNTUNIFY \VALUESEPARATOR c_{n} \VALUESEPARATOR c_{n-1} \VALUESEPARATOR ... \VALUESEPARATOR c_2 \VALUESEPARATOR c_1 \VALUESEPARATOR \ACTIVATIONSEPARATOR V'$}
      \AxiomC{}
      \UnaryInfC{$
        (\DPRIMITIVE{h_0}{\pi}{\HOLE},\ \rho).S
        \ V\ \Gamma
        \FAILSBECAUSEOFDIMENSION
        $}
\end{prooftree}

\begin{prooftree}
      \LeftLabel{}
      \RightLabel{\small $V \DOESNTUNIFY \ \VALUESEPARATOR c \VALUESEPARATOR V'$}
      \AxiomC{}
      \UnaryInfC{$
        (\DIFIN{h_0}{\HOLE}{c_1...c_n}{\DHANDLE{h_2}{e}}{\DHANDLE{h_3}{e}},\ \rho).S
        \ V\ \Gamma
        \FAILSBECAUSEOFDIMENSION
        $}
\end{prooftree}

\begin{prooftree}
      \LeftLabel{}
      \RightLabel{\small $V \DOESNTUNIFY \VALUESEPARATOR c_1 \VALUESEPARATOR c_2 \VALUESEPARATOR ... \VALUESEPARATOR c_{n-1} \VALUESEPARATOR c_n \VALUESEPARATOR \ACTIVATIONSEPARATOR V'$}
      \AxiomC{\PREMISEWHICHCOULDBEPROVEN{$\rho = \EMPTYSET$ in reachable configurations}}
      \UnaryInfC{$
        (\DBUNDLEWITHHOLE{h_0},\ \rho).S
        \ V\ \Gamma
        \FAILSBECAUSEOFDIMENSION
        $}
\end{prooftree}

\begin{prooftree}
      \LeftLabel{}
      \RightLabel{\small $\neg
        (\Gamma_{\CODE{procedures}}:{f} \mapsto ({x_0...x_n}, {\DHANDLE{h}{e}})
        \wedge
        V \UNIFIES \ \VALUESEPARATOR c_{n} \VALUESEPARATOR c_{n-1} \VALUESEPARATOR ... \VALUESEPARATOR c_2 \VALUESEPARATOR c_1 \VALUESEPARATOR \ACTIVATIONSEPARATOR V')$}
      \AxiomC{\PREMISEWHICHCOULDBEPROVEN{$\rho = \EMPTYSET$ in reachable configurations}}
      \UnaryInfC{$
        (\DFORKWITHHOLE{h_0}{f},\ \rho).S
        \ V\ \Gamma
        \FAILSBECAUSEOFDIMENSION
        $}
\end{prooftree}

\begin{prooftree}
      \LeftLabel{}
      \RightLabel{\small $V \DOESNTUNIFY \VALUESEPARATOR c \VALUESEPARATOR V'$}
      \AxiomC{}
      \UnaryInfC{$
        (\DJOIN{h_0}{\HOLE},\ \rho).S
        \ V\ \Gamma
        \FAILSBECAUSEOFDIMENSION
        $}
\end{prooftree}} 
The \TDEF{fails because of a primitive} relation
$\PARAMETER\ \PARAMETER\ \PARAMETER \FAILSBECAUSEOFPRIMITIVE \subseteq \SET{S} \times \SET{V} \times \SET{\Gamma}$
is a superset of the relation defined by the following rule:
{\rm 
\begin{prooftree}
      \LeftLabel{}
      \RightLabel{\small $c \DOESNTUNIFY \FUTURE{t}$}
      \AxiomC{}
      \UnaryInfC{$
        (\DJOIN{h_0}{\HOLE},\ \rho).S
        \ \VALUESEPARATOR c \VALUESEPARATOR V\ \Gamma
        \FAILSBECAUSEOFPRIMITIVE
        $}
\end{prooftree}} 
The exact definition of $\PARAMETER\ \PARAMETER\ \PARAMETER \FAILSBECAUSEOFPRIMITIVE$
relies on the specific set of available primitives, which we intentionally leave open.
\\
\\
We define \TDEF{the generic ``fails'' relation}, written as ``$\PARAMETER\ \PARAMETER\ \PARAMETER \FAILS$'',
as the union of the specific failure relations:
$(\PARAMETER \PARAMETER \PARAMETER \FAILS) =
(\PARAMETER \PARAMETER \PARAMETER \FAILSBECAUSEOFENVIRONMENTS) \cup
(\PARAMETER \PARAMETER \PARAMETER \FAILSBECAUSEOFPRIMITIVE) \cup
(\PARAMETER \PARAMETER \PARAMETER \FAILSBECAUSEOFDIMENSION)
$.  We also call \TDEF{final failure configuration} a configuration
that fails.  A \TDEF{final configuration} is either a final success
configuration or a final failure configuration.
\QEDDEFINITION
\end{definition}

Since the definition above does not mention background threads at all
we have that failure in a background thread does \textit{not}
propagate to any other thread.  We chose this solution in the interest
of simplicity and realism for a core language such as \EPSILONZERO,
which should reflect the behavior of system-level facilities.  Of
course higher-level personalities are free to implement more complex
policies, as hinted at in \SECTION\ref{failure-recovery}.

Since failure never propagates to other threads in \EPSILONZERO, there
is no need for alternate ``sequential'' relations for failure.
\\
\\
As a further point of note, above we have chosen to classify the failure of \JOINNAME{}-ing an object
different from a future as a primitive error, because of the strong analogy of
the condition with a primitive ``wrong parameter'' error\footnote{It is easy at
this point to mistake such an error for a type error. The difference is
actually subtle, and will be dealt with in \SECTION\ref{the-nature-of-values}}.

Many actual primitives also fail for some values of their parameters, even when
they receive the correct number of them.
For example a division primitive ``$\div$'' might fail on a zero divisor;
writing ``$\mathcal{N}(0)$'' for zero as in \SECTION\ref{where-we-hint-at-the-value-notation}, we get:
\begin{prooftree}
      \LeftLabel{}
      \RightLabel{}
      \AxiomC{}
      \UnaryInfC{$
        (\DPRIMITIVE{h_0}{\div}{\HOLE},\ \rho).S
        \ \VALUESEPARATOR \mathcal{N}(0) \VALUESEPARATOR c \VALUESEPARATOR \ACTIVATIONSEPARATOR V\ \Gamma
        \FAILSBECAUSEOFPRIMITIVE
        $}
\end{prooftree}
We intentionally omit a list of all the specific cases of primitive failure, a
complete specification belonging in the primitive definition --- with the only
constraint of having failure rules covering \textit{all} possible failure
cases; in other words, given a set of parameters a primitive either fails or
returns a result {but no other behavior such as divergence is possible}, as
specified by Axiom~\ref{primitive-totality-axiom}, which we are now ready to state:

\label{where-primitive-totality-axiom-is-defined}
\begin{axiom}[primitive ``totality'']\label{primitive-totality-axiom}
For any primitive $\pi$ such that $\Gamma_{\CODE{primitives}}(\pi) \HASDIMENSION n \to m$ in some \STATE $\Gamma$ and for each sequence
$\SEQUENCE{c_1,...,c_n}$,
exactly one of the following holds:
\begin{itemize}
\item
  there exist $c_1',...,c_m',\Gamma'$ such that
$\Gamma_{\CODE{primitives}}(\pi) (c_{1}, ..., c_{n}, \Gamma) = \SEQUENCE{c'_{1}, ..., c'_{m}, \Gamma'}$;
\item
  for any $h_0, S, V$ we have 
  {\rm
    $(\DPRIMITIVE{h_0}{\pi}{\HOLE},\ \rho).S
    \ \VALUESEPARATOR c_1 \VALUESEPARATOR ... \VALUESEPARATOR c_n \VALUESEPARATOR \ACTIVATIONSEPARATOR V\ \Gamma
    \FAILSBECAUSEOFPRIMITIVE$}. 
  \QEDAXIOM
\end{itemize}
\end{axiom}

We can prove a result in the same spirit for general expressions: any given
configuration either can be reduced for at least one more step, or it
immediately fails; but it is not possible that a non-failing configuration does
not allow reductions (unless joining a future), or that a configuration
simultaneously fails and allows a sequential reduction:



\begin{proposition}[reduce xor fail xor wait]\label{reduce-xor-fail-xor-wait}
Given any reachable configuration $\chi = (\DHANDLE{h}{e}, \rho).S\ V\ \Gamma$
we have exactly one of the following:
\begin{itemize}
\item
there exist $S'$, $V'$ and $\Gamma'$ such that
$(\DHANDLE{h}{e},\ \rho).S\ V\ \Gamma  \TOES  S'\ V'\ \Gamma'$;
\item
$(\DHANDLE{h}{e},\ \rho).S\ V\ \Gamma  \FAILS$;
\item
there exist $t \in \SET{T}$, $V' \in \SET{V}$
such that
{\rm
$
\chi =
(\DJOIN{h}{\HOLE},\ \rho).S\ \VALUESEPARATOR \FUTURE{t} \VALUESEPARATOR V'\ \Gamma
$}.
\end{itemize}
\end{proposition}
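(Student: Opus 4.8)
The plan is to prove the statement by a case analysis on the syntactic shape of the possibly-holed expression $\DHANDLE{h}{e}$ sitting on top of the main stack, following the grammar of $\HOLEDES$ in Definition~\ref{epsilonzerohole-syntax-definition}; since the rule that may fire is wholly determined by that top-of-stack shape, it suffices to show, form by form, that exactly one alternative applies, and then to collect the cases. Throughout I would exploit the hypothesis that $\chi$ is \emph{reachable} through a well-formedness invariant coupling the two stacks: whenever a non-holed expression sits on top of the main stack the value stack begins with a separator ``$\VALUESEPARATOR$''; a fully evaluated subexpression leaves a properly $\VALUESEPARATOR$-delimited bundle on top; and every future $\FUTURE{t}$ appearing on the value stack refers to a thread $t \in dom(\Gamma_{\CODE{futures}})$ genuinely spawned by an earlier fork. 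This invariant is itself established by a routine induction on reduction sequences issuing from an initial configuration, and I would prove it as a preliminary lemma (or cite it): it is what rules out ill-formed stacks on which, say, $[{constant}]$ --- which insists on a leading ``$\VALUESEPARATOR$'' --- would get stuck with no matching failure rule.

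Granting the invariant, most cases are short. The two basic forms split cleanly: a constant always reduces by $[{constant}]$, whereas a variable reduces by $[{variable}]$ or, exactly when $x \notin dom(\Gamma_{\CODE{global-environment}}[\rho])$, fails by the environment rule, and these conditions are complementary. Each non-holed compound form (\CODE{let}, \CODE{call}, \CODE{primitive}, \CODE{if}, \CODE{fork}, \CODE{join}, \CODE{bundle}) reduces unconditionally by its expansive rule, which carries no side condition beyond the leading separator the invariant supplies. For each properly holed ``contractive'' form the argument is uniform: the side condition of the contractive rule and that of the matching failure rule in Definition~\ref{failure-definition} are, by construction, exact negations of one another --- concretely, whether the current activation on the value stack holds the number of one-value bundles demanded by the form (at least $n$ for \CODE{let}, exactly $n$ elsewhere) and, where an applicable is named, whether it currently has the matching arity --- so the configuration either reduces or fails, never both and never neither. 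Two cases need a small addition: the holed conditional, where the two rules $[\CODE{if}_c^{\in}]$ and $[\CODE{if}_c^{\notin}]$ between them cover every value, so a single-value bundle always reduces; and the holed primitive, where beyond the arity check I invoke Axiom~\ref{primitive-totality-axiom} to split the remaining possibilities --- the primitive function either returns (reduction by $[\CODE{primitive}_c]$) or fails, exclusively --- which is exactly the purpose for which that axiom was stated.

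The delicate case, and the only one yielding the third alternative, is the holed join $\DJOIN{h}{\HOLE}$. Here I would peel the value stack in layers: if its top is not of the shape $\VALUESEPARATOR c \VALUESEPARATOR V'$ the configuration fails by dimension; if it is but $c$ is not a future it fails by the primitive rule; and if $c = \FUTURE{t}$ then, reachability guaranteeing that $t$ is a real thread, exactly one of two situations holds --- either the thread has terminated, i.e.\ $\Gamma_{\CODE{futures}}: t \mapsto (\EMPTYSEQUENCE,\ \VALUESEPARATOR c_t \VALUESEPARATOR)$, and $[\CODE{join}_c]$ fires (the first alternative), or it has not and no sequential rule applies, which is the third ``waiting'' alternative. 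The restriction to $\TOES$ (with the parallel rule $[\parallel]$ removed) is essential here: under full reduction a background step could still advance a waiting join, but sequentially a join on a not-yet-ready future is genuinely stuck. I expect this case to be the main obstacle, principally for \emph{exclusivity}: the third bullet as written records only the syntactic shape, so to keep the trichotomy exact it must be read as the residual, non-reducing case. I would make this explicit by pairing it with the negation ``$t$ has not terminated,'' so that the reduce- and wait-alternatives partition the join-on-future subcase instead of overlapping on it, while disjointness of the waiting case from the two failure alternatives is automatic, the value being a single, well-formed future.
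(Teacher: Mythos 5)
Your proof takes essentially the same route as the paper's own (sketched) argument: a case analysis on the shape of the expression on top of the main stack, using reachability to guarantee a well-formed, $\VALUESEPARATOR$-topped value stack, the complementarity of the contractive rules' side conditions with those of the failure rules, and Axiom~\ref{primitive-totality-axiom} to split the holed-primitive case. Your observation that the third alternative, read literally, overlaps with the first when the joined thread has already terminated is a genuine refinement that the paper's sketch glosses over, and your proposed reading of the waiting case as the residual, non-reducing one is the right fix.
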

\begin{proof}[sketch]
Since the main stack is not empty, $\chi$ is not terminal.

We are dealing with
$\PARAMETER \PARAMETER \PARAMETER \TOES \PARAMETER \PARAMETER \PARAMETER$
rather than
$\PARAMETER \PARAMETER \PARAMETER \TOE \PARAMETER \PARAMETER \PARAMETER$,
hence $[\parallel]$ cannot fire, by Definition~\ref{epsilonzero-semantics-definition}.

In any configuration where the top expression is an \EPSILONZERO\ non-holed expression
except for the variable, we may apply an expansive rule or $[constant]$ for
$\PARAMETER \PARAMETER \PARAMETER \TOES \PARAMETER \PARAMETER \PARAMETER$,
leading to an evaluation step: all such rules can always fire independently
of subexpressions, the state of the environments or the stacks
(Definition~\ref{epsilonzero-semantics-definition}).

If $\chi = (\DJOIN{h}{\HOLE},\ \rho).S\ \VALUESEPARATOR \FUTURE{t} \VALUESEPARATOR V'\ \Gamma$
for some $t, V'$ then the thesis follows trivially.

In the remaining cases the top expression is a variable or a holed expression,
and another disjoint set
of rules applies.  
Then one of the contractive rules for
$\PARAMETER \PARAMETER \PARAMETER \TOES \PARAMETER \PARAMETER \PARAMETER$,
or an error rule for
$\PARAMETER \PARAMETER \PARAMETER \FAILSBECAUSEOFENVIRONMENTS$,
$\PARAMETER \PARAMETER \PARAMETER \FAILSBECAUSEOFDIMENSION$ and
$\PARAMETER \PARAMETER \PARAMETER \FAILSBECAUSEOFPRIMITIVE$
in Definition~\ref{failure-definition} must apply.

In all the cases above it is easy to see that each configuration matches the
premise of exactly one rule (in particular, since $\chi$ is reachable, the value stack must
have $\VALUESEPARATOR$ on top); the only nontrivial case is
$\DPRIMITIVE{h}{\pi}{\HOLE}$, in which either
$[\CODE{primitive}_c]$ or a $\PARAMETER \PARAMETER \PARAMETER \FAILSBECAUSEOFPRIMITIVE$
rule applies because of Axiom~\ref{primitive-totality-axiom}.
\QEDPROOF
\end{proof}

\subsection{Error recovery and personalities}
\label{error-recovery}
\label{failure-recovery}
At the level of \EPSILONZERO all errors are fatal.  In the interest of
simplicity and efficiency, no mechanism is provided for handling a case of
failure by recovering or retrying.  Any such machinery can be defined in
high-level personalities by checking for failure conditions at run time with
explicit conditional expressions to be automatically generated; in this way it
is possible to completely prevent \EPSILONZERO failures from ever occurring, if
so desired.

As usual and in the same spirit of typing, the personality implementor
has the freedom of choosing an efficient model where failures are
always fatal, or a friendlier alternative where the personality
presents an ordinary \EPSILONZERO configuration as an ``error'' state
from which the conventionally ``normal'' execution can resume.
\\
\\
Like for static typing it is also possible to check for the possibility of
failures ``statically'' at code generation time, and generate fast code under
the assumption that some kind of failure is impossible.  The next chapters
hint at how one can define such analyses.

\section{One-step dynamic semantics}
When dealing with ``toplevel'' \EPSILONZERO expressions, often we are
less interested in the small-step evaluation relation $\PARAMETER
\PARAMETER \PARAMETER \TOE \PARAMETER \PARAMETER \PARAMETER$ than in
its \textit{iteration}: where only the final configurations (if any)
are of interest it is convenient to completely ignore stacks and value
stacks, restricting our attention to an expression, an initial state,
its result bundle and the terminal state.

\begin{definition}[one-step convergence]\label{one-step-semantics-definition}
We define the \TDEF{one-step operational semantics} relation for expressions
$\PARAMETER \PARAMETER \CONVERGESE \PARAMETER \PARAMETER \subseteq (\SET{E} \times \SET{\Gamma}) \times (\SET{C}^{*} \times \SET{\Gamma})$ by the rule:
\begin{prooftree}
      \LeftLabel{}
      \RightLabel{}
      \AxiomC{$
        ({\DHANDLE{h}{e}}, \EMPTYSET)\ \VALUESEPARATOR\ \Gamma
        \TOET
        \EMPTYSEQUENCE\ \VALUESEPARATOR c_n ... c_1 \VALUESEPARATOR\ \Gamma'
        $}
      \UnaryInfC{$
        \DHANDLE{h}{e}\ \Gamma
        \CONVERGESE
        \SEQUENCE{c_1 ... c_n}\ \Gamma'
        $}
\end{prooftree}
Similarly, we define the \TDEF{one-step sequential operational semantics} relation for expressions
$\PARAMETER \PARAMETER \CONVERGESES \PARAMETER \PARAMETER \subseteq (\SET{E} \times \SET{\Gamma}) \times (\SET{C}^{*} \times \SET{\Gamma})$
by the rule:
\begin{prooftree}
      \LeftLabel{}
      \RightLabel{}
      \AxiomC{$
        ({\DHANDLE{h}{e}}, \EMPTYSET)\ \VALUESEPARATOR\ \Gamma
        \TOEST
        \EMPTYSEQUENCE\ \VALUESEPARATOR c_n ... c_1 \VALUESEPARATOR\ \Gamma'
        $}
      \UnaryInfC{$
        \DHANDLE{h}{e}\ \Gamma
        \CONVERGESES
        \SEQUENCE{c_1 ... c_n}\ \Gamma'
        $}
\end{prooftree}
When we have that ``$\DHANDLE{h}{e}\ \Gamma\CONVERGESE \SEQUENCE{c_1...c_n}\ \Gamma'$''
we say that
$\DHANDLE{h}{e}$ in $\Gamma$ \TDEF{converges to} $\SEQUENCE{c_1...c_n}$ in $\Gamma'$.
In the same way when we have ``$\DHANDLE{h}{e}\ \Gamma\CONVERGESES \SEQUENCE{c_1...c_n}\ \Gamma'$'' we say that
$\DHANDLE{h}{e}$ in $\Gamma$ \TDEF{sequentially converges} to $\SEQUENCE{c_1...c_n}$ in $\Gamma'$.

We may omit state names or results when irrelevant in context.
\QEDDEFINITION
\end{definition}
It trivially follows from the determinism of
$\PARAMETER \PARAMETER \PARAMETER \TOES \PARAMETER \PARAMETER \PARAMETER$
that
$\PARAMETER \PARAMETER \CONVERGESES \PARAMETER \PARAMETER$
is also a (partial) \textit{function}.

Notice that according to our definition
a reduction chain of $\PARAMETER \PARAMETER \PARAMETER \TOE \PARAMETER \PARAMETER \PARAMETER$
may converge even if some background thread potentially runs forever, when a finite reduction
chain \textit{exists} for $\PARAMETER \PARAMETER \PARAMETER \TOES \PARAMETER \PARAMETER \PARAMETER$,
and hence also for its super-relation\footnote{\RED{Error fixed in 2014: the original
  text mistakenly said ``sub-relation'' instead of ``super-relation''.}}
\mbox{$\PARAMETER \PARAMETER \PARAMETER \TOE \PARAMETER \PARAMETER \PARAMETER$.}
\\
\\
It is also useful to speak of the \TDEF{eventual} failure of an expression
in a state, ignoring the zero or more reduction steps leading to the
failure configuration, and the specific failure configuration as well:
\begin{definition}[eventual failure]\label{eventual-failure-definition}
For each error-configuration relation
$\PARAMETER \PARAMETER \PARAMETER \FAILSBECAUSEOF{f}$,
with 
$f \in \{\text{``{}''}, \text{``$\SET{X}$''}, \text{``$\SET{P}$''}, \text{``$\#$''}\}$,
we define a corresponding \TDEF{eventual failure relation}
$\PARAMETER \PARAMETER \EVENTUALLYFAILSBECAUSEOF{f}
\subseteq
\SET{E} \times \SET{\Gamma}
$ by the meta-rule:
\begin{prooftree}
      \LeftLabel{}
      \RightLabel{}
      \AxiomC{$
        ({\DHANDLE{h}{e}}, \EMPTYSET)\ \VALUESEPARATOR\ \Gamma
        \TOERT
        S\ V\ \Gamma'
        $}
      \AxiomC{$
        S\ V\ \Gamma'
        \FAILSBECAUSEOF{f}
        $}
      \BinaryInfC{$
        \DHANDLE{h}{e}\ \Gamma
        \EVENTUALLYFAILSBECAUSEOF{f}
        $}
\end{prooftree}
If we have that ``$\DHANDLE{h}{e}\ \Gamma \EVENTUALLYFAILSBECAUSEOF{f}$''
with
$f \in \{\text{``{}''}, \text{``$\SET{X}$''}, \text{``$\SET{P}$''}, \text{``$\#$''}\}$
we respectively say that $\DHANDLE{h}{e}$ in $\Gamma$
\TDEF{eventually fails},
\TDEF{eventually fails because of environments},
\TDEF{eventually fails because of primitives},
or
\TDEF{eventually fails because of dimension}.
\QEDDEFINITION
\end{definition}
Finally, we characterize \TDEF{looping} expressions and states:
\begin{definition}[divergence]\label{divergence-definition}
We define the \TDEF{divergence relation} 
$\PARAMETER \PARAMETER \DIVERGESE\ \subseteq\ \SET{E} \times \SET{\Gamma}$ the following way:

let $\DHANDLE{h}{e}$ be an expression and $\Gamma$ be a state; then
we say that \TDEF{$\DHANDLE{h}{e}$ diverges in $\Gamma$} and we write
``$\DHANDLE{h}{e}\ \Gamma \DIVERGESE$''
if for any configuration $\chi$ such that
$(\DHANDLE{h}{e}, \EMPTYSET)\ \VALUESEPARATOR\ \Gamma \TOET \chi$
there exists another configuration $\chi'$ such that $\chi \TOE \chi'$.
\QEDDEFINITION
\end{definition}
We defined divergence with the parallel reduction relation
$\PARAMETER \PARAMETER \PARAMETER \TOE \PARAMETER \PARAMETER \PARAMETER$ rather
than its sequential restriction; hence our notion of divergence covers both
``busy looping'' in the foreground and waiting forever for a background thread.

It may be worth to stress how, for example, the sentence ``$\DHANDLE{h}{e}$ in
$\Gamma$ does not converge'' has a different meaning from ``$\DHANDLE{h}{e}$ in
$\Gamma$ diverges'', since it is possible that $\DHANDLE{h}{e}$ in
$\Gamma$ eventually fails.  The same problem holds for the phrases
``converges'' and ``eventually fails''.

We will
avoid such wording in the negative.

\section{Summary}
We conceived the \EPSILONZERO core language for expressivity,
efficiency and ease of formal manipulation: \EPSILONZERO is powerful
but idiosyncratic and unsuitable for direct use by human users, who
are expected to only access it through extensions.
\\
\\
After dealing at length with design issues and providing a rationale for
\EPSILONZERO language we proceeded to formally specify its syntax, semantics
and error conditions.

We also described a sufficient set of conditions under which an implementation
is compelled to respect the specified behavior, allowing for both \textit{inefficient
but friendly} and \textit{efficient but unsafe} implementations.

The small-step operational semantics is relatively simple and has a deterministic
sub-relation obtained by simply ignoring one rule.
\\
\\
We defined the ``one-step'' semantics, hiding the complexity of stacks, by
iterating the small-step reduction relation.


\chapter{Reflection and self-modification}
\label{meta-chapter}
\label{reflection-chapter}
\label{self-modification-chapter}

The presentation of \EPSILONZERO in \SECTION\ref{epsilonzero-chapter} showed
the \textit{state} component as already containing procedure and global
bindings, but did not illustrate any explicit way of updating either.

In this chapter we will start by discussing global definitions, and then
proceed to clarify what we mean by a ``program''; our somewhat unusual solution
has important implications on how programs are loaded, saved, and compiled.


\minitoc

\section{Global definitions}
\label{program-and-repl}
\label{toplevel-and-repl}
The expression semantics given
in \SECTION\ref{epsilonzero-semantics-definition} does not explicitly mention any
functionality to alter the set of procedure or global bindings
that we have always considered as \textit{already} defined, as part of
the state; anyway such functionality is clearly needed: if not for anything
else, at least for defining new recursive procedures, as expressions can not
express recursion without referring global procedures; and, of course, global
definitions are useful for reasons of modularity.
\\
\\
A ``traditional'' solution to this problem would consist in adding
\textit{toplevel forms} to \EPSILONZERO as a new syntactic category: toplevel
forms would comprise a procedure definition form and a non-procedure definition
form; the program would then become a sequence of toplevel definitions,
possibly followed by a main expression returning a final ``result''.

We have several good reasons to reject this simplistic notion of a fixed
program to be written from start to finish and then executed:
\begin{itemize}
\item
  in the spirit of Forth\footnote{It is not by coincidence that we mention Forth
    first in this case.  One important lesson of Forth is
  how complex programs can be written even with no support whatsoever for typing,
  \textit{provided that} each small component is individually testable.
  We extrapolate the following motto from our experience of working with ML, Lisp and
  Forth: \textit{the less typing, the more important a Read-Eval-Print Loop}.} 
  and Scheme, we want to also support
  \textit{interactive} systems interleaving user input with evaluation and answers;
\item
  having \textit{exactly two} toplevel definition forms may be adequate for
  \EPSILONZERO, but certainly not for its extensions: a personality may need global
  definitions for new entities such as classes, exceptions, or types.  Even
  syntax is not fixed: we want to make new entities and their associated
  syntactic extensions definable at any point during the user interaction, to be
  immediately available for use;

\item
  adding toplevel forms to \EPSILONZERO is not necessary, since expressions are
  already powerful enough to express state updates using primitives or
  procedures;
\item
  a powerful language should let the user update global definitions \textit{from
    any program point}, not just at the top level.
\end{itemize}
For all these reasons we will simply assume the presence of the
\TDEF{global-definition} or \TDEF{self-modification} procedures \CODE{state:global-set!} and
\CODE{state:procedure-set!}, to be defined later in
\SECTION\ref{state:procedure-set!-definition},
p.~\pageref{state:procedure-set!-definition}.
\label{reflective-procedure-names}
We also assume the presence of
their companion \TDEF{reflective procedures} \CODE{state:global-get},
\CODE{state:procedure-get-formals}, \CODE{state:procedure-get-body},
\CODE{state:global-names} and \CODE{state:procedure-names}.

No
toplevel forms are needed\footnote{Global-procedure-definition procedures are not
  particularly friendly to use directly, since the user has to pass
  \textit{expressions} as parameters, and expressions are relatively complex
  data structures which need to be built.  However a friendly definition form
  is not hard to define on top of global-definition procedures, by using a macro
  (\SECTION\ref{syntactic-extension-chapter}).}: definitions are expressions like any others, and can be
executed at the top level or just as well within other expressions.

\section{Programs and self-modification}
The last point in the dotted list above, easily the most controversial, illustrates
well the tension between our will of providing an expressive system, and the
desire of also keeping the language easy to reason about and efficient --- in
mainstream terms, the
\textit{dynamic vs.\ static}
debate.

At a first look \EPSILONZERO with global-definition procedures appears flatly sided with the
``dynamic'' party, allowing any program to capriciously modify itself at run
time; for example in
\SECTION\ref{changing-arity-on-the-go} at
p.~\pageref{changing-arity-on-the-go} we even considered the case of
calling a not-yet-existing procedure which is \textit{created by one of its
  actual parameters}.  Indeed, we will find use for creating procedures from
arbitrary expressions
(\SECTION\ref{closure-conversion-transform}); but
whenever possible we would still prefer not to renounce to better intellectual
manageability, and efficiency.
\\
\\
As a reasonable compromise and a way out of the dilemma, it is possible
\textit{to freely use global-definition procedures to let the program reach a
  final ``static'' form}; after that point, the program may be analyzed to
check for properties and compiled efficiently, under the assumption that no
more self-modifications will occur.

\subsection{Programs}
The most convenient notion of a program, for our purposes, is slightly unusual.
Given a state and an expression, we can imagine to somehow generate a snapshot
containing the ``frozen'' state, plus the expression.

``Executing'' a program then means to fire up evaluation on the saved
expression from the resumed state:
\begin{definition}[program]\label{program-definition}
Let $\Gamma$ be an \EPSILONZERO state and $\DHANDLE{h}{e}$ be an \EPSILONZERO
expression; then we define their corresponding \TDEF{program}
as the pair $(\UPDATESTATE{\Gamma}{\rm\text{\CODE{futures}}}{\EMPTYSET}, \DHANDLE{h}{e}) \in \SET{\Gamma} \times \EXPRESSIONS$.
\QEDDEFINITION
\end{definition}
We intentionally disregard the background threads in $\Gamma_{\CODE{futures}}$,
in order not to have to deal with execution stacks or partial expression
evaluation.  Background threads do not look particularly useful anyway in this
context, since the main idea is simply to use global-definition procedures to
have the program self-modify into something which contains every needed
auxiliary procedure and data, before the ``main expression'' can be finally
evaluated; clearly, the main expression itself will be free to create
background threads.

Of course there is no guarantee that a program, when executed, will not start
self-modifying again.

\subsection{Static programs}
\label{static-programs}
A static program is a program which, when executed, never self-modifies:
\begin{definition}[static program]\label{static-program-definition}
Let $(\Gamma, \DHANDLE{h}{e})$ be a program; then we say that it is
\TDEF{semantically static}
or \TDEF{static}
if in no configurations reachable
from evaluating $\DHANDLE{h}{e}$ in $\Gamma$, the global environment or the
procedure environment are different from the ones in $\Gamma$.

More formally, a program $(\Gamma, \DHANDLE{h}{e})$ is static if for all
$(S'\ V'\ \Gamma')$ such that
$(\DHANDLE{h}{e}, \EMPTYSET)\ \VALUESEPARATOR\ \Gamma \TOEP (S'\ V'\ \Gamma')$
we have that
$\Gamma_{\rm\text{\CODE{global}}} = \Gamma'_{\rm\text{\CODE{global}}}$ and
$\Gamma_{\rm\text{\CODE{procedures}}} = \Gamma'_{\rm\text{\CODE{procedures}}}$.
\QEDDEFINITION
\end{definition}
We consider re-defining a global to be self-modification: anyway the user can
still define mutable variables in the style of ML and Forth in a static
program, by adding one indirection level so that a global maps to a memory
cell, whose content can be updated any number of times without affecting its
identity.  The value of imperative variables would then be held in the
\textit{memory} state environment (\SECTION\ref{memory-state-environment}), which
does not affect staticity.

Interestingly, the use of \textit{reflective} procedures is not problematic
even for a static program: a static program can safely \textit{read} its own
global and procedure state environments, which are constant by definition.
\\
\\
Our \textit{semantic} versus \textit{syntactic} naming convention is important
and deserves some comments.  The convention comes from standard garbage
collection jargon \cite{survey--wilson}, and is used to distinguish between a
datum which will not be accessed in the rest of the computation from a datum
which can not be reached by traversing pointers from the roots.  It is
undecidable whether a heap object is ``semantic garbage'', so garbage
collection works by recycling ``syntactic garbage'', a conservative decidable
approximation (any piece of semantic garbage is also syntactic garbage).

Like the property of being semantic garbage our semantic staticity property is
trivially undecidable, so it would be tempting to define a notion of
``syntactic staticity'' involving the use of global-definition procedures in
reachable code.  Such attempts are doomed to fail in \EPSILONZERO, because of the way
global-definition procedures are defined
(\SECTION\ref{state:procedure-set!-definition}): in practice, at least with our
current set of primitives\footnote{It would be possible to bootstrap the
  language with a different set of primitives
  (\SECTION\ref{bootstrap-phase-3}), so that \CODE{state:global-set!}  and
  \CODE{state:procedure-set!} are primitives themselves and do not depend on
  others.  However such a solution would be unrealistic for a practical
  implementation, where identifiers and expressions are data structures like
  any others.
  
  As a slightly more subtle point, a syntactic staticity guarantee would also have
  to prevent \textit{destructive modification of expressions}, which is
  possible as well in our implementation
  of \SECTION\ref{syntactic-extensions-chapter}.}, it
is always possible to modify procedures or globals with ordinary memory stores,
bypassing the ``high-level'' procedures for program self-modification.

Syntactic staticity properties \textit{are} definable in typed personalities,
where dynamic or static checks prevent the user from writing at arbitrary
memory addresses.
\\
\\
In accordance with our open-ended design principles we do not consider an
``error'' for a program to be self-modifying; yet staticity remains desirable
since self-modification makes most analyses impossible, prevents many compiler
optimizations including inlining, and indeed challenges the very idea of
compilation\footnote{It \textit{is} possible to compile only parts of the code,
  as several Lisp systems do, but the interaction between interpreted code and
  compiled code complicates design, also requiring dynamic invalidation and
  substitution of compiled code.
  It is also possible to have a JIT, or more simply a compiler to be executed
  at run time which translates every expression as soon as it is generated,
  like in the SBCL Common Lisp system \cite{sbcl}.}.

A static program can instead be compiled and optimized in a
traditional way and as a consequence of our design a \textit{whole-program}
approach, lending itself to global optimizations \cite{mlton}, feels
particularly natural.  Since the expressions occurring in a fixed program are
all known, it is easy to build global tables with handles as keys
(\SECTION\ref{handle-introduction}), to perform any kind of analysis\footnote{A
  useful notion for which we cannot claim novelty: the idea of attaching
  user-defined data to syntactic objects, now mostly popular because of
  Java ``annotations''
  (\url{http://download.oracle.com/javase/1.5.0/docs/guide/language/annotations.html}),
  was already quite explicit in McCarthy's 1959 LISP \cite{lisp-mccarthy}.}.

Particularly in an untyped context, where users are supposed to be competent,
it is reasonable to consider a certain program as semantically static
\textit{when users demand so} by requesting to analyze or compile a program in
a modality which takes advantage of staticity.  We stress once more how all
such functionality, including compilers, can be written in the language itself
as part of a ``library'', and is not specially hardwired in the system in any
way\footnote{We did not implement a complete compiler yet
  (\SECTION\ref{implementation-status})
  but we have a custom bytecode virtual machine, and the beginnings of native bindings.  No fundamental
  obstacle in implementing an \EPSILONZERO native compiler is apparent, and we
  plan to write one in the coming months.}.
\\
\\
Since the set of procedures in a static program is fixed and so is its main
expression, it makes sense to define a notation to show a program in a
``linear'' (and leaner) form.  It is also convenient to speak about individual
program components using the ``$\PARAMETER \in \PARAMETER$'' operator, without
making state environments explicit.

This will be useful in \SECTION\ref{dimension-analysis-chapter}, when we describe a static
analysis in detail.
\begin{definition}[static program linear syntax]\label{lean-belonging-sytnax-for-static-programs}
Let $p = (\Gamma, \DHANDLE{h}{e})$ be a static program.
If {\rm$\Gamma_{\CODE{procedures}} = \{
  f_1 \mapsto (x_{1_1}...x_{1_{n_1}}, \DHANDLE{h_1}{e}),
  f_2 \mapsto (x_{2_1}...x_{2_{n_2}}, \DHANDLE{h_2}{e}),
  ...,
  f_m \mapsto (x_{m_1}...x_{m_{n_m}}, \DHANDLE{h_m}{e})
  \}$},
then we can write the whole program as:\\
``{\rm$\DDEFINEPROCEDURE{f_1}{x_{1_1} ... x_{1_{n_1}}}{\DHANDLE{h_1}{e}}$\\
\INVISIBLE{``}$\DDEFINEPROCEDURE{f_2}{x_{2_1} ... x_{2_{n_2}}}{\DHANDLE{h_2}{e}}$\\
\INVISIBLE{``}$...$\\
\INVISIBLE{``}$\DDEFINEPROCEDURE{f_m}{x_{m_1} ... x_{m_{n_m}}}{\DHANDLE{h_m}{e}}$\\
\INVISIBLE{``}$\DHANDLE{h}{e}$}''.
\\
We also write:
\begin{itemize}
\item
``{\rm $\DDEFINEPROCEDURE{f}{x_1 ... x_n}{\DHANDLE{h_1}{e}} \in p$}''
to mean that
{\rm $\Gamma_{\CODE{procedures}}:{f} \mapsto (x_1...x_n, {\DHANDLE{h_1}{e}})$};

\item
``{\rm $\DHANDLE{h}{e} \in p$}'',
to mean that $\DHANDLE{h}{e}$ is the main expression of $p$.
\QEDDEFINITION
\end{itemize}
\end{definition}

\subsection{When to run analyses}
\label{when-to-run-analyses}
In traditional languages the act of 
performing a ``static'' analysis means running some procedure over the
syntax trees from a compilation unit \textit{before} the unit is compiled or
executed; but with our program notion above blurring phases and units, the very
idea of ``static analysis'' in the context of \EPSILON becomes fuzzy.
\\
\\
Activities closely analogous to static analysis remain meaningful: for example
in a statically-typed personality the procedures and global variables which are
part of a program at a given point can still be usefully checked for type
safety, \textit{independently from the way each entity was defined} in the past
evaluation history.

Some analyses may be attempted even for non-static programs.  The problem becomes
rather \textit{the point in time} at which to run analyses: since no ``end
point'' is apparent, no obvious solution comes to mind.  A personality might
run some or even all the analyses right after evaluating \textit{each} toplevel
expression; as a more radical hypothesis an advanced editor such as Emacs can certainly be
programmed to communicate with an interpreter after each \textit{character}
modification, demanding to run analyses\footnote{The difficulty of this
  approach is due less to program analysis than to the difficulty of defining
  the semantics of incremental modifications to non-contiguous points of a
  program.  This seems hard to accomplish for \EPSILON without rebuilding the
  entire state from scratch at every change, at a prohibitive cost in
  performance; caching mechanisms can be conceived for some
  personalities to make such operations more efficient.} and visualizing their
continuously updated results.

Of course any similar
solution needs to cope with the possibility of yet unresolved forward
references, which cannot be prevented in general due to the mutual recursion
inherent in \EPSILONZERO procedures: it is to be expected and regarded as normal
that analyses fail at some points where the current set of global definitions is ``open''.
\\
\\
One likely appropriate time for running analyses is right before compilation, since 
no unresolved forward references would be present at that point; but in
\EPSILON compilation does not necessarily mark any ``terminal'' point of the
evaluation history, either.
It seems reasonable to also allow analyses to be run at any point, \textit{on demand}.
\\
\\
We will see in \SECTION\ref{a-transform-may-be-a-good-place-to-run-an-analysis}
how transforms may be conveniently used to automatically associate analyses to
global definitions.

\section{Unexec}
\label{unexec}

Our programs, be they static or not, are in fact ``system images'', which would
be convenient to write to disk for later execution, or even to be transferred
to different computers; the main expression to be saved as part of the program
might simply be \textit{a call to the REPL procedure, itself calling the
  interpreter}: that way restoring the system image would open an interactive
session in the saved state.

One could even envisage ``snapshots'' as a way of saving the current system
state before performing an experimental and potentially destructive
modification in an interactive way: if the modification fails, the user can
revert to the old state by loading the last snapshot, presumably a much faster
operation than re-building the previous state by repeating the same
self-modifications which generated it in the first place from the initial
state.
\\
\\
The functionality cursorily described above resembles the Emacs
``unexec'' hack \cite[\SECTION{}Building Emacs]{emacs-lisp}.  Emacs consists of
a relatively small Lisp interpreter written in C which contains the core
primitives, plus the bulk of the system implemented in Lisp; in order to avoid
loading hundreds of Lisp files at every startup, the native Emacs executable is
built so that it fires up \textit{in the state which would be produced by
  loading the initial Lisp files}.  Building such a functionality in C with
native processes is tricky and requires system-specific low-level code.

Despite the similarity of intent our implementation will be much simpler, and
largely machine-independent.
\\
\\
The general idea of our unexecing strategy is to simply \textit{marshal} data
structures into a linear representation which can later be read back in an
\TDEF{exec} phase, based on unmarshalling.

The composition of unexec and exec yields a state identical\footnote{{We
    are assuming that memory encodes the complete global state, but this
    assumption breaks if the state refers system structures such as
    open files or sockets: unexec can not reproduce any object out of its
    process address space.}} to the original one up to buffer addresses.

{We cannot claim novelty for this idea, considering for example Hoare's
  early intuition in
  \cite[\SECTION3.3(2-3)]{hints-on-programming-language-design--hoare}; in a
  couple recent systems unexecing exists, but plays a less central role than in
  ours: SML/NJ for example supports a ``\CODE{heap2exec}''
  utility \cite{unexec-in-smlnj}; it generates native code,\footnote{Keeping
    the two functionalities separate has the advantage of providing a working
    unexec feature also on platforms where a native compiler is not
    implemented.} yet it can only run on a couple of platforms ---
\CODE{heap2exec} is not by any means ``the compiler'' for SML/NJ, but rather just one
tool among others.  Unexec support has also been discussed or experimented with
for Perl, Python and Guile \cite{guile}.}

\subsection{The stuff values are made of}
\label{memory-model}
\label{the-nature-of-values}
Since any realistic implementation must work on general-purpose Von Neumann
machines, it is clear that the implementations of all state environments and
expressions share in practice \textit{the same machine memory}; and that memory
holds the data structures we have to marshal.

\label{few-in-number-vague-reference}
Encoding details will be made clear in \SECTION\ref{bootstrap-phase-3}; but
even without specifying here how each kind of data is represented, we need to
describe the \textit{memory model} followed by all our in-memory objects.  As a
consequence of other design decisions, the actual data structures to be
marshalled for unexecing will be surprisingly few in number
(\SECTION\ref{unexec-the-symbol-table-and-the-main-expression}).
\\
\\
We can ignore background threads, which are not involved in unexecing as they
are excluded from programs as per Definition~\ref{program-definition}.  The
remaining values are of only two kinds:
\begin{itemize}
\item
\TDEF{unboxed} values;
\item
heap buffer \TDEF{pointers}, also called \TDEF{boxed} values.
\end{itemize}
A machine word, in practice not wider than a general register (32 or 64 bits on
modern machines), can hold either an unboxed value, or a pointer to a buffer; a
buffer is a contiguous array of other machine words in heap memory.  Pointers
are always \textit{initial}: we preferred to simply avoid \textit{interior}
pointers as they may exhibit bad interactions with some garbage collection
algorithms which we may want to adopt in the future \cite{survey--wilson}, even
if ours has no such restriction
(\SECTION\ref{interior-pointers-are-ok}).
\\
\\
Unboxed values are often used for \TDEF{fixnums}, which is to say fixed-range
integers, represented in two's complement on modern hardware.
Booleans, characters and enumerates also fit comfortably in the range of
unboxed values.  No provision is made for objects
smaller than a word: at this level, one word is the smallest representable datum.
Complex data containing multiple ``fields'' usually need to be boxed, but if
all fields put together fit within the width of an unboxed datum, they can also be
packed into a single word: from the point of view of the memory model there is
no difference between a single-field and a multi-field unboxed object.
Efficient implementations of dynamically-typed personalities are free to
reserve some bits as tags in unboxed objects and pointers
\cite{representing-type-information-in-dynamically-typed-languages--gudeman},
in the case of pointers
exploiting the fact that allocation alignment will free at least two or three bits for
all buffer addresses, on current byte-addressed machines (\SECTION\ref{gc-alignment}).

Some values are in practice necessarily boxed, notably reified
\textit{expressions}; as a consequence in some cases what we informally called
a ``value'' in \SECTION\ref{epsilonzero-chapter} is actually a function of the
pointer, which we use as a reference to the entire object to pass around, plus
all the memory which it refers, closed under the ``points-to'' relation.

Values can then be visualized as a graph, possibly containing converging edges
and cycles.
\\
\\
It is in practice possible to alter memory to change a boxed datum component
even when such value does not appear as ``mutable'' in the semantics, for example
by using a store primitive on an expression datum.  Such practices would entail
primitive failure\footnote{We did not specify explicit rules for our chosen set
  of primitives, including preconditions to be satisfied to avoid failure;
  however we can quickly hint at a solution: we can imagine that each buffer
  contains an initial boolean tag word, recording its mutability or lack
  thereof; the store primitive will only permit to write mutable buffers, and
  another primitive will be available to change a mutability tag from mutable
  to immutable, but never the converse.  Load and store primitives would
  implicitly skip the tag word in the offset they receive (\SECTION\ref{buffer-set!-primitive}).
  
  Of course the implementation does not need to actually represent the
  mutability tag word: Implementation Note~\ref{failure-implementation-note}
  permits us to \textit{assume that no failure occurs}, and still respect our
  specification.} and hence not respect the hypotheses for implementation
guarantees (see Implementation Note~\ref{failure-implementation-note}); the
fact that they are possible does \textit{not} constitute a violation of \EPSILONZERO
semantics.
\\
\\
\label{memory-dumps}The following linearized textual format for memory data structure including
addresses is very convenient for debugging the implementation and also as a generic fallback ``untyped printer'':
\begin{syntacticconvention}[memory dump]\label{memory-dump-syntactic-convention}
We \TDEF{dump} a given datum into a string of colored characters, according to
its shape.  There are two cases:
\begin{itemize}
\item an \textit{unboxed datum} is written in \textit{green} in decimal, as a two's complement signed integer;
\item a \textit{pointer} is written as a hexadecimal number, prefixed by the string ``\CODE{0x}'';
  \begin{itemize}
  \item if the referred buffer occurs for the \textit{first time} in the data structure
    (depth-first left-to-right), its address is written in \textit{red} followed by a dump of its
    \textit{buffer elements} between brackets, with consecutive content words separated
    by a space;
  \item if the referred buffer has \textit{already occurred}, its address is written in
    \textit{yellow} and the buffer content is not repeated.
    \QEDSYNTACTICCONVENTION
  \end{itemize}
\end{itemize}
\end{syntacticconvention}
With some practice it is not difficult to make sense of quite complicated data
structures by reading memory dumps.  Despite not being strictly needed
to parse textual dumps, color makes it easier for humans to recognize
a structure's shape at a glance.
\\
\\
In most cases (but not all: see the discussion of hashes
in \SECTION\ref{reflection--unexecing-and-hashes}) the actual numeric address held by
pointers is not relevant for algorithms, and a pointer simply ``identifies''
a certain buffer, independently from its specific placement in memory.  It is hence
reasonable to represent data graphically, ignoring addresses and simply using
arrows for pointers, multi-slot boxes for buffers, and numbers for unboxed data.

Such ``address invariance'' is fortunate, since usually we do not have control
over buffer addresses at allocation time\footnote{System libraries ultimately
  choose data structure addresses, providing very few guarantees.  Sometimes
  the problem is made even worse by deliberate address space boundary
  randomizations performed for security reasons
  \cite{address-space-randomization}.}, hence we cannot reliably re-create a
buffer at a specified memory address.  What the composition of marshalling and
unmarshalling will accomplish, then, is the reproduction of the \TDEF{data
  structure graph}
(Figures~\ref{circular-list-figure} and \ref{shared-box-figure}).
For example, after marshalling and unmarshalling, the data structure dumped in
Figure~\ref{shared-box-figure} might be ``cloned'' into
\CODE{\textcolor{red}{0x26aaaf0[0x2899220[\textcolor{darkgreen}{42}] 0x3078920[\textcolor{darkyellow}{0x2899220} \textcolor{darkgreen}{0}]]}}.
\begin{figure}[h!]
\centering
\begin{tikzpicture}[every node/.style=draw,minimum width=24mm, minimum height=5mm]
\pgfsetmatrixcolumnsep{10mm}
\pgfmatrix{rectangle}{center}{mymatrix}
{\pgfusepath{}}{\pgfpointorigin}{\let\&=\pgfmatrixnextcell}
{
                          \& \node(p3-b){\CODE{57}}; \&                          \& \\
 \node(c1){}[style=white]; \& \node(p1-a){};\fill(0,0) circle (1.7pt);          \&  \node(p1-b){\CODE{3}};  \& \\
 \node(cl){}[style=white]; \&                         \&  \node(p2-a){};\fill(0,0) circle (1.7pt);         \& \\
 \node(c3){}[style=white]; \& \node(p2-b){\CODE{-2}}; \&                          \& \\
                          \& \node(p3-a){};\fill(0,0) circle (1.7pt);          \&  \node(cr){}[style=white]; \& \\
}
\draw [-latex] (p1-a.center) -- (p1-b);
\draw [-latex] (p2-a.base) .. controls (cr.north) .. (p2-b.east);
\draw [-latex] (p3-a.center) .. controls (cl) .. (p3-b.west);
\end{tikzpicture}
\caption{\label{circular-list-figure}
A circular list holding the fixnums \CODE{57}, \CODE{3} and \CODE{-2}, whose dump could be
\CODE{\textcolor{red}{0x27032d0[\textcolor{darkgreen}{57} 0x279ead0[\textcolor{darkgreen}{3} 0x28a66e0[\textcolor{darkgreen}{-2} \textcolor{darkyellow}{0x27032d0}]]]}}.}
\end{figure}
\begin{figure}[h!]
\centering
\begin{tikzpicture}[every node/.style=draw,minimum width=24mm,minimum height=5mm]
\pgfsetmatrixcolumnsep{0mm}
\pgfmatrix{rectangle}{center}{mymatrix}
{\pgfusepath{}}{\pgfpointorigin}{\let\&=\pgfmatrixnextcell}
{
    \node(p1-a){};\fill(0,0) circle (1.7pt);         \&                            \&                     \& \\
    \node(p3-a){};\fill(0,0) circle (1.7pt);         \&  \node{}[style=white];     \&  \node(p2-a){};\fill(0,0) circle (1.7pt);     \& \\
                           \&  \node(c){}[style=white];  \&  \node(p3-b){\CODE{0}};    \& \\
                           \&  \node{}[style=white];     \&                     \& \\
 \node(c2){}[style=white]; \&  \node(p2){\CODE{42}};            \&                     \& \\
}
\draw [-latex] (p1-a.center) .. controls (c2.west) .. (p2.west);
\draw [-latex] (p2-a.center) .. controls (c.north) .. (p2.north);
\draw [-latex] (p3-a.center) -- (p2-a);
\end{tikzpicture}
\caption{\label{shared-box-figure}
An example of sharing: a two-element list using \CODE{0} as a terminator whose elements both point to the same one-element buffer, holding the fixnum \CODE{42}.  One possible dump is
\CODE{\textcolor{red}{0x29ecd90[0x2714220[\textcolor{darkgreen}{42}] 0x29549f0[\textcolor{darkyellow}{0x2714220} \textcolor{darkgreen}{0}]]}}.}
\end{figure}

\subsection{Marshalling}
\label{marshalling}
Textual dumps as per Syntactic Convention~\ref{memory-dump-syntactic-convention}
could serve as a marshalling format; however our implementation marshals data
structures into binary files, for efficiency reasons.  Since specific pointer
values are immaterial, in marshalling we replace them with sequential 0-based
identifiers, which enables some minor optimizations.  The logic of marshalling
and unmarshalling algorithms resembles moving garbage collecting algorithms
such as \textit{semispace} \cite{survey--wilson}, which have to recursively
``clone'' data structure graphs.

Similarly to textual dumps, when \textit{marshalling}, the idea is to
recursively \textit{trace} a data structure keeping into account which buffers
we already visited; marshalling produces a sequence of zero or more ``buffer
definitions'' followed by the single main object, be it a pointer or an
unboxed value.  Pointers are encoded as buffer indices, following the
definition order.

Conversely, the \textit{unmarshal} procedure will allocate and fill buffers,
and then resolve buffer identifiers into pointers in a second pass.
\\
\\
In our current implementation each file field is a 32-bit big-endian word\footnote{Using 32-bit rather than 64-bit words helps to avoid relying on non-portable behavior
by mistake when dumping very large unboxed data.
We could reduce tag words to bytes
  or even single bits, but particularly in the latter case it is not clear
  whether the denser format would compensate for the additional required
  shuffling in terms of time efficiency.  Space consumption is not a problem:
typical unexec binary dumps have sizes ranging in the order of one to a few megabytes.
}.  A \TDEF{binary dump} (see Figure~\ref{binary-dump-file-format-figure})
begins with a word holding the \textit{number of buffers}, followed by the same
number of ``buffer definitions'', and finally by the ``main object''; each
buffer definition contains a word encoding the \textit{buffer size} in words,
following by as many ``elements'', each element containing two words: either a \textit{0
  tag} for an unboxed object followed by the \textit{content}, or a \textit{1
  tag} for a boxed object followed by the \textit{buffer index} (in the order
of buffer definitions, 0-based); the main object is one further element.
\begin{figure}[h!]
\centering
\[
 \text{\em buffer-no}
 \ 
 {\overbrace{(\text{\em element-no}
     \ {\underbrace{((0|1)\ \text{\em fixnum})}_{\text{{\em element-no} times}}}{}^{*})}^{\text{{\em buffer-no} times}}}{}^{*}
 \ 
 (0|1)\text{\em fixnum}
\]
\caption{\label{binary-dump-file-format-figure}Binary dump file format. Each of
\textit{buffer-no}, \textit{element-no}, $0|1$ and \textit{fixnum} is
encoded as a 32-bit big-endian word.}
\end{figure}

\subsubsection{Boxedness tags}
\label{boxedness-tags}\label{reflection--tagging-for-unexecing}Up to this point
we have assumed that marshalling, and textual dumping as well, can discriminate
between pointers and unboxed objects; but this it not possible at the hardware
level.
\\
\\
On the physical machine pointers are memory addresses, which is to say
\textit{numbers}, and as such in principle indistinguishable from unboxed
objects such as fixnums.

Modern hardware and operating systems tend to guarantee that objects will not
be allocated at very low addresses, so in practice it may be safe to assume that
all pointers have a numeric value larger than some constant such as $2^{16}$
\cite{compiling-with-continuations}; alignment causes all pointers to be multiples
of the word size, possibly times some other small factor; however many large fixnums
remain effectively impossible to discriminate from pointers.
\\
\\
The solution is providing a one-bit \TDEF{boxedness tag} associated to each datum, plus
a \TDEF{dimension field} per buffer --- dimensions tending not to be overly problematic
in practice\footnote{A memory system organized in the BiBOP style
  (\SECTION\ref{gc-chapter}) --- \textit{not necessarily a garbage collector}
  --- would permit not to represent them at all in the most common cases.}.

The bit can be stored within each word itself, if reducing its payload width is acceptable;
otherwise a \textit{much} less efficient but more flexible solution
consists in representing all
objects as boxed two-word buffers, using one element as the boxedness tag and
another for the payload.  We implemented this latter strategy, as it was slightly easier
to integrate with Guile (\SECTION\ref{bootstrapping}).

In either case, both \EPSILONZERO primitives and the memory management system
should keep boxedness tags into account: this will slow down arithmetic
operations and possibly complicate garbage collection.  On the other hand, some
garbage collectors (for example OCaml's) already require the same tagging
strategy for their own purposes: where such a collector is used anyway boxedness
tags cause no additional overhead.
\\
\\
Since boxedness tags are expensive
it is conceivable
to provide two different runtime libraries, a \TDEF{tagged runtime} associating
a boxedness tag to every word and a length field to every buffer, and an \TDEF{untagged runtime} directly using
the machine representation: dumping, marshalling and unexecing will only be
possible on the ``tagged'' runtime, but the ``untagged'' runtime will be more
efficient.  One interesting feature of this solution is that, since
\textit{unmarshalling does not rely on tags}, the untagged runtime can always
be used as a last stage for a program which has been developed on the tagged
runtime, before being unexeced for the final time.  In case of compiled code, a
(presumably static) compiled program should probably always use an untagged
runtime.
\\
\\
\label{more-than-one-runtime}Boxedness tags, when present, can also be used by primitives to perform some
dynamic checks and prevent out-of-bounds errors in a very crude form of dynamic
``typing'', which
has value when debugging.  In this view it might make
sense to provide \textit{three} different runtimes: ``untagged'', ``tagged
checked'' and ``tagged unchecked''.

Our implementation currently contains only a tagged checked runtime.
Implementing the other runtimes is not
hard, being mostly a matter of using C preprocessor macros to wrap object
accesses; we will provide the two missing runtimes as soon as we eliminate the
dependency on
Guile.

\subsubsection{Marshalling properties}
\label{marshalling-properties}
Since we have not formally specified marshalling and unmarshalling algorithms,
here we simply assert their properties without proof, as guarantees to be
provided by an implementation.
\\
Again, the
strong
resemblance to moving garbage collectors
is not coincidental.
\\
\\
We first need to specify exactly what we mean as the \textit{corresponding
substructures} of an object, ``before and after'' marshalling:
\begin{definition}[marshalling correspondence]\label{marshalling-correspondence-definition}
Let $a_0$ be an object which is marshalled into a binary dump, itself unmarshalled into the object $b_0$.  Then, by induction:
\begin{itemize}
\item $a_0$ \TDEF{corresponds to} $b_0$;
\item if $a$ corresponds to $b$ and both $a$ and $b$ are pointers to $n$-element buffers, then
the $i$-th component of the buffer pointed by $a$
\TDEF{corresponds to}
the $i$-th component of the buffer pointed by $b$
for any $0 \leq i < n$.
\QEDDEFINITION
\end{itemize}
\end{definition}
Marshalling has to ``preserve structure'', which is to say has to reproduce
the original pointer graph, mapping buffers into buffers and unboxed objects
into unboxed objects:
\begin{axiom}
Let $a$ correspond to $b$. Then we have that $a$ is unboxed if and only if $b$ is unboxed.
For every $n \in \NATURALS$ we have that
$a$ is a pointer to an $n$-element buffer if and only if $b$ is also a pointer to an $n$-element buffer.
\QEDAXIOM
\end{axiom}
\begin{axiom}
Corresponding unboxed objects are equal \textit{provided that} they both fit into a dump word payload.
\QEDAXIOM
\end{axiom}
Corresponding pointers are \textit{not} guaranteed to be equal\footnote{We do
  not want to assert that they are \textit{necessarily} different, because in
  practice garbage collection might intervene destroying the original object
  before its corresponding version is built, and it is conceivable that under
  unusual circumstances the unmarshalled object may reside at the same address
  as its corresponding original version.}, but marshalling ``preserves
equality'' without introducing or eliminating sharing, in the following sense:
\begin{axiom}
Let $a_1$ be a pointer corresponding to $b_1$ and $a_2$ be a pointer corresponding to $b_2$; then we have that
$a_1 = a_2$ if and only if $b_1 = b_2$.
\QEDAXIOM
\end{axiom}
As a consequence most operations over pointers continue to work with their
intended semantics after unmarshalling, including checking pointer equality ---
but checking whether an address is numerically \textit{smaller or bigger} then
another may yield a different result.

\label{reflection--unexecing-and-hashes}
Interpreting pointers as fixnums and doing arithmetic over them, for example to
compute a hash function, in general will yield different results before and
after unmarshalling.  But using only the \textit{unboxed elements} of boxed
structures yields the same results after unmarshalling, except in case of
overflow.

\section{Summary}
Instead of hardwiring definition forms into the language syntax, we can keep
the language simpler by providing procedures to update procedures and global
variables.  These procedures may be used anywhere, and allow for powerful
self-modifying code.

``Static'' code, on the other hand, has the advantage of allowing analyses and
being efficiently compilable.  A state where no more self-modification takes
place can be reached incrementally, by successive self-modifications.

Having access to the current global state permits to save a snapshot of the
system as a data structure, in a way similar to the Emacs unexec hack in terms
of functionality, but implemented much more simply by data structure
marshalling.

Marshalling relies on boxedness tags, which can be made optional for higher
performance.


\chapter{A static semantics for \EPSILONZERO: dimension analysis}
\label{chapter-static-semantics}
\label{chapter-dimension-analysis}
\label{dimension-analysis-chapter}

The core language \EPSILONZERO as described in
\SECTION\ref{chapter-epsilonzero}
is much simpler than other formally-specified languages such as SML,
whose description \cite{sml90,commentary-on-sml,sml97} looks strikingly
complex for a ``small'' language; Scheme Standards include non-normative
semantics for some language subset in appendices \cite{r3rs,r4rs,r5rs,r6rs}; mainstream
languages have no formal specification at all.

To make a realistic argument for the practicality of our \EPSILONZERO
semantics we are going to show an example of 
its application by formally describing a static analysis of bundle dimensions
for static programs (\SECTION\ref{static-programs}),
and then proving it sound with respect to the semantics.
\\
\\
We chose to deal with bundle dimensions in this sample analysis because bundles
are interesting as a slightly unusual
feature, but of course dimension analysis has no privileged status: just like any other static
analysis in \EPSILON, dimension analysis can be used as in ML for preventing runtime
errors at the cost of also rejecting some correct programs,
or just to obtain
warnings, or not at all; and of course any number of analyses (or ``type
systems'') can run side by side on the same program; it is up to the
personality implementor to decide what to do with the results.

\minitoc

\section{Dimension inference}
In analogy with Hindley-Milner type inference \cite{damas-milner} we would like
to define a procedure automatically assigning\footnote{An alternative approach
  based on \textit{checking} user-supplied annotations would have been
  possible, since in practice only few expressions will have a dimension other
  than $\LIFT{1}$, which could be assumed as the default case.  Inference is however even less obstrusive, and
  does not seem to require a substantially different formalization.}
a \TDEF{dimension} to every expression in a static program, where the dimension
represents a conservative approximation of the size of
the bundle the expression may evaluate to at run time.

Intuitively, we want to associate dimension ``one'' to constants such as
${\DCONSTANT{{h_{0}}}{42}}$,
and also to all variables such as 
${\DVARIABLE{{h_{1}}}{x}}$ since non-singleton bundles are not
denotable. In the same spirit, a two-object bundle such as 
${\DBUNDLE{{h_{2}}}{{\DCONSTANT{{h_{3}}}{10}}\ \LINEBREAK{\DPRIMITIVE{{h_{4}}}{+}{{\DCONSTANT{{h_{5}}}{1}}\ \LINEBREAK{\DCONSTANT{{h_{6}}}{2}}}}}}$
would have dimension
``two'', and of course the zero-element bundle 
${\DBUNDLE{{h_{7}}}{}}$
would have dimension ``zero''.

Anyway by following this line of reasoning alone we get stuck very soon: for example, what
dimension should we assign to a call to the procedure $f1$?
\\
${\DDEFINEPROCEDURE{f1}{x}{{\DCALL{{h_{1}}}{f2}{{\DVARIABLE{{h_{2}}}{x}}}}}}$
\\
${\DDEFINEPROCEDURE{f2}{x}{{\DVARIABLE{{h_{3}}}{x}}}}$
\\
${\DCALL{{h_{4}}}{f1}{{\DCONSTANT{{h_{5}}}{42}}}}$
\\
Of course the answer relies on $f1$'s definition, and in particular on the
dimension of its body. But $f1$'s body consists of a call to $f2$... It is
already clear that dimension inference 
has to work \textit{on an entire program}, using a fix point
construction of some sort:
in the fashion of type inference, the analysis will deduce a set of
constraints from a program (for example: $f1$ returns a
result with the same dimension as the result of $f2$;
$f2$ returns a singleton bundle; the main expression has the same
dimension as the result of $f1$), and attempt to resolve them.

\subsection{The dimension lattice $(\NATURALSBT, \SQMEET, \SQJOIN)$}
It is easy to see how our dimension domain needs to be at least slightly richer
than the set of natural numbers $\NATURALS$, for example by looking at the main
expression of the following program:
\\
${\DDEFINEPROCEDURE{f}{}{{\DCALL{{h_{1}}}{f}{}}}}$
\\
${\DCALL{{h_{2}}}{f}{}}$
\\
Since $f$ never returns anything
the analysis cannot discover any constraint on the dimension of its
result, other than a trivial one according to which such dimension
is equal to itself.  We call ``$\bot$'' \textit{the dimension of an expression on which
we have no constraints}, such as the main expression of the program above.

As it will be made clear below, in
practice only some trivially looping expressions have dimension $\bot$.
From the dimension point of view such expressions are particularly
unproblematic and easy to combine with others,
since they can never cause failures thanks to \EPSILONZERO's call-by-value
strategy: for example passing a parameter with 
dimension $\bot$ to any unary procedure will cause an infinite loop before the
body has a chance of ever being evaluated, and maybe failing.
\\
\\
At the opposite end of the spectrum, some expressions are clearly troublesome;
for example a procedure call with a wrong number of parameters will definitely
yield a dimension failure at run time, if the expression is reached and
parameters converge; we assign
the dimension ``$\top$'' to such \textit{trivially failing} expressions.

As a slightly more subtle case, and very similarly to Hindley-Milner type inference, we need
to give \IFNAME expressions a dimension which is the ``synthesis'' of
its branch dimensions: when the \THENNAME and \ELSENAME branches have incompatible
dimensions, such synthesis will be $\top$. For example we assign
the dimension $\top$
to the
\textit{inconsistently-dimensioned}
 expression
${\DIFIN{{h_{0}}}{{\DVARIABLE{{h_{1}}}{x}}}{1,2,3}{{\DCONSTANT{{h_{2}}}{10\LINEBREAK}}}{{\DBUNDLE{{h_{3}}}{}}}}$;
such an expression is problematic to 
compose, because the dimension of the result bundle varies according
to which branch is taken at run time.
\\
\\
Our dimension domain is hence made of the natural numbers $\NATURALS$ extended with the
two elements $\bot$ and $\top$: we call this set $\NATURALSBT$.
We can easily define a partial order $\PARAMETER \sqsubseteq \PARAMETER$ as 
the reflexive closure of the relation $\PARAMETER \SQL \PARAMETER$,
where $\PARAMETER \SQL \PARAMETER = \bigcup_{i \in \NATURALS}{\{(\bot, \LIFT{i}), (\LIFT{i}, \top)\}}$.
\begin{figure}[h!]
  \centering
  \begin{tikzpicture}
    \matrix (dimension)
            [matrix of nodes,
              nodes in empty cells,
              nodes={outer sep=0pt,circle,minimum size=4pt},
              column sep={1cm,between origins},
              row sep={1cm,between origins}]
            {                  &   &   &   & $\top$&   &   &   &    \\
              \INVISIBLE{hack} & $\LIFT{0}$ & $\LIFT{1}$ & $\LIFT{2}$ & $\LIFT{3}$     & $\LIFT{4}$ & $\LIFT{5}$ & $\LIFT{6}$ & ...\\
                               &   &   &   & $\bot$&   &   &   &    \\};
            \foreach \a in {2,...,8}{
              \draw (dimension-1-5) -- (dimension-2-\a);
              \draw (dimension-3-5) -- (dimension-2-\a);
            }
  \end{tikzpicture}
  \caption{\label{dimension-lattice-figure}The flat lattice
    $(\NATURALSTB, \SQJOIN, \SQMEET)$.}
\end{figure}
\\
\\
The set $\NATURALSBT$ with the order $\PARAMETER \sqsubseteq \PARAMETER$ forms
a flat lattice: for any $a, b \in \NATURALSTB$, we call 
$a \SQJOIN b$ their
least upper bound or ``join'',
and
$a \SQMEET b$ their greatest lower bound or
``meet''.

In the lattice higher values correspond to more constrained dimensions,
with $\bot$ representing the absence of any constraint,
$\LIFT{n}$ with $n \in \NATURALS$ representing a bundle of exactly $n$ elements,
and $\top$
expressing several conflicting constraints; the
join operation $\PARAMETER \SQJOIN \PARAMETER$ is but the ``synthesis''
mentioned above, yielding \textit{the least constrained dimension which is compatible
with both parameters}:
joining $\bot$ with another element yields the other element, joining
$\LIFT{n}$ with itself yields $\LIFT{n}$ for every $n \in \NATURALS$, and
joining $\LIFT{n}$ with $\LIFT{m}$ for $n \neq m$ yields $\top$; joining $\top$
with any element yields $\top$.
\\
\\
Occasionally we may also use the set $\NATURALSB$, defined as
$\NATURALSTB \setminus \{\top\}$.

\subsection{Definition and properties}

We are now ready to formally enunciate dimension analysis, computing a
dimension for each expression occurring anywhere in the program, and for each
procedure.
\begin{definition}[Dimension]\label{dimension-definition}
Let a program $p$ be given.  We define in a mutually-recursive fashion:
  \begin{itemize}
  \item
    The dimension function for expressions,
    a partial function with signature $\EXPRESSIONS \PARTIAL \NATURALSB$, that we represent as
    the relation $\PARAMETER \HASDIMENSION \PARAMETER \ \subseteq \ \EXPRESSIONS \times \NATURALSB$:
  
{\rm 
\begin{minipage}{0.5\linewidth}
  \begin{prooftree}
    \AxiomC{}
    \LeftLabel{}
    \RightLabel{}
    \UnaryInfC{$\DVARIABLE{h}{x} \HASDIMENSION \LIFT{1}$}
  \end{prooftree}
\end{minipage}
\begin{minipage}{0.5\linewidth}
  \begin{prooftree}
    \AxiomC{}
    \LeftLabel{}
    \RightLabel{}
    \UnaryInfC{$\DCONSTANT{h}{c} \HASDIMENSION \LIFT{1}$}
  \end{prooftree}
\end{minipage}

\begin{prooftree}
  \AxiomC{$\DHANDLE{h_1}{e} \HASDIMENSION d_1\ \ ...\ \ \DHANDLE{h_{n}}{e} \HASDIMENSION d_n$}
  \LeftLabel{}
  \RightLabel{$d_i \SQLE \LIFT{1}$, for all $1 \leq i \leq n$}
  \UnaryInfC{$\DBUNDLE{h_0}{\DHANDLE{h_1}{e}...\DHANDLE{h_{n}}{e}} \HASDIMENSION \LIFT{n}$}
\end{prooftree}

\begin{prooftree}
  \AxiomC{$\DHANDLE{h_1}{e} \HASDIMENSION d_1$}
  \AxiomC{$\DHANDLE{h_2}{e} \HASDIMENSION d_2$}
  \LeftLabel{}
  \RightLabel{$d_1 \SQLE \LIFT{m}$ for some $m \geq n$, $d_2 \SQL \top$
  }
  \BinaryInfC{$\DLET{h_0}{x_1...x_n}{\DHANDLE{h_1}{e}}{\DHANDLE{h_2}{e}} \HASDIMENSION d_2$}
\end{prooftree}

\begin{prooftree}
  \AxiomC{$\pi \HASDIMENSION n \to m$}
  \AxiomC{$\DHANDLE{h_1}{e} \HASDIMENSION d_1\ \ ...\ \ \DHANDLE{h_n}{e} \HASDIMENSION d_n$}
  \LeftLabel{}
  \RightLabel{$d_i \SQLE \LIFT{1}$, for all $1 \leq i \leq n$}
  \BinaryInfC{$\DPRIMITIVE{h_0}{\pi}{\DHANDLE{h_1}{e}...\DHANDLE{h_n}{e}} \HASDIMENSION \LIFT{m}$}
\end{prooftree}

\begin{prooftree}
  \AxiomC{$f \HASDIMENSION n \to d$}
  \AxiomC{$\DHANDLE{h_1}{e} \HASDIMENSION d_1\ \ ...\ \ \DHANDLE{h_n}{e} \HASDIMENSION d_n$}
  \LeftLabel{}
  \RightLabel{$d_i \SQLE \LIFT{1}$, for all $1 \leq i \leq n$, $d \SQL \top$}
  \BinaryInfC{$\DCALL{h_0}{f}{\DHANDLE{h_1}{e}...\DHANDLE{h_n}{e}} \HASDIMENSION d$}
\end{prooftree}

\begin{prooftree}
  \AxiomC{$\DHANDLE{h_1}{e} \HASDIMENSION d_1$}
  \AxiomC{$\DHANDLE{h_2}{e} \HASDIMENSION d_2$}
  \AxiomC{$\DHANDLE{h_3}{e} \HASDIMENSION d_3$}
  \LeftLabel{}
  \RightLabel{$d_1 \SQLE \LIFT{1}$, $d = d_2 \SQJOIN d_3$, $d \SQL \top$}
  \TrinaryInfC{$\DIFIN{h_0}{\DHANDLE{h_1}{e}}{c_1...c_n}{\DHANDLE{h_2}{e}}{\DHANDLE{h_3}{e}} \HASDIMENSION d$}
\end{prooftree}

\begin{prooftree}
  \AxiomC{$f \HASDIMENSION n \to d$}
  \AxiomC{$\DHANDLE{h_1}{e} \HASDIMENSION d_1\ \ ...\ \ \DHANDLE{h_n}{e} \HASDIMENSION d_n$}
  \RightLabel{$d \SQLE \LIFT{1}$, $d_i \SQLE \LIFT{1}$, for all $1 \leq i \leq n$}
  \BinaryInfC{$\DFORK{h_0}{f}{\DHANDLE{h_1}{e}...\DHANDLE{h_n}{e}} \HASDIMENSION {\LIFT{1}}$}
\end{prooftree}

\begin{prooftree}
  \AxiomC{$\DHANDLE{h_1}{e} \HASDIMENSION d_1$}
  \RightLabel{$d_1 \SQLE \LIFT{1}$}
  \UnaryInfC{$\DJOIN{h_0}{\DHANDLE{h_1}{e}} \HASDIMENSION {\LIFT{1}}$}
\end{prooftree}
} 

  \item
    the dimension function for procedures (written in relational notation)
    $\PARAMETER \HASDIMENSION \PARAMETER \to \PARAMETER$, with
    signature
    $\PROCEDURES \to (\NATURALS \times \NATURALSTB)$,
    associating a procedure name with the number of its parameters and the
    dimension of its result.\\
    %
    For each procedure
    {\rm $\DDEFINEPROCEDURE{f}{x_1 ... x_n}{\DHANDLE{h_1}{e}} \in p$} we say
    $f$ has \TDEF{in-dimension} $n$ and \TDEF{out-dimension} $d$,
    and we write ``$f \HASDIMENSION n \to d$'' where $d$ is the minimum
    fixpoint such that $\#(\DHANDLE{h_1}{e}) = d$.
    %

  \end{itemize}
  Then we define $\#(\PARAMETER) : \EXPRESSIONS \to \NATURALSTB$ as the total
  extension of $\PARAMETER \HASDIMENSION \PARAMETER$, so that
  $\#(\PARAMETER)$ returns $\top$ where $\PARAMETER \HASDIMENSION \PARAMETER$
  is not defined and its same result elsewhere.
  \QEDDEFINITION
\end{definition}
Definition~\ref{dimension-definition} depends on the fact that the relation
$\PARAMETER \HASDIMENSION \PARAMETER$ be a function, which is clearly true because rule 
  premises are pairwise disjoint.

Notice that all the side constraints of the form ``$d \SQL \top$''
(rules for \LETNAME, \CALLNAME and \IFNAME) are only
included for aesthetic symmetry, so that in case of any dimension inconsistency
$\PARAMETER \HASDIMENSION \PARAMETER$ remains undefined just as in
the other syntactic cases, rather than returning $\top$:
of course the total extension
$\#(\PARAMETER)$ would remain the same even if we erased such side constraints from
$\PARAMETER \HASDIMENSION \PARAMETER$.
\begin{definition}\label{singular-definition}\label{plural-definition}\label{singular-plural-definition}
We call an expression $\DHANDLE{h}{e}$ \TDEF{plural} if
$\#(\DHANDLE{h}{e}) = \LIFT{n}$ for some $n \neq 1$,
\TDEF{consistently-dimensioned} if $\#(\DHANDLE{h}{e}) \SQL \top$
and
\TDEF{inconsistently-dimensioned} if $\#(\DHANDLE{h}{e}) = \top$.
\QEDDEFINITION
\end{definition}
We intentionally wrote Definition~\ref{plural-definition} so that it makes empty bundles
plural, and trivially-looping expressions not plural.
The reason for this choice is bound to the implementation: only
what we call plural expressions requires some non-conventional implementation
technique such as placing a value in a number of registers or
stack slots
different from one.
Expressions which never return anything do not pose particular problems --- and we
stress again that we do \textit{not} consider non-termination an
error; anyway the existence of expressions $\DHANDLE{h}{e} \HASDIMENSION \bot$
is the reason why we resist the temptation of defining ``singular''
expressions;
should $\DHANDLE{h}{e}$ be both singular and plural, singular and not plural, plural and not singular, or
neither? No solution seems intuitive, or particularly useful.
\\
\\
It is not hard to see how inconsistent dimensioning ``propagates outwards'',
from a \textit{contained} expression out to its \textit{containing} expression:
\begin{proposition}\label{top-dimension-propagates-outwards-proposition}
Any expression containing an inconsistently-dimensioned subexpression is
inconsistently-dimensioned itself.
\end{proposition}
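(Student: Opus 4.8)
The plan is to prove the contrapositive by structural induction, reducing everything to a single inversion observation about the rules of Definition~\ref{dimension-definition}. Recall that, by that definition, the partial relation $\PARAMETER \HASDIMENSION \PARAMETER$ takes values in $\NATURALSB = \NATURALSTB \setminus \{\top\}$, and that the total extension satisfies $\#(e) = \top$ precisely when $e \HASDIMENSION d$ is \emph{underivable} for every $d$. Thus ``inconsistently-dimensioned'' (Definition~\ref{singular-plural-definition}) is synonymous with ``$\PARAMETER \HASDIMENSION \PARAMETER$ undefined'', and it suffices to show that if $e'$ is a subexpression of $e$ with $\#(e') = \top$, then $\#(e) = \top$.

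The key step is the following inversion lemma, read off directly from the rules: for every compound expression $e$, if $e \HASDIMENSION d$ is derivable, then each immediate subexpression $e_i$ of $e$ satisfies $e_i \HASDIMENSION d_i$ for some $d_i \in \NATURALSB$. This holds by inspection, since every rule whose conclusion is a compound form carries, among its premises, one judgment $e_i \HASDIMENSION d_i$ per immediate subexpression: the bundle, primitive, call and fork rules list one such premise per argument; the let rule lists both the bound expression and the body; the conditional lists all three of its subexpressions; the join lists its future expression. The constant and variable rules have no subexpressions and are vacuous here, as is the empty-bundle instance ($n = 0$). Contrapositively, if some immediate subexpression $e_i$ has $\#(e_i) = \top$, then no rule can fire for $e$, so $\#(e) = \top$; this is exactly the one-level ``propagation outwards'' described informally before the statement.

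Given this lemma, I would carry out the induction on the reflexive-transitive subexpression relation. If $e'$ equals $e$, the conclusion is immediate. Otherwise $e'$ is a subexpression of some immediate subexpression $e_i$ of $e$; the induction hypothesis applied to $e_i$ yields $\#(e_i) = \top$, and the inversion lemma then propagates this one level outward to give $\#(e) = \top$, completing the step.

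The proof has no real obstacle; the only points requiring care are the bridge between the partial relation $\PARAMETER \HASDIMENSION \PARAMETER$ and its total extension $\#(\PARAMETER)$, and the use of rule inversion for an inductively defined relation, so that derivability of $e \HASDIMENSION d$ genuinely forces all of its premises to hold. It is worth noting that the side conditions of the form $d \SQL \top$ in the let, call and conditional rules play no role here: as the excerpt already remarks, they are redundant, since the premises $e_i \HASDIMENSION d_i$ already guarantee $d_i \neq \top$, and the propagation relies solely on the presence of those premises. The mutual recursion with the procedure dimension $f \HASDIMENSION n \to d$ is likewise harmless, as procedure dimensions are fixed data for the given program and enter only as side premises that do not involve subexpressions.
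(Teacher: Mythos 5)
Your proof is correct and follows essentially the same route as the paper's: a structural induction over the containment relation whose inductive step is the observation that no rule of Definition~\ref{dimension-definition} can derive a dimension for a compound expression once one of its immediate subexpressions has none. The only difference is presentational --- you package that step as a single uniform inversion lemma read off from the fact that every rule carries one premise per immediate subexpression, whereas the paper verifies the same point separately for each context case.
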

\begin{proof}
Assuming $\DHANDLE{h}{e} \HASDIMENSION \top$, 
we have to prove that
$C[\DHANDLE{h}{e}] \HASDIMENSION \top$
for all contexts $C[\PARAMETER]$.
\\
A straightforward structural induction over contexts:
\begin{itemize}
\item
  $C[\DHANDLE{h}{e}] = \DHANDLE{h}{e}$ (base case):
  we trivially have that
  $\#(C[\DHANDLE{h}{e}]) = \#(\DHANDLE{h}{e}) = \top$;

\item
$C[{\DHANDLE{h}{e}}] = \DLET{h_0}{x_1...x_n}{C'[\DHANDLE{h}{e}]}{\DHANDLE{h_2}{e}}$:
since by hypothesis $\#(\DHANDLE{h}{e}) = \top$, by induction
hypothesis we also have $\#(C'[\DHANDLE{h}{e}]) = \top \NSQLE \LIFT{n}$; this makes impossible to satisfy the conditions of the
\LETNAME\ rule in
Definition~\ref{dimension-definition}; so the relation
$\PARAMETER \HASDIMENSION \PARAMETER$ is undefined on $C[\DHANDLE{h}{e}]$,
and again by Definition~\ref{dimension-definition} we have that
$\#(C[\DHANDLE{h}{e}]) = \top$;

\item
$C[{\DHANDLE{h}{e}}] = \DLET{h_0}{x_1...x_n}{\DHANDLE{h_1}{e}}{C'[\DHANDLE{h}{e}]}$:
again we have that $\#(C'[\DHANDLE{h}{e}]) = \top$,
and the \LETNAME rule in Definition~\ref{dimension-definition} cannot
fire because the \LETNAME body $C'[\DHANDLE{h}{e}]$ has dimension $\top$; if
the rule does not fire then $C[{\DHANDLE{h}{e}}] = \top$ because 
$\PARAMETER \HASDIMENSION \PARAMETER$ is undefined on the parameter, as in the
previous case;

\item
$C[{\DHANDLE{h}{e}}] = \DCALL{h_0}{f}{{\DHANDLE{h_1}{e}}...{\DHANDLE{h_n}{e}}{C'[{\DHANDLE{h}{e}}]}{\DHANDLE{h_{n+1}}{e}}...{\DHANDLE{h_{n+m}}{e}}}$:
again $\#(C'[\DHANDLE{h}{e}]) = \top$ by induction hypothesis; but then there
exist a procedure actual whose dimension is not lower than or equal to
$\LIFT{1}$, and the \CALLNAME rule in Definition~\ref{dimension-definition} cannot
fire; $\PARAMETER \HASDIMENSION \PARAMETER$ is undefined on $C[{\DHANDLE{h}{e}}]$,
hence $\#(C[{\DHANDLE{h}{e}}]) = \top$;

\item
$C[{\DHANDLE{h}{e}}] =
  \DPRIMITIVE{h_0}{\pi}{{\DHANDLE{h_1}{e}}...{\DHANDLE{h_n}{e}}{C'[{\DHANDLE{h}{e}}]}{\DHANDLE{h_{n+1}}{e}}...{\DHANDLE{h_{n+m}}{e}}}$:
same reasoning as the \CALLNAME case;

\item
$C[{\DHANDLE{h}{e}}] = \DIFIN{h_0}{C'[{\DHANDLE{h}{e}}]}{c_1...c_n}{\DHANDLE{h_2}{e}}{\DHANDLE{h_3}{e}}$:
again $\#(C'[\DHANDLE{h}{e}]) = \top$ by induction hypothesis, which means
that the dimension of the discriminand is not lower than or equal to
$\LIFT{1}$, the \IFNAME rule in Definition~\ref{dimension-definition} cannot
fire, hence $\PARAMETER \HASDIMENSION \PARAMETER$ is undefined on $C[{\DHANDLE{h}{e}}]$,
and $\#(C[{\DHANDLE{h}{e}}]) = \top$;

\item
$C[{\DHANDLE{h}{e}}] = \DIFIN{h_0}{\DHANDLE{h_1}{e}}{c_1...c_n}{C'[{\DHANDLE{h}{e}}]}{\DHANDLE{h_3}{e}}$:
similar to the \LETNAME body case: $\#(C'[\DHANDLE{h}{e}]) = \top$, which
prevents the \IFNAME rule in Definition~\ref{dimension-definition} from firing;

\item
$C[{\DHANDLE{h}{e}}] = \DIFIN{h_0}{\DHANDLE{h_1}{e}}{c_1...c_n}{\DHANDLE{h_2}{e}}{C'[{\DHANDLE{h}{e}}]}$:
same reasoning as the previous case;

\item
$C[{\DHANDLE{h}{e}}] = \DFORK{h_0}{f}{{\DHANDLE{h_1}{e}}...{\DHANDLE{h_n}{e}}{C'[{\DHANDLE{h}{e}}]}{\DHANDLE{h_{n+1}}{e}}...{\DHANDLE{h_{n+m}}{e}}}$:
same reasoning as the \CALLNAME case;

\item
$C[{\DHANDLE{h}{e}}] = \DJOIN{h_0}{C'[\DHANDLE{h}{e}]}$:
same reasoning as the \CALLNAME case;

\item
$C[{\DHANDLE{h}{e}}] = \DBUNDLE{h_0}{{\DHANDLE{h_1}{e}}...{\DHANDLE{h_n}{e}}{C'[{\DHANDLE{h}{e}}]}{\DHANDLE{h_{n+1}}{e}}...{\DHANDLE{h_{n+m}}{e}}}$:
same reasoning as the \CALLNAME case.
\end{itemize}
\end{proof}

It is also intuitive that replacing a subexpression with another whose
dimension is \textit{lower} or equal will not raise the dimension of the containing
expression, which makes $\#(\PARAMETER)$ a monotonic function:

\begin{proposition}[$\#$-monotonicity]\label{dimension-monotonicity-proposition}
Replacing a subexpression with another of lower or equal dimension cannot
raise the dimension of the containing expression.
\\
More formally, for any expression context $C[\PARAMETER]$, expression $e$ and expression $e'$,
if we have that $\#(e) \SQLE \#(e')$ then we also have that $\#(C[e]) \SQLE \#(C[e'])$.
\begin{proof}
Another straightforward structural induction over contexts.
\end{proof}
\end{proposition}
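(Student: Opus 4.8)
The plan is to argue by structural induction on the context $C[\PARAMETER]$, reusing the very same case split as in the proof of Proposition~\ref{top-dimension-propagates-outwards-proposition}. In the base case $C[\PARAMETER] = \PARAMETER$ we have $C[e] = e$ and $C[e'] = e'$, so the goal $\#(C[e]) \SQLE \#(C[e'])$ is exactly the hypothesis $\#(e) \SQLE \#(e')$. For the inductive step I would write $C[\PARAMETER] = F[\ldots, C'[\PARAMETER], \ldots]$, where $F$ is one of the nine expression forms of Definition~\ref{epsilonzero-syntax-definition} and the hole sits in one distinguished subexpression slot, the remaining immediate subexpressions being literally identical in $C[e]$ and $C[e']$. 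Applying the induction hypothesis to $C'[\PARAMETER]$ gives $\#(C'[e]) \SQLE \#(C'[e'])$, and it then remains only to propagate this single inequality through $F$.

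The clean way to organize the step is to observe that each syntactic form of Definition~\ref{dimension-definition} induces a function $g$ from the tuple of its immediate subexpression dimensions to the dimension of the whole form, returning $\top$ wherever the corresponding rule fails to fire, by the totalization convention defining $\#(\PARAMETER)$. Since the other slots are held fixed, what I actually need is that $g$ is \emph{monotone} in the hole's slot with respect to $\SQLE$; combined with $\#(C'[e]) \SQLE \#(C'[e'])$ this yields $\#(C[e]) \SQLE \#(C[e'])$ at once. I would check monotonicity form by form: for \BUNDLENAME, \CALLNAME, \FORKNAME, \JOINNAME\ and the primitive form the output is a constant whenever the rule fires (namely $\LIFT{n}$, the fixed procedure out-dimension, $\LIFT{1}$, or $\LIFT{m}$); for \IFNAME\ the output is the join $d_2 \SQJOIN d_3$, which is monotone because $\SQJOIN$ is monotone in each argument; and for \LETNAME\ the output is the body dimension $d_2$, a monotone pass-through.

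The delicate point, and the only place the argument is not purely mechanical, is the interaction between the rules' side conditions and the $\top$-default. Raising a subexpression dimension may push a tuple out of the region where a rule fires, and I must confirm this can only send $g$ upward to $\top$, never downward. The fact I would establish is that the firing region of each rule is \emph{downward closed} in every slot: the guards of the form $d_i \SQLE \LIFT{1}$, and the guard ``$d_1 \SQLE \LIFT{m}$ for some $m \geq n$'' of the \LETNAME\ rule (whose satisfying set $\{\bot\} \cup \{\LIFT{k} : k \geq n\}$ is downward closed in the flat lattice of Figure~\ref{dimension-lattice}, even though it is not literally of the shape ``$\SQLE$ a fixed element''), all cut out downward-closed sets. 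Hence as the hole's dimension rises, either the rule keeps firing with a non-decreasing output, or it stops firing and $g$ jumps to the maximum $\top$; in both cases $g$ does not decrease. I expect this downward-closure check, together with the sub-case in which a fixed sibling or the body already carries dimension $\top$ (forcing both $\#(C[e])$ and $\#(C[e'])$ to equal $\top$), to be the only thing requiring care; the rest follows the template of Proposition~\ref{top-dimension-propagates-outwards-proposition}.
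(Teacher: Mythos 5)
Your proposal is correct and follows exactly the route the paper intends: the paper's own ``proof'' is the single sentence ``Another straightforward structural induction over contexts,'' and what you have written is that induction carried out, with the one genuinely non-mechanical point --- that the firing region of each rule in Definition~\ref{dimension-definition} is downward closed in every slot, so that raising the hole's dimension can only keep the output fixed, raise a join, or push the totalized function $\#(\PARAMETER)$ up to $\top$ --- correctly identified and justified (including the slightly non-obvious \LETNAME{} guard $\{\bot\} \cup \{\LIFT{k} : k \geq n\}$). No gaps; your version is simply more explicit than the paper's.
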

The following definition identifies programs where no expression of dimension
$\top$ occurs anywhere. As the reader will have anticipated, we are going to
prove the condition \textit{sufficient} to guarantee a desirable property with
respect to the dynamic semantics.

\begin{definition}[well-dimensioned]\label{well-dimensioned-definition}
We call the static program $p$ \TDEF{well-dimensioned} if both the following
conditions hold:
\begin{itemize}
\item
  for all procedure definitions
  \textup{$\DDEFINEPROCEDURE{f}{x_1...x_n}{\DHANDLE{h_1}{e}} \in p$}
  such that $f \HASDIMENSION n \to d$, we have
  $d \SQL \top$;


\item for the main expression \textup{$\DHANDLE{h_{2}}{e} \in p$}
  we have that $\#(\DHANDLE{h_{2}}{e}) \SQL \top$.

\end{itemize}
We call \TDEF{ill-dimensioned} all programs which are not well-dimensioned.
\QEDDEFINITION
\end{definition}

\subsubsection{There cannot be a \textit{most precise} dimension analysis}
It would be nice to be able to characterize our definition of an expression
dimension as the ``most precise'', but unfortunately our definition is
not the best one, and in fact no such definition can exist.

In order to see why at least at an intuitive level we consider the program:
\\
\\
${\DDEFINEPROCEDURE{loop}{}{{\DCALL{{h_{1}}}{loop}{}}}} \in p$
\\
${\DIFIN{{h_{2}}}{{\DCONSTANT{{h_{3}}}{\mathcal{N}(1)}}}{\mathcal{N}(2)}{{\DCONSTANT{{h_{4}}}{\mathcal{N}(42)}}}{{\DCALL{{h_{5}}}{loop}{}}}} \in p$
\\\\
It is obvious that the main expression loops, but the analysis assigns the main
expression the dimension $\LIFT{1}$ instead of $\bot$: hence our
definition of dimension does not correspond to ``the best possible'' analysis,
as it is possible to change it to account for more particular cases, yielding a
more precise result: in fact we can always improve the
analysis by recognizing particular patterns in programs ---
for example by simplifying statically-determined conditionals at compile time
as a first refinement; but because of the Halting Theorem we cannot hope to cover \textit{all}
possible cases.

This trivial fact prevents us from finding a result similar to the
Most-General-Type theorem in \cite{damas-milner}.

\section{Semantic soundness}
Before proving the result connecting dimension analysis with \EPSILONZERO's
dynamic semantics we need to define some machinery.

\subsection{Resynthesization}

The idea of \TDEF{resynthesization} consists in taking any reachable configuration $\chi$
and reconstructing from it an expression $e$ that, if evaluated at the
top level in the \STATE of $\chi$, would yield the same result and the same effects as $\chi$.
Actually we do not need to specify this equivalence any further, and in fact we will not
prove any result such as reduction-equivalence on resynthesization, since our use of it
here is very well-delimited, due to the technical need of assigning a
dimension to all reachable \EPSILONZERO configurations.
\\\\
We can easily view the content of any value stack $V$ in a reachable
configuration as \TDEF{a list of non-holed expressions}
 $E_V$, by remarking the intuitive role of ``$\VALUESEPARATOR$'' as a bundle delimiter; bundles within $V$
can be represented in $E_V$ as explicit \BUNDLENAME
expressions\footnote{Actually we would need to introduce explicit \BUNDLENAME
  expressions only for \textit{plural} bundles in $V$; the definition given below
  avoids this complication at the price of producing
  some trivial \BUNDLENAME expressions with only one item.}.
For the purposes of resynthesization it is also safe to ignore
``$\ACTIVATIONSEPARATOR$'' delimiters, since the particular arity mismatches they
were conceived to prevent (see \SECTION\ref{changing-arity-on-the-go})
cannot occur in static programs, our only programs of interest in this chapter.

More formally, we define the translation as follows:\\\\
$E_{\VALUESEPARATOR} \EQD \EMPTYSEQUENCE$\\
$E_{{\VALUESEPARATOR}{\ACTIVATIONSEPARATOR}V} \EQD E_{{\VALUESEPARATOR}V}$,\\
$E_{{\VALUESEPARATOR} {c_1}...{c_n}{\VALUESEPARATOR}V} \EQD \DBUNDLE{h'}{c_1...c_n}.E_{\VALUESEPARATOR V}$\ \ with some fresh handle $h'$.
\\\\
For example
$V = \VALUESEPARATOR 1\ 2 \VALUESEPARATOR 3 \VALUESEPARATOR$ would be transformed
into $E_V = \SEQUENCE{\DBUNDLE{h_0'}{\DCONSTANT{h_1'}{1} \DCONSTANT{h_2'}{2}}, \DBUNDLE{h_3'}{\DCONSTANT{h_4'}{3}}}$,
with fresh handles $h_0', h_1', h_2', h_3'$ and $h_4'$.
\\
\\
We define resynthesization as a relation $r$, written in functional notation as
$r(\PARAMETER\ \PARAMETER)$; for readability's sake we omit the comma between the
two parameters of $r$, since both tend to be syntactically complex.

Given \textit{a stack} and \textit{a list of non-holed expressions} as obtained
from the translation above, resynthesization produces \textit{a non-holed expression list}:

\begin{definition}[resynthesization]\label{resynthesization-definition}
  We define the \TDEF{resynthesization relation} $r(\PARAMETER\ \PARAMETER)$ as follows, with the
  convention that all the prime-decorated handles only appearing on the right
  sides be \textit{fresh}:
  \begin{itemize}
    \item
      $r(\EMPTYSTACK
        \ %
        E)
      =
      E$
      
    \item
      {\rm
      $r((\DHANDLE{h}{e},\ \rho).S
        \ %
        E)
      =
      r(S
        \ %
        \DHANDLE{h}{e}.E)$}, for any non-holed $\DHANDLE{h}{e}$;
      
      \item
        {\rm
          $r((\DLET{h_0}{x_1...x_n}{\HOLE}{\DHANDLE{h_2}{e}},\ \rho).S
          \ %
          \DHANDLE{a}{e} . E)
          \\=
          r((\DLET{h_0'}{x_1...x_n}{\DHANDLE{a}{e}}{\DHANDLE{h_2}{e}},\ \rho).S
          \ %
          E)
          $
        }

      \item
        {\rm
          $r(({\DCALLWITHHOLE{h_0}{f}},\ \rho).S
          \ %
          e_{a_{n}}e_{a_{n-1}}...e_{a_{2}}e_{a_{1}}.E)
          \\=
          r(({\DCALL{h_0'}{f}{e_{a_{1}}...e_{a_{n}}}},\ \rho).S
          \ %
          E)
          $
        }
      \item
        {\rm
          $r((\DPRIMITIVEWITHHOLE{h_0}{\pi},\ \rho).S
          \ %
          e_{a_{n}}e_{a_{n-1}}...e_{a_{2}}e_{a_{1}}.E)
          \\=
          r(({\DPRIMITIVE{h_0'}{\pi}{e_{a_{1}}...e_{a_{n}}}},\ \rho).S
          \ %
          E)
          $
        }
      \item
        {\rm
          $r((\DIFIN{h_0}{\HOLE}{c_1...c_n}{\DHANDLE{h_2}{e}}{\DHANDLE{h_3}{e}},\ \rho).S
          \ %
          \DHANDLE{a}{e}.E)
          \\=
          r((\DIFIN{h_0'}{\DHANDLE{a}{e}}{c_1...c_n}{\DHANDLE{h_2}{e}}{\DHANDLE{h_3}{e}},\ \rho).S
          \ %
          E)
          $
        }
      \item
        {\rm
          $r((\DBUNDLEWITHHOLE{h_0},\ \rho).S
          \ %
          e_{a_{n}} e_{a_{n-1}}  ... e_{a_{2}} e_{a_{1}} . E)
          \\=
          r((\DBUNDLE{h_0'}{e_{a_{1}}...e_{a_{n}}},\ \rho).S
          \ %
          E)
          $
        }
      \item
        {\rm
          $r((\DFORKWITHHOLE{h_0}{f},\ \rho).S
          \ %
          e_{a_{n}} e_{a_{n-1}}  ... e_{a_{2}} e_{a_{1}} . E)
          \\=
          r(({\DFORK{h_0'}{f}{e_{a_1}...e_{a_n}}},\ \rho).S
          \ %
          E)
          $
        }
      \item
        {\rm
          $r((\DJOIN{h_0}{\HOLE},\ \rho).S
          \ %
          \DHANDLE{a}{e} . E)
          \\=
          r((\DJOIN{h_0'}{\DHANDLE{a}{e}},\ \rho).S
          \ %
          E)
          $
        }
        \QEDDEFINITION
  \end{itemize}
\end{definition}

It is obvious from Definition~\ref{resynthesization-definition} that
resynthesization is deterministic up to handle identity, and
the same can be said about the value stack conversion defined above.
Since the specific
choice of handles is immaterial with respect to dimension, and
accounting for the specific choice of handles would make resynthesization much
harder to work with without any particular benefit, from now on we will commit a slight
abuse of language and speak about resynthesization \textit{as a function}.
\\\\
In the following we are also going to need a couple of simple properties of resynthesization:



\begin{lemma}[``$r$ does not delete expressions'']\label{lemma-1-jv}
If $r(S\ E) = E'$ for some $S, E$ and $E'$, then all expressions occurring in $E$
also occur somewhere in $E'$.
\\
More formally, for all $e$, if $r(S\ E_1.C[e].E_2) = E'$, then there exist $E_1',C'[\PARAMETER],E_2'$ such that
$E' = E_1' . C'[e] . E_2'$.
\begin{proof}
By induction on the number of recursive calls to $r$.
\end{proof}
\end{lemma}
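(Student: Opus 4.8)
The plan is to prove the statement by induction on the number of recursive calls that $r$ makes while computing $r(S\ E)$, exactly as suggested, but first \emph{strengthening} the inductive hypothesis so that it also accounts for the expressions that $r$ temporarily stores inside the stack. Reading the clauses of Definition~\ref{resynthesization-definition}, one sees that at each step $r$ either moves a non-holed expression from the top of the stack to the front of the expression list, or consumes one or more head elements of the list, splices them \emph{verbatim} into the hole of the topmost (formerly holed) stack entry, and pushes the reconstructed non-holed form back onto the stack. In both cases no subexpression is ever discarded; occurrences merely migrate between the two arguments of $r$. Since the statement as given only speaks about the list argument $E$, a naive induction breaks as soon as an expression of $E$ is absorbed into the stack. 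I would therefore prove the stronger claim: for every $S, E$ and every expression $e$, if $e$ occurs as a subexpression of some element of $E$ or in a non-hole position of some entry of $S$, then $e$ occurs as a subexpression of some element of $r(S\ E)$. The lemma is the special case in which $e$ occurs in $E$.

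First I would dispatch the base case $r(\EMPTYSTACK\ E) \EQD E$: here $S$ is empty, so the hypothesis places $e$ in $E$ itself, and since the output is $E$ we are done. For the inductive step I would case-split on which clause of Definition~\ref{resynthesization-definition} fires. In the clause that pops a non-holed top, $r((\DHANDLE{h}{e_0},\ \rho).S\ E) = r(S\ \DHANDLE{h}{e_0}.E)$, every occurrence in the old pair is an occurrence in the new pair $(S,\ \DHANDLE{h}{e_0}.E)$, so the inductive hypothesis applies directly. In each of the ``filling'' clauses---for holed \LETNAME, \CALLNAME, \PRIMITIVENAME, \IFNAME, \BUNDLENAME, \FORKNAME and \JOINNAME---the consumed head elements and the non-hole parts of the old holed entry all reappear, unchanged, as subexpressions of the reconstructed entry now sitting on top of the stack, while the remainder of $E$ and of $S$ is untouched; hence again every occurrence in the old pair is an occurrence in the recursive call's pair, and the inductive hypothesis finishes the case.

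Two points deserve care. First, I must check that the expressions coming from $E$ keep their identity, handles included, so that literally \emph{the same} $e$ survives: by the freshness convention only the \emph{outer} handle of each reconstructed form (the primed $h_0'$) is renamed, whereas the spliced-in subexpressions retain their original handles; since the tracked $e$ is always one of these verbatim-copied pieces and never a freshly reconstructed outer form, its occurrence is genuine. This also explains why the conclusion is stated with a possibly different context $C'[\PARAMETER]$ rather than the original $C[\PARAMETER]$: $e$ may end up nested more deeply (for example inside a reconstructed \LETNAME), but it still occurs. Second, to make ``number of recursive calls'' a legitimate induction measure I would note that $r$ terminates: the lexicographic measure (length of $S$, then ``topmost entry is holed'' ordered above ``non-holed'') strictly decreases at every recursive call, since a filling clause leaves the stack length fixed while turning a holed top into a non-holed one, and the non-holed clause strictly shortens the stack. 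The main obstacle is really just the initial generalization of the hypothesis to cover the stack together with the routine but lengthy verification that each of the eight clauses preserves occurrences; once the invariant is correctly phrased, every case is mechanical.
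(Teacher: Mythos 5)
Your proof is correct and follows the same route as the paper, whose entire argument is the one-line remark ``by induction on the number of recursive calls to $r$''; you have simply carried out that induction in full, including the necessary strengthening of the invariant to track expressions absorbed into the stack. Nothing in your development conflicts with the paper's intent, so this is a faithful (and more careful) elaboration of the official proof.
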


\begin{lemma}[resynthesization shape-independence]\label{lemma-2-jv}
The shape of the expressions contained in $E$ does not affect the result of $r(S\ E)$.
\\
More formally, for all expression sequences $E_1, E_2, E_1', E_2'$, contexts $C[\PARAMETER], C'[\PARAMETER]$
and expression $e$, if we have that
$r(S\ E_1.C[e].E_2) = E_1'.C'[e].E_2'$
then we also have that
$r(S\ E_1.C[e'].E_2) = E_1'.C'[e'].E_2'$ for any other expression $e'$.
\begin{proof}
By induction on the number of recursive calls to $r$.
\end{proof}
\end{lemma}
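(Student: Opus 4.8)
The plan is to prove Lemma~\ref{lemma-2-jv} by induction on the number of recursive calls performed by $r$ when evaluating $r(S\ E_1.C[e].E_2)$, exactly along the lines already used for the companion Lemma~\ref{lemma-1-jv}. The guiding observation, which I would state up front, is that no clause of Definition~\ref{resynthesization-definition} ever inspects the \emph{internal} syntactic shape of an expression appearing in the list argument: a clause is selected solely by the top frame of the stack $S$, and the number of leading list elements it consumes is fixed by the \emph{form} of that frame (none in the base and prepend cases, one for a holed \LETNAME, \IFNAME or \JOINNAME, and the whole reversed argument block for a holed \CALLNAME, \PRIMITIVENAME, \BUNDLENAME or \FORKNAME). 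Every list element is thus handled as an indivisible token, either carried unchanged to a new position, permuted with its sibling arguments, or planted verbatim into the hole of a reconstructed form. Consequently the two runs $r(S\ E_1.C[e].E_2)$ and $r(S\ E_1.C[e'].E_2)$ select the same clauses in the same order and consume the same elements at each step, since the data governing those choices (the stack form together with the positions and separators of the list entries) is common to both runs, and $C[e]$ and $C[e']$ occupy one and the same list position.

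For the base case $S = \EMPTYSTACK$ the first clause gives $r(\EMPTYSTACK\ E_1.C[e].E_2) = E_1.C[e].E_2$, so the hypothesis forces $E_1' = E_1$, $C'[\PARAMETER] = C[\PARAMETER]$ and $E_2' = E_2$; the same clause then yields $r(\EMPTYSTACK\ E_1.C[e'].E_2) = E_1.C[e'].E_2 = E_1'.C'[e'].E_2'$, as required. For the inductive step I would case on the top frame of $S$ and, in each case, push the distinguished element through the single shape-independent manipulation of the corresponding clause, then appeal to the induction hypothesis on the smaller instance $r(\widehat{S}\ \widehat{E})$ so obtained. In the non-holed and prepend-style cases $C[e]$ merely shifts rightward in the list and the hypothesis applies directly with an enlarged left segment. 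In the holed cases where the distinguished element lies among the consumed arguments, it is planted into the reconstructed form's hole, so that the accumulating wrapper becomes part of a strictly larger context $C'[\PARAMETER]$; the occurrence momentarily travels into the stack as a sub-part of the now non-holed top frame, but the very next clause returns it to the list as a subexpression of a single list element, where the hypothesis again applies.

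The one point requiring care — and the main obstacle — is the bookkeeping of this migration of the distinguished occurrence between the list $E$ and the stack $S$ when it is plugged into a hole. I would handle it by observing that a holed frame is always eliminated by its contractive clause followed immediately by the non-holed clause, so the occurrence spends at most one intermediate step inside the stack before re-entering the list; both the $e$-run and the $e'$-run perform this identical two-step manoeuvre and differ only in the token carried, which is precisely what lets the induction hypothesis relate their outputs position by position. Because clause selection and element counts are independent of $e$ versus $e'$, the two decompositions $E_1'.C'[e].E_2'$ and $E_1'.C'[e'].E_2'$ share the same $E_1'$, $C'[\PARAMETER]$ and $E_2'$, completing the induction. (Lemma~\ref{lemma-1-jv} is the analogous statement that this opaque treatment never discards a token, and the two inductions run in lockstep.)
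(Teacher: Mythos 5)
Your proposal is correct and follows exactly the same route as the paper, whose entire proof of this lemma is the one-line remark ``by induction on the number of recursive calls to $r$''; you have simply carried out that induction in detail. Your key observation---that every clause of Definition~\ref{resynthesization-definition} selects on the top stack frame and treats list elements as opaque tokens, so both runs fire the same clauses in the same order---together with your handling of the two-step migration of the distinguished occurrence through the hole and back into the list, is precisely the argument the paper leaves implicit.
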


\subsection{Weak dimension preservation}
Resynthesization allows us to gloss over the difference between
an \EPSILONZERO \textit{expression} and any \textit{configuration} reached by evaluating an
\EPSILONZERO expression, so that we may speak about the dimension of either;
  so, in order to further simplify our presentation, we extend 
Definition~\ref{dimension-definition} by assigning a dimension also to
reachable configurations: a reachable configuration will have the dimension of
its resynthesization; or, slightly more formally, if $\chi = (S\ V\ \Gamma)$ is a reachable
configuration, then we write ``$\#(\chi)$'' to mean ``$\#(e)$, where
$r(S\ E_V) = \SEQUENCE{e}$''.

It is not yet clear at this point why $r$ is always defined and always yields
a singleton expression sequence on reachable configurations; we defer the proof
to Corollary~\ref{resynthesization-is-total-and-yields-one-expression}.
\\
\\
The \TDEF{Weak Dimension Preservation} property, below, is the central result
bridging \EPSILONZERO's dynamic semantics to dimension analysis by showing that
``evaluation preserves dimension''; in some circles such properties are known
as ``subject reductions'' --- or more intuitively in French as
\textit{auto-réductions}.

\begin{figure}[h!]
  \label{weak-dimension-preservation-figure}
  \centering
  \begin{tikzpicture}
    nodes={outer sep=0pt,circle,minimum size=4pt},
    \node[] (chi) at (0,2) {$\chi$};
    \node[] (chiprime) at (3,2) {$\chi'$};
    \node[] (d) at (0,0) {$d$};
    \node[] (dprime) at (3,0) {$d'$};
    \draw[->,decorate] (chi) -- node[below]{$\TOE$} (chiprime);
    \draw[white,decorate] (d) -- node[black]{$\SQGE$} (dprime);
    \draw[->,decorate] (chi) -- node[left]{$\HASDIMENSION$} (d);
    \draw[->, decorate] (chiprime) -- node[left]{$\HASDIMENSION$} (dprime);
  \end{tikzpicture}
  \caption{The \textit{Weak Dimension Preservation property}
    (Lemma~\ref{weak-dimension-preservation-lemma}): when a configuration $\chi$ of
    dimension $d$ can reduce to another configuration $\chi'$ of dimension
    $d'$ we have that $d' \SQLE d$.}
\end{figure}

The property is ``weak'' in the sense that an expression is allowed to reduce
to another expression of \textit{lower} dimension: as it can be seen from the
proof, this may happen with conditionals: choosing one branch or the other
entails replacing the expression on the top of the stack by a subexpression
of it, hence by an expression with fewer dimension constraints. In
particular it is possible that an inconsistently-dimensioned expression
reduces to a consistently-dimensioned one
(``from $\top$ to $d \SQL \top$''); anyway the vice-versa
(``from $d \SQL \top$ to $\top$'') \textit{cannot} happen, which is all we need for our soundness property.

\begin{lemma}[Weak Dimension Preservation]\label{weak-dimension-preservation-lemma}
Let reachable configurations $\chi, \chi'$ be given, such that
$\chi = (S_{\chi}\ V_{\chi}\ \Gamma_{\chi}) \TOE \chi' = (S_{\chi'}\ V_{\chi'}\ \Gamma_{\chi'})$; now, if
$r(S_{\chi}\ E_{V_{\chi}}) = \SEQUENCE{e}$ then there exists $e'$ such that
$r(S_{\chi'}\ E_{V_{\chi'}}) = \SEQUENCE{e'}$
and
$\#(e') \SQLE \#(e)$.
\begin{proof}
Induction is not needed: we just directly prove that, for all cases in which
$(S_{\chi}\ V_{\chi}\ \Gamma_{\chi}) = \chi \TOE \chi' = (S_{\chi'}\ V_{\chi'}\ \Gamma_{\chi'})$, we
have that $\#(r(S_{\chi'}\ E_{V_{\chi'}})) \SQLE \#(r(S_{\chi}\ E_{V_{\chi}}))$.
We avoid writing handles for converted value stack expressions, since they are immaterial anyway;
in this proof we also freely abuse the notation by saying that $r$ returns results which are
``equal to'' something else, writing ``$=$'' without explicitly stating that the equality
is up to handle choice.
\begin{itemize}
\item
$[constant]$\\
$
        (\DCONSTANT{h}{c},\ \rho).S
        \ \VALUESEPARATOR V\ \Gamma
        \TOE
        S
        \ \VALUESEPARATOR c \VALUESEPARATOR V\ \Gamma
        $:\\
$r(S_{\chi}\ E_{V_{\chi}}) = $
\{definition of $r$\}
$r(S\ c.E_{V_{\chi}}) =$
\{substitution\}
$r(S_{\chi'}\ E_{V_{\chi'}})$;
hence in this case we have that
$\#(r(S_{\chi'}\ E_{V_{\chi'}})) = \#(r(S_{\chi}\ E_{V_{\chi}}))$;

\item
$[variable]$\\
$
        (\DVARIABLE{h}{x},\ \rho).S
        \ \VALUESEPARATOR V\ \Gamma
        \TOE
        S
        \ \VALUESEPARATOR c \VALUESEPARATOR V\ \Gamma
        $:
\\
looking at $\chi$, we have
$r(S_{\chi}\ E_{V_{\chi}}) = $
\{by definition of $r$\}
$r(S\ x.E_{V_{\chi}}) = $
\{hypothesis\}
$\SEQUENCE{e} =$
\{Lemma~\ref{lemma-1-jv}, for some $C[\PARAMETER]$\}
$\SEQUENCE{C[x]}$;
\\
looking at $\chi'$, we have
$r(S_{\chi'}\ E_{V_{\chi'}}) =$
\{substitution\}
$r(S\ c.E_{V_{\chi}}) =$
\{Lemma~\ref{lemma-2-jv}\}
$\SEQUENCE{C[c]}$, which we call
$\SEQUENCE{e'}$; since $\#(c) = \LIFT{1} \SQLE \#(x) = \LIFT{1}$,
by Proposition~\ref{dimension-monotonicity-proposition} we have that $\#(e') \SQLE \#(e)$;

\item
$[\CODE{let}_e]$\\
$       (\DLET{h_0}{x_1...x_n}{\DHANDLE{h_1}{e}}{\DHANDLE{h_2}{e}},\ \rho).S
        \ \VALUESEPARATOR V\ \Gamma
        \TOE
        (\DHANDLE{h_1}{e},\ \rho).(\DLET{h_0}{x_1...x_n}{\HOLE}{\DHANDLE{h_2}{e}},\ \rho).S
        \ \VALUESEPARATOR V\ \Gamma
        $:\\
looking at $\chi$ we have that
$r(S_{\chi}\ E_{V_{\chi}}) =$
\{definition of $r$\}\\
$r(S\ \DLET{h_0}{x_1...x_n}{\DHANDLE{h_1}{e}}{\DHANDLE{h_2}{e}}.E_{V_{\chi}})$;
looking at $\chi'$ we have that
$r(S_{\chi'}\ E_{V_{\chi'}}) =$
\{definition of $r$\}
$r((\DLET{h_0}{x_1...x_n}{\HOLE}{\DHANDLE{h_2}{e}}, \rho).S\ {\DHANDLE{h_1}{e}}.E_{V_{\chi}}) =$
\{definition of $r$\}
$r((\DLET{h_0'}{x_1...x_n}{\DHANDLE{h_1}{e}}{\DHANDLE{h_2}{e}}, \rho).S\ E_{V_{\chi}}) =$
\{substitution\}\\
$r(S\ \DLET{h_0'}{x_1...x_n}{\DHANDLE{h_1}{e}}{\DHANDLE{h_2}{e}}.E_{V_{\chi}}) =$
\{substitution\}
$r(S_{\chi}\ E_{V_{\chi}}) =$
\{hypothesis\}
$\SEQUENCE{e}$:
again we have that $r(S_{\chi}\ E_{V_{\chi}})$ and $r(S_{\chi'}\ E_{V_{\chi'}})$ are equal to the
same singleton sequence $\SEQUENCE{e} = \SEQUENCE{e'}$, hence $\#(e') = \#(e)$;
this proof case is essentially identical to the proof cases of the other
expansive rules;

\item
$[\CODE{let}_c]$\\
$
        (\DLET{h_0}{x_1...x_n}{\HOLE}{\DHANDLE{h_2}{e}},\ \rho).S
        \ \VALUESEPARATOR c_{m}c_{m-1}...c_{2}c_{1} \VALUESEPARATOR V\ \Gamma
        \TOE
        (\DHANDLE{h_2}{e},\ \rho[x_1 \mapsto c_1, x_2 \mapsto c_{2}, ...,x_n \mapsto c_n]).S
        \ \VALUESEPARATOR V\ \Gamma
        $:\\
looking at $\chi$ we find that
$r(S_{\chi}\ E_{V_{\chi}}) =$
\{definition of $r$, twice\}\\
$r(S\ \DLET{h_0'}{x_1...x_n}{\DBUNDLE{h_1'}{c_1...c_m}}{\DHANDLE{h_2}{e}}.E_{V_{\chi'}}) =$
\{hypothesis\}
$\SEQUENCE{e} =$
\{Lemma~\ref{lemma-1-jv}, for some context $C[\PARAMETER]$\}
$\SEQUENCE{C[\DLET{h_0'}{x_1...x_n}{\DBUNDLE{h_1'}{c_1...c_m}}{\DHANDLE{h_2}{e}}]}$;
\\
looking at $\chi'$ we have that
$r(S_{\chi'}\ E_{V_{\chi'}}) =$
\{definition of $r$, twice\}
$r(S\ {\DHANDLE{h_2}{e}}.E_{V_{\chi'}}) =$
\{Lemma~\ref{lemma-2-jv}\}
$\SEQUENCE{C[\DHANDLE{h_2}{e}]}$, which we call $\SEQUENCE{e'}$;
since by Definition~\ref{dimension-definition} a \LETNAME expression
has the same dimension as its body,
it follows that $\#(e') \SQLE \#(e)$ by Proposition~\ref{dimension-monotonicity-proposition};

\item
$[\CODE{call}_e]$\\
$
        (\DCALL{h_0}{f}{{\DHANDLE{h_1}{e}}...{\DHANDLE{h_n}{e}}},\ \rho).S
        \ \VALUESEPARATOR V\ \Gamma
        \TOE
        ({\DHANDLE{h_1}{e}},\ \rho)...({\DHANDLE{h_n}{e}},\ \rho).({\DCALLWITHHOLE{h_0}{f}},\ \EMPTYSET).S
        $
        \linebreak
        $\VALUESEPARATOR \ACTIVATIONSEPARATOR V\ \Gamma
        $:\\
identical to the other expansive rule cases;

\item
$[\CODE{call}_c]$\\
$
        ({\DCALLWITHHOLE{h_0}{f}},\ \rho).S
        \ \VALUESEPARATOR c_{n} \VALUESEPARATOR c_{n-1} \VALUESEPARATOR ... \VALUESEPARATOR c_2 \VALUESEPARATOR c_1 \VALUESEPARATOR \ACTIVATIONSEPARATOR V\ \Gamma
        \TOE
        ({\DHANDLE{h}{e}},\ \rho[x_1 \mapsto c_1, x_2 \mapsto c_2, ...,$\linebreak$x_{n-1} \mapsto c_{n-1}, x_n \mapsto c_n]).S
        \ \VALUESEPARATOR V\ \Gamma
        $:\\
by the rule side condition $f$ takes exactly $n$ parameters and has dimension $n \to d$ for some $d$.
Looking at $\chi$ we find that
$r(S_{\chi}\ E_{V_{\chi}}) =$
\{definition of $r$\}
$r((\DCALL{h_0'}{f}{c_1...c_n}, \rho).S\ E_{V_{\chi'}}) =$
\{hypothesis and Lemma~\ref{lemma-1-jv}, for some context $C[\PARAMETER]$\}
$\SEQUENCE{C[\DCALL{h_0'}{f}{c_1...c_n}]}$.\\
Starting at $\chi'$, 
$r(S_{\chi'}\ E_{V_{\chi'}}) =$
\{substitution\}
$r(({\DHANDLE{h}{e}}, \rho).S\ E_{V_{\chi'}}) =$
\{Lemma~\ref{lemma-2-jv}\}
$\SEQUENCE{C[{\DHANDLE{h}{e}}]}$.
But 
${\DHANDLE{h}{e}}$ and
$\DCALL{h_0'}{f}{c_1...c_n}$ have the same dimension $d$ by
Definition~\ref{dimension-definition}, hence by
Proposition~\ref{dimension-monotonicity-proposition} we have that
$\#(C[{\DHANDLE{h}{e}}]) \SQLE \#(C[\DCALL{h_0'}{f}{c_1...c_n}])$;

\item
$[\CODE{primitive}_e]$\\
$
        (\DPRIMITIVE{h_0}{\pi}{{\DHANDLE{h_1}{e}}...{\DHANDLE{h_n}{e}}},\ \rho).S
        \ \VALUESEPARATOR V\ \Gamma
        \TOE$\\$
        ({\DHANDLE{h_1}{e}},\ \rho)...({\DHANDLE{h_n}{e}},\ \rho).(\DPRIMITIVEWITHHOLE{h_0}{\pi},\ \EMPTYSET).S
        \ \VALUESEPARATOR \ACTIVATIONSEPARATOR V\ \Gamma
        $:\\
identical to the other expansive rule cases;

\item
$[\CODE{primitive}_c]$\\
$
        (\DPRIMITIVEWITHHOLE{h_0}{\pi},\ \rho).S
        \ \VALUESEPARATOR c_{n} \VALUESEPARATOR c_{n-1} \VALUESEPARATOR ... \VALUESEPARATOR c_2 \VALUESEPARATOR c_1 \VALUESEPARATOR \ACTIVATIONSEPARATOR V\ \Gamma
        \TOE
        S
        \ \VALUESEPARATOR c'_{m} c'_{m-1} ... c'_2 c'_{1} \VALUESEPARATOR V\ \Gamma'
        $, 
when $\Gamma_{\CODE{primitives}}(\pi) (c_{1}, ..., c_{n},  \Gamma) = \SEQUENCE{c'_{1}, ..., c'_{m}, \Gamma'}$:\\
since the rule side condition applies, we have that
$\pi \HASDIMENSION n \to m$;\\
$r(S_{\chi}\ E_{V_{\chi}}) = $
\{definition, twice\}
$r(S\ \SEQUENCE{{\DPRIMITIVE{h_0'}{\pi}{c_1...c_n}}}.E_{V_{\chi}}) =$
\{hypothesis, for some context $C[\PARAMETER]$\}
$\SEQUENCE{C[\DPRIMITIVE{h_0'}{\pi}{c_1...c_n}]}$;\\
$r(S_{\chi'}\ E_{V_{\chi'}}) = $
\{substitution\}
$r(S\ \DBUNDLE{h'}{c'_{1}...c'_{m}}.E_{V_{\chi}}) = $
\{Lemma~\ref{lemma-2-jv}\}
$\SEQUENCE{C[\DBUNDLE{h'}{c'_{1}...c'_{m}}]}$;
since by Definition~\ref{dimension-definition} $\DBUNDLE{h'}{c'_{1}...c'_{m}}$
and\\$\DPRIMITIVE{h_0'}{\pi}{c_1...c_n}$ have the same dimension, we conclude by Proposition~\ref{dimension-monotonicity-proposition};

\item
$[\CODE{if}_e]$\\
$
        (\DIFIN{h_0}{\DHANDLE{h_1}{e}}{c_1...c_n}{\DHANDLE{h_2}{e}}{\DHANDLE{h_3}{e}},\ \rho).S
        \ \VALUESEPARATOR V\ \Gamma
        \TOE
        (\DHANDLE{h_1}{e},\ \rho).(\DIFIN{h_0}{\HOLE}{c_1...c_n}{\DHANDLE{h_2}{e}}{\DHANDLE{h_3}{e}},\ \rho).S
        \ \VALUESEPARATOR V\ \Gamma
        $:\\
identical to the other expansive rule cases;

\item
$[\CODE{if}_c^{\in}]$\\
$
        (\DIFIN{h_0}{\HOLE}{c_1...c_n}{\DHANDLE{h_2}{e}}{\DHANDLE{h_3}{e}},\ \rho).S
        \ \VALUESEPARATOR c \VALUESEPARATOR V\ \Gamma
        \TOE
        ({\DHANDLE{h_2}{e}},\ \rho).S
        \ \VALUESEPARATOR V\ \Gamma
        $, when $c \in \{c_1...c_n\}$:\\
$r(S_{\chi}\ E_{V_{\chi}}) =$
\{definition, twice\}
$r(S\ {\DIFIN{h_0'}{c}{c_1...c_n}{\DHANDLE{h_2}{e}}{\DHANDLE{h_3}{e}}}.$\linebreak$E_{V_{\chi}}) = $
\{hypothesis and Lemma~\ref{lemma-1-jv}, for some context $C[\PARAMETER]$\}\\
$\SEQUENCE{C[{\DIFIN{h_0'}{c}{c_1...c_n}{\DHANDLE{h_2}{e}}{\DHANDLE{h_3}{e}}}]}$;\\
$r(S_{\chi'}\ E_{V_{\chi'}}) =$
\{definition\}
$r(S\ {\DHANDLE{h_2}{e}}.E_{V_{\chi}}) = $
\{Lemma~\ref{lemma-2-jv}\}
$\SEQUENCE{C[{\DHANDLE{h_2}{e}}]}$;\\
by Definition~\ref{dimension-definition} we
have that
$\#({\DHANDLE{h_2}{e}}) \SQLE
\#(\DIFIN{h_0'}{c}{c_1...c_n}{\DHANDLE{h_2}{e}}{\DHANDLE{h_3}{e}})$, and we
conclude by Proposition~\ref{dimension-monotonicity-proposition};

\item
$[\CODE{if}_c^{\notin}]$\\
$
        (\DIFIN{h_0}{\HOLE}{c_1...c_n}{\DHANDLE{h_2}{e}}{\DHANDLE{h_3}{e}},\ \rho).S
        \ \VALUESEPARATOR c \VALUESEPARATOR V\ \Gamma
        \TOE
        ({\DHANDLE{h_3}{e}},\ \rho).S
        \ \VALUESEPARATOR V\ \Gamma
        $, when $c \notin \{c_1...c_n\}$:\\
identical to the previous case;

\item
$[\CODE{bundle}_e]$\\
$(\DBUNDLE{h_0}{{\DHANDLE{h_1}{e}}...{\DHANDLE{h_n}{e}}},\ \rho).S 
  \ \VALUESEPARATOR V\ \Gamma
        \TOE
        ({\DHANDLE{h_1}{e}},\ \rho)...({\DHANDLE{h_n}{e}},\ \rho).(\DBUNDLEWITHHOLE{h_0},\ \EMPTYSET).S$
        \linebreak
        $\VALUESEPARATOR \ACTIVATIONSEPARATOR V\ \Gamma
        $:\\
identical to the other expansive rule cases;

\item
$[\CODE{bundle}_c]$\\
$(\DBUNDLEWITHHOLE{h_0},\ \rho).S
        \ \VALUESEPARATOR c_{n} \VALUESEPARATOR c_{n-1} \VALUESEPARATOR ... \VALUESEPARATOR c_{2} \VALUESEPARATOR c_{1} \VALUESEPARATOR \ACTIVATIONSEPARATOR V\ \Gamma
        \TOE
        S
        \ \VALUESEPARATOR c_{n} c_{n-1} ... c_{2} c_{1} \VALUESEPARATOR V\ \Gamma
        $:\\
$r(S_{\chi}\ E_{V_{\chi}}) = $
\{definition, twice\}
$r(S\ \DBUNDLE{h'}{c_1...c_n}.E) = $
\{substitution\}
$r(S_{\chi'}\ E_{V_{\chi'}})$, and again we have that  $e = e'$;

\item
$[\CODE{fork}_e]$\\
$
        (\DFORK{h_0}{f}{{\DHANDLE{h_1}{e}}...{\DHANDLE{h_n}{e}}},\ \rho).S
        \ \VALUESEPARATOR V\ \Gamma
        \TOE
        ({\DHANDLE{h_1}{e}},\ \rho)...({\DHANDLE{h_n}{e}},\ \rho).(\DFORKWITHHOLE{h_0}{f},\ \EMPTYSET).S
        \ \VALUESEPARATOR \ACTIVATIONSEPARATOR V\ \Gamma
        $:\\
identical to the other expansive rule cases;

\item
$[\CODE{fork}_c]$\\
$
        (\DFORKWITHHOLE{h_0}{f},\ \rho).S
        \ \VALUESEPARATOR c_{n} \VALUESEPARATOR c_{n-1} \VALUESEPARATOR ... \VALUESEPARATOR c_{2} \VALUESEPARATOR c_{1} \VALUESEPARATOR \ACTIVATIONSEPARATOR V\ \Gamma
        \TOE
        S
        \ \VALUESEPARATOR \FUTURE{t} \VALUESEPARATOR V
        \ \Gamma[{}_{\CODE{futures}}^{t \mapsto (({\DHANDLE{h}{e}},\ \rho[x_0 \mapsto \FUTURE{t}, x_1 \mapsto c_1, ..., x_n \mapsto c_n])\ \VALUESEPARATOR)}]
        $:\\
$r(S_{\chi}\ E_{V_{\chi}}) = $
\{definition, twice\}
$r(S\ \DFORK{h_0'}{f}{c_1...c_n}.E) = $
\{hypothesis, for some context $C[\PARAMETER]$\}
$\SEQUENCE{C[\DFORK{h_0'}{f}{c_1...c_n}]}$;
\\
$r(S_{\chi'}\ E_{V_{\chi'}}) = $
\{definition\}
$r(S\ \FUTURE{t}.E) = $
\{Lemma~\ref{lemma-2-jv}\}
$\SEQUENCE{C[\FUTURE{t}]}$. Since by Definition~\ref{dimension-definition} we
have that
$\#(\FUTURE{t}) = \LIFT{1} \SQLE \#(\DFORK{h_0'}{f}{c_1...c_n})$, we conclude
with Proposition~\ref{dimension-monotonicity-proposition};

\item
$[\CODE{join}_e]$\\
$
        (\DJOIN{h_0}{\DHANDLE{h_1}{e}},\ \rho).S
        \ \VALUESEPARATOR V\ \Gamma
        \TOE
        (\DHANDLE{h_1}{e},\ \rho).(\DJOIN{h_0}{\HOLE},\ \rho).S
        \ \VALUESEPARATOR V\ \Gamma
        $:\\
identical to the other expansive rule cases;

\item
$[\CODE{join}_c]$\\
$
        (\DJOIN{h_0}{\HOLE},\ \rho).S
        \ \VALUESEPARATOR \FUTURE{t} \VALUESEPARATOR V\ \Gamma
        \TOE
        S
        \ \VALUESEPARATOR c_t \VALUESEPARATOR V\ \Gamma
        $, when $\Gamma_{\CODE{futures}} : t \mapsto (\EMPTYSEQUENCE,\ \VALUESEPARATOR c_t)$:\\
$r(S_{\chi}\ E_{V_{\chi}}) = $
\{definition, twice\}
$r(S\ \DJOIN{h_0'}{\FUTURE{t}}.E) = $
\{hypothesis and Lemma~\ref{lemma-1-jv}, for some context $C[\PARAMETER]$\}
$\SEQUENCE{C[\DJOIN{h_0'}{\FUTURE{t}}]}$;\\
$r(S_{\chi'}\ E_{V_{\chi'}}) = $
\{substitution\}
$r(S\ c_t.E) = $
\{Lemma~\ref{lemma-2-jv}\}
$\SEQUENCE{C[c_t]}$;\\
since by Definition~\ref{dimension-definition} we have that
$\#(c_t) = \LIFT{1} \SQLE \#(\DJOIN{h_0'}{\FUTURE{t}})$,
we conclude by Proposition~\ref{dimension-monotonicity-proposition};

\item
$[\parallel]$\\
$
        S\ V\ \Gamma
        \TOE
        S\ V\ \Gamma'[{}_{\CODE{futures}}^{t \mapsto (S_t',\ V_t')}]
        $,
when
$\Gamma_{\CODE{futures}} : t \mapsto (S_t,\ V_t)$
and
      $
        S_t\ V_t\ \Gamma
        \TOE
        S_t'\ V_t'\ \Gamma'
        $:\\
here
$r(S_{\chi}\ E_{V_{\chi}}) = r(S_{\chi'}\ E_{V_{\chi'}})$, hence $e'$ exists and is
equal to $e$.\PROOFQED
\end{itemize}
\end{proof}
\end{lemma}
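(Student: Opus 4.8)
The plan is to prove the statement directly by a case analysis on the reduction rule from Definition~\ref{epsilonzero-semantics-definition} that justifies the single step $\chi \TOE \chi'$; no induction on a derivation is needed, since every rule is an axiom save for $[\parallel]$, which I would dispatch at the end. For each rule I would compute the resynthesized expression on both sides using the definition of $r$ (Definition~\ref{resynthesization-definition}) together with the value-stack translation $E_V$, and then compare the two dimensions through Proposition~\ref{dimension-monotonicity-proposition}.

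The argument follows the same three moves in almost every case. First, since $r(S_{\chi}\ E_{V_{\chi}})$ is assumed to be the singleton $\SEQUENCE{e}$, Lemma~\ref{lemma-1-jv} guarantees that the syntactic form sitting on the top of the stack (or the value just produced) survives into $e$ inside some context $C[\PARAMETER]$. Second, by Lemma~\ref{lemma-2-jv} the resynthesization of $\chi'$ produces the \emph{same} context $C[\PARAMETER]$, now holding the reduct $e''$ in its hole, so that $r(S_{\chi'}\ E_{V_{\chi'}}) = \SEQUENCE{C[e'']}$; I then set $e' = C[e'']$. Third, I read off from Definition~\ref{dimension-definition} the local dimension relationship between the reduct and the contracted form, and lift it to the whole expression by monotonicity to obtain $\#(e') = \#(C[e'']) \SQLE \#(e)$. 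The expansive rules and $[constant]$ are the trivial cases: there $r$ reconstructs exactly the pre-reduction expression, so $e' = e$ and the dimension is preserved with equality. The case $[variable]$ is only marginally harder, since the variable is replaced by its value, but both have dimension $\LIFT{1}$, so monotonicity still yields $\#(e') \SQLE \#(e)$.

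For the contractive rules the local dimension facts I would invoke are all read directly off Definition~\ref{dimension-definition}: a \LETNAME has the dimension of its body; a \CALLNAME has the out-dimension of its callee, which by definition coincides with the dimension of the callee's body; a primitive or \BUNDLENAME call has the dimension of the bundle it yields; and a \FORKNAME or a \JOINNAME has dimension $\LIFT{1}$, matching respectively the future value and the joined result. In all of these the dimension is in fact preserved exactly, so $\SQLE$ holds trivially. The genuinely interesting cases are the two conditional contractions $[\CODE{if}_c^{\in}]$ and $[\CODE{if}_c^{\notin}]$: the holed \IFNAME of dimension $d = d_2 \SQJOIN d_3$ is replaced by a single branch, whose dimension $d_2$ (respectively $d_3$) satisfies $d_2 \SQLE d$ but may be strictly smaller. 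This is exactly the source of the ``weakness'' in the statement, and the reason the conclusion is an inequality rather than an equality.

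The rule $[\parallel]$ is immediate: the foreground stack and value stack are untouched, so $r(S_{\chi}\ E_{V_{\chi}}) = r(S_{\chi'}\ E_{V_{\chi'}})$ and we may take $e' = e$. I expect the main difficulty to be organizational rather than conceptual, namely keeping the resynthesization bookkeeping straight across the two configurations in each case and checking that the shared-context step genuinely applies. This last point relies on reachable configurations having value stacks of the expected shape --- a $\VALUESEPARATOR$ on top with correctly placed activation separators --- so that $E_V$ and $r$ are defined and return singletons; that well-formedness is precisely what Corollary~\ref{resynthesization-is-total-and-yields-one-expression} supplies. Here it is convenient that the hypothesis already hands us a singleton for $\chi$, leaving only the forward configuration $\chi'$ to verify, which follows from its reachability.
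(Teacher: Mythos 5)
Your proposal is correct and follows essentially the same route as the paper's proof: a direct case analysis on the reduction rule, using Lemma~\ref{lemma-1-jv} to locate the redex inside a context, Lemma~\ref{lemma-2-jv} to carry that same context over to $\chi'$, and Proposition~\ref{dimension-monotonicity-proposition} to lift the local dimension comparison, with the conditional contractions correctly identified as the sole source of strict decrease. One caution: do not lean on Corollary~\ref{resynthesization-is-total-and-yields-one-expression} to justify that $r$ yields a singleton on $\chi'$, since that corollary is itself proved \emph{from} this lemma; fortunately your per-case computation of $r(S_{\chi'}\ E_{V_{\chi'}}) = \SEQUENCE{C[e'']}$ already establishes the singleton property directly, so the appeal to the corollary is unnecessary and should simply be dropped.
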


The following trivial consequence of Lemma~\ref{weak-dimension-preservation-lemma}
allows us to think of $r$ as always returning \textit{a single} non-holed
expression, when applied on a stack and a value stack (re-encoded as a list of non-holed
expressions) from a reachable configuration:


\begin{corollary}\label{resynthesization-is-total-and-yields-one-expression}
Reachable configurations resynthesize into exactly one expression.
\end{corollary}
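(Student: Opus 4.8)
The plan is to obtain the corollary as an immediate induction on the length of a reduction sequence out of an initial configuration, using Weak Dimension Preservation (Lemma~\ref{weak-dimension-preservation-lemma}) as the step case. Recall that a configuration $\chi$ is reachable exactly when $(\DHANDLE{h}{e}, \EMPTYSET)\ \VALUESEPARATOR\ \Gamma \TOERT \chi$ for some initial configuration; I would fix such a witnessing sequence and induct on its number of reduction steps.

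First I would settle the base case by computing $r$ directly on an initial configuration. Its value stack is the lone separator, so $E_{\VALUESEPARATOR} = \EMPTYSEQUENCE$, and its main stack is $(\DHANDLE{h}{e}, \EMPTYSET).\EMPTYSTACK$; unfolding the two relevant clauses of Definition~\ref{resynthesization-definition} gives $r((\DHANDLE{h}{e}, \EMPTYSET).\EMPTYSTACK\ \EMPTYSEQUENCE) = r(\EMPTYSTACK\ \DHANDLE{h}{e}.\EMPTYSEQUENCE) = \SEQUENCE{\DHANDLE{h}{e}}$, a single non-holed expression. This is the only genuine computation in the argument.

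For the step case, suppose a configuration $\chi$ reached in $k$ steps resynthesizes to a singleton $\SEQUENCE{e}$, and let $\chi \TOE \chi'$ be the next reduction. This is precisely the antecedent of Lemma~\ref{weak-dimension-preservation-lemma}, which supplies an $e'$ with $r(S_{\chi'}\ E_{V_{\chi'}}) = \SEQUENCE{e'}$ (I would simply discard the accompanying inequality $\#(e') \SQLE \#(e)$, which is not needed here). Hence $\chi'$ again resynthesizes to exactly one expression. The parallel rule $[\parallel]$ requires no separate treatment: it leaves the foreground stacks $S$ and $V$ unchanged, so it leaves the foreground resynthesization literally unchanged, and Lemma~\ref{weak-dimension-preservation-lemma} already folds this case in. Composing the base case with the $k$ step applications yields the claim, establishing simultaneously that $r$ is total on reachable configurations and that its value is always a singleton.

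I do not expect a real obstacle, since all the substantive work lives in Lemma~\ref{weak-dimension-preservation-lemma}, whose statement was deliberately phrased conditionally on the source already resynthesizing to a singleton. The only points to be careful about are bootstrapping that hypothesis --- which the base-case computation provides --- and fixing reachability relative to initial configurations so that the induction has a well-defined starting point.
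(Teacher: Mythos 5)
Your proposal is correct and follows essentially the same route as the paper's own proof: induction on the length of the reduction sequence, with the initial configuration as the (easily computed) base case and Lemma~\ref{weak-dimension-preservation-lemma} supplying the inductive step. The only difference is that you spell out the base-case computation that the paper dismisses as obvious.
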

\begin{proof}
The property is obvious for initial configurations, which make up the induction
base; Lemma~\ref{weak-dimension-preservation-lemma} proves the inductive case.
\end{proof}

\subsection{Semantic soundness properties}
In the style of the \textit{Semantic Soundness} Theorem of
\cite[\SECTION3.7]{a-theory-of-type-polymorphism-in-programming--milner}, we
can now finally prove that ``well-dimensioned programs do not go wrong'':

\begin{theorem}[Dimension Semantic Soundness]\label{dimension-semantic-soundness-theorem}
No consistently-dimensioned expression fails because of dimension: more
formally, for all $\DHANDLE{h}{e}$ and $\Gamma$,
if $\#(\DHANDLE{h}{e}) \SQL \top$ then for each $\chi$ such that
$((\DHANDLE{h}{e}, \EMPTYSET)\  \VALUESEPARATOR\  \Gamma) \TOERT \chi$
we cannot have that $\chi~\FAILSBECAUSEOFDIMENSION$.
\end{theorem}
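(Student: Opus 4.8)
The plan is to run the standard ``preservation implies soundness'' argument in our setting: dimension can only decrease along reductions, a consistently-dimensioned expression starts strictly below $\top$, and I will show that dimension failure would require reaching the top element $\top$. Combining these gives the result by contradiction. The two ingredients I lean on are Weak Dimension Preservation (Lemma~\ref{weak-dimension-preservation-lemma}) together with its Corollary~\ref{resynthesization-is-total-and-yields-one-expression}, and an auxiliary characterization of dimension-failing configurations.

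First I would compute the dimension of the initial configuration. Unfolding the value-stack translation $E_{\VALUESEPARATOR} = \EMPTYSEQUENCE$ and the first two clauses of $r$ gives $r((\DHANDLE{h}{e}, \EMPTYSET)\ \EMPTYSEQUENCE) = \SEQUENCE{\DHANDLE{h}{e}}$, so this configuration has dimension $\#(\DHANDLE{h}{e}) \SQL \top$ by hypothesis. Then, iterating Lemma~\ref{weak-dimension-preservation-lemma} along the chain $((\DHANDLE{h}{e}, \EMPTYSET)\ \VALUESEPARATOR\ \Gamma) \TOERT \chi$ --- every intermediate configuration being reachable, so that Corollary~\ref{resynthesization-is-total-and-yields-one-expression} supplies the singleton hypothesis needed at each step --- yields $\#(\chi) \SQLE \#(\DHANDLE{h}{e})$. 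In the flat lattice $\NATURALSTB$, being below an element distinct from $\top$ keeps one distinct from $\top$; hence $\#(\chi) \SQL \top$ for every reachable $\chi$.

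The core of the proof is the auxiliary claim that every reachable $\chi$ with $\chi \FAILSBECAUSEOFDIMENSION$ satisfies $\#(\chi) = \top$, which immediately contradicts the previous paragraph. I would establish it by case analysis over the seven dimension-failure rules of Definition~\ref{failure-definition} (the $\LETNAME$, $\CALLNAME$, primitive, $\IFNAME$, $\BUNDLENAME$, $\FORKNAME$ and $\JOINNAME$ cases). In each case the top stack frame is a properly-holed form, and resynthesization fills its hole(s) with the $\BUNDLENAME$ expressions built from the offending part of the value stack. The failure side-condition forces one of these reconstructed bundles either to have size $k \neq 1$ (the discriminand for $\IFNAME$, the future expression for $\JOINNAME$, and any non-singleton actual for the applicables and $\BUNDLENAME$), or to have fewer than $n$ elements (the bound expression for $\LETNAME$), or it forces the reconstructed actual count to differ from the fixed arity of $f$ or $\pi$. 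Using the flat-lattice facts that $\LIFT{k} \SQLE \LIFT{1}$ only for $k = 1$ and $\LIFT{k} \SQLE \LIFT{m}$ only for $k = m$, each such violation makes the side constraint of the matching rule of Definition~\ref{dimension-definition} unsatisfiable, so $\PARAMETER \HASDIMENSION \PARAMETER$ is undefined on the reconstructed top form and that form has dimension $\top$. Since, by Lemma~\ref{lemma-1-jv}, this form survives as a subexpression of the single resynthesized expression $e$, Proposition~\ref{top-dimension-propagates-outwards-proposition} propagates $\top$ outward to give $\#(\chi) = \#(e) = \top$.

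I expect the main obstacle to be the applicable and bundle subcases where failure is caused by an arity mismatch rather than a non-singleton argument: there one must argue that resynthesization reconstructs a call (respectively primitive, fork, or bundle) with exactly the number of actuals witnessed on the value stack, and that this count is the one tested by the side condition $f \HASDIMENSION k \to d$ (respectively $\pi \HASDIMENSION k \to m$), so that disagreement with the fixed arity leaves $\PARAMETER \HASDIMENSION \PARAMETER$ undefined and hence yields dimension $\top$. Throughout I would rely on staticity, which keeps the procedure and primitive environments constant and, as noted when defining $E_V$, lets resynthesization ignore activation separators, and on $r$ being well-defined as a function up to the immaterial choice of handles. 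The remaining cases are then routine rewriting with the clauses of $r$, entirely parallel to the calculations already performed in the proof of Lemma~\ref{weak-dimension-preservation-lemma}.
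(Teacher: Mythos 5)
Your proposal is correct and follows essentially the same route as the paper's own proof: iterate Weak Dimension Preservation from the initial configuration (whose resynthesization is $\SEQUENCE{\DHANDLE{h}{e}}$) to bound every reachable configuration strictly below $\top$, then show by case analysis over the seven $\FAILSBECAUSEOFDIMENSION$ rules that any dimension-failing configuration resynthesizes to an expression of dimension $\top$ via the unsatisfiable side constraints of Definition~\ref{dimension-definition}, Lemma~\ref{lemma-1-jv} and Proposition~\ref{top-dimension-propagates-outwards-proposition}, yielding the contradiction. The paper merely inlines your auxiliary claim into the contradiction argument rather than stating it separately; the substance is identical.
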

\begin{proof}
By contradiction, let us assume that a reachable consistently-dimensioned
expression $\DHANDLE{h}{e}$ fails because of dimension: then we have that
$\#(\DHANDLE{h}{e}) \SQL \top$ and
\linebreak
$((\DHANDLE{h}{e}, \EMPTYSET)\ \VALUESEPARATOR\ \Gamma) \TOERT \chi \FAILSBECAUSEOFDIMENSION$;
but because of Lemma~\ref{weak-dimension-preservation-lemma} each reduction
starting from the initial configuration either leaves the dimension unchanged
or lowers it, hence
$\#(\chi) \SQLE \#((\DHANDLE{h}{e},\ \EMPTYSET)\ \EMPTYSEQUENCE) \SQL \top$,
which means that $r(\chi)$ is also consistently-dimensioned.
\\
We examine all the possible cases where $\chi \FAILSBECAUSEOFDIMENSION$:
\begin{itemize}
\item
$(\DLET{h_0}{x_1...x_n}{\HOLE}{\DHANDLE{h_2}{e}},\ \rho).S
\ V\ \Gamma \FAILSBECAUSEOFDIMENSION$ when the top bundle on $V$ has less than
$n$ elements, let us say $c'_1 ... c'_k$ with $k < n$: then by
Definition~\ref{resynthesization-definition} applied twice and
Lemma~\ref{lemma-1-jv} for some context $C[\PARAMETER]$ we have that
$r(\chi) = \linebreak
C[\DLET{h_0}{x_1...x_n}{c'_1...c'_k}{\DHANDLE{h_2}{e}}]$; but then by
Definition~\ref{dimension-definition} the \LETNAME expression cannot be
consistently-dimensioned, and neither can $r(\chi)$ by
Proposition~\ref{top-dimension-propagates-outwards-proposition}: contradiction;

\item
$
        ({\DCALLWITHHOLE{h_0}{f}},\ \rho).S
        \ V\ \Gamma
        \FAILSBECAUSEOFDIMENSION
        $ when the top frame on the value stack has a wrong number of $\VALUESEPARATOR$-separated
        constants: then by Definition~\ref{resynthesization-definition} and Definition~\ref{dimension-definition}
        we have that $r(\chi) = \top$: contradiction;

\item
$
        (\DPRIMITIVE{h_0}{\pi}{\HOLE},\ \rho).S
        \ V\ \Gamma
        \FAILSBECAUSEOFDIMENSION
        $ when $\pi \HASDIMENSION n \to m$, $V \DOESNTUNIFY\ \VALUESEPARATOR c_{n} \VALUESEPARATOR c_{n-1} \VALUESEPARATOR ...$\linebreak$ \VALUESEPARATOR c_2 \VALUESEPARATOR c_1 \VALUESEPARATOR \ACTIVATIONSEPARATOR V'$:\\
        identical to the ${\DCALLWITHHOLE{h_0}{f}}$ case;
\item
$
        (\DIFIN{h_0}{\HOLE}{c_1...c_n}{\DHANDLE{h_2}{e}}{\DHANDLE{h_3}{e}},\ \rho).S
        \ V\ \Gamma
        \FAILSBECAUSEOFDIMENSION
        $ when $V \DOESNTUNIFY\ \VALUESEPARATOR c \VALUESEPARATOR V'$:\\
        by
        Definition~\ref{resynthesization-definition} and
        Definition~\ref{dimension-definition} we find immediately that $r(\chi) = \top$: contradiction;

\item
       $(\DBUNDLEWITHHOLE{h_0},\ \rho).S
        \ V\ \Gamma
        \FAILSBECAUSEOFDIMENSION$
        when $V \DOESNTUNIFY\ \VALUESEPARATOR c_1 \VALUESEPARATOR c_2 \VALUESEPARATOR ... \VALUESEPARATOR c_{n-1} \VALUESEPARATOR c_n \VALUESEPARATOR V'$:\\
        similar to the ${\DCALLWITHHOLE{h_0}{f}}$ case:
        since $\chi$ is reachable the original \BUNDLENAME expression contained exactly
        $n$ parameters, but some of them are plural: however by 
        Definition~\ref{resynthesization-definition} and
        Definition~\ref{dimension-definition} we have $r(\chi) = \top$: contradiction;
       
\item
$
        (\DFORKWITHHOLE{h_0}{f},\ \rho).S
        \ V\ \Gamma
        \FAILSBECAUSEOFDIMENSION
        $
when the top frame on the value stack has a wrong number of $\VALUESEPARATOR$-separated
constants:
identical to the ${\DCALLWITHHOLE{h_0}{f}}$ case;

\item
$
        (\DJOIN{h_0}{\HOLE},\ \rho).S
        \ V\ \Gamma
        \FAILSBECAUSEOFDIMENSION
        $ when $V \DOESNTUNIFY\ \VALUESEPARATOR c \VALUESEPARATOR V'$:\\
        identical to the $\DIFIN{h_0}{\HOLE}{c_1...c_n}{\DHANDLE{h_2}{e}}{\DHANDLE{h_3}{e}}$ case.
\QEDPROOF
\end{itemize}
\end{proof}

Corollary~\ref{well-dimensioned-programs-do-not-go-wrong-theorem}, a simple
consequence of Theorem~\ref{dimension-semantic-soundness-theorem}, extends
the semantic soundness result to whole programs by providing a sufficient
condition for avoiding dimension errors.

\begin{corollary}\label{well-dimensioned-programs-do-not-go-wrong-theorem}
Well-dimensioned-programs cannot fail because of dimension.
\QEDNOPROOF
\end{corollary}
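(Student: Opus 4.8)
The plan is to obtain the corollary as an almost immediate consequence of Theorem~\ref{dimension-semantic-soundness-theorem}, once the two clauses of Definition~\ref{well-dimensioned-definition} are matched against the two places where a dimension failure could arise: the foreground thread, and any background thread spawned by a $\FORKNAME$.

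First I would dispatch the foreground. By the second clause of Definition~\ref{well-dimensioned-definition} the main expression $\DHANDLE{h_2}{e}$ of a well-dimensioned program $(\Gamma, \DHANDLE{h_2}{e})$ satisfies $\#(\DHANDLE{h_2}{e}) \SQL \top$, and by Definition~\ref{program-definition} evaluating the program means reducing from $((\DHANDLE{h_2}{e}, \EMPTYSET)\ \VALUESEPARATOR\ \Gamma)$. Theorem~\ref{dimension-semantic-soundness-theorem} then applies verbatim and yields that no configuration $\chi$ reachable via $\TOERT$ has $\chi \FAILSBECAUSEOFDIMENSION$ in its foreground.

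Next I would handle the background threads. Each is created by rule $[\CODE{fork}_c]$ from some procedure $f$, and starts from a configuration whose single stack frame carries $f$'s body $\DHANDLE{h}{e}$ together with the binding environment of its actuals, over the empty value stack $\VALUESEPARATOR$. The first clause of Definition~\ref{well-dimensioned-definition} guarantees $f \HASDIMENSION n \to d$ with $d \SQL \top$, so by Definition~\ref{dimension-definition} the body satisfies $\#(\DHANDLE{h}{e}) = d \SQL \top$; resynthesizing this initial thread configuration returns (up to handles) exactly $\langle\DHANDLE{h}{e}\rangle$, so the thread is consistently-dimensioned at birth, its local environment being irrelevant to $\#$. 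I would then observe that the soundness argument — Weak Dimension Preservation (Lemma~\ref{weak-dimension-preservation-lemma}) followed by the case analysis over the $\FAILSBECAUSEOFDIMENSION$ rules — never uses emptiness of the initial local environment, so it generalizes to any consistently-dimensioned thread configuration and rules out dimension failure in each background thread as well.

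The main obstacle is exactly this last mismatch: Theorem~\ref{dimension-semantic-soundness-theorem} is phrased for an initial configuration with an empty local environment, whereas a forked thread begins with its actuals already bound. The work is to verify that nothing in the argument depends on that emptiness — Weak Dimension Preservation is stated for arbitrary reachable configurations, and the failure-case contradiction inspects only the top stack frame and the shape of the value stack, not $\rho$ — so the extension is bookkeeping rather than a new idea. (One could alternatively dispense with the first clause entirely, since the $\FORKNAME$ rule of Definition~\ref{dimension-definition} already forces $d \SQLE \LIFT{1}$ on a forked procedure, whence consistency of the main expression propagates to every transitively forked procedure; but invoking the procedure clause of Definition~\ref{well-dimensioned-definition} directly is cleaner and avoids that reachability argument, which is presumably why the author marks the statement $\QEDNOPROOF$.)
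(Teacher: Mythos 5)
Your first paragraph is already the paper's entire intended argument: the corollary is stated without proof precisely because it is the theorem applied verbatim to the main expression, whose consistent dimensioning is the second clause of Definition~\ref{well-dimensioned-definition}. The rest of your proposal proves something genuinely stronger than what the paper's formal definitions require, and that is worth flagging. The relation $\PARAMETER\ \PARAMETER\ \PARAMETER \FAILSBECAUSEOFDIMENSION$ is defined by rules that pattern-match only on the top of the \emph{foreground} main stack, and the paper states explicitly that failure in a background thread does not propagate to any other thread; so ``the program fails because of dimension'' can only mean that some configuration reachable from the main expression's initial configuration fails in its foreground, and Theorem~\ref{dimension-semantic-soundness-theorem} closes that case outright. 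Your background-thread analysis is therefore not needed for the corollary as stated --- indeed, under this reading even the first clause of Definition~\ref{well-dimensioned-definition} is redundant here, since (as your own parenthetical observes) consistent dimensioning of the main expression already forces $d \SQL \top$ for every procedure it transitively calls or forks. That said, the extension you sketch is sound: Lemma~\ref{weak-dimension-preservation-lemma} and the failure-case analysis never use emptiness of the initial local environment, only the shape of the top frame and of the value stack, and a freshly forked thread starts with value stack $\VALUESEPARATOR$, so the required invariants hold at its birth. So you have correctly proved the natural strengthening ``no thread of a well-dimensioned program fails because of dimension,'' at the cost of a reachability bookkeeping argument the paper deliberately avoids by defining failure on the foreground only.
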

Of course well-dimensioning is only a sufficient condition for the absence of
dimension failures: an expression containing unreachable code such as
$\DIFIN{h_0}{\DCONSTANT{h_1}{\mathcal{N}(1)}}{\mathcal{N}(2)}{\DBUNDLE{h_2}{\DCONSTANT{h_3}{3}\ \DCONSTANT{h_4}{4}}}{\DCONSTANT{h_5}{5}}$
may have dimension $\top$, without ever failing because of dimension.

\section{Reminder: why we accept ill-dimensioned programs}
Even when only speaking about static programs, we prefer \textbf{not} to
restrict ourselves to well-dimensioned programs for reasons of philosophical
coherency (\SECTION\ref{static-analyses-are-not-mandatory}), despite the
difficulty of finding believable examples of ill-dimensioned programs that we
would like to accept as ``correct''.
\\\\
We could argue that it is at least conceivable that syntactic extensions
automatically produce static but ill-dimensioned \EPSILONZERO\ programs --- which maybe
could be proved not to fail because of dimension, due to some property enjoyed by the
extension.  On the other hand the extension might actually be unsafe in the
general case, and still useful.

But independently of any extension, at the level of the core language it is
reasonable to accept any program which \textit{could} possibly yield a useful
output: since we want to respect the programmers' intelligence
\EPSILONZERO\ will not constrain the expressivity of the upper layers;
therefore we want to accept ill-dimensioned programs
\textit{and run them} until an error condition is reached, if ever. The compiler
should generate code which runs until possible, and
compilation itself should \textbf{not} fail because of ill-dimensioning.

Of course a personality implementer is always free to add static
checks generating warning messages or even fatal errors at compile time,
yielding a very safe --- if restrictive --- language.  Such languages do have a
place in the world, as shown by the experience of Ada, ML and Haskell; anyway we still hold that
refusing to proceed at any cost in a hysterical paralysis
is not the most useful reaction to the
discovery that a program might, or even \textit{will}, fail.

\section{Summary}
We have shown a static semantics of \EPSILONZERO, permitting to statically infer
the dimension of the bundle each expression in a static program may evaluate to.
We have then proceeded to prove our static analysis to be sound with respect to \EPSILONZERO's
dynamic semantics, providing a sufficient condition which guarantees certain
failures not to happen at run time.

Such formal work is practical and not overly complicated, thanks to
the minimalistic nature of \EPSILONZERO.
\\
\\
Dimension analysis can be used for rejecting programs not respecting the sufficient
condition; anyway we advocate against such practice, in the interest of extensibility.

\chapter{Syntactic extension}
\label{transforms-chapter}
\label{transform-chapter}
\label{transformations-chapter}
\label{transformation-chapter}
\label{macros-chapter}
\label{macro-chapter}
\label{syntax-chapter}
\label{syntactic-extension-chapter}
\label{syntactic-extensions-chapter}
The core language \EPSILONZERO specified in \SECTION\ref{epsilonzero-chapter}
is useful but inconvenient for humans to write directly.  In this chapter we
are going to specify syntactic abstraction mechanisms allowing users to easily
extend the language by adding high-level syntactic forms to be automatically
rewritten into \EPSILONZERO.

Since the extension facility is defined in \EPSILONZERO itself and tightly
intertwined with the problem of expressing language syntax as a data structure,
we also need to deal with a bootstrapping problem in the process.

\minitoc

Despite some fundamental differences, the syntactic layer of \EPSILON is
strongly inspired by Lisp and indeed adopts many conventions taken from Scheme
and Common Lisp.
We are now proceeding to quickly review Lisp dialects, in order to
establish a coherent foundation for our critique.

\section{Preliminaries}
Lisp is a family of dynamically-typed higher-order call-by-value imperative
programming languages, suitable to be used in a functional style and
particularly convenient for symbolic processing.

The original ``LISP'' language described by John McCarthy back in
1959\footnote{McCarthy
  specified in a 1995 footnote that \cite{lisp-mccarthy}, published in April
  1960, ``was written in early 1959'': see footnote 4 at page 16 in
  \url{http://www-formal.stanford.edu/jmc/recursive.pdf}.
}  \cite{lisp-mccarthy} has been
extended
and
independently re-implemented many times
throughout the years
giving birth to a wealth of \TDEF{dialects}, the most important being the large
and complex \TDEF{Common Lisp} \cite{common-lisp} and the elegant,
minimalistic
\TDEF{Scheme} \cite{scheme,r5rs,r6rs}.
All dialects share the same core ideas.

Contrary to persistent misinformation, most modern Lisps are statically
scoped (``\TDEF{lexically} scoped'' in Lisp jargon); Scheme and Common Lisp in
particular have been using static scoping since their original inception in
1975 and 1984.
\\
\\
Lisp introduced several striking innovations most of which eventually found
their way into the mainstream, including
the interactive Read-Eval-Print Loop,
conditional expressions,
higher order
and
garbage collection.
Recursion has been
supported since the very beginning, and in the 1960s the possibility of
expressing a program as a collection of recursive procedures might have felt
like the most radical feature.  But what still sets apart Lisps from the
other languages after fifty years is their \TDEF{homoiconicity}: programs are
encoded using the same data structure they manipulate, which is in fact the
only existing ``data type'' in the language; such data structure, the
\TDEF{s-expression}, is simple and convenient for meta-programming and for
representing symbolic information in general.

Just to be explicit from the beginning and to prevent misunderstandings, we
already make it clear that \EPSILON's syntax will use s-expressions but
will \textit{not} be homoiconic: \EPSILON is not a Lisp.  Yet we find it best
to illustrate our solution in an incremental way, starting from a description
and critique of Lisp and then re-tracking the mind path by which we arrived at
our design.
\\
\\
In the following we give our definition of s-expressions and then proceed to
quickly review the main ideas of Lisp, without exactly following any particular
dialect.  Our lexical and syntactic conventions will mostly come from Scheme,
but our macro system will be closer to Common Lisp's.

Our \textit{meta}-linguistic conventions by contrast will be non-standard,
particularly with regard to s-expressions, in order to establish a new common
framework encompassing \EPSILON as well.  Experienced Lisp users interested in
a comparison with the traditional jargon are referred to the footnotes for
some discussion of the rationale for our changes.


\section{S-expressions}

The s-expression is an inductive data structure: it can be seen a disjoint
union containing at least several fixed \TDEF{atomic types} and an
\TDEF{s-cons} type (pronounced ``ess-cons''); an \TDEF{s-cons} is an ordered pair of
s-expressions\footnote{S-conses are called ``conses'' in Common Lisp and
  ``pairs'' in Scheme.}.

The specific collection of atoms depends on the Lisp dialect, but at least some
types are always provided: a unique object called \TDEF{the empty list},
\TDEF{fixnums}, and \TDEF{symbols}.  Symbols are objects identified by a unique
name, which can be compared for equality with one another.  All dialects also
allow \TDEF{procedures} (zero or more s-expressions as parameters, one
s-expression as result, possibly with side effects) as s-expressions.

All practical Lisp dialects also support other atom types, including booleans and
other numeric types; other non-atomic types such as vectors are available as
well, despite not being required for our presentation.
\\
\\
It is worth to stress the \textit{disjoint-union} nature of s-expressions;
however in this slightly non-standard presentation we prefer to explicitly
specify an encoding for an s-expression as a pair made of a natural type
identifier and an element of the corresponding type.

The following ``open-ended'' definition is slightly involved due to the nature
of s-expressions as a disjoint union whose cases, despite not being all specified,
are potentially recursive\footnote{The s-cons is not necessarily the only recursive
case. We have already hinted at the ``vectors'' (\TDEF{s-vectors}
for us) supported by all practical Lisps, whose elements are other arbitrary
s-expressions; but the idea of course is to enable the user to provide more recursive
addends herself.}:

\begin{definition}[s-expression]\label{s-expression}\label{s-expression-definition}
Let $\SET{A}_0,...,\SET{A}_{n-1}$ be an ordered collection of sets called
\TDEF{addend types} including at least the set of fixnums, the set of symbols,
the empty list singleton, the set of conses, and some set of s-expressions-to-s-expression
procedures.

We define the
\TDEF{set of s-expressions $\SET{S}_{\SET{A}_0,...,\SET{A}_{n-1}}$}
(or simply $\SET{S}$ without subscripts when the addends are clear from the context)
as
$(\{0\} \times \SET{A}_0) \UNION 
(\{1\} \times \SET{A}_1) \UNION 
...
\UNION 
(\{n-1\} \times \SET{A}_{n-1})$.

For each addend type $\SET{A}_i$ we also define:
\begin{itemize}
\item
the \TDEF{injected type s-$\SET{A}_i$} (pronounced like $\SET{A}_i$ preceded
by the syllable ``ess'') as $\{i\}\times\SET{A}_i$;
\item
the \TDEF{$\SET{A}_i$-injection} function
$in_{\SET{A}_i} : {\SET{A}_i} \to \SET{S}$ as $\{x \mapsto (i, x)\ |\ x \in \SET{A}_i\}$;
\item
the \TDEF{$\SET{A}_i$-ejection} partial function
$ej_{\SET{A}_i} : \SET{S} \PARTIAL {\SET{A}_i}$ as $\{(i, x) \mapsto x\ |\ x \in \SET{A}_i\}$.
\end{itemize}
And
the \TDEF{untyped ejection} function $ej : \SET{S} \to \bigcup_{i}{\SET{A}_i}$ as
$\{(i, x) \mapsto x\ |\ (i, x) \in \SET{S}\}$.
\QEDDEFINITION
\end{definition}
As per Definition~\ref{s-expression-definition} we call \TDEF{s-fixnums},
\TDEF{s-symbols} and \TDEF{the empty s-list singleton} the injections of
fixnums, symbols, and the empty list singleton into s-expressions.  We call
\TDEF{s-conses} the injection of conses --- which, it is worth to stress once
more, means the set of \textit{s-expression} pairs, rather than \textit{any}
pairs.
The specific nature of \TDEF{S-procedures} depends on the Lisp dialect, but in general
we can think of them as the injection of procedures with effects accepting zero
or more s-expressions and returning one s-expression\footnote{In our presentation we use the ``s-'' prefix for
  explicitly highlighting the difference between a type and its injection into
  s-expressions, but such distinction is not needed in Lisp where
  \textit{every} object is an s-expression: s-fixnums, s-symbols and the empty
  s-list in Lisp are just ``fixnums'', ``symbols'' and ``the empty list''.
  Here we speak of an s-cons as an (injected) pair \textit{of two
    s-expressions} rather than two arbitrary objects; since here we do not have much
  use for non-injected \textit{conses} we can also avoid the issue of conses of
  non-s-expressions.}.
\\
\\
Up to this point we have defined s-expressions as a mathematical structure; but
since s-expressions are used for input and output, we also need to specify
their \textit{written notation} as a reasonably formal syntax.  However, to
avoid making our notation too heavy, we will not explicitly distinguish between
s-expression literals and their corresponding non-injected literals.
\begin{definition}[s-expression syntax]\label{s-expression-syntax}
Comments start with a semicolon and extend up to the end of the line; all
whitespace is otherwise ignored.
\begin{itemize}
\item
We write s-fixnums as strings of one or more digits in radix 10, preceded
by an optional sign;

\item
we write the empty s-list as ``\CODE{()}'', possibly with whitespace or
comments between the open and closed parentheses;

\item
we accept as an s-symbol any sequence of characters not containing spaces,
dots, semicolons, quotes, backquotes, commas or parentheses which is not
well-formed as an s-fixnum or an s-expression prefix as per Syntactic Convention~\ref{lisp-sexpression-prefix-syntactic-convention}\footnote{This description is simplified and idealized
compared to what realistic Lisps allow: practical dialects provide
escaping mechanisms to even embed \textit{whitespace} within a symbol name.};

\item
if $s_1$ and $s_2$ are s-expressions, then we write their s-cons as
``\CODE{($s_1$ .\ $s_2$)}''.
\QEDDEFINITION
\end{itemize}
\end{definition}
It should be noticed that s-procedures have no syntax in
Definition~\ref{s-expression-syntax}: this means that they cannot be directly
expressed as literal constants.

Some sample s-fixnums are \CODE{1234}, \CODE{0}, \CODE{+12}, and \CODE{-42};
\CODE{()} is the empty s-list;
all of the following are s-symbols: \CODE{a}, \CODE{b}, \CODE{+},
\CODE{this-is-an-s-symbol}, \CODE{incr!}, \CODE{even?}, \CODE{pi/4}, \CODE{1-}.
Some s-conses are
\CODE{(1 .\ 2)}, \CODE{(a .\ ())},
\CODE{((a .\ 3) .\ 1)}, \CODE{((a .\ b) .\ (57 .\ d))}.
\\
\\
Since s-conses allow s-expressions to be nested at any depth, it is convenient
to unambiguously name specific substructures:
\begin{definition}[s-cons selectors]\label{s-cons-selectors}
Let $s_1$ and $s_2$ be s-expressions; we say that
the \TDEF{s-car} of \CODE{($s_1$ .\ $s_2$)} is $s_1$ and
the \TDEF{s-cdr} of \CODE{($s_1$ .\ $s_2$)} is $s_2$.

By definition, let \textit{s-car} and \textit{s-cdr} be \TDEF{s-cons selectors}; now, if
\textit{s-c$P$r} is an s-cons selector for some ``path'' $P \in \{a,d\}^{+}$, we define
\TDEF{the s-cons selector s-ca$P$r} of $s$ to be the s-car of the \textit{s-c$P$r} of $s$,
and
\TDEF{the s-cons selector s-cd$P$r} of $s$ to be the s-cdr of the \textit{s-c$P$r} of $s$\footnote{Again, we prepended the ``s-'' prefix to the traditional cons
  selector names.}.
When pronounced, each ``a'' and ``d'' in s-cons selector names belongs to a different syllable.
\QEDDEFINITION
\end{definition}
So for example, the s-caddr (pronounced ``ess-ca-dh-dr'') of an s-expression is the left element of the right
element of its right element, and the s-caddr of \CODE{(a .\ (b .\ (c .\ (d .\ e))))} is
\CODE{c}.

Apart for their ``s-'' prefix, introduced by us to distinguish s-expressions
from addends, s-cons selector names are well-established in Lisp.  They trace back their
alien-sounding names to details
of the IBM 704, the machine on which McCarthy's LISP was originally implemented
\cite{lisp-mccarthy}; we retain the names for their easy composability, and as a
homage to Lisp culture.
\\
\\
Since s-conses may be nested arbitrarily, they can encode linear sequences of
any length.  Such sequences are conventionally nested on the right:
\begin{definition}[s-list]\label{s-list-definition}
We call an s-expression an \TDEF{s-list}\footnote{What we call s-list here is
  traditionally known as ``list'' or ``proper list'' in Lisp.  What we would
  refer to here as a \textit{non-s-list s-cons} is known in Lisp as a
  ``dotted list'' (since Syntactic
  Convention~\ref{compact-sexpression-notation-syntactic-convention} does not
  apply to the structure spine); somewhat confusingly, Lisp ``dotted lists'' are not ``lists''.}
(pronounced ``ess-list'') if it is either the empty s-list or an s-cons whose
s-cdr is an s-list.

If an s-list $s$ is empty then we say it has \TDEF{no elements}; otherwise we call its
\TDEF{elements} the s-car of $s$ followed by the elements of the s-cdr of $s$.
\QEDDEFINITION
\end{definition}
The following three s-expressions are
s-lists:
\CODE{()},
\CODE{(a .\ ())},
\CODE{(a .\ ((1 .\ 2) .\ ()))};
the following three s-expressions are \textit{not} s-lists:
\CODE{foo},
\CODE{(a .\ b)},
\CODE{(a .\ (b .\ c))}.

It may be worth stressing that an s-list is allowed to have other s-conses as
some or all of its elements, which are not restricted by homogeneity or indeed
any constraint on their shape.
\\
\\
S-cons syntax becomes clumsy to use when s-expressions are nested too deeply,
hence the need for the following syntactic convention:
\begin{syntacticconvention}[compact s-expression notation]
\label{compact-sexpression-notation-syntactic-convention}
An s-cons whose s-cdr is either another s-cons or the empty s-list may optionally be
written by \textit{both}:
\begin{itemize}
\item omitting the dot;
\item omitting the parentheses around the s-cdr.
\QEDSYNTACTICCONVENTION
\end{itemize}
\end{syntacticconvention}
For example the last two sample s-lists above may also be written as
\CODE{(a)}
and
\CODE{(a (1 .\ 2))};
\CODE{(a b .\ c)}
is another way of writing the last sample non-s-list above.  

Syntactic Convention~\ref{compact-sexpression-notation-syntactic-convention}
always applies to the ``spine'' of s-lists, making them more convenient to write
than alternative specular structures nested on the left.
\\
\\
It is easy to convince oneself that, even with Syntactic
Convention~\ref{compact-sexpression-notation-syntactic-convention},
s-expression notation remains unambiguous; in particular we do not need any
precedence or associativity rule to parse an s-expression in written form, nor
grouping brackets --- Far from it, parentheses are a fundamental part of the
syntax, and can never be added or removed without changing the denoted data
structure.
\\
\\
The following shorthand syntax for s-expressions will be useful later:
\begin{syntacticconvention}[Lisp s-expression prefixes]
\label{lisp-sexpression-prefix-syntactic-convention}\label{lisp-s-expression-prefix-syntactic-convention}
For any s-expression~$s$:
\begin{itemize}
\item {\rm\CODE{(quote $s$)}} may optionally be written as {\rm\CODE{'$s$}};
\item {\rm\CODE{(quasiquote $s$)}} may optionally be written as {\rm\CODE{`$s$}};
\item {\rm\CODE{(unquote $s$)}} may optionally be written as {\rm\CODE{,$s$}};
\item {\rm\CODE{(unquote-splicing $s$)}} may optionally be written as {\rm\CODE{,@$s$}}.
\end{itemize}
We say that ``\CODE{'}'', ``\CODE{`}'', ``\CODE{,}'' and ``\CODE{,@}'' are \TDEF{s-expression prefixes}.
\QEDSYNTACTICCONVENTION
\end{syntacticconvention}

\begin{figure}
\centering
{{\rm\small
\begin{tabular}{ll}
\textit{s-expression} $::=$ & \SPACERF\textit{atom} & \{ \textit{atom} \}
\SPACERP\CODE{(}\ \textit{s-expression}\ \textit{rest} & \{ s-cons(\textit{s-expression}, \textit{rest}) \}
\SPACERP\textit{prefix}\ \textit{s-expression} & \{
s-cons(lookup(\textit{prefix}), s-cons(\textit{s-expression}, \CODE{()})) \}
\\\\ \textit{rest} $::=$ & \SPACERF\CODE{)} & \{ \CODE{()} \}
\SPACERP\CODE{.\ }\textit{s-expression}\ \CODE{)} & \{ \textit{s-expression} \}
\SPACERP\textit{s-expression}\ \textit{rest} & \{ s-cons(\textit{s-expression},
\textit{rest}) \}
\end{tabular}
}}
\caption{\label{s-expression-grammar-figure}
S-expressions can be parsed with the
  attributed LL(1) grammar \cite[\SECTIONS4-5]{dragon} above, 
  also supporting
  Syntactic Conventions~\ref{compact-sexpression-notation-syntactic-convention} and
  \ref{lisp-s-expression-prefix-syntactic-convention}.  The grammar is
  simple enough to allow for a hand-coded recursive-descent parser, with no
  need for generators.
  \\
  Replacing the second alternative for \textit{s-expression} with
  ``{\small $|$ \CODE{(}\ \textit{rest} \{ \textit{rest} \}}''
  yields an even simpler grammar which recognizes \CODE{(.\ $s$)} as an alternate
  degenerate form of any s-expression $s$, as in fact several Lisp
  implementations do.
}
\end{figure}

\section{Lisp syntax}
\label{lisp-syntax-section}
Up to this point we have described the syntax of the s-expression language,
without providing any corresponding semantics other than the
disjoint-union data structure; even Syntactic
Convention~\ref{lisp-s-expression-prefix-syntactic-convention} simply describes
a more compact way of writing down some inductive data structures, with no meaning
deeper
than 
their shape.
But of course the entire point of studying
s-expressions is encoding programming language syntax into them; the
``s-'' prefix indeed stands for ``symbolic'' in
\cite{lisp-mccarthy}, and s-expressions make up the syntax of (a
superset of) Lisp forms.

\subsection{Lisp informal syntax}
Here we resist the temptation of formally specifying a mapping from
s-expressions to terms of a call-by-value \LAMBDACALCULUS with conditionals and
literals; such a definition would depend on the Lisp dialect details and would
be either idealized or overcomplicated, without adding much to comprehension in
any case.


The following high-level description and the examples below will suffice
to provide an intuitive idea:
\begin{syntacticconvention}[Lisp informal syntax]\label{lisp-syntax-syntactic-convention}
Let $s_1$, $s_2$
and $s_3$
be\\
s-expressions.
\begin{itemize}
\item
An s-symbol represents a variable with the same name as its ejection;
\item
an s-fixnum or empty s-list is \TDEF{self-evaluating}, which is to say represents itself as a literal
constant\footnote{In Scheme the ``empty list'' \CODE{()} is not considered a valid
  expression nor interpreted as a literal constant, which forces the user to
  needlessly quote literal \CODE{()} objects.  We consider this an unfortunate
  design mistake (within an otherwise quite beautiful construction), from
  which to intentionally deviate.};
\item
an s-cons whose s-car is an s-symbol in a specific set and whose s-cdr has the
right shape represents the corresponding syntactic form:
\begin{itemize}
\item \CODE{(if $s_1$ $s_2$ $s_3$)} represents a conditional, of which $s_1$
  represents the condition, $s_2$ the ``then'' branch and $s_3$ the ``else'' branch;
\item \CODE{(lambda $s_1$ .\ $s_2$)} represents an anonymous procedure with the
  parameter names encoded by $s_1$; $s_2$ represents the sequence of forms in
  the body; if $s_1$ is an s-list of s-symbols, then the parameters have the same
  names of its element ejections\footnote{It is not worth the trouble to introduce variadic
    procedures here, but this wording permits us at least not to arbitrarily exclude them.};
\item \CODE{(quote $s_1$)} represents $s_1$ as a literal constant;

\item \CODE{(quasiquote $s_1$)} represents $s_1$ as a \TDEF{quasiquoted}
  ``mostly-literal'' structure: the result is a literal structure equal to
  $s_1$ except for substructures of the form \CODE{(unquote $s$)} or
  \CODE{(unquote-splicing $s$)}, which represent ordinary non-literal
  expressions:
  \begin{itemize}
  \item for an \CODE{(unquote $s$)} substructure, the result of evaluating $s$
    will replace the substructure in the quasiquoted structure;
  \item for an \CODE{(unquote-splicing $s$)} substructure the result of
    evaluating $s$, which must yield an s-list, will be spliced element by
    element within the containing s-list in the quasiquoted structure;
  \end{itemize}

\item \CODE{(define $s_1$ $s_2$)} when $s_1$ is an s-symbol represents a global
  definition; $s_2$ represents the expression to be evaluated and whose result
  will be named;

\item ...

\item
  an s-cons whose s-car is an s-symbol whose ejection is a macro name represents
  a user-defined syntactic form;
\end{itemize}

\item
an s-list whose s-car is not a syntactic form name represents a procedure
application: the s-car of the cons represents its operator, and the s-cdr contains
an s-list with its zero or more operands.
\QEDSYNTACTICCONVENTION
\end{itemize}
\end{syntacticconvention}

So, for example:
\begin{itemize}
\item \CODE{57} represents the literal constant \CODE{57} (an s-fixnum);
\item \CODE{a} represents the variable named \CODE{a};
\item \CODE{'a} represents the literal constant \CODE{a} (an s-symbol);
\item \CODE{(a)} represents the application of a procedure named ``\CODE{a}'',
  with zero arguments;
\item \CODE{((a 43))} represents the application of a procedure named
  ``\CODE{a}'' with one argument, the literal constant \CODE{43}; the result,
  presumably another procedure, is in its turn applied with zero arguments;
\item \CODE{(if (a b) c d)} represents a two-way conditional expression; if the result
  of the application of the procedure named ``\CODE{a}'' to the value of the variable
  named ``\CODE{b}'' is true, then the result is the value of the variable ``\CODE{c}'';
  otherwise the result is the value of the variable ``\CODE{d}'';
\item \CODE{(+ 1 2)} represents the application of a procedure named
  ``\CODE{+}'' to two arguments, the literal s-fixnums \CODE{1} and \CODE{2}; no special
  syntax is needed or available for arithmetic operators, which are considered
  ordinary procedures: as a consequence of Syntactic
  Convention~\ref{lisp-syntax-syntactic-convention} procedure application
  syntax is
  rigidly prefix;
\item \CODE{'(+ 1 2)}
  represents the literal constant \CODE{(+ 1 2)}, which is
  an s-cons and in particular an s-list, and also happens to be a valid Lisp
  expression itself;
\item \CODE{'(if)} represents the literal constant \CODE{(if)}, an ordinary
  data structure which would not be valid as a Lisp expression;
\item \CODE{`(if)} also represents the literal constant \CODE{(if)};
\item \CODE{`(a ,b c)} represents an s-list of three elements: the s-symbol
  \CODE{a}, the value of the variable \CODE{b}, and the s-symbol \CODE{c};
\item \CODE{`(a ,@b c)} represents an s-list of two or more elements: the
  literal s-symbol
  \CODE{a}, all the elements of the s-list which is the value of the variable
  \CODE{b} (assumed to be an s-list), and finally the literal s-symbol \CODE{c};
\item
What follows is a reasonable definition of a recursive procedure:
\begin{Verbatim}[label={\rm \em Lisp}]
(define factorial
  (lambda (n)
    (if (= n 0)
      1
      (* n (factorial (- n 1))))))
\end{Verbatim}
The anonymous procedure is
evaluated and then globally named ``\CODE{factorial}'': the procedure has one
parameter called ``\CODE{n}'', and its body is a simple conditional: if the
result of calling the procedure ``\CODE{=}'' with the parameters \CODE{n} and
zero is true, then the result is one; otherwise the result is
the result of calling ``\CODE{*}'' with two parameters: \CODE{n}, and the
result of calling \CODE{factorial} with the result of calling ``\CODE{-}'' with
\CODE{n} and one.
\end{itemize}
Of course a small set of predefined procedures must be provided if we want to
perform arbitrary computation on s-expression data: in particular we will need
to check whether a given s-expression belongs to an addend type (for example,
the \CODE{symbol?}  procedure returns a true s-expression iff its parameter is
an s-symbol), plus constructors and selectors (for example, \CODE{cons} returns
a new s-cons containing its two parameters; \CODE{car} returns the s-car of its
parameter, which must be an s-cons); we also need a procedure \CODE{eq?} to
check whether two given s-symbols are equal.

Given such predefined procedures, it becomes conceptually easy to work on
symbolic information, including language transformers and interpreters. \cite{lisp-mccarthy}
contained the first Lisp interpreter written in itself \textit{as an ordinary
procedure}, in the space of a couple pages of code.
\\
\\
All realistic Lisps also include some macro facility, usually Turing-complete:
macros allow the user to define an s-expression-to-s-expression mapping for
rewriting a syntactic form into a combination of already available forms; a
macro may be thought of as a Lisp procedure to be automatically applied to all
instances of a user-defined form, in some phase prior to execution.

As a
simple but not unrealistic example, since global procedure definitions and
tests for zero are presumably very common, a user might prefer to be able to
write the factorial definition above in a more compact way, as:
\begin{Verbatim}[label={\rm \em Lisp}]
(define-procedure (factorial n)
  (if-zero n
    1
    (* n (factorial (- n 1)))))
\end{Verbatim}
User-defined forms still follow Lisp syntactic
conventions\footnote{Of course because of their different role Common Lisp
  ``reader macros'' \cite[\SECTION2.2]{common-lisp}, a form of extension for the s-expression parser, do not fit our
  classification; Common Lisp ``macros'' do.}: each use of the new forms
\CODE{define-procedure} and \CODE{if-zero} is encoded as an s-cons whose s-car
is the s-symbol uniquely identifying them.
\\
\\
Macros are a form of syntactic abstraction
(\SECTION\ref{procedural-and-syntactic-abstraction}) allowing to factorize
recurring code patterns; it should be obvious that procedural abstraction alone
as provided by \CODE{lambda} and \CODE{define} does \textit{not} suffice to express
\CODE{define-procedure} and \CODE{if-zero}, since their s-expression
subcomponents are not necessarily valid to be interpreted as expressions,
and in any case they do not follow the call-by-value evaluation
strategy of procedures.
\\
\\
As builders of syntax from other pieces of syntax, Lisp macros are a prime
example of symbolic computation, and a particularly good use case for quasiquoting.

For example, assuming the three parameters of the macro \CODE{if-zero} above to
be bound to the formals \CODE{discriminand},
\CODE{then-branch} and \CODE{else-branch}, the macro body might be as simple as
\CODE{`(if (= ,discriminand 0) ,then-branch \linebreak,else-branch)}.

\subsection{Critique}
\label{lisp-critique}
The peculiar syntax of Lisp has always been a polarizing issue for users,
either loved or despised with a violent fervor.  Without trying to pass our
personal opinions on the matter as science, we simply emphasize how powerful
macro systems of the kind hinted at above
are made possible by s-expressions and homoiconicity.

Syntax aside, some circles also perceive as a problem the apparent lack of
efficiency and the strongly dynamic nature of the language, including the
glaring absence of static checks.
\\
\\
As controversial topics do, Lisp has generated valid criticism and also
plenty of noise with popular slogans, myths and half-truths.

\begin{itemize}
\item
Lisp has always been used for symbolic processing, its very name standing for
``List processing''; many users consider it inherently inefficient out of the
field of symbolic computation, because of its very high abstraction level.

Of course Lisp is far from limited to ``lists'' (s-lists for us); in fact
s-lists are but an s-expression subset, useful in practice but not any more
``primitive'' than others.
More importantly, all practical dialects have also included addend types such
as random-access vectors and strings for decades; we avoided them in our
presentation of s-expressions simply because such addends are not needed for
encoding syntax, and in fact this lack of a homoiconic role might actually contribute to
make them less visible --- yet, they exist.

Lisp can be compiled with reasonable efficiency, but some overhead due to its
strongly dynamic nature is indeed hard to overcome.

\item
In particular Lisp is dynamically-typed at its core: there is only one data
type, the s-expression.  Apart from some runtime tagging and checking cost, the
main perceived problem is the difficulty of proving any useful static
properties on realistic programs.  It is not clear whether the language can be
made safer without seriously compromising its expressivity.

We consider this criticism to be valid.

\item
Popular claims according to which ``Lisp programs are abstract syntax trees''
or ``Lisp has no syntax'' (intended as a positive, negative or neutral remark
according to the speaker) can be taken as poetic exaggerations at best.

Equating valid expressions to ASTs is an oversimplification: in fact
most s-expressions do \textit{not} map into valid expressions, and the
difference between s-expressions and abstract syntax is relevant in practice.
The slogan would be slightly more believable if syntax were encoded as an ML-style
sum-of-products type, with its rigid constraints on arity and typing --- but
that would come with a high cost in extensibility.

Lisp syntax looks uniform when compared to traditional solutions,
but it is not nearly as regular as it could be; for example the two atomic
s-expressions \CODE{1} and \CODE{a} are interpreted in radically
different ways, the first as a literal s-fixnum and the second as a variable.
A literal s-symbol needs to be quoted as in \CODE{'a}, while a literal
s-fixnum may be indifferently quoted (once) or not: the s-expressions \CODE{1}
and \CODE{'1} are mapped into the exact same expression.  Procedure application
syntax is also problematic: an s-expression such as \CODE{(a b c)} is regarded
as a procedure call only ``as a fallback case'', 
when
the s-symbol \CODE{a} does not happen to be the name of some syntactic form.

\label{lisp-syntax-could-be-made-more-regular}
Could Lisp syntax be made more regular?  Of course yes: as an alternative we could
require form names as explicit s-symbols in the first position of s-lists also
for variables and calls, and require quoting for all literals.  Then instead of
\CODE{(* n (f (- n 1)))}
we would have something like
``\CODE{(call (variable *) (variable n) (call (variable f) (call (variable -) (variable n) \\'1)))}'',
more uniform but hardly more convenient.  Notation would remain clumsy
even after introducing new s-expression prefix syntax for ``\CODE{variable}''
and ``\CODE{call}'' in the style of Syntactic
Convention~\ref{lisp-s-expression-prefix-syntactic-convention}: for example,
the very cluttered s-expression ``\CODE{@(\$* \$n @(\$f @(\$- \$n '1)))}'' is
a representation of the expression above, \textit{assuming an s-expression syntax
amendment} disallowing ``\CODE{\$}'' characters in symbol names ---
without the syntax change we would need more
whitespace, as in ``\CODE{@(\$ * \$ n @(\$ f @(\$ - \$ n '1)))}''.

Lisp syntax is a compromise and a consequence of conscious design decisions rather
than historical accidents, and these issues have been known for decades:
\cite[``\{\CODE{FUNCALL} is a pain\}'', pp.~26-27]{rrs} already deals with the
problem of using ``lists'' both for procedure calls and for other forms.

We have to recognize that Lisp notation in practice is useful and justified
as it stands, despite its relative asymmetry.
\end{itemize}

\section{Syntactic extensions: the \EPSILONONE personality}
In the following we are going to build upon the experience of Lisp and address
all three points in \SECTION\ref{lisp-critique}, so that:
\begin{itemize}
\item
  \EPSILON be efficiently implementable, and not especially tied to symbolic processing;
\item
  personalities remain open to any typing policy: strong, weak, static,
  dynamic, hybrid, or none at all;
\item
  \EPSILON syntax be at least as convenient as Lisp's while remaining simple to
  describe and extend.
\end{itemize}
For extensibility's stake, we use s-expressions to encode language syntax, as
Lisp dialects do; but differently from Lisp we choose to decouple syntax and generic
data structures, so that s-expressions are available as objects to compute just
\textit{as one data type among a wealth of others}: in practice data of each
addend type are available either injected into s-expressions (for example
s-fixnums, s-symbols), or untagged (fixnums, symbols): thus s-expressions
become a way of \textit{selectively} employing dynamic typing in a world where
untyped objects are also available, with injection and ejection operators to
provide a link between the two representations.  S-expressions are always used to
represent syntax before macroexpansion, but a user is free to employ them at
run time as well if she chooses to, where dynamic typing feels
more convenient.  For generality's sake, \textit{we want s-expressions to be
extensible} so that the user may provide more addends.

Expressions are just one addend type, \textit{distinct from s-expressions};
\EPSILONZERO expressions may be built and analyzed with constructor and selector
operators, injected to and ejected from s-expressions.  Said even more explicitly,
in our solution we have that s-expressions are distinct from
\textit{injected expressions}; and macros act like procedures turning
s-expressions into untagged expressions.  Moving farther from Lisp we will also
define \TDEF{transforms} (\SECTION\ref{transforms}), as a way of systematically
turning (possibly extended) untagged expressions into other untagged
expressions.

\subsection*{The personality stack}
The language roughly outlined above constitutes a personality we call \EPSILONONE:
\begin{itemize}
\item
The \TDEF{\EPSILONONE personality} corresponds to \EPSILONZERO augmented with
forms to define \textit{globals}, \textit{procedures}, \textit{macros} and
\textit{transforms}; plus some utility library.
\item
Thanks to macros and transforms \EPSILONONE is suitable to further extend into
higher-level personalities.
\item
We call \TDEF{\EPSILON} \textit{the whole system}, including \EPSILONZERO, \EPSILONONE
and other (at this point still hypothetical) higher-level personalities built on top of
\EPSILONONE.
\end{itemize}
Higher-level personalities will contain macro and transform definitions in the style of the
ones of \SECTION\ref{sample-extensions}, later in this chapter.
\\
\\
As a language, \EPSILONONE has an abstraction level between
\EPSILONZERO and Lisp, closer to the former.
\label{garbage-collection-is-not-necessary}
Not necessarily aimed at the final user, \EPSILONONE has a low-level feel and
is by design unsafe and unforgiving: operators can be applied to the wrong
operands with no type checking at all, and pointers are explicit.  It lends itself to efficient execution,
and is portable if used correctly.  \EPSILONONE is compatible with garbage
collection but does not require it: the \TDEF{residual program} resulting when
all syntactic abstractions are transformed away might very well use manual memory
management only.
\\
\\
The implementation language of \EPSILONONE is \EPSILONZERO,
taking advantage of Scheme for bootstrapping only.  The implementation forces
us to
commit some decisions which we had left open in the description of \EPSILONZERO
in \SECTION\ref{epsilonzero-chapter}, such as the actual definition of names
and handles in terms of data structures.  All of this has a bearing on
\EPSILONONE, and in our solution the implementations of \EPSILONZERO and
\EPSILONONE are intimately bound: an implementation of \EPSILONZERO alone
directly parsing the syntax of Definition~\ref{epsilonzero-syntax}, despite
being certainly possible, in practice would be little more than an idle
exercise without the syntactic extension mechanisms of \EPSILONONE.


\subsection{Definition via bootstrapping}
One central idea of \EPSILON is to keep the core language as simple as
possible, and have more complex linguistic features defined \textit{as code}.
As a consequence of this strategy, a formal specification of \EPSILONZERO
automatically constitutes a formal\footnote{Of course up to the
details we did not describe, such as primitives.} specification of \EPSILONONE as well, if we
keep into account the source code to bootstrap it from \EPSILONZERO.  Our
implementation thus also serves as a specification of \EPSILONONE: code, rather
than much less flexible mathematics.
\\
\\
\label{bootstrapping}The bootstrapping process is nontrivial, and relying as it does on
alternative implementations of the same data types, macros, side effects on a
global state and \textit{unexec} it provides a particularly poor fit for the
graphical notation of T diagrams \cite{t-diagrams},
\cite[\SECTION3]{partial-evaluation}; here we will resort to plain English
to describe the bootstrap phases, and present the source code following by
necessity a \textit{bottom-up} style.

The general plan, developed in greater detail throughout the rest of this section, consists of
four phases:
\begin{enumerate}[\em (i)]
\item extend Scheme by adding untyped data (\SECTION\ref{bootstrap-phase-1});
\item implement \EPSILONZERO with s-expression syntax plus definition forms
  using Scheme macros (\SECTION\ref{bootstrap-phase-2});
\item in this temporary \EPSILONZERO implementation, build the core data structures we
  need upon untyped data, an \EPSILONZERO self-interpreter relying on reflective global
  structures,
  macros and
  transforms (\SECTION\ref{bootstrap-phase-3});
\item fill reflective global structures by re-interpreting the core
      definitions above, so that the interpreter becomes usable (\SECTION\ref{bootstrap-phase-4});
\end{enumerate}
As it should be clear now, developing \EPSILONONE from \EPSILONZERO up to the
point where we can define
s-expressions, macroexpansion and transforms requires a certain amount of code
(about 2000 lines) in which we have to use \EPSILONZERO to build some machinery,
much of which is useful as part of a generic utility library as well and hence
deserves to be considered as ``belonging'' to \EPSILONONE.  Part of the ``library'' in
\EPSILONONE exists because of this necessity, while most of the rest relies on
syntactic abstraction and is defined \textit{after} the fourth phase, the aim
being simply to make \EPSILONONE more convenient to use
(\SECTION\ref{sample-extensions}).

The fourth phase, after which the global state can be queried, also makes
it possible to \textit{unexec}
away from Guile
into a different runtime
(\SECTION\ref{more-than-one-runtime}).
\\
\\
In the following we are going to show code snippets from the implementation,
which is available in a public bzr repository on GNU Savannah:
\url{https://savannah.gnu.org/bzr/?group=epsilon} \RED{[2015 note: the
repository switched to git in late 2013: see \SECTION\ref{new-repo}]}.  We will usually omit
or condense comments and may change indentation for reasons of space, but we will
\textit{not} simplify the code for this presentation.

This discussion deals with the state of the implementation as of Summer 2012.

\subsubsection{Phase~\BOOTSTRAPPHASE{i}: extend Scheme with untyped data}
\label{bootstrap-phase-1}

In order to eventually free ourselves from the dependency on Scheme, we need to
define our own data structures which are not based on the predefined version of
s-expressions.  To simplify debugging and avoid reusing Scheme
features \textit{by mistake}, it is also useful to make our data structure
incompatible with predefined s-expressions addends; and since we want to unexec
in the end, our ``untyped'' data structure will actually need \textit{boxedness tags}
(\SECTION\ref{reflection--tagging-for-unexecing}).

We use Guile \cite{guile} as our Scheme implementation for bootstrapping.  One
of the intended applications of Guile is as an embeddable Scheme system to make
C applications extensible in the style of Emacs \cite{emacs}, and in view of
this use case Guile's C interface was made particularly convenient and
flexible; we used it to define in C our new ``type'' that we call
\TDEF{whatever}, and operations over it.  Boxedness tags serve only for the internal
Guile garbage-collection machinery, at unexec time, and for debugging memory dumps
(\SECTION\ref{memory-dumps}); but data structures built with whatevers should be
\textit{thought of} as untyped most of the time, as in fact they are conceived
for being eventually unexeced into untyped objects, dropping any tagging
information.

Whatever operations help to prevent possible mistakes during the bootstrap
process by actually performing dynamic checks on tags, in particular to
prevent non-whatever objects from being written into whatever buffer slots: whatever
data structures must remain \textit{closed over the ``points-to'' relation}, so that no
dependency on Guile s-expressions can remain at unexec time, and instead
whatevers only refer other whatevers.
\\
\\
Since Guile is a Scheme implementation its only data type is the s-expression,
of which whatevers are seen as just one more addend type: in our extended Guile
it is possible to dynamically check whether an s-expression is a \textit{whatever
  injection}.
\\
\\
The implementation of this phase, mostly in
\FILE{bootstrap/{\allowbreak}whatever-{\allowbreak}guile/{\allowbreak}whatever-{\allowbreak}guile.c},
is dirty and not especially interesting in itself.  We defined the whatever ``type'' in C as a
\textit{Smob} \cite[\SECTION{}Defining New Types (Smobs)]{guile}.  Whatevers have the
printed syntax of
Syntactic Convention~\ref{memory-dump-syntactic-convention}, also using
ANSI terminal color escape sequences to help the user to recognize boxedness
at a glance.

The same C source file also defines operations over whatevers, making them
accessible to Scheme: there are trivial \textit{conversion operators} (for example from
Scheme (injected) fixnum or (injected) threads to whatever and vice-versa),
plus what \EPSILONZERO sees as primitives:
\begin{itemize}
\item
  \textit{arithmetic} and \textit{bitwise-logic} operators;
\item
  \textit{memory} allocation, disposing, lookup and update;
\item
  very simple \textit{input/output};
\item
  \textit{unexecing} primitives, for checking the boxedness tags and buffer sizes;
\item
  the single primitive \CODE{state:update-{\allowbreak}globals-{\allowbreak}and-{\allowbreak}procedures!},
  needed for transforms (\SECTION\ref{atomic-global-update-primitive}).
\end{itemize}
Primitives number around 30.
\\
\\
The result of this phase is \FILE{guile+whatever}, an extended Guile which can
also be used interactively, supporting our whatever objects while remaining completely
compatible with Scheme.  We will not show examples of its use, because some
counter-intuitive choices were dictated by efficiency concerns; the details of
the \FILE{guile+whatever} system become irrelevant anyway after phase
\BOOTSTRAPPHASE{ii}.

\subsubsection{Phase~\BOOTSTRAPPHASE{ii}: implement \EPSILONZERO in extended Scheme}
\label{bootstrap-phase-2}
\label{call-indirect}\label{callindirect}
Our implementation uses a variant of \EPSILONZERO
in which the grammar of Definition~\ref{epsilonzero-syntax} is augmented by one
more production, for an \TDEF{indirect call} form:
$$e ::= \DCALLINDIRECT{h}{e}{e^{*}}$$
We avoid a formal specification of semantics
  for \CALLINDIRECTNAME: the idea is simply calling a procedure
  whose name is computed at run time as the result of an expression; parameters
  are evaluated call-by-value left-to-right as always in \EPSILONZERO, first the operator and then the operands.

It is easy to convince oneself that adding \CALLINDIRECTNAME  is a quite harmless
optimization, as its effect can be easily simulated by automatically
generating an ``\textit{apply} function'' dispatching over one of its
parameters, as in Reynolds' defunctionalization \cite{defunctionalization}.  In fact we
do that \textit{as well}, as a proof of concept in \FILE{bootstrap/{\allowbreak}scheme/{\allowbreak}core.e}
(\SECTION\ref{applier-procedures}).
\\
\\
Before we can use \EPSILONZERO for implementing \EPSILONONE, we need of course
a syntax for \EPSILONZERO.  In typical bootstrapping fashion, we would like to
define it using the language itself, \EPSILONZERO (or maybe
\EPSILONONE, for maintainability's stake) --- but no parser is available.  Our
solution is mapping an s-expression encoding of \EPSILONZERO syntax into
Scheme, by using Guile macros\footnote{We used Guile's non-standard
  Common Lisp-style macros instead of the standard R5RS hygienic macros
  \cite{r5rs} which Guile also provides.  This choice has no particularly deep reason
  except esthetic consistency with \EPSILONONE's macro system; an
  implementation based on hygienic macros would have worked just as well.}.

Later we will provide another cleaner frontend implementation\footnote{This is not implemented yet: see \SECTION\ref{implementation-status}.} in \EPSILONONE,
to break the bootstrap dependency from Guile; that second frontend will be
backward-compatible with this bootstrap implementation of \EPSILONZERO.

As a consequence of this decision, it is natural for our implementation to use
\textit{symbols} for names, encompassing \textit{all} the sets of variable,
procedure and primitive names $\SET{X}$, $\SET{F}$, $\SET{\Pi}$.
\\
\\
The following definition, very simple despite its length, follows the spirit of Syntactic
Convention~\ref{lisp-syntax-syntactic-convention} but is more rigorous due
to its importance: reading \EPSILONZERO syntax as encoded into s-expressions is
key to understand most of the details in the bootstrap process.

Since macros are not supported in this phase but the process is akin to
macroexpansion, we name this rewriting of an s-expression into an expression
\TDEF{non-macro expansion}.  The choice of generated fresh handles is
immaterial in practice, so we speak of non-macro expansion as of a function.

\begin{definition}[non-macro expansion]\label{nonmacro-expansion-definition}
Let $s, s_1, s_2, s_3, s_4$ be s-expressions.  Then we define,
up to the choice of a fresh $h' \in \SET{H}$,
the \TDEF{non-macro expansion} function
{\rm $\EXPANDE{\PARAMETER} : \SET{S} \PARTIAL \SET{E}$} as:
\begin{itemize}
\item
{\rm $\EXPANDE{\CODE{(e0:variable }s\CODE{)}}
  =
  \DVARIABLE{h'}{x}$}
where {\rm $x = ej_{symbol}(s)$};
\item
{\rm $\EXPANDE{\CODE{(e0:value }s\CODE{)}}
  =
  \DCONSTANT{h'}{c}$}
where {\rm $c = ej(s)$};
\item
{\rm $\EXPANDE{\CODE{(e0:let }s_1\CODE{ }s_2\CODE{ }s_3\CODE{)}}
  =
  \DLET{h'}{x_1...x_n}{\DHANDLE{h_1}{e}}{\DHANDLE{h_2}{e}}$}
where {\rm $\SEQUENCE{x_1...x_n} = \EXPANDXS{s_1}$},
$\DHANDLE{h_1}{e} = {\EXPANDE{s_2}}$
and
{\rm $\DHANDLE{h_2}{e} = {\EXPANDE{s_3}}$};
\item
{\rm $\EXPANDE{\CODE{(e0:call }s_1{\ .\ }s_2\CODE{)}}
  =
  \DCALL{h'}{f}{\DHANDLE{h_1}{e}...\DHANDLE{h_n}{e}}$}
where $f = ej_{symbol}(s_1)$
and
$\SEQUENCE{\DHANDLE{h_1}{e}...\DHANDLE{h_n}{e}} = \EXPANDES{s_2}$;
\item
{\rm $\EXPANDE{\CODE{(e0:call-indirect }s_1{\ .\ }s_2\CODE{)}}
  =
  \DCALLINDIRECT{h'}{\DHANDLE{h_0}{e}}{\DHANDLE{h_1}{e}...\DHANDLE{h_n}{e}}$}
where {\rm $\DHANDLE{h_0}{e} = {\EXPANDE{s_1}}$}
and
$\SEQUENCE{\DHANDLE{h_1}{e}...\DHANDLE{h_n}{e}} = \EXPANDES{s_2}$;
\item
{\rm
  $\EXPANDE{\CODE{(e0:primitive }s_1{\ .\ }s_2\CODE{)}}
  =
  \DPRIMITIVE{h'}{\pi}{\DHANDLE{h_1}{e}...\DHANDLE{h_n}{e}}$}
where $\pi = ej_{symbol}(s_1)$
and
$\SEQUENCE{\DHANDLE{h_1}{e}...\DHANDLE{h_n}{e}} = \EXPANDES{s_2}$;
\item
{\rm $\EXPANDE{\CODE{(e0:if-in }s_1\CODE{ }s_2\CODE{ }s_3\CODE{ }s_4\CODE{)}}
  =
  \DIFIN{h'}{\DHANDLE{h_1}{e}}{c_1...c_n}{\DHANDLE{h_2}{e}}{\DHANDLE{h_3}{e}}$}
where
{\rm $\DHANDLE{h_1}{e} = {\EXPANDE{s_1}}$},
{\rm $\SEQUENCE{c_1...c_n} = {\EXPANDCS{s_2}}$},
{\rm $\DHANDLE{h_2}{e} = {\EXPANDE{s_3}}$}
and
{\rm $\DHANDLE{h_3}{e} = {\EXPANDE{s_4}}$};
\item
{\rm $\EXPANDE{\CODE{(e0:fork }s_1{\ .\ }s_2\CODE{)}}
  =
  \DFORK{h'}{f}{\DHANDLE{h_1}{e}...\DHANDLE{h_n}{e}}$}
where $f = ej_{symbol}(s_1)$
and
$\SEQUENCE{\DHANDLE{h_1}{e}...\DHANDLE{h_n}{e}} = \EXPANDES{s_2}$;
\item
{\rm $\EXPANDE{\CODE{(e0:join }s\CODE{)}}
  =
  \DJOIN{h'}{\DHANDLE{h_1}{e}}$}
where
{\rm $\DHANDLE{h_1}{e} = {\EXPANDE{s}}$};
\item
we do not explicitly specify {\rm $\EXPANDE{\CODE{(e1:define .\ }s\CODE{)}}$};
\item
{\rm $\EXPANDE{\CODE{(e0:bundle .\ }s\CODE{)}}
  =
  \DBUNDLE{h'}{\DHANDLE{h_1}{e}...\DHANDLE{h_n}{e}}$}
where
$\SEQUENCE{\DHANDLE{h_1}{e}...\DHANDLE{h_n}{e}} = \EXPANDES{s}$;
\item
{\rm $\EXPANDE{s}
  =
  \EXPANDE{\CODE{(e0:variable }s\CODE{)}}$}
where $s$ is an s-symbol;
\item
{\rm $\EXPANDE{\CODE{(}s_1{\ .\ }s_2\CODE{)}}
  =
  \EXPANDE{\CODE{(e0:call }s_1{\ .\ }s_2\CODE{)}}$}
where $s_1$ is an s-symbol not in
{\rm
$\{
\CODE{e0:variable},
\linebreak
\CODE{e0:value},
\CODE{e0:let},
\CODE{e0:call},
\CODE{e0:call-indirect},
\CODE{e0:primitive},
\CODE{e0:if-in},
\linebreak
\CODE{e0:fork},
\CODE{e0:join},
\CODE{e0:bundle},
\CODE{e1:define}\}$};
\end{itemize}
where the non-macro sequence expander {\rm $\EXPANDES{\PARAMETER} : \SET{S} \PARTIAL \SET{E}^{*}$} is:
\begin{itemize}
\item
{\rm $\EXPANDES{\CODE{()}}
  =
  \SEQUENCE{}$};
\item
{\rm $\EXPANDES{\CODE{(}s_1\CODE{ .\ }s_2\CODE{)}}
  =
  \DHANDLE{h_1}{e}.\EXPANDES{s_2}
$} where
$\DHANDLE{h_1}{e} = {\EXPANDE{s_1}}$;
\end{itemize}
the symbol sequence expander
{\rm $\EXPANDXS{\PARAMETER} : \SET{S} \PARTIAL \SET{X}^{*}$} is:
\begin{itemize}
\item
{\rm $\EXPANDXS{\CODE{()}}
  =
  \SEQUENCE{}$};
\item
{\rm $\EXPANDXS{\CODE{(}s_1\CODE{ .\ }s_2\CODE{)}}
  =
  x.\EXPANDXS{s_2}
$} where $x = ej_{symbol}(s_1)$;
\end{itemize}
and
the value sequence expander
{\rm $\EXPANDCS{\PARAMETER} : \SET{S} \PARTIAL \SET{C}^{*}$} is:
\begin{itemize}
\item
{\rm $\EXPANDCS{\CODE{()}}
  =
  \SEQUENCE{}$};
\item
{\rm $\EXPANDCS{\CODE{(}s_1\CODE{ .\ }s_2\CODE{)}}
  =
  c.\EXPANDCS{s_2}
$} where $c = ej(s_1)$.
\QEDDEFINITION
\end{itemize}
\end{definition}
The file \FILE{bootstrap/scheme/epsilon0-in-scheme.scm} implements non-macro
expansion with Scheme macros.  After loading it from \FILE{guile+whatever}, Scheme and
\EPSILONZERO can be used together:
\begin{itemize}
\item
\CODE{(+ 1 2)} yields \CODE{3} as a Guile s-expression.
\item
\CODE{(e0:primitive fixnum:+ (e0:value 1) (e0:value 2))} yields
the injected whatever
\WHATEVER{3}, written in green as ``\textcolor{darkgreen}{\WHATEVER{3}}'' (\SECTION\ref{memory-dumps}).
\end{itemize}
It is worth remarking how Definition~\ref{nonmacro-expansion-definition} does
not define any self-evaluating atom, since doing so would create ambiguity with
Scheme's predefined self-evaluating atoms: using \CODE{e0:value} in cases such
as the example above is hence necessary, at this stage: for example \CODE{(e0:value 2)}
generates \WHATEVER{2} as an injected whatever literal constant, which is
different from Guile's \CODE{2}.

By contrast it is not necessary to use \CODE{e0:variable} and \CODE{e0:call}
for implementing variables and procedure calls, as a consequence of the fact
that Scheme and \EPSILONZERO share the same namespace for identifiers --- at
least at this stage.
\\
\\
At this point \EPSILONZERO would be usable as an implementation language, if
it provided some way of defining procedures and updating the global
environment.  A correct implementation of such facilities relies on reflective
data structures and therefore belongs in Phase~\BOOTSTRAPPHASE{iii} or even later; but once
more we can use Guile to solve the bootstrap problem and provide a temporary
implementation of an \CODE{e1:define} form.

As for Scheme's \CODE{define}\footnote{An important difference with
  respect to Scheme is how our definition facility always works on
  \textit{state environments} (\SECTION\ref{state-environment-introduction}), therefore
  at the \textit{top} level, and can be invoked anywhere an expression can
  occur in the code, at any
  nesting level.
  By contrast Scheme's definition facility updates the ``current'' environment,
  which happens to be the global one only if the form is used at the top level.
  Implementing \EPSILON's definition form over Guile required a relatively advanced and
  non-portable hack relying on Guile's module system.  See the definition of
  \CODE{define-object-from-anywhere} in
  \FILE{bootstrap/scheme/epsilon0-in-scheme.scm} for the gory details.}, we use
the same form for defining either a non-procedure or a procedure, according to the
shape of its s-cadr --- respectively an s-symbol, or an s-list of one or more s-symbols.

For a non-procedure definition, the second parameter is non-macro expanded,
evaluated and the result bound to the symbol-ejection of the first parameter;
for a procedure definition, the second parameter is non-macro expanded and
bound as the body of a procedure whose name is the symbol-ejection of the s-car
of the first parameter; the s-cdr of the first parameter contains an s-list of
s-symbols whose ejections make up the procedure formals.
\\
\\
Again, \CODE{e1:define} is important for understanding the bootstrapping code
and deserves a more precise description.  Without explicitly specifying a
non-macro expansion of an \CODE{e1:define} form into an \EPSILONZERO expression,
we describe the behavior we require from such an expression:

\begin{axiom}[definition forms]\label{definition-forms-bootstrap-axiom}
Let $s_1$ and $s_2$ be s-expressions.  Then, if
{\rm $\DHANDLE{h}{e} = \EXPANDE{\CODE{(e1:define }s_1\CODE{ }s_2\CODE{)}}$}:
\begin{itemize}
\item if
{\rm$x = ej_{symbol}(s_1)$, $\DHANDLE{h_2}{e} = \EXPANDE{s_2}$} and
{\rm$\DHANDLE{h_2}{e}\ \Gamma \ \CONVERGESE\ \SEQUENCE{c}\ \Gamma'$}
then we have that\\
{\rm$\DHANDLE{h}{e}\ \Gamma \ \CONVERGESE \ \SEQUENCE{}\ \UPDATESTATEIN{\Gamma'}{\CODE{global-environment}}{x}{c}$};
\item if
{\rm$\SEQUENCE{f,x_1...x_n} = \EXPANDXS{s_1}$}
and {\rm$\DHANDLE{h_2}{e} = \EXPANDE{s_2}$}
then we have that\\
{\rm$\DHANDLE{h}{e}\ \Gamma \ \CONVERGESE \ \SEQUENCE{}\ \UPDATESTATEIN{\Gamma}{\CODE{procedures}}{f}{(\SEQUENCE{x_1...x_n}, \DHANDLE{h_2}{e})}$}.
\QEDAXIOM
\end{itemize}
\end{axiom}
The two cases are trivially exclusive, as $s_1$ cannot be an s-symbol and an
s-list at the same time.
\\
\\
The reason why we did not provide an explicit non-macro expansion for
\CODE{e1:define} in
Definition~\ref{nonmacro-expansion-definition}
should be obvious at this point; since the actual
implementation is in Scheme and its semantics is very clear, we have avoided
writing a uselessly complex expansion into \EPSILONZERO assuming some
global-updating primitive, even if that would have been possible; in particular it would have
been very painful to provide an explicit encoding of \EPSILONZERO
expressions as \EPSILONZERO data; the problem will be dealt with in
Phase~\BOOTSTRAPPHASE{iii} and in \SECTION\ref{expressions-as-an-extensible-sum-type}, where it becomes relevant
for the implementation.
\\
\\
\label{e1define-updates-global-data-structures}The actual Guile definition
of \CODE{e1:define} shows an interesting feature: after performing the binding,
\CODE{e1:define} updates global (Scheme) data structures keeping track of all
the \EPSILONZERO definitions which have been performed, including procedure
bodies.  The need for this will become apparent in Phase~\BOOTSTRAPPHASE{iv}.

\subsubsection{Phase~\BOOTSTRAPPHASE{iii}: build reflective data structures and interpreter in \EPSILONZERO}
\label{bootstrap-phase-3}
The purpose of this long phase is to define the global reflective data
structures holding the program state and then the interpreter.  We start from
the associated library functionality we need, using only \EPSILONZERO in its
s-expression encoding and \CODE{e1:define}.  The task is complicated by the
restrictions of the language, allowing for procedural but not syntactic
abstraction.

Equipped with Definition~\ref{nonmacro-expansion-definition} and Axiom~\ref{definition-forms-bootstrap-axiom}, the reader should be able to easily follow
our running commentary on the main sections of \FILE{bootstrap/scheme/core.e}.
Each section is delimited by a well-visible comment including a full line of
semicolons.

The general low-level ``feel'' of \EPSILONONE becomes apparent right from the
first sections: to create some order in the context of a global flat namespace,
we adopt the convention of having all procedure and global names begin with a
reasonable \TDEF{namespace prefix} delimited by a colon.  Most of the
procedures we define must work also after unexecing, hence they must not rely
on unexecing tags primitives: all the code in \FILE{bootstrap/scheme/core.e}
works on untyped objects ignoring any boxedness tags; and of course the
distinction between booleans, characters or small fixnums is purely
conventional; a whatever \WHATEVER{0} object may represent the number zero,
the false boolean or even a null pointer, according to the context: the machine
representation after unexecing is exactly the same.
\\
\\
To make \EPSILONZERO more convenient to write, we usually define global
procedures to wrap primitives, with their same names; either of the definitions of the
\TDEF{test-for-zero} procedure \CODE{whatever:zero?} and the \TDEF{equality-by-identity}
procedure \CODE{whatever:eq?} in the first section \textit{Utility procedures working on
  any data} is a good example:
\begin{Verbatim}
(e1:define (whatever:zero? a)
  (e0:primitive whatever:zero? a))
\end{Verbatim}
Such definitions only serve to simplify calling syntax, for example allowing the user to write
\CODE{(whatever:zero?\ $s$)} instead of
\CODE{(e0:primitive whatever:zero?\ $s$)}; in terms of procedural ``abstraction power'', they
abstract very little.

We start defining operations over the simplest types: the empty list object is
simply the fixnum \WHATEVER{0}; booleans are represented as physical machines
usually do, using \WHATEVER{0} for false (written \CODE{\#f}) and any other
value for true, including \WHATEVER{1} which we also write as \CODE{\#t}; in other
words, we use \TDEF{generalized booleans}; the procedure
\CODE{boolean:canonicalize} canonicalizes a generalized boolean into either
\WHATEVER{0} or \WHATEVER{1}.
As a convention derived from Scheme, a question mark ``\CODE{?}'' at the end of
a procedure name serves to remind the user that the procedure is a
\TDEF{predicate}, which is to say a procedure returning a boolean result.
\\
\\
The section dealing with \textit{Fixnums} contains primitive wrappers for
arithmetic and bitwise operations, plus some very simple definitions such as
\TDEF{minimum, maximum} and a \TDEF{parity test}; the only slightly more
sophisticated procedures are the \TDEF{fixnum exponentiation} (by squaring)
procedure \CODE{fixnum:**}, and the base-10 logarithm.  An annoying repeating
pattern with conditionals is already visible at this point: we often have an
\CODE{e0:if-in} form testing for a boolean condition, using $\{\CODE{\#f}\}$ as
the conditional case set to discriminate between false and any other value:
what we think of as the ``else'' branch is always \textit{the first one}:
\begin{Verbatim}
(e1:define (fixnum:** base exponent)
  (e0:if-in exponent (0)
    (e0:value 1)
    (e0:if-in (fixnum:odd? exponent) (#f)
      (fixnum:square (fixnum:** base (fixnum:half exponent)))
      (fixnum:* base (fixnum:** base (fixnum:1- exponent))))))
\end{Verbatim}
Unfortunately we cannot factor away this ugly pattern before
introducing macros.
\\
\\
\label{buffer-set!-primitive}The \textit{Buffers} section contains more trivial primitive wrappers for
memory-related primitives: a buffer may be \TDEF{created} by
\CODE{buffer:make}, \TDEF{destroyed} by \CODE{buffer:destroy}, \TDEF{read} by
\CODE{buffer:get} or \TDEF{updated} by \CODE{buffer:set!}.  Buffer size is
\textit{only} stored as part of boxedness tags and must not be accessed out of
unexec, which is the reason why we did \textit{not} provide a procedure wrapper to
conveniently extract it, lest it be used by mistake.
\CODE{buffer:get} and \CODE{buffer:set!} have two and three parameters
respectively: we chose to use explicit offsets (0-based, in words) rather than pointer
arithmetics for accessing memory, in order to avoid making assumptions on
memory management systems, which often
constrain the use of inner pointers.
As in Scheme, an exclamation mark ``\CODE{!}'' at the end of a procedure name serves
to conventionally remind the user that the procedure has side effects.
\\
\\
The \textit{Boxedness} section contains some functionality to check whether a word is
a candidate pointer: for example fixnums below a fixed small constant, or
fixnums not divisible by the word size in bytes cannot represent pointers on
any modern byte-addressed machine (\SECTION\ref{reflection--tagging-for-unexecing}).
\\
\\
Having defined buffers, it is very easy to define \textit{Conses}: conses are
simply two-word buffers with handy constructor, accessor and updater
procedures.  Differently from s-conses, conses as defined in this section
do not necessarily hold two s-expressions: they are completely generic, mutable
\TDEF{pairs of untyped objects}:
\begin{Verbatim}
(e1:define (cons:make car cdr)
  (e0:let (result) (buffer:make-uninitialized (e0:value 2))
    (e0:let () (buffer:set! result (e0:value 0) car)
      (e0:let () (buffer:set! result (e0:value 1) cdr)
        result))))
\end{Verbatim}
The nested zero-binding \CODE{e0:let} blocks simulating a statement sequence is
another unfortunate recurring pattern which we are forced to live with until we
introduce macros.
\\
\\
The \textit{Lists} section introduces singly-linked lists made of right-nested
conses, and utility procedures to work with them. We ``define'' a list as
either the empty list \CODE{list:nil}, which is to say \WHATEVER{0}, or a cons whose right
element is another list.  The quotes in the previous sentence are necessary: in keeping with
\EPSILONONE's nature the fact that the right side of a list cons be another list is
a pure convention, never enforced with static or dynamic checks.
Faithful to the motto \textit{``garbage in -- garbage out''}, we simply let the
system fail at run time when a non-pointer, non-cons or a cons whose right side
is not a list is used in place of a list
 --- possibly with a reasonable error
message in the case of \FILE{guile+whatever}, but likely with a crude
\textit{Segmentation Fault} after unexec (\SECTION\ref{more-than-one-runtime}).

Of course we impose no constraint over the element shape, and lists are not
necessarily homogeneous.
Our utility procedures over lists include the usual operations
for appending, flattening, computing length, selecting by index.  Thanks to
\CODE{e0:call-{\allowbreak}indirect} we could have supported higher-order procedures
(without nonlocals), but we refrained from doing so in this low-level core.
\\
\\
A useful way of employing lists is to make
\TDEF{association lists} or
\TDEF{alists} (pronounced ``ey-lists''), the only
slightly delicate issue in our case being the way of comparing keys: the first,
simplest and most efficient way is ``by identity'': the section \textit{Alists
  with unboxed keys} defines procedure using single-word comparison for keys;
this is always appropriate for unboxed or unique keys
 and when the key identity matters, but
is not reliable in general with boxed keys where the same content may be
replicated in more than one buffer.
\\
\\
A \TDEF{vector} is a pointer to a buffer with the first element reserved
to store the payload element number, which is useful in many contexts where the size of
a random-access sequence is not fixed, and we cannot rely on boxedness tags.
The section \textit{Vectors} provides procedures to
lookup and update vectors, obtain their length, and some other utility
operations including append, blit, and conversion to and from list.  The
procedure \CODE{vector:equal-unboxed-elements?} compares two vectors comparing their respective
elements by identity, which is the most common case.  No
bound-checking is performed, and elements are allowed to be heterogeneous.  Vectors as
defined in this section cannot be resized, as re-allocating them would change their
pointer ``identity''.
\\
\\
The next section deals with \textit{Characters and Strings}: at this level
characters are just fixnums and strings are just vectors --- which entails the
somewhat space-inefficient choice of having each character take one entire word
in memory.  Yet string support becomes computationally simple, and the wide
range of each character suffices for supporting all Unicode code points, at a
fixed width.  Apart from some trivial I/O, string procedures are just trivial
``aliases'', or actually wrappers, of vector procedures:
\begin{Verbatim}
(e1:define (string:equal? s1 s2)
  (vector:equal-unboxed-elements? s1 s2))
\end{Verbatim}
Having defined string support, we are ready to deal with the second kind of
association lists, in the \textit{SALists} section: an \TDEF{salist}
(pronounced ``ess-ey-list'') has
strings, or other vectors with elements compared by identity, as keys.
\\
\\
We then have a short section about \textit{Boxes}: a \TDEF{box}, similarly to an
ML \CODE{ref}, encapsulates the idea of a mutable memory cell, implemented
as a pointer to a single-word buffer.  Utility procedures include support for
incrementing a mutable counter, in case a box contains a fixnum.  For example
(omitting the other obvious variant \CODE{box:bump-and-get!}):
\begin{Verbatim}
(e1:define (box:get-and-bump! box)
  (e0:let (old-value) (box:get box)
    (e0:let () (box:set! box (fixnum:1+ old-value))
      old-value)))
\end{Verbatim}
Of course \CODE{fixnum:1+} is a successor procedure, and
\CODE{fixnum:1-} is the predecessor\footnote{The names
  ``\CODE{1+}'' and ``\CODE{1-}'' for successor and predecessor come from the
  Lisp tradition, which we suppose inherited the convention from some
  Reverse-Polish stack
  language; to decrement the top stack element in Forth, for
  example, we may push \CODE{1} and then subtract, replacing the two topmost
  elements with the subtraction result.  The Forth code is two words long:
  ``\CODE{1 -}''.  In fact Forth also provides a functionally equivalent
  (and usually more efficient) predefined single word ``\CODE{1-}'',
  factoring away the common pattern.}.
\\
\\
With alists and vectors at our disposal we are ready to implement
\textit{Hashes} having either unboxed objects or strings as keys: we also use a
box to introduce a level of indirection making a hash easy to resize.  A hash
table is therefore a box referring a vector; the first payload element of the
vector (after the vector ``header'' word holding the element number) is reserved to keep
the hash element number, so that the fill factor is easy to compute at any
time; the other vector elements are the hash buckets, implemented as alists or
salists; of course the associated data may have any shape.  As we need the
hash function to be portable with respect to unexec, it respects the
constraint in \SECTION\ref{reflection--unexecing-and-hashes}.  Hashes are our
most complex data structure so far: automatic resizing, and particularly
comparing fill factors with a threshold \textit{using only fixnums} requires
some sophistication, but at around 250 lines our hash table implementation in
\EPSILONZERO
does not end up being overly complicated, despite our choice
of avoiding higher order procedures leading to some code redundancy:
\begin{Verbatim}
(e1:define (unboxed-hash:set! hash key value)
  (e0:let () (e0:if-in (hash:overfull? hash) (#f)
               (e0:bundle)
               (unboxed-hash:enlarge! hash))
    (unboxed-hash:set-without-resizing! hash key value)))
(e1:define (string-hash:set! hash key value)
  (e0:let () (e0:if-in (hash:overfull? hash) (#f)
               (e0:bundle)
               (string-hash:enlarge! hash))
    (string-hash:set-without-resizing! hash key value)))
\end{Verbatim}
The two-way conditional performing a side effect in one branch and returning a
``dummy'' bundle in the other is another code pattern which we can't factor away without macros.
\\
\\
Our first and most important application of hash tables is in the
implementation of \textit{Symbols}.  Since we use them as identifiers,
symbols are central in our design; they must be efficient to compare with
one another, and as keys in associative structures such as the global
(\SECTION\ref{global-environment}) and procedure
(\SECTION\ref{procedure-state-environment}) state environments.

\label{symbol-table}
The \TDEF{symbol table} is a global hash mapping each symbol name, as a
string, into a unique boxed \TDEF{symbol object} with that name; by requiring
that all named symbols be \TDEF{interned} in the table, as all Lisps do, we
obtain that symbol pointers can be safely compared by identity, just like
unboxed objects; interning the same name more than once yields the same symbol
object pointer:
\pagebreak
\begin{Verbatim}
(e1:define (symbol:intern name-as-string)
  (e0:if-in (symbol:interned-symbol-name? name-as-string) (#f)
    (symbol:intern-without-checking! (vector:copy name-as-string))
    (string-hash:get symbol:table name-as-string)))

(e1:define (symbol:intern-without-checking! name-as-string)
  (e0:let (new-symbol) (symbol:make-uninterned)
    (e0:let () (buffer:set! new-symbol (e0:value 0) name-as-string)
      (e0:let () (string-hash:set! symbol:table name-as-string new-symbol)
        new-symbol))))
\end{Verbatim}
\label{uninterned-symbols}
Following the example of several Lisps \cite{common-lisp,emacs-lisp} we also
support \TDEF{uninterned symbols}, which is to say symbol objects with no name
and hence not occurring in the symbol table; an uninterned symbol pointer can be
compared by identity with any other symbol pointer.

The issue of what to store in the symbol object itself appears of little
consequence: it is useful to point the symbol name string within the symbol
object, to have an efficient mean of retrieving it when needed, typically for
printing --- but apart from this, the symbol object seems much more useful for
its \textit{identity} than for its content.  And indeed, were it not for the
problem of \SECTION\ref{reflection--unexecing-and-hashes}, we could safely
use \textit{symbol pointers} like ``unboxed'' hashed keys for
state environments (\SECTION\ref{procedure-state-environment}) such as the
global environment or the procedure table; but a better solution has been known
for decades, described for example in \cite{bibop-in-maclisp} and in
embryonic form already in \cite{lisp-mccarthy}: the idea is to entirely do away
with such global tables whenever possible, and \textit{store the global data associated to each
  symbol within the symbol object} itself; then the symbol object may be seen
as a record, whose fields include the symbol name, the value in the global
environment, the formal names of the symbol interpreted as a procedure, the
procedure body, and so on.  Where a pointer to a symbol is available, accessing
it in a state environment only costs one load or store instruction with
constant offset.  In the interest of extensibility, we also keep one alist
field in each symbol object, to which the user is free to add bindings\footnote{An alist, later called ``property list'', was actually the
  \textit{only} datum globally associated to symbols in \cite{lisp-mccarthy}
  (p.~25),
  also containing a binding for the symbol name.  Having fields at
  fixed offsets in the record object, out of the alist, may be regarded as an
  optimization.}.

As a natural consequence of this design, in \EPSILONONE we may use the same
symbol as a key for different state environments, for example the global
environment and the procedure table, and use the same name for a global
non-procedure and a procedure:
\label{lisp-1}\label{lisp-2}the possibility of using the same name as key for two (or more) distinct state
environments is what distinguishes a so-called \TDEF{Lisp-2}, such as Common
Lisp, from a \TDEF{Lisp-1}, such as Scheme \cite{lisp-1-vs-lisp-2}.

The \textit{Symbols} section in the source code would be straightforward except
for \EPSILONZERO's painful lack of records, which at this point we still have
to simulate with buffers.
\label{state:procedure-set!-definition}
\label{state:procedure-get!-definition}
\begin{Verbatim}
(e1:define (symbol:make-uninterned)
  (e0:let (result) (buffer:make (e0:value 9))
    (e0:let () (buffer:set! result (e0:value 0) (e0:value 0)) ; name
      (e0:let () (buffer:set! result (e0:value 1) (e0:value 0)) ; unbound in global environment
        (e0:let () (buffer:set! result (e0:value 2) (e0:value 127)) ; conventional unbound marker
          (e0:let () (buffer:set! result (e0:value 3) (e0:value 0)) ; empty formal list
            (e0:let () (buffer:set! result (e0:value 4) (e0:value 0)) ; no procedure body
              (e0:let () (buffer:set! result (e0:value 5) (e0:value 0)) ; no macro definition
                (e0:let () (buffer:set! result (e0:value 6) (e0:value 0)) ; no macro procedure
                  (e0:let () (buffer:set! result (e0:value 7) (e0:value 0)) ; no primitive descriptor
                    (e0:let () (buffer:set! result (e0:value 8) alist:nil) ; no extensions
                      result))))))))))

(e1:define (state:global-set! name value)
  (e0:let () (buffer:set! name (e0:value 1) (e0:value 1)) ;; the name is bound as a global
    (buffer:set! name (e0:value 2) value))) ;; value
(e1:define (state:procedure-set! name formals body)
  (e0:let () (buffer:set! name (e0:value 3) formals)
    (buffer:set! name (e0:value 4) body)))
\end{Verbatim}
Only from this point on we can find instances of symbol literals in the source,
such as \CODE{(e0:value foo)}: in the Scheme implementation of
\EPSILONZERO from Phase~\BOOTSTRAPPHASE{ii}, \CODE{e0:value} is a Guile macro
which generates a whatever object at Scheme macroexpansion time, calling the
functionality above for named symbols.  The same holds for string literals in the
\textit{Strings} section above, but the case of symbols is much more remarkable
due to their greater complexity and due to a global structure being involved.

Finally we provide a functionality to automatically generate fresh symbols,
particularly useful for machine-generated code.  Fresh symbols are interned and have a
name starting with a prefix the user is not supposed to use in her own
identifiers, currently ``\CODE{\_}''.  We adopted this solution based on a
convention rather than the alternative of using
uninterned symbols because of the need to extract \textit{all}
symbols from the symbol table (see for example \CODE{state:global-names} in \SECTION\ref{reflective-procedure-names}); moreover interned symbols with a conventional
prefix are easier to print and read, when needed for debugging.  Assuming that
``\CODE{\_}''-prefixed symbols do not occur in user identifiers, generated
symbols may be safely garbage-collected\footnote{Interned symbols are not yet
  garbage collected at the time of writing.  A solution is employing a second symbol
  table for  for ``\CODE{\_}''-prefixed symbols, implemented as a
  \textit{weak hash table} \cite{survey--wilson}.  Globals and procedures
  explicitly named by the user should cannot be safely destroyed in general, as
  they could be referenced in the future, possibly by dynamically-created expressions.}.
\\
\\
The \textit{Expressions} section defines \EPSILONZERO expressions, as per
Definition~\ref{epsilonzero-syntax} plus \CALLINDIRECTNAME
(\SECTION\ref{call-indirect}), as a data structure.  An expression may be
conceptually seen as a sum-of-products in the style of ML, and in practice it
is implemented as a boxed object: the pointed buffer contains an expression
case tag in its first position; the expression handle, present in all cases,
resides in the second element; the other element contents, and the buffer length,
depend on the specific expression case.  Where expressions require homogeneous
sequences of undetermined length (for example \BUNDLENAME items and \IFNAME
conditional cases), by convention we always use lists.

There are operators to build, inspect, update, and \TDEF{explode} (obtain all
components as a bundle) expressions.  Such code could conceptually be written
by hand, but due to its regularity and length we chose to generate it with a
Scheme program; the machine-written \EPSILONZERO code is the first non-comment
line in the section, easy to spot as the lone, huge 14000-character line with
only the minimum required whitespace.

Our expression as managed by the program-generated code has the exact same
memory representation as an equivalent data structure defined (much later: see \SECTION\ref{sum-types}) by
our general sum-of-product definition facility.

The machine-generated expression constructors require to always specify
\textit{all} components, including handles.  As the practice is tedious and we
can easily generate fresh handles (as fixnums), we also provide a set of
hand-written wrappers, named after the symbols identifying the corresponding
expression case in expansion followed by an ``\CODE{*}'' character.
\begin{Verbatim}
(e1:define e0:handle-generator-box
  (box:make-initialized (e0:value 0)))
(e1:define (e0:fresh-handle)
  (box:bump-and-get! e0:handle-generator-box))

(e1:define (e0:variable* name)
  (e0:expression-variable-make (e0:fresh-handle) name))
(e1:define (e0:call* name actuals)
  (e0:expression-call-make (e0:fresh-handle) name actuals))
\end{Verbatim}
For example the \EPSILONZERO expression which \CODE{(e0:call p 57)} expands to may be built by
\CODE{(e0:call* (e0:value p) (list:singleton (e0:value* 57)))},
which is not unreadable after considering how
the literal constant \textit{expression}
which \CODE{57} expands to, being an expression, has a different
representation from the \textit{fixnum} \WHATEVER{57}.

It may be worth to explicitly stress how the \EPSILONZERO implementation of
Phase~\BOOTSTRAPPHASE{ii} does \textit{not} rely
at all
on the \EPSILONZERO expression data structure as defined in this section.
\\
\\
The next section \textit{State: global dynamic state, with reflection}
conceptually implements state environments, but for the most part is actually a
thin wrapper over buffer accessors used on symbol objects.  For example, the
following definitions concern the ``procedure table'', despite it not existing
anywhere as a single data structure:
\begin{Verbatim}
(e1:define (state:procedure? name)
  (state:procedure-get-body name)) ;; #f iff unbound, which is to say return the body
(e1:define (state:procedure-get-formals name)
  (buffer:get name (e0:value 3)))
(e1:define (state:procedure-get-body name)
  (buffer:get name (e0:value 4)))
(e1:define (state:procedure-get-in-dimension name)
  (list:length (state:procedure-get-formals name)))
(e1:define (state:procedure-get name)
  (e0:bundle (state:procedure-get-formals name)
             (state:procedure-get-body name)))

(e1:define (state:procedure-set! name formals body)
  (e0:let () (buffer:set! name (e0:value 3) formals)
    (buffer:set! name (e0:value 4) body)))
(e1:define (state:procedure-unset! name)
  (e0:let () (buffer:set! name (e0:value 3) list:nil)
    (buffer:set! name (e0:value 4) (e0:value 0)))) ;; make the body invalid
\end{Verbatim}
It is obvious at this point that our implementation of \CODE{e1:define}
from Phase~\BOOTSTRAPPHASE{ii} could \textit{not} update \EPSILONONE's state
environments such as the global environment and the procedure table, which we
have just implemented: up to this point anything which has been globally
defined in Phase~\BOOTSTRAPPHASE{iii} has been only set in \textit{Guile's} state
environments.  This problem will persist until Phase~\BOOTSTRAPPHASE{iv}.

\label{applier-procedures}
By examining all the buckets in the symbol table it is easy
to obtain the list of all interned symbols bound to a procedure, or a
primitive.  From this information we can automatically build an \textit{apply
  function} in the style of \cite{defunctionalization}, plus another procedure
in the same spirit for primitives,
without resorting to indirect calls.  The automatically-generated procedures
\CODE{state:apply} and \CODE{state:apply-primitive} (collectively \TDEF{appliers})
each take two arguments, a
symbol naming the object to call, and a parameter list.
The generated applier body consist in a deeply-nested
conditional, comparing the first parameter with each applicable name: when the
matching name is found the applier calls the corresponding procedure or
primitive, returning its results.

Generating appliers is our first example of dynamically-generated code in the
implementation.  Of course such generation heavily relies on dynamically-built
\EPSILONZERO expressions.
\\
\\
Armed with global tables
we are finally ready to implement a working
interpreter in the next section, \textit{epsilon0 self-interpreter}.


The interpreter code is not overly complex and the main procedure
\CODE{e0:eval} is but a long dispatcher, selecting the appropriate
expression case and tail-calling a helper procedure.  Its second parameter is
the local environment, encoded as an alist:
\begin{Verbatim}
(e1:define (e0:eval e local)
  (e0:if-in (e0:expression-variable? e) (#t)
    (e0:let (h name) (e0:expression-variable-explode e)
      (e0:eval-variable name local))
    (e0:if-in (e0:expression-value? e) (#t)
      (e0:let (h content) (e0:expression-value-explode e)
        (e0:eval-value content))
      (e0:if-in (e0:expression-bundle? e) (#t)
        (e0:let (h items) (e0:expression-bundle-explode e)
          (e0:eval-bundle items local))
        (e0:if-in (e0:expression-primitive? e) (#t)
          (e0:let (h name actuals) (e0:expression-primitive-explode e)
            (e0:eval-primitive name actuals local))
          (e0:if-in (e0:expression-let? e) (#t)
            (e0:let (h bound-variables bound-expression body) (e0:expression-let-explode e)
              (e0:eval-let bound-variables bound-expression body local))
            (e0:if-in (e0:expression-call? e) (#t)
              (e0:let (h name actuals) (e0:expression-call-explode e)
                (e0:eval-call name actuals local))
              (e0:if-in (e0:expression-call-indirect? e) (#t)
                (e0:let (h procedure-expression actuals) (e0:expression-call-indirect-explode e)
                  (e0:eval-call-indirect procedure-expression actuals local))
                (e0:if-in (e0:expression-if-in? e) (#t)
                  (e0:let (h discriminand values then-branch else-branch)
                          (e0:expression-if-in-explode e)
                    (e0:eval-if-in discriminand values then-branch else-branch local))
                  (e0:if-in (e0:expression-fork? e) (#t)
                    (e0:let (h name actuals) (e0:expression-fork-explode e)
                      (e0:eval-fork name actuals local))
                    (e0:if-in (e0:expression-join? e) (#t)
                      (e0:let (h future) (e0:expression-join-explode e)
                        (e0:eval-join future local))
                      (e0:if-in (e0:expression-extension? e) (#t)
                        (e0:let (h name subexpressions) (e0:expression-extension-explode e)
                          (e0:eval-extension name subexpressions local))
                        (e1:error (e0:value "impossible"))))))))))))))
\end{Verbatim}
In order to keep the code understandable despite the deeply-nested
conditionals we chose \textit{not} to assume generalized booleans in
\CODE{e0:eval}, making sure that all the predicates we used only return
\WHATEVER{\#t} or \WHATEVER{\#f}.

Several helper procedures in their turn rely on \CODE{e0:eval-expressions}, which
sequentially evaluates a \textit{list} of expressions which have to return
1-dimension bundles, and returns the result list.

Many interpreter procedures are strongly interdependent and mutually recursive,
which is served quite well by procedural abstraction.  It is very convenient to
define mutually-recursive procedures without concern for the definition order,
so that the programmer does not need to keep a call graph in her mind.
\begin{Verbatim}
(e1:define (e0:eval-expressions expressions local)
  (e0:if-in expressions (0)
    list:nil
    (list:cons (e0:unbundle (e0:eval (list:head expressions) local))
               (e0:eval-expressions (list:tail expressions) local))))
(e1:define (e0:unbundle bundle)
  (e0:if-in (list:null? bundle) (#f)
    (e0:if-in (list:null? (list:tail bundle)) (#f)
      (e1:error (e0:value "e0:unbundle: the bundle has at least two elements"))
      (list:head bundle))
    (e1:error (e0:value "e0:unbundle: empty bundle"))))
\end{Verbatim}
Most helper procedures dealing with specific expression cases end up being simple:
\pagebreak
\begin{Verbatim}
(e1:define (e0:eval-variable name local)
  (list:singleton (e0:if-in (alist:has? local name) (#f)
                    (state:global-get name)
                    (alist:lookup local name))))

(e1:define (e0:eval-value content)
  (list:singleton content))

(e1:define (e0:eval-if-in discriminand values then-branch else-branch local)
  (e0:let (discriminand-value) (e0:unbundle (e0:eval discriminand local))
    (e0:if-in (list:memq discriminand-value values) (#f)
      (e0:eval else-branch local)
      (e0:eval then-branch local))))

(e1:define (e0:eval-bundle items local)
  (e0:eval-expressions items local))
\end{Verbatim}
A possibly striking implementation choice consists in encoding \EPSILONZERO
bundles as lists; this is necessary for examining bundle results, for example by
testing their length --- the fact that bundles are not denotable
(\SECTION\ref{bundles-are-expressible-but-not-denotable}) makes them hard to
deal with directly, in exchange for their potential efficiency in a compiled
implementation.  But ironically in this self-interpreter, where however performance
is not a priority, the need of building lists for bundles entails a high rate of heap
allocation, which is expensive.

The self-interpreter does not rely on explicit stacks and is quite far
from the semantics in \SECTION\ref{epsilonzero-dynamic-semantics-section}; yet,
of course without hope of certifying the implementation \textit{here}, we claim that
we believe it respects the semantics and implementation notes
of \SECTION\ref{epsilonzero-chapter}.
\\
\\
\label{type-table}With the most fundamental addend types at our disposal we are ready to deal with
general support for user-defined types, in the \textit{Type table} section.

We start with building support for tracking the extensible set of ``types''
recognized by the system, such as the empty list, booleans, fixnums, and
conses; since in this context we assume dynamic typing, there is no need for
type parameters: all the subcomponents of an s-expressions are tagged, at any
level.

As types tend to be relatively few in number and this reflective information is
not particularly critical to performance, in this case we preferred a global
table to the alternative of reserving fields in \textit{all} symbol objects.

\label{expander}
\label{expander-procedure}
Information for each type (empty list, boolean, fixnum, cons,
string, ...) is encoded in a descriptor record implemented as a buffer, also
containing a unique tag, other information including a printer procedure, and
once again an alist which the user may employ to add more fields as type-dependent
attributes --- it is
unfortunately too early to define a general-purpose ``extensible record'' data
structure, without any support for syntactic abstraction.
\\
\\
The most interesting fields in type descriptor records is the
\TDEF{expression-expander} procedure name.  An expression-expander specifies
how to turn an object of the given type into an \textit{expression}.  Since of
course we provide procedures to update the type table, the user has the power
to define and change the way addends expand, including, for example the mapping
from a symbol into a variable as discussed in Syntactic
Convention~\ref{lisp-syntax-syntactic-convention}
and \SECTION\ref{lisp-syntax-could-be-made-more-regular}.

All support for macros, following Lisp syntactic conventions\footnote{Common
  Lisp also supports ``symbol macros'' \cite{common-lisp}: some symbols defined
  by the user are macroexpanded like zero-parameter macro calls.  Support for a
  similar feature can be added in \EPSILONONE by changing the \textit{symbol},
  rather than cons, expander.}, will be defined later in procedures called in
its turn by the \textit{cons expander} procedure; our predefined expanders
will implement expansion in a way compatible with
Definition~\ref{nonmacro-expansion-definition}.
\begin{definition}[expression-expander]
Let $\SET{A}_0,...,\SET{A}_{n-1}$ be the addend types of $\SET{S}$.  Then the
\TDEF{expander procedure for $\SET{A}_i$} or \TDEF{$\SET{A}_i$-expander}
 is a procedure of one parameter
returning one result.  The procedure is guaranteed not to fail
if the parameter has type $\SET{A}_i$, in which case the result has type
$\SET{E}$.
\QEDDEFINITION
\end{definition}
Most atomic objects such as the empty list, booleans and fixnums, are
expression-expanded by \CODE{sexpression:literal-expression-expander} into a
literal constant expression, which will finally allow the user to omit explicit
``\CODE{e0:value}''s for non-symbol literals;
\CODE{sexpression:variable-expression-expander} expression-expands a symbol
into a variable expression;
\CODE{sexpression:expression-expression-expander} trivially expression-expands
an expression into itself.  Cutting away some comments:
\begin{Verbatim}
(e1:define (sexpression:literal-expression-expander whatever)
  (e0:value* whatever))
(e1:define (sexpression:variable-expression-expander symbol)
  (e0:variable* symbol))
(e1:define (sexpression:expression-expression-expander expression)
  expression)
\end{Verbatim}
The cons expression expander is not complicated either, and resembles Syntactic
Convention~\ref{lisp-syntax-syntactic-convention} (p.~\pageref{lisp-syntax-syntactic-convention}) in regarding the procedure
call as a ``default case''.  It can be already seen from this code how
\textit{even \EPSILONZERO syntactic forms are implemented as macros}:
\begin{Verbatim}
(e1:define (sexpression:cons-expression-expander cons)
  (e0:let (car-sexpression) (cons:car cons)
    (e0:if-in (sexpression:symbol? car-sexpression) (#f)
      (e1:error (e0:value "cons:expression-expander: the car is not a symbol"))
      (e0:let (car-symbol) (sexpression:eject-symbol car-sexpression)
        (e0:if-in (state:macro? car-symbol) (#f)
          ;; The car is a symbol which is not a macro name:
          (e0:call* car-symbol (e1:macroexpand-sexpressions (cons:cdr cons)))
          (e1:macroexpand-macro-call car-symbol (cons:cdr cons)))))))
\end{Verbatim}
Lines~3-4 show how the specific cons-expander above is not suitable for
higher-order personalities where the operator can be encoded by an s-expression
different from an s-symbol.  Of course the user is free to replace the cons-expander
at a later time.
\\
\\
The \textit{S-expressions} section deals with the implementation of
s-expressions as data structures, and operations over them.  Some of our
procedures defined up to this point already \textit{call} procedures working
over s-expressions such as \CODE{sexpression:{\allowbreak}symbol?}, \CODE{sexpression:eject}
and \CODE{sexpression:inject-cons} in procedure bodies --- although of course
our procedures calling not-yet-defined procedures have never been called themselves, yet.

The specific memory representation of an s-expression object has always been
seen considered important for efficiency in Lisp: all practical Lisps
employ some form of bitwise tagging of unboxed objects, boxed
pointers and/or buffer words, allowing to store in a compact way elements of
the most common addends; such representation techniques are often complex (see
\cite[\SECTION{}Data Representation]{guile} for Guile's solution and
\cite{representing-type-information-in-dynamically-typed-languages--gudeman} as
a useful collection of ``a body of folklore''), but such complexity is
motivated by the need for tagging \textit{all}\footnote{Advanced optimizing
  Lisp compilers such as SBCL \cite{sbcl} may actually avoid run-time tagging
  at some code points, in cases when a type inference analysis succeeds and in
  favorable contexts.  This optimization, however, is not possible for the bulk
  of the code.} data in Lisp.

However the situation of \EPSILONONE is quite different from Lisp:
s-expressions are mostly used for representing user syntax before
macroexpansion, but not necessarily as a data structure after macroexpansion.
Even if an efficient implementation is certainly possible (potentially by
machine generation as in
\cite{representation-analysis}) for the time being we make do with a
quite literal implementation of Definition~\ref{s-expression-definition}: we
represent an s-expression as a pointer to a two-element buffer, whose first cell
holds the type tag while the second holds the representation of the
addend-type content.  Some sample definitions:
\begin{Verbatim}
(e1:define (sexpression:make tag value)
  (cons:make tag value))
(e1:define (sexpression:get-tag sexpression)
  (cons:get-car sexpression))
(e1:define (sexpression:eject sexpression)
  (cons:get-cdr sexpression))
(e1:define (sexpression:has-tag? x tag)
  (whatever:eq? (sexpression:get-tag x) tag))

;; We have generated unique tags when adding entries to the type table
(e1:define (sexpression:null? x)
  (sexpression:has-tag? x sexpression:empty-list-tag))
(e1:define (sexpression:boolean? x)
  (sexpression:has-tag? x sexpression:boolean-tag))
(e1:define (sexpression:cons? x)
  (sexpression:has-tag? x sexpression:cons-tag))

(e1:define (sexpression:inject-fixnum x)
  (sexpression:make sexpression:fixnum-tag x))
(e1:define (sexpression:eject-fixnum x)
  (e0:if-in (sexpression:fixnum? x) (#f)
    (e1:error (e0:value "sexpression:eject-fixnum: not a fixnum"))
    (sexpression:eject x)))

(e1:define sexpression:nil ;; an empty s-list object
  (sexpression:make sexpression:empty-list-tag empty-list:empty-list))
(e1:define (sexpression:car x)
  (e0:if-in (sexpression:cons? x) (#f)
    (e1:error (e0:value "sexpression:car: not a cons"))
    (cons:get-car (sexpression:eject x))))

(e1:define (sexpression:cadr x) (sexpression:car (sexpression:cdr x)))
(e1:define (sexpression:caadr x) (sexpression:car (sexpression:cadr x)))
\end{Verbatim}
In our representation \textit{all} s-expressions are boxed, and even traditionally
``unique'' objects such as the empty s-list or s-booleans may exist in more
than one instance.

We also define alternate versions of some procedures over fixnums and lists
suitable to work on s-fixnums and s-lists, as it will be convenient later in
macros to manipulate s-expressions without explicit injections and ejections:
\begin{Verbatim}
(e1:define (sexpression:1+ x)
  (sexpression:inject-fixnum (fixnum:1+ (sexpression:eject-fixnum x))))

(e1:define (sexpression:reverse x)
  (sexpression:append-reversed x sexpression:nil))
(e1:define (sexpression:append-reversed x y)
  (e0:if-in (sexpression:null? x) (#f)
    (sexpression:append-reversed (sexpression:cdr x)
                                 (sexpression:cons (sexpression:car x) y))
    y))
\end{Verbatim}
\subsubsection{Macros}
\label{macros}
Still following the bootstrap code in \FILE{core.e}, we are now finally ready
to add support for \textit{Macros}.

The \CODE{e1:macroexpand}\footnote{The name ``\CODE{macroexpand}'' may not be entirely
  appropriate, but has long been traditional in Lisp circles.  Even an
  s-expression containing no macro calls can be successfully ``macroexpanded''.}
procedure, turning an s-expression into a corresponding expression,
is but a trivial dispatcher tail-calling the
appropriate expression-expander; but some expanders, as by the default the one
for cons does, may involve expanding actual macro calls.
\begin{Verbatim}
(e1:define (e1:macroexpand s)
  (e0:let (tag) (sexpression:get-tag s)
    (e0:let (content) (sexpression:eject s)
      (e0:call-indirect (sexpression:type-tag->expression-expander-procedure-name tag)
                        content))))
\end{Verbatim}
The general idea of macros is simple enough\footnote{Our mechanism is in practice not
  unlike the Common Lisp or Emacs Lisp macro systems, despite our explicit
  distinction between expressions and s-expressions.  Common Lisp uses an
  auxiliary procedure ``\CODE{macroexpand-1}'' returning two results: the
  result of expanding \textit{one} call, and a boolean saying whether expansion
  should continue.  In our case we can simply use expression-expanders, which
  in the terminal case will receive an injected expression.}: the user defines
each macro ``in concrete syntax'' as an \textit{s-expression}, often relying
on other macros.  Before a macro call can be expanded the macro body itself
must have been macroexpanded in its turn into an \textit{expression}, which
makes up the body of the associated \TDEF{macro procedure}.  At macro call time,
the macro procedure is called by supplying it with the macro actuals; the macro
procedure result, an \textit{s-expression}, is then macroexpanded in its turn, which
may involve expanding other macro calls.  If the process does not diverge, the
final result will be an \textit{expression}.  Since predefined macros allow to
express all \EPSILONZERO forms our macro system is trivially Turing-complete,
already because of macro procedures.  Of course it is permitted, and useful, for
a macro to return another macro call: this allows to build upon user-defined forms,
``stacking'' syntactic abstractions one onto another.

Macro definition and lookup are easy enough, based as they are is on
symbol objects similarly to the global and procedure state environments:
\begin{Verbatim}
(e1:define (state:macro-set! macro-name macro-body-sexpression)
  (e0:let ()
    ;; If we're re-defining an existing macro, invalidate its previous procedure:
    (e0:if-in (buffer:get macro-name (e0:value 5)) (0)
      (e0:bundle)
      (state:invalidate-macro-procedure-name-cache-of! macro-name))
    (buffer:set! macro-name (e0:value 5) macro-body-sexpression)))
(e1:define (state:macro-get-body macro-name)
  (buffer:get macro-name (e0:value 5)))
(e1:define (state:macro? name)
  (state:macro-get-body name)) ;; 0 iff unbound, which is to say return the body
\end{Verbatim}
The careful reader may have noticed a small difference in
\CODE{state:macro-set!} compared to the analogous code for procedures: no
formal parameters names are provided for macros.  This absence is a conscious
choice of ours, leading to a small simplification: as no nonlocal is ever
visible from a macro body, parameter shadowing is impossible and we can safely
use the same formal name ``\CODE{arguments}'' for \textit{all} macros.  Moreover we can
use \textit{one} formal for \textit{all} the parameters of a macro call by
viewing them as an s-expression, which is to say the s-cdr of the macro call
s-expression --- for example \CODE{(m a (324) 3)} has parameters \CODE{(a (324) 3)}.

Of course we will also add support for friendlier macros with
named formals, later as a syntactic extension.
\\
\\
As a concession to efficiency we \textit{cache} macro procedures, by
generating them at the time of the first expansion of a macro call and then
re-using them.  It is important to specify this point, because in rare cases 
caching may have a observable effect on the result --- that is the case of
macros performing side effects very early, while building the returned
expression.

The corresponding code is surprisingly simple, ignoring the
references to transformations for the time being:
\begin{Verbatim}
(e1:define (state:macro-get-macro-procedure-name macro-name)
  (e0:let (cached-macro-procedure-name-or-zero) (buffer:get macro-name (e0:value 6))
    (e0:if-in cached-macro-procedure-name-or-zero (0)
      (state:macro-get-macro-procedure-name-ignoring-cache macro-name)
      cached-macro-procedure-name-or-zero)))

(e1:define (state:macro-get-macro-procedure-name-ignoring-cache macro-name)
  (e0:let (body-as-sexpression) (state:macro-get-body macro-name)
    (e0:let (untransformed-name) (symbol:fresh)
      (e0:let (untransformed-formals) (list:singleton (e0:value arguments))
        (e0:let (untransformed-body) (e1:macroexpand body-as-sexpression)
          (e0:let () (state:procedure-set! untransformed-name
                                           untransformed-formals
                                           untransformed-body)
            (e0:let (transformed-name transformed-formals transformed-body)
                    (transform:transform-procedure untransformed-name
                                                   untransformed-formals
                                                   untransformed-body)
              (e0:let () (state:procedure-set! transformed-name
                                               transformed-formals
                                               transformed-body)
                (e0:let () (buffer:set! macro-name (e0:value 6) transformed-name)
                  transformed-name)))))))))
\end{Verbatim}
A macro call expansion consists in macroexpanding one call into an
\textit{s-expression} and then tail-calling to a further macroexpansion of the
result, which will hopefully terminate; the usual terminal case is an injected
expression.
\begin{Verbatim}
(e1:define (e1:macroexpand-1-macro-call symbol arguments)
  (e0:let (macro-procedure-name) (state:macro-get-macro-procedure-name symbol)
    (e0:call-indirect macro-procedure-name arguments)))

(e1:define (e1:macroexpand-macro-call symbol arguments)
  (e0:let (sexpression-after-one-expansion) (e1:macroexpand-1-macro-call symbol arguments)
    (e1:macroexpand sexpression-after-one-expansion)))
\end{Verbatim}
Just for completeness we also show the trivial helper, called by the cons
expander, which macroexpands an s-list of s-expressions into a list of expressions,
left-to-right:
\begin{Verbatim}
(e1:define (e1:macroexpand-sexpressions sexpressions)
  (e0:if-in (sexpression:null? sexpressions) (#f)
    (list:cons (e1:macroexpand (sexpression:car sexpressions))
               (e1:macroexpand-sexpressions (sexpression:cdr sexpressions)))
    list:nil))
\end{Verbatim}

We close by developing an illustrative and we hope not too artificial example.
Let us assume to have somehow added an \CODE{e1:trivial-define-macro} form for
globally defining a macro, internally using \CODE{state:macro-set!};
\CODE{e1:trivial-define-macro} has two parameters: the macro name, and the
macro body s-expression.  We use \CODE{e1:trivial-define-macro} to define our
sample macro:
\begin{Verbatim}
(e1:define (sexpression:list3 a b c)
  (sexpression:cons a (sexpression:cons b (sexpression:cons c sexpression:nil))))

(e1:trivial-define-macro silly-square
  ;; We would write `(fixnum:* ,(sexpression:car arguments)
  ;;                           ,(sexpression:car arguments))
  ;; if we already had quasiquoting.
  (sexpression:list3 (sexpression:inject-symbol (e0:value fixnum:*))
                     (sexpression:car arguments)
                     (sexpression:car arguments)))
\end{Verbatim}
The \CODE{silly-square} macro takes at least one parameter (ignoring any one
after the first) and returns an expression multiplying the parameter by
itself; the resulting expression will contain two copies of the macroexpanded
parameter, which therefore will be \textit{evaluated twice}.

For example, \CODE{(silly-square 4 5 6)} would eventually macroexpand to
$\DCALL{h_1'}{\CODE{fixnum:*}}{\DCONSTANT{h_2'}{4}\ \DCONSTANT{h_3'}{4}}$,
for some fresh $h_1',h_2',h_3' \in \SET{H}$.

When calling \CODE{e1:macroexpand} on \CODE{(silly-square 4 5 6)} we
immediately go through the cons expression-expander; assuming that
\CODE{silly-square} is \textit{not} a procedure name, we tail-call
\CODE{e1:macroexpand-macro-call} with two parameters: the symbol
\CODE{silly-square}, and the s-list \CODE{(4 5 6)};
\CODE{e1:macroexpand-macro-call} attempts to expand the first call by using
\CODE{e1:macroexpand-1-macro-call}.  Assuming \CODE{silly-square} has not been
used before, \CODE{state:macro-get-macro-procedure-name} builds its macro
procedure, which requires several expression-expansion calls not involving
macros; \CODE{state:macro-get-macro-procedure-name} then returns the macro procedure
name to \CODE{e1:macroexpand-1-macro-call}, which calls it on \CODE{(4 5 6)};
the result is the s-expression \CODE{(fixnum:* 4 4)}, which is returned by
\CODE{e1:macroexpand-{\allowbreak}1-{\allowbreak}macro-{\allowbreak}call}; so control goes back to
\CODE{e1:macroexpand-{\allowbreak}macro-{\allowbreak}call}, which tail-calls \CODE{e1:macroexpand} on
\CODE{(fixnum:* 4 4)}; by trivial expression-expansions, we finally obtain the
expression
$\DCALL{h_1'}{\CODE{fixnum:*}}{\DCONSTANT{h_2'}{4}\ \DCONSTANT{h_3'}{4}}$.

\subsubsection{Transforms}
\label{transforms}
\label{transformation-section}
\label{transformations-section}
Many mathematical presentations deal with ``transformations'', meant as
code-to-code functions.  Our \TDEF{transform} strategy adopts the same approach,
with the sole significant extension of also permitting side-effecting
procedures.

When building a transform, the user or personality developer has to simply
define ordinary procedures working on code, and then to ``hook'' them to the
system.  There are two reasonable ways of running such \TDEF{transform procedures}:
\begin{itemize}
\label{retroactive-transforms}
\item a procedure can be applied \TDEF{retroactively} to the
current state, adding (and usually replacing) definitions;
\item
or it can be \TDEF{installed}, to be applied automatically in the future to each
toplevel expression or each procedure created from that point on; since
the composition order \textit{is} usually significant, the user can
control where each transform procedure fits in a global list of names.
\end{itemize}
In general we are interested in transforming three different entities:
\textit{expressions}, \textit{procedure bindings}, and \textit{global
  bindings}.  Transform procedures will need to be different for each case,
since the procedure interface cannot be compatible; but in our experience,
``companion'' transform procedures tend to rely on some common helper doing
most of the actual work: for example, closure-converting a procedure binding
involves closure-converting its body, which is an expression
(\SECTION\ref{closure-conversion-transform}).

Global bindings are difficult to work with in practice, since they contain
already-evaluated values with no fixed shape, rather than expressions; we have
not yet found a use for global binding transform procedures, but we include
such support for symmetry reasons.

For generality's sake, we decided to have \textit{binding transform procedures}
also return a transformed name: this may be either the same as the original
untransformed binding, or a new one.  It might be convenient to keep the old
definition around for debugging reasons, for example, but change all the uses
of the old entity to new one, by systematically renaming references.

A transform procedure may have one of three different interfaces:
\begin{itemize}
\item \textit{one parameter -- one result}, to transform an expression;
\item \textit{three parameters -- three results}, to transform a procedure binding: name, formals, body;
\item \textit{two parameters -- two results}, to transform a global binding: name, value.
\end{itemize}
The ultimate purpose of our code-rewriting system is to let the user write in
an expressive high-level language, to be then automatically reduced to
\EPSILONZERO by ``transforming away'' extensions.  The transform procedures
mapping ``syntax into syntax'' therefore will need to support not only syntax,
but \textit{extended syntax} as data.  We will show an elegant solution to this
problem in \SECTION\ref{expressions-as-an-extensible-sum-type}, but we do not
need to be concerned with it now while discussing the code which \textit{invokes}
procedure transforms.

\label{a-transform-may-be-a-good-place-to-run-an-analysis}
As a less obvious consequence of our design, side-effecting transforms
provide for another interesting opportunity: a simple way of performing a
\textit{code analysis} is to implement a trivial transform procedure returning its
parameters unchanged \textit{while recoding data in some global structure},
possibly a global table with handles (\SECTION\ref{handle-introduction}) as
keys.  Transforms actually returning modified code might also store their
parameters somewhere or simply record the relation between transformed an
untransformed code\footnote{A practical problem of the current implementation
  which makes debugging difficult is the lack of a reverse mapping from code
  to its original untransformed form and ultimately to its the s-expression
  concrete syntax and source location.  Solving this problem requires some
  care in writing transform, parsing and expression-expansion procedures, so
  that when an expression is built from others, its origin is somehow
  recorded in a graph.  A linguistic extension to somehow automate this
    tracing process might be appropriate.} in some global structure, available for
later debugging or analysis.

Transforms are also a convenient way to run some \textit{optimizations}
rewriting expressions into more efficient versions.  As a first ``low-hanging
fruit'' we plan to use some \textit{heuristic search} algorithm such as hill-climbing to
search the neighborhood of an expression for operationally-equivalent but
faster versions.
\\
\\
We are finally ready to present some of the code in the \textit{Transforms}
section.  At around 150 lines the code is quite short and also very uniform,
often present in three just slightly different versions because of the three
entities to manage.

The global lists of transform names to be applied in order are simple boxed
global variables:
\begin{Verbatim}
(e1:define transform:expression-transforms (box:make-initialized list:nil))
(e1:define transform:procedure-transforms (box:make-initialized list:nil))
(e1:define transform:global-transforms (box:make-initialized list:nil))

(e1:define (transform:prepend-expression-transform! new-transform-name)
  (box:set! transform:expression-transforms
            (list:cons new-transform-name (box:get transform:expression-transforms))))
(e1:define (transform:append-procedure-transform! new-transform-name)
  (e0:let () (box:set! transform:procedure-transforms
                       (list:append2 (box:get transform:procedure-transforms)
                                     (list:singleton new-transform-name)))
    (state:invalidate-macro-procedure-name-cache!))) ;; All macros have to be re-transformed
\end{Verbatim}
The interaction with macros is interesting as it reminds us that an untransformed
procedure may be incompatible with its transformed version (for example, in a
CPS transform the argument number may change): it is hence important to
invalidate any cached macro procedure, so that new ones are created, \textit{and
subjected to the current transforms}.

Applying transform procedures is trivial; this is the code which gets executed
when a procedure binding is transformed; the other two cases are essentially
identical.
\begin{Verbatim}
(e1:define (transform:transform-procedure name formals body)
  (e0:let (transform-names) (box:get transform:procedure-transforms)
    (transform:apply-procedure-transforms transform-names name formals body)))
(e1:define (transform:apply-procedure-transforms remaining-transforms name formals body)
  (e0:if-in remaining-transforms (0)
    (e0:bundle name formals body)
    (e0:let (transformed-name transformed-formals transformed-body)
            (e0:call-indirect (list:head remaining-transforms) name formals body)
      (transform:apply-procedure-transforms (list:tail remaining-transforms)
                                            transformed-name
                                            transformed-formals
                                            transformed-body))))
\end{Verbatim}
Retroactive transformation is more interesting.  The user will call
\CODE{transform:{\allowbreak}transform-{\allowbreak}retroactively!} to install transform
procedures for global and procedure bindings, also specifying the names of some
objects \textit{not} to transform.
\begin{Verbatim}
(e1:define (transform:transform-retroactively! globals-not-to-transform
                                               value-transform-names
                                               procedures-not-to-transform
                                               procedure-transform-names)
  (e0:let (global-names) (list:without-list (state:global-names) globals-not-to-transform)
    (e0:let (procedure-names)
            (list:without-list (state:procedure-names) procedures-not-to-transform)
      (e0:let (transformed-name-global-list)
              (transform:compute-transformed-globals global-names value-transform-names)
        (e0:let (transformed-name-formal-body-list)
                (transform:compute-transformed-procedures procedure-names
                                                          procedure-transform-names)
          (e0:primitive state:update-globals-and-procedures! transformed-name-global-list
                                                             transformed-name-formal-body-list))))))
\end{Verbatim}
The code works by first \textit{computing} all transformed bindings (the
trivial helpers \CODE{transform:{\allowbreak}compute-{\allowbreak}transformed-{\allowbreak}globals} and
\CODE{transform:{\allowbreak}compute-{\allowbreak}transformed-{\allowbreak}procedures} simply return
a list of transformed bindings) without performing any global update; then, \textit{with
  a single primitive call}, it activates all new bindings.

\label{atomic-global-update-primitive}Why having such a complex primitive written in C?
  And why do we have to compute all the bindings before applying
any?  The answer is that \textit{we need the state environment update to be
  performed atomically}\footnote{Nothing to do with concurrency, in this case.
  Our current code does not even support synchronization primitives other than
  \JOINNAME, so background threads performing imperative operations are not
  used at all.}, again because of the incompatibility introduced by some
transforms.  The alternative of updating global definitions in \EPSILONZERO
would fail when at some point \textit{the updater procedure itself} or its
helpers would be reached by the incompatible \textit{change wave}, and break on a call
from an untransformed procedure to a transformed one, or vice-versa.  For this
reason only, \CODE{state:update-{\allowbreak}globals-{\allowbreak}and-{\allowbreak}procedures!} must be a primitive.
\\
\\
The \textit{REPL} section is the last interesting part of \FILE{core.e}.  Its
helper procedure \CODE{repl:macroexpand-{\allowbreak}transform-{\allowbreak}and-{\allowbreak}execute} can
be given an s-expression to expression-expand, transform and evaluate:
\begin{Verbatim}
(e1:define (repl:macroexpand-transform-and-execute sexpression)
  (e0:let (untransformed-expression) (e1:macroexpand sexpression)
    (e0:let (transformed-expression) (transform:transform-expression untransformed-expression)
      (e0:eval-ee transformed-expression))))
\end{Verbatim}
The REPL itself is very crude, and currently relies on a primitive
\CODE{io:{\allowbreak}read-{\allowbreak}sexpression} \textit{calling Scheme from C} to read a Guile
s-expression and then convert it into our representation.  This lack of a real
frontend written in \EPSILONONE is the last remaining reason why we still
depend on Guile after bootstrap (\SECTION\ref{implementation-status}).
\begin{Verbatim}
(e1:define (repl:repl)
  (e0:let () (string:write "Welcome to the epsilon REPL\n")
    (repl:loop (io:standard-input))))
(e1:define (repl:loop port)
  (e0:let () (string:write "e1>\n")
    (e0:let (next-sexpression) (e0:primitive io:read-sexpression port)
      (e0:let (results) (repl:macroexpand-transform-and-execute next-sexpression)
        (e0:let () (repl:write-results results port)
          (e0:let () (string:write "\n")
            (repl:loop port)))))))
\end{Verbatim}


\subsubsection{An aside: developing, testing, and the ordering of phases}
\label{the-ordering-of-phases}
In this presentation we have chosen to show the \textit{final structure} of our
bootstrapping system as a working body of code, rather than recounting the
\textit{process} of writing it; the two views do not perfectly overlap.

The preceding phase is by far the most problematic in this respect: as any
reader with implementation experience may witness \EPSILONONE with its
interpreter, global data structures, macro and transform systems is a very
strongly recursive system, where each component tends to require all the others
in a loop of circular dependencies apparently very difficult to break.  And
indeed, the preceding third phase was not easy to implement on the machine.

After deciding on the general bootstrapping strategy we wrote a first
approximation of the system, in \EPSILONZERO with with no macros (see next
phase) and no transforms, up to the interpreter included.
Some subsystems, for example the implementation of sum-of-products types for
\EPSILONZERO expressions, were first prototyped in Guile.  Transformations were
added as the very last step, after macros worked reliably and were used to
make \EPSILONONE considerably more friendly.

With the marshalling/unmarshalling support needed for
\textit{unexec} (\SECTION\ref{marshalling}), we followed a route of progressively
reducing the abstraction level: after writing its first version in \EPSILONONE using
several comfortable language extensions, we translated it into \EPSILONZERO, to
make it possible to run it earlier at boot, when extensions are not loaded yet.  The
translated marshalling code is understandable, but some complexity which would
have been a little too daring for \EPSILONZERO still shows up in the code,
particularly in nested conditionals.

Later we rewrote the marshalling and unmarshalling support for a third time in
C, for performance reasons (\SECTION\ref{bootstrapping-optimizations}).
\\
\\
At the beginning we wrote a considerable body of debugging code in Scheme,
including for example for example the procedure \CODE{print-expression} writing
expressions in \EPSILONZERO's syntax of \SECTION\ref{epsilonzero-chapter}
including handles in Unicode subscript digits, or
\CODE{hash-dump-sizes} which has served to test how well our hash functions
distribute; and maybe most importantly \CODE{meta:print-{\allowbreak}procedure-{\allowbreak}definition}
and \CODE{meta:print-{\allowbreak}macro-{\allowbreak}definition}, useful for inspecting the global state
and obtain readable syntax.
Such code is still available in
\FILE{bootstrap/{\allowbreak}scheme/{\allowbreak}conversion.scm}, and still occasionally
useful for debugging:
\begin{Verbatim}
guile> (meta:print-procedure-definition 'cons:make)
Formals: (car cdr)
[let [result] be [call buffer:make 2₇₇₉]₇₈₀ in [let [] be [call buffer:set! result₇₈₁ 0₇₈₂
car₇₈₃]₇₈₄ in [let [] be [call buffer:set! result₇₈₅ 1₇₈₆ cdr₇₈₇]₇₈₈ in result₇₈₉]₇₉₀]₇₉₁]₇₉₂

guile> (meta:print-macro-definition 'e0:call)
(sexpression:inject-expression (e0:call* (sexpression:eject-symbol (sexpression:car arguments))
(e1:macroexpand-sexpressions (sexpression:cdr arguments))))

guile> (e0:value whatever:identity) ;; the symbol dump is painful to read
0x1471040[0x14c41e0[17 119 104 97 116 101 118 101 114 58 105 100 101 110 116 105 116 121]
0 127 0x1409f00[0x14633c0[0x145c900[1 120] 0 127 0 0 0 0 0 0] 0] 0x1462b80[0 10 0x14633c0]
0 0 0 0]
\end{Verbatim}
But as our design of \EPSILONONE changed until its crystallization
into the present form, some of our crude debugging and code-generating tools also broke down
and became unusable, whenever their underlying assumptions failed.  As old
scaffoldings not supporting any more a structure now able to stand by itself,
we abandoned them.
\\
\\
Our bootstrapping code running on top of an inefficient extension to Guile had
low performance, which was unsurprising.  What we didn't expect was that
waiting times were in practice so unbearable even on our fastest
machine\footnote{\textit{optimum} is a Dell Precision T7400 with two quad-core Intel Xeon (EM64T)
chips at 3GHz, 8Gb of RAM, heavily customized debian GNU/Linux ``unstable''.}
 that it necessitated optimizations
already in this phase.
\SECTION\ref{bootstrapping-optimizations} provides some insights.

\subsubsection{Phase~\BOOTSTRAPPHASE{iv}: fill reflective data structures}
\label{bootstrap-phase-4}

Phase~\BOOTSTRAPPHASE{iii} consisted of about 2500 lines in \EPSILONZERO, which
we have executed on top of the \EPSILONZERO implementation of
Phase~\BOOTSTRAPPHASE{ii} based on \CODE{guile+whatever}; in other words, our
global definitions up to this point affected \textit{Guile's state environments},
rather than ours.  This phase consists in using the Guile data structures we
updated at each definition to fill ``reflective data structures'' --- in quotes,
since we are actually speaking of data to be stored as part of symbol objects.

The code
is in \FILE{bootstrap/{\allowbreak}scheme/{\allowbreak}fill-{\allowbreak}reflective-{\allowbreak}structures.scm}.
\\
\\
Our Scheme implementation of \CODE{e1:define} from Phase~\BOOTSTRAPPHASE{ii},
at the end of
\FILE{bootstrap/{\allowbreak}scheme/{\allowbreak}epsilon0-in-scheme.scm},
updates two global Scheme data structures:
\linebreak
\CODE{globals-{\allowbreak}to-{\allowbreak}define}, a list of names
of non-procedure globals which have been defined, and
\CODE{procedures-{\allowbreak}to-{\allowbreak}define}, an alist binding each defined procedure name to
its formals as a Guile list and its body as a Guile s-expression.
The idea is to scan the lists and for each element to copy the corresponding
data into our state environments.

\begin{itemize}
\item
Non-procedures are easy to manage: given a global name as a Guile symbol
we simply have to look it up as a Guile global: the value we find, an
injected whatever, has to be copied into the
appropriate field of the \EPSILONONE symbol corresponding to the Guile
symbol.

\item
Procedures are more involved: for each one \CODE{procedures-to-define} contains
its name as a Guile symbol, its formals as a Guile symbol list, and its body as
a Guile s-expression.  Name and formals are easy enough to translate, but for
our state environment \textit{we need the body as an \EPSILONZERO expression}
encoded in the expression data structure we defined in Phase~\BOOTSTRAPPHASE{iii};
converting each body into an expression is the main problem of this phase.
\end{itemize}
At this point we can better justify our rigidly constrained way of
writing code in Phase~\BOOTSTRAPPHASE{iii}, in which we \textit{only} used \EPSILONZERO
plus \CODE{e1:define}: the s-expression-to-expression translation we need to
perform at this point is \textit{non-macro expansion}.  Since the
translation has to be executed only once without the translation code itself
being part of the output, we can implement it in Scheme rather than in
\EPSILONZERO.  The procedure \CODE{e1:non-{\allowbreak}macro-{\allowbreak}expand}, defined in a
mutually-recursive fashion with its helpers
\CODE{e0:non-{\allowbreak}macro-{\allowbreak}expand-{\allowbreak}sexpressions}, \CODE{e0:non-{\allowbreak}macro-{\allowbreak}expand-{\allowbreak}symbols} and
\CODE{e0:non-{\allowbreak}macro-{\allowbreak}expand-{\allowbreak}values},
follows very closely our Definition~\ref{nonmacro-expansion-definition}.

The code is slightly less readable than the corresponding mathematical
definition just because of explicit representation conversions between Guile's
and \EPSILONONE's data; for example, the procedure
\CODE{whatever->{\allowbreak}guile-{\allowbreak}boolean} converts an untyped
\EPSILONONE object into a Guile dynamically-typed boolean, and
\CODE{guile-sexpression->sexpression} converts a native Guile s-expression into
our own representation, as per the previous phase.  All such conversion
operators, by themselves quite unremarkable, are implemented in Scheme, in
\FILE{bootstrap/{\allowbreak}scheme/{\allowbreak}conversion.scm}.

Our \CODE{e0:non-macro-expand} is also ``unsafe'' and in practice accepts a
superset of valid syntax encodings: we avoided safety checks in the code, for
example ignoring the s-cddr of \CODE{(e0:join x .\ $s$)} instead of verifying
that it really is \CODE{()}.  This expansion unsafety is not a problem in
practice at this point, since the code to be translated has already been well
tested on \EPSILONZERO's implementation of Phase~\BOOTSTRAPPHASE{ii}, using
Guile's interactive REPL (actually \CODE{guile+whatever}'s), and just a little care in
reading untyped data structure dumps (\SECTION\ref{memory-dumps}).
\\
\\
The real work is in the Scheme procedure \CODE{set-metadata!}, a zero-parameter
procedure which consists of two loops, the first scanning the
global binding list and adding the definition to symbols, and the second doing
the same for procedures after s-expression conversion and non-macro expansion.

Even on our \textit{optimum} machine, when
using Guile 1.8, which is faster in this phase, the computation of
Phase~\BOOTSTRAPPHASE{iv} takes about 15 seconds, compared to 0.2
seconds for the previous phases combined; fortunately, unless there are recent changes in
\FILE{core.e}, we can in practice entirely skip this phase by \textit{exec}ing
(\SECTION\ref{unexec}) over the Phase~\BOOTSTRAPPHASE{ii} interpreter.
\\
\\
One problem remains: the globals and procedures we have defined up to this
point \textit{also} remain in Guile's state environments, and this state of
things will persist up until we remove the dependency on Guile.  Re-defining
some procedure directly invoked from Guile, would lead to subtle problems,
making the two definition sets inconsistent.  We will simply avoid to
override any \EPSILONONE definition with an incompatible one.

With the caveat above, and now having global and procedure definitions in place,
we can finally use \CODE{e1:eval}.

\subsection{Unexec}
\label{unexec-the-symbol-table-and-the-main-expression}
At this stage it is finally possible to use unexec, which depended on reflective
structures to dump a program.
Our vague reference in \SECTION\ref{few-in-number-vague-reference} to the
``surprisingly few'' data structures involved should be clear now: a simple way
of obtaining a program is to dump a pair containing:
\begin{itemize}
\item the \textit{symbol table}, holding all global and procedure definitions and from
  which all alive data in memory can be reached (\SECTION\ref{the-nature-of-values});
\item the \textit{main expression}.
\end{itemize}
At \textit{exec} time, it suffices to unmarshal the pair, define the symbol
table, and run the interpreter on the expression.
\\
\\
The \EPSILONZERO implementation of \CODE{unexec:unexec} and \CODE{unexec:exec}
is in \FILE{bootstrap/{\allowbreak}scheme/{\allowbreak}unexec.e}; the same file also contains the
\EPSILONZERO implementation of marshalling and unmarshalling.

\subsection{Optimizations}
\label{bootstrapping-optimizations}
In a preliminary version of \EPSILONONE, macros were not associated to
procedures to be called, but to expressions to be evaluated.  The current
definition has a cleaner interaction with transforms, but if we ignore
transforms the old solution was perfectly workable as well: instead of passing
parameters to a procedure, we evaluated an expression in some environment, with
the same effect.

With the old solution, macroexpansion returned correct results, but
the system's incredible inefficiency led us to investigate the issue until we
discovered a perverse pattern: the complicated circular nature of the
dependencies between \CODE{e1:{\allowbreak}macroexpand}, expression-expanders,
\CODE{e0:{\allowbreak}eval} and its helpers made it difficult to understand how, indirectly,
\textit{\CODE{e0:eval} was interpreting calls to itself}.
\\
\\
It is easy to see how, if adding one layer of interpretation worsens
performance by some constant factor $k$, we have that a stack of $n$
interpreters has exponential overhead $k^n$; and given that symbolic
interpreters easily cause order-of-magnitude overheads, the slowdown was
evident even for very small values of $n$.

For the first time in our programming experience we discovered that some code
had \textit{unbounded interpretation overhead}.  Despite being now unnecessary
because of the macroexpansion changes, we still find the problem and its solution
quite beautiful, and potentially instructive for others.
\begin{implementationnote}[The Hack]\label{the-hack-implementation-note}
When evaluating a call to \CODE{e0:eval}, the self-interpreter does \textit{not}
evaluate \CODE{e0:eval}'s body, but directly the given expression in the given
local environment.
\QEDIMPLEMENTATIONNOTE
\end{implementationnote}
The idea is simply to recognize as a particular case any call of a procedure named
\CODE{e0:eval}:
\begin{Verbatim}
(e1:define (e0:eval-call name actuals local)
  (e0:if-in (whatever:eq? name (e0:value e0:eval)) (#f)
    (e0:eval-non-eval-call name actuals local)
    (e0:eval-eval-call actuals local)))

(e1:define (e0:eval-eval-call actuals local)
  (e0:let (actual-values) (e0:eval-expressions actuals local)
    (e0:if-in (whatever:eq? (list:length actual-values) (e0:value 2)) (#f)
      (e1:error (e0:value "e0:eval-eval-call: in-dimension mismatch"))
      (e0:let (expression) (list:head actual-values)
        (e0:let (local) (list:head (list:tail actual-values))
          (list:singleton (e0:eval expression local))))))) ; wrap as inner eval would
\end{Verbatim}
Subjectively, it could be said that \textit{The Hack} changed the interpreter
from being \textit{comically} slow to being still very slow, but at least usable.
\\
\\
Despite being an obvious idea, the following implementation aspect deserves
prominence because of its dramatic impact on performance:
\begin{implementationnote}[interpreter in C]\label{interpreter in C}
We re-implemented an \EPSILONZERO interpreter in low-level C, using explicit
stacks and no heap allocation, except implicitly for building whatevers.  The
C implementation is a few hundreds of lines long, and performs runtime dimension
checks.  The interpreter is available from \EPSILONONE as the primitive
\CODE{e0:fast-eval}, and has the same interface of \CODE{e0:eval}.
\QEDIMPLEMENTATIONNOTE
\end{implementationnote}
Replacing the \EPSILONZERO self-interpreter with the C implementation
led to a speedup of around 200 for an exponential-time recursive implementation
of Fibonacci's function:
\begin{Verbatim}
(e1:define (fibo n)
  (e0:if-in n (0 1)
    n
    (fixnum:+ (fibo (fixnum:- n (e0:value 2))))
              (fibo (fixnum:1- n))))
\end{Verbatim}
The high speedup is not surprising, if we consider that the \EPSILONZERO
self-interpreter had to run on top of Guile, itself an interpreter.

The ``Interpreter in C'' strategy subsumes The Hack: being written in a
different language than \EPSILONZERO, the interpreter never accesses its own
body, and interpreter calls in the interpreted code are instead executed by a
primitive, thus avoiding overhead multiplication.

\begin{implementationnote}[\textit{exec}/\textit{unexec} in C]\label{exec/unexec-in-c}
We re-implemented the marshalling and un-marshalling procedures
\CODE{unexec:dump} and \CODE{unexec:undump} by using primitives written in C.
The C implementation consists of about 100 lines, and adopts exactly the same
data structures and algorithm as the corresponding \EPSILONZERO code. \QEDIMPLEMENTATIONNOTE
\end{implementationnote}
Again, the optimization of Implementation Node~\ref{exec/unexec-in-c} has an
order-of-magnitude impact on performance: thanks to it, \textit{exec} ``quick-start'' takes
only a short fraction of a second on \textit{optimum}.
\\
\\
Re-implementing part of the functionality in C was an aid to development and
rapid testing, more than a definite commitment: after a good native compiler is
developed, the need for such optimizations will attenuate.


\subsection{Sample extensions}
\label{sample-extensions}
The file
\FILE{bootstrap/{\allowbreak}scheme/toplevel-{\allowbreak}in-{\allowbreak}scheme.{\allowbreak}scm},
run right after
\FILE{fill-{\allowbreak}reflective-{\allowbreak}structures.{\allowbreak}scm},
defines a few simple Scheme macros to let the user evaluate \EPSILONONE
forms within the Guile REPL: ``\CODE{(e1:toplevel .\ $s$)}'' evaluates
each element of the s-list $s$ as an \EPSILONONE expression, which of
course is macroexpanded and transformed before execution.
``\CODE{(e1:trivial-define-macro $m$ $s$)}'', available both as a Scheme macro and
as an \EPSILONONE macro, defines the macro named $m$ (an s-symbol) as $s$ (a
generic s-expression).

Armed with just this knowledge, the reader should be able to follow quite easily
\FILE{bootstrap/{\allowbreak}scheme/{\allowbreak}epsilon1.{\allowbreak}scm},
which contains around 2000 lines worth of \EPSILONONE extensions.
\\
\\
We think that the power of syntactic abstraction is easy to appreciate now, by
looking at how fast language expressivity improves after each definition,
compared to the development work in phase~\BOOTSTRAPPHASE{iii} during which only
procedural abstraction was available.

This sequence of extensions, quite impressive in its accelerating rhythm, raises
the language from the clumsy beginnings of \EPSILONZERO to a respectable power, with
\textit{sequences}, \textit{multi-way conditionals}, \textit{short-circuit
  logical operators}, Common Lisp-style \textit{destructuring macros},
\textit{variadic procedures}, \textit{tuples}, \textit{records},
\textit{extensible sum-of-product definitions}, \textit{closures},
\textit{imperative loops} and \textit{futures}.

Interestingly, only the three last extensions in the sequence above depend on a
transform; macros alone can already bring the language to a quite high
level.

Most of our syntactic conventions are inspired to Scheme, and the form names
are indeed largely compatible, apart from the ``\CODE{e1:}'' prefix.
\\
\\
The very beginning of
\FILE{bootstrap/{\allowbreak}scheme/{\allowbreak}epsilon1.{\allowbreak}scm}
deals with macros for core \EPSILONZERO forms:
\pagebreak
\begin{Verbatim}
;;; These first crude versions do not perform error-checking, silently
;;; ignoring additional subforms at the end.
(e1:trivial-define-macro e0:variable
  (sexpression:inject-expression
    (e0:variable* (sexpression:eject-symbol (sexpression:car arguments)))))
(e1:trivial-define-macro e0:let
  (sexpression:inject-expression
    (e0:let* (sexpression:eject-symbols (sexpression:car arguments))
             (e1:macroexpand (sexpression:cadr arguments))
             (e1:macroexpand (sexpression:caddr arguments)))))
\end{Verbatim}
The code is simple, but it is questionable whether it really belongs in this
file, rather than in \FILE{core.e}.  The reason why we defined such important
features so late is mostly pragmatic: such macro definitions would have been be
much less comfortable to write using only \CODE{state:macro-set!} without
\CODE{e1:trivial-define-macro}.  For similar reasons we defined quoting and
quasiquoting in this file, rather than in \FILE{core.e}.

The debate about where exactly the \EPSILONONE ``core'' ends and ``extensions''
begin looks futile anyway, and indeed the very notion of personality, possibly
useful for humans to identify a set of features, has no consequence for the
implementation.  The same objection may be raised ``at the other end'', about
the CPS transformation and the reason why we defined in its own source file
instead of in the end of \FILE{epsilon1.scm}.  In the same somewhat arbitrary
fashion, we proclaim that continuations do not belong in \EPSILONONE but are
part of another experimental personality \textit{based on} \EPSILONONE.  First-class
continuations provide another qualitative jump in expressivity, but our
implementation is less mature and quite expensive in terms of performance,
therefore less appropriate as part of the general-purpose ``library'' to build
personalities which \EPSILONONE is meant to be.
\\
\\
In the following we will just add some quick considerations about the main extensions.

\subsubsection{Quoting and quasiquoting}
\textit{Quoting} and \textit{quasiquoting}, heavily relying on the
type table (\SECTION\ref{type-table}) so that support for new new types can be
added smoothly, are different from their Lisp homologous: in \EPSILONONE
\textit{a quoted or quasiquoted s-expression yields an expression which will
  build that s-expression when evaluated}; for example \CODE{'1} macroexpands
to a procedure call of \CODE{sexpression:inject-fixnum} which, if evaluated,
will build \textit{the s-expression} (not the unboxed fixnum) \CODE{1}.

Despite this difference, we can easily adapt the standard algorithms for
quasiquoting\footnote{We followed Bawden's updated proposal (different from his older
one in \cite[\SECTION{}B]{quasiquotation-in-lisp}),
as quoted by Kent Dybvig at
\url{http://www.r6rs.org/r6rs-editors/2006-June/001376.html}.  This new
version was eventually adopted in \cite{r6rs}.}, which is convenient since
\textit{nested quasiquoting} is famously tricky to implement correctly.
\\
\\
The non-homoiconicity of \EPSILONONE forces us to think of the difference
between s-expressions, uninjected values and expressions, and costs us
some injection and ejection operators in macros.  The inconvenience in practice
is tolerable, and we consider the advantages of our syntactic extensions well worth
this minor trouble.

\subsubsection{Variadic procedure wrappers}
\label{variadic-procedures}
All practical Lisps permit to define \textit{variadic} procedures, which is to say
procedures with an arbitrary number of optional arguments.  \EPSILONZERO and
\EPSILONONE \textit{do not}, for reasons of efficiency.  Anyway we can recover
the convenience of variadic calls by introducing \textit{variadic macros}, and
using them to wrap procedures.

The following \EPSILONONE definitions extend binary operators with a neutral
element to make them accept any number of arguments:
\begin{Verbatim}
(variadic:define-associative fixnum:+ fixnum:+ 0)
(variadic:define-right-deep fixnum:** fixnum:** 1)
\end{Verbatim}
The macro-call expansions of \CODE{variadic:{\allowbreak}define-{\allowbreak}associative} and
\CODE{variadic:{\allowbreak}define-{\allowbreak}associative} \textit{generate more macro definitions}, in
this case for ``\CODE{fixnum:+}'' and ``\CODE{fixnum:**}'', which for added
convenience are also the names of the corresponding procedures.

After the definition, using the debugging procedure \CODE{meta:macroexpand} we
can examine how variadic calls are always ``eliminated'' at macroexpansion
time, yielding efficient residual code:
\begin{Verbatim}
guile> (meta:macroexpand '(fixnum:+)) ;; no arguments: neutral element as a literal
0₇₂₅₃₁
guile> (meta:macroexpand '(fixnum:+ 7)) ;; one argument: no calls are needed
7₇₂₅₃₂
guile> (meta:macroexpand '(fixnum:+ 1 2)) ;; one sum
[call fixnum:+ 1₇₂₅₃₃ 2₇₂₅₃₄]₇₂₅₃₅
guile> (meta:macroexpand '(fixnum:+ 1 2 3 4)) ;; three sums, left-deep (currently faster)
[call fixnum:+ [call fixnum:+ [call fixnum:+ 1₇₂₅₃₆ 2₇₂₅₃₇]₇₂₅₃₈ 3₇₂₅₃₉]₇₂₅₄₀ 4₇₂₅₄₁]₇₂₅₄₂
guile> (meta:macroexpand '(fixnum:** 2 3 4 5)) ;; three calls, right-deep as requested
[call fixnum:** 2₇₂₆₀₅ [call fixnum:** 3₇₂₆₀₆ [call fixnum:** 4₇₂₆₀₇ 5₇₂₆₀₈]₇₂₆₀₉]₇₂₆₁₀]₇₂₆₁₁
\end{Verbatim}
Since variadic syntax is so convenient, we use it also for of many other macros which are not
procedure wrappers:
\begin{Verbatim}
guile> (meta:macroexpand '(e1:or))
0₇₂₄₆₄
guile> (meta:macroexpand '(e1:or a))
a₇₂₄₆₅
guile> (meta:macroexpand '(e1:or a b c))
[if a₇₂₄₆₆ ∈ {0} then [if b₇₂₄₆₇ ∈ {0} then c₇₂₄₆₈ else 1₇₂₄₆₉]₇₂₄₇₀ else 1₇₂₄₇₁]₇₂₄₇₂
guile> (meta:macroexpand '(e1:and a b c))
[if a₇₂₄₇₃ ∈ {0} then 0₇₂₄₇₄ else [if b₇₂₄₇₅ ∈ {0} then 0₇₂₄₇₆ else c₇₂₄₇₇]₇₂₄₇₈]₇₂₄₇₉
\end{Verbatim}

\subsubsection{Sum-of-product types}
\label{sum-types}
\label{extensible-sum-types}
\label{expressions-as-an-extensible-sum-type}
\TDEF{Sum-of-product} or \TDEF{sum} types are a kind of variant records,
introduced by ML and popular in the functional programming community.
\\
\\
Even in \EPSILONONE's untyped context, it is very convenient to automatically turn a sum
description into procedures for building, accessing and updating data, and for testing
the case of a given object.

As a classic example, let a list be either nil, or the cons of a head and a tail:
\begin{Verbatim}
e1> (sum:define my-list (nil) (cons head tail))
Defining the procedure my-list-nil...
Defining the procedure my-list-nil?...
Defining the procedure my-list-nil-explode...
Defining the procedure my-list-cons...
Defining the procedure my-list-cons-make-uninitialized...
Defining the procedure my-list-cons-explode...
Defining the procedure my-list-cons-get-head...
Defining the procedure my-list-cons-with-head...
Defining the procedure my-list-cons-set-head!...
Defining the procedure my-list-cons-get-tail...
Defining the procedure my-list-cons-with-tail...
Defining the procedure my-list-cons-set-tail!...
Defining the procedure my-list-cons?...
\end{Verbatim}
Our sum type definitions keep into account the number of cases which must be
represented as boxed, and do not generate tag fields unless needed.  We derive
the representation in memory from the sum definition in a way similar to
\cite[\SECTION{4.1}]{compiling-with-continuations}.
\begin{Verbatim}
e1> (my-list-nil) ;; unboxed
0
e1> (my-list-cons 10 (my-list-nil)) ;; just head and tail, no case tag
0x1b984d0[10 0]
e1> (my-list-cons 10 (my-list-nil)) ;; make a *new* cons: different address
0x1be3b70[10 0]
e1> (my-list-cons 10 20) ;; "ill-typed" as a list: the system doesn't care
0x1998350[10 20]
e1> (my-list-cons? (my-list-nil)) ;; is nil a cons?  (No, it's not)
0
e1> (sum:define complex (cartesian real imaginary)
                        (polar angle radius)) ;; two boxed cases: case tag needed
;;; [...]
e1> (complex-cartesian 100 200) ;; first case: tag 0
0x152c0c0[0 100 200]
e1> (complex-polar 100 200) ;; second case: tag 1
0x1502620[1 100 200]
\end{Verbatim}
We can now redefine \EPSILONZERO expressions as an \textit{open} sum-of-products,
openness meaning that more cases can be added later.  This permits more
flexibility, at the cost of a slightly less efficient representation in the
general case:
\begin{Verbatim}
(sum:define-open e0:expression
  (variable handle name)
  (value handle content)
  (bundle handle items)
  (primitive handle name actuals)
  (let handle bound-variables bound-expression body)
  (call handle procedure-name actuals)
  (call-indirect handle procedure-expression actuals)
  (if-in handle discriminand values then-branch else-branch)
  (fork handle procedure-name actuals)
  (join handle future)))
\end{Verbatim}
The representation is compatible with the one used in \CODE{core.e}, and from
now on it will also be possible to add new expression cases, for user-defined
expression forms.

\subsubsection{Closure Conversion}
\label{closure-conversion}
\label{closure-conversion-transform}
The purpose of this extension is adding \textit{statically-scoped},
\textit{higher-order} \textit{anonymous procedures} to \EPSILONONE,
implemented as \textit{closures}.
\\
\\
Anonymous procedures require two\footnote{\CODE{Call-closure} does not
  technically need to be added as a new syntactic case, as it would also be
  definable as a macro; only the \CODE{lambda} case has an expansion which
  depends on its context.  However having both cases representable as
  expressions is useful for the CPS transform, and may be a good idea in case
  we want to add type analyses in the future.} new syntax cases, \CODE{lambda} and \CODE{call-closure}:
\begin{Verbatim}
(sum:extend-open e0:expression
  (lambda handle formals body)
  (call-closure handle closure-expression actuals))
(e1:define (e1:lambda* formals body) ;; make a lambda expression
  (e0:expression-lambda (e0:fresh-handle) formals body))
;;; "Concrete syntax" for lambda, generating the new expression case.  This is a
;;; variadic macro of one or more arguments: body-forms is bound to the argument s-cdr.
(e1:define-macro (e1:lambda formals . body-forms)
  (sexpression:inject-expression
    (e1:lambda* (sexpression:eject-symbols formals)
                (e1:macroexpand `(e1:begin ,@body-forms)))))
\end{Verbatim}
Our closures are flat and minimal \cite[p.~132]{dybvig-phd-thesis}, consisting
of a single buffer holding a procedure name as its first element, followed by
its zero or more nonlocal values; for us, nonlocals are the variables
\textit{locally-bound out of the lambda-expression, occurring free in the
  lambda body, hence in particular not shadowed by the lambda formals}.  The
procedure referred by the closure takes \textit{the closure itself} as a
parameter, followed by the ones explicitly mentioned as formals, and locally
binds the nonlocal names by loading their values off the closure.

For example \CODE{(e0:let (a) 57 (e1:lambda (x) a))} will yield a closure of
two elements: a procedure name (automatically generated, two parameters: the
closure and \CODE{x}), and the only nonlocal value \CODE{57}.  The procedure
body will contain an \CODE{e0:let} block binding the name \CODE{a} to the second
element of the buffer pointed by its closure parameter.
\\
\\
Calling a closure is easy: given a closure and its actuals, we load the first
element referred by the closure and we perform an indirect call to it, passing
as parameters the closure itself followed by the actuals.
\\
\\
\CODE{closure:closure-convert} implements closure-conversion; it needs a
procedure and the set of locally-bound variables.  The procedure is very
simple, based on a multi-way conditional \CODE{e1:cond} which dispatches on the expression case\footnote{\textit{Pattern-matching} over sum types can be implemented with macros, and in fact we did that in a previous prototype: see \SECTION\ref{implementation-status}.}.  \CODE{e1:let*}, not to be confused with \CODE{e0:let*}\footnote{The naming convention is unfortunate in this case, but the sequential-binding name ``\CODE{let*}'', as distinct from parallel-binding ``\CODE{let}'', is convenient and has been conventional in Lisp for decades.}, is a block binding sequentially:
\begin{Verbatim}
(e1:define (closure:closure-convert e bounds)
  (e1:cond ((e0:expression-variable? e)
            (e0:variable* (e0:expression-variable-get-name e)))
           ((e0:expression-bundle? e)
            (e0:bundle* (closure:closure-convert-expressions
                           (e0:expression-bundle-get-items e) bounds)))
           ;; [...] other trivial cases [...]
           ((e0:expression-lambda? e) ;; Interesting case
            (e1:let* ((formals (e0:expression-lambda-get-formals e))
                      (nonlocals (set-as-list:subtraction bounds formals))
                      (old-body (e0:expression-lambda-get-body e))
                      (new-body (closure:closure-convert old-body
                                                         (set-as-list:union bounds formals)))
                      (used-nonlocals (set-as-list:intersection nonlocals
                                                                (e0:free-variables new-body))))
               ;; closure:make* defines a global procedure, then returns an expression
               ;; which builds a closure data structure including the global procedure name
               (closure:make* used-nonlocals
                              (closure:variables* used-nonlocals)
                              formals
                              new-body)))
           ((e0:expression-call-closure? e) ;; The second interesting case
            (e1:let* ((closure-expression (e0:expression-call-closure-get-closure-expression e))
                      (actuals (e0:expression-call-closure-get-actuals e))
                      (transformed-closure-name (symbol:fresh)))
              (e0:let* (list:singleton transformed-closure-name)
                       (closure:closure-convert closure-expression bounds)
                       (e0:call-indirect*
                           (e0:primitive* (e0:value buffer:get)
                                          (list:list (e0:variable* transformed-closure-name)
                                                     (e0:value* 0)))
                           (list:cons (e0:variable* transformed-closure-name)
                                      (closure:closure-convert-expressions actuals bounds))))))
           (else
            (e1:error "unknown extended or invalid expression"))))
(e1:define (closure:closure-convert-expressions es bounds)
  (e1:if (list:null? es)
    list:nil
    (list:cons (closure:closure-convert (list:head es) bounds)
               (closure:closure-convert-expressions (list:tail es) bounds))))
\end{Verbatim}
\CODE{closure:closure-convert} is the basis of our procedure transforms:
\begin{Verbatim}
(e1:define (closure:closure-convert-expression-transform expression)
  (closure:closure-convert expression set-as-list:empty))
(e1:define (closure:closure-convert-procedure-transform name formals body)
  (e0:bundle name
             formals
             (closure:closure-convert body formals)))

(transform:prepend-expression-transform! (e0:value closure:closure-convert-expression-transform))
(transform:prepend-procedure-transform! (e0:value closure:closure-convert-procedure-transform))
\end{Verbatim}
Now that we have installed the transform procedures, we can use closures:
\begin{Verbatim}
e1> (e1:define q (e1:let* ((a 1) (b 2) (c 3))
                   (e1:lambda (x)
                     (fixnum:+ a b c x))))
e1> (e1:call-closure q 4)
10
\end{Verbatim}
It should be remarked that closures are distinct from and incompatible with
\EPSILONZERO procedures.
Should we hide ordinary procedures from the user, and use closures only?

We could: it is possible to introduce (trivial) closures for all existing
procedures, retroactively transform away all procedure calls into closure
calls (and then into indirect calls by closure-conversion) and finally
change the cons-expander to generate a \textit{closure call} rather than a
procedure call as its default case.
This would make \EPSILONONE similar to a ``Lisp-1'' \cite{lisp-1-vs-lisp-2}
by hiding from the user the existence of procedures which are independent from
closures.

Such a move would be perfectly reasonable in many high-level personalities, but
we reject it for \EPSILONONE, for which we want to retain low-level control.

\subsubsection{Futures}
\label{futures}
Our \FORKNAME form in \EPSILONZERO is very inconvenient to use, needing a
procedure which must be given parameters to evaluate in foreground,
rather than just an expression (\SECTION\ref{fork-and-nonlocals},
p.~\pageref{fork-and-nonlocals}).  But closures make it easy to define friendlier futures, by a simple macro:
\begin{Verbatim}
(e1:define (future:fork-procedure thread-name future-closure)
  (e1:call-closure future-closure))

;;; Build a future which will asynchronously call the given closure:
(e1:define (future:asynchronously-call-closure closure)
  (e0:fork future:fork-procedure closure))

(e1:define-macro (e1:future . forms) ;; friendly syntax: any number of forms in sequence
  `(future:asynchronously-call-closure (e1:lambda () ,@forms)))
\end{Verbatim}

\subsubsection{First-class continuations}
\label{cps}
\label{cps-transform}
We implemented first-class continuations with a CPS transform
\cite{rabbit--steele,compiling-with-continuations,orbit--kranz-phd-thesis,transforms--leroy}
on expressions
extended with a \CODE{let/cc} form (``\CODE{CATCH}'' in \cite{scheme}), with
\CODE{call/cc} defined as a macro over \CODE{let/cc}.

Our CPS transform is more tentative than the \EPSILONONE personality, and
currently resides in \FILE{bootstrap/{\allowbreak}scheme/{\allowbreak}cps.{\allowbreak}scm}, and the trivial driver \FILE{bootstrap/{\allowbreak}scheme/{\allowbreak}cps-{\allowbreak}repl.{\allowbreak}scm}.  Implemented in a
very conventional style, it works but yields inefficient code and is
inefficient at transformation time as well: in particular the high number of
local variables generated by CPS stresses closure-conversion and its algorithm
to compute the free variables of an expression, currently quadratic.

The generated code allocates closures at a very high rate; it can be optimized
and some improvements appear easy, but to obtain really efficient code we would
need escape analysis, so that code sure not to escape could be recognized and
transformed differently.  Such global (or just ``incremental'') analyses can be
performed in our model, by having a CPS transform return its provisional
inefficient result but save the original untransformed code, to be reconsidered
later.
\\
\\
\label{in-dimension-relaxation-hack}Bundles have been problematic, since CPS maps our \CODE{e0:let} form, which
ignores excess items (\SECTION\ref{let-blocks-can-ignore-some-values},
p.~\pageref{let-blocks-can-ignore-some-values}), into a procedure call, which
does \textit{not} ignore excess parameters; in order to respect our
\CODE{e0:let} semantics we had to relax some dynamic checks in the \EPSILONZERO
interpreter, and
rely on a behavior which constitutes an error according to the semantics.  It
is not clear whether it would be best to update the semantics to ignore extra
parameters (hence \textit{defining a non-error behavior in more cases}, which
constrains implementations\footnote{We only touched the C version, by trivially removing two
  conditionals, one for \CODE{e0:call} and the other for \CODE{e0:call-indirect}.
  The same change can be easily
  replicated in the \EPSILONZERO self-interpreter.  In such symbolic interpreters
  removing the check is trivial and actually slightly \textit{improves}
  performance: this will not be true in a compiled implementation.}), or to forbid bundles altogether in conjunction
with CPS.
\\
\\
Continuations have been very useful to test and stress our transform system,
since a CPS transform is much less ``well-behaved'' than a closure-conversion
transform: CPS adds one more argument to \textit{every} procedure, making
transformed code fundamentally incompatible with its original version.  When
closures are not used, closure-conversion returns unchanged code, up to
handles; but a (naïve) CPS transform fundamentally changes the expression
structure even where no jump is performed.  Of course the CPS transform needs
to be applied \textit{retroactively} (\SECTION\ref{retroactive-transforms},
p.~\pageref{retroactive-transforms}).
\\
\\
We are not positive about traditional ``full'' continuations being pragmatically
the best foundation to base further extensions on; \textit{delimited
  continuations}
\cite{final-shift-for-call/cc,a-library-of-high-level-control-operators--using-delimited-continuations,delimited-continuations}
seem to provide some advantages, and we have experimented with them in early prototypes;
thanks to our open-ended design we may adopt them in the future.

\subsection{Implementation status}
\label{implementation-status}The implementation is not mature, but it can be played with.  We currently
depend on Guile to parse and print s-expressions, and our current
implementation still lacks a compiler.

Such limitations are temporary and incidental: in the time available we chose
to develop transforms, more innovative and interesting, rather than
implementing well-known algorithms once again.  We do not envisage any
particular difficulty, and development will proceed during the following
months.

Some older prototypes, unmaintained but available at
\url{http://ageinghacker.net/epsilon-thesis-prototypes/},
contain code which could possibly be worth adapting and integrating into the
current implementation:
\begin{itemize}
\item
  an s-expression frontend
  written in OCaml for an older prototype, supporting the grammar of
  Figure~\SECTION\ref{s-expression-grammar-figure}; it works and contains a
  very powerful scanner supporting a variant of Thompson \cite{thompson-construction}
  and Rabin-Scott Constructions \cite{rabin-scott-construction} over
  large character-sets;
\item
  an incomplete compiler including liveness analysis and RTL generation;
\item
  pattern-matching macros working on a different implementation of sum-of-product types;
\item
  a mostly complete CamlP4 printer, intended to automatically translate the OCaml code
  into maintainable \EPSILONONE code.
\end{itemize}
An official part of the GNU project, epsilon is free software, released under
the GNU GPL version 3 or later \cite{gpl}.  Its home page is
\url{http://www.gnu.org/software/epsilon}.

\label{new-repo}
We manage the source code on a public bzr server at
\url{bzr://bzr.savannah.gnu.org/epsilon/trunk} \RED{[2015 note: switched from bzr to
git in late 2013: see \url{https://savannah.gnu.org/git/?group=epsilon}]}, and a public mailing list is
available for discussion.  See \url{https://savannah.gnu.org/projects/epsilon}
for more information.

\section{Future work}
\label{extension-composition}
Building a large body of extensions raises the issue of controlling their
interaction.  Transform-based extensions in particular, relying as they do on
the enumeration of all expression forms, require knowledge of all the
previously-added expression forms.  No solution to this problem is apparent.
However, without promising a ``silver bullet'' to language extension, we still
maintain our approach of \textit{layered} syntactic forms to be much more
suitable to extensibility than the traditional solution of a large unstructured
collection of language forms.

As an orthogonal problem, our current implementation does not currently keep a
map from expressions to original source locations
(\SECTION\ref{tracking-source-locations--introduction}), which may complicate
debugging.  Ad-hoc solutions involving an s-expression frontend keeping track
of source locations, then to be threaded through macros and transforms up to
the final generated code, seem perfectly feasible, with handles coming in
handy; on the other hand it is desirable to keep extension definitions as
uncluttered as possible, ideally by leaving the ``current'' location
information always implicit at each stage, in a monadic fashion.  A clean
solution to this problem seems well worth investigating.

\section{Summary}
Lisp is a powerful language, and its homoiconic syntax based on s-expressions
makes it easy to extend with macros.

We adopted a form of Lisp-style s-expressions
as a data structure to represent user syntax, but we keep it distinct from
expressions: our macros map s-expressions into expression objects; then, going
beyond Lisp, expression objects can be manipulated by user-specified
transform procedures, until all syntactic extensions are ``transformed away''
and only \EPSILONZERO forms remain.

We have shown in detail how \EPSILONONE, a low-level \EPSILON personality
useful as a basis to build other extensions, is bootstrapped from \EPSILONZERO
temporarily leaning on Guile.  Our bootstrapping code also constitutes a
complete definition of the macro and transform systems.

We closed by showing some interesting language extensions in \EPSILONONE, as
representative examples of our syntactic abstraction facilities.

\chapter{A parallel BiBOP garbage collector}
\label{gc-chapter}
\label{parallel-chapter}
\label{parallel-features-chapter}
\label{parallelism-chapter}

When a high-level program requiring garbage collection runs in parallel on a
multi-core machine, the memory subsystem easily becomes the bottleneck.  For
this reason we implemented a parallel collector for \EPSILON, actually starting
back when the current incarnation of the language was still taking shape,
testing it on a toy Lisp implementation we originally wrote as a teaching aid.

The collector's performance profile is meant to best match a mostly-functional
personality.  It is relatively easy to interface to C systems, and by design is not
limited to \EPSILON.

We call our system ``\CODE{epsilongc}''.  As for the language name, the initial
``\CODE{e}'' is always written lowercase.

\minitoc
Our parallel collector is non-moving, based on a variant of the BiBOP
organization. Building on the experience of Boehm's work and revisiting some
older ideas in the light of current hardware performance trends, we propose a
design leading to compact data representation and some measurable speedups,
particularly in the context of functional programs.


This effort results in a clean architecture based on just a few data
structures, which lends itself to experimentation with alternative techniques.




\section{Motivation}
In recent years improvements in processor performance have been due more and
more to increased parallelism, while the trend of rising processor clock
frequency has dramatically slowed down. In contrast to what happens with
instruction-level parallelism, the task parallelism offered by modern
multi-cores must be explicitly exploited by the software, if {\em any} speedup
is to be obtained \cite{the-free-lunch-is-over}.

As multi-core architectures support a shared-memory
model\footnote{The architecture shown here does not generalize so well to NUMA
machines, more suitable as they are to a message-passing style where each
task runs in its own addressing space; message-passing is also
interesting, as the same interfaces could scale up to parallel computation
{\em over the network}.\\Moving away from thread parallelism to pure {\em process}
parallelism (one heap per process) would essentially eliminate the problem of
parallel non-distributed garbage collection, but such a revolution appears unlikely.
Other organizations like NUMA machines composed by SMP nodes, or 
machines where the NUMA effect is pronounced only between ``distant'' nodes,
look more realistic and are already being adopted by some current high-class machines
\cite{memory--drepper}. For such a hybrid SMP-in-NUMA model the techniques shown
here apply at the SMP level, just in the same way as they would apply to each
single machine in a cluster of SMPs.}
the
techniques presented here extend from the now ubiquitous desktop multi-core
machines to the older multi-socket SMPs, and to most recent medium-size
parallel machines containing several multi-core CPU dies.

The architecture we illustrate here is also suitable for sequential machine,
but the need for such a software is particularly stringent in a parallel context.
In a sense the rise of the number of CPUs {\em amplifies} the memory wall problem:
the memory bandwidth is a limited hardware resource which all cores have to
share, and raising the parallelism degree inevitably tightens up the
bottleneck, even without any synchronization.



\subsection{Boehm's garbage collector}
Boehm's garbage collector \cite{conservative--boehm,fast-multiprocessor-boehm} is the natural point of comparison for our work
because of several design similarities, including the idea of
(partially) conservative pointer finding, and the use of Unix signals to
interrupt mutators\footnote{Notwithstanding the outdated information at
{\small{\url{http://www.hpl.hp.com/}}} {\small{\url{personal/Hans_Boehm/gc/gcdescr.html}}}
Boehm's collector now also employs signals to stop
  mutators on all major platforms except Windows, where Unix signals are not
  supported but an analogous mechanism exists for suspending a thread from another
  thread.
  \cite{how-signals-are-really-used--boehm} mentions GNU/Linux, Solaris, Irix and Tru64.
  The Windows implementation in {\tt win32\_{\allowbreak}threads.c} uses signal-like
  primitives like {\tt SuspendThread()}.}.  For this reason it may be worth to
quickly highlight the main objectives we have set
forth for our implementation, in order to better explain the need for our effort
and to illustrate key similarities and differences.
Our objectives also more or less dictate several design and implementation
choices which we prefer to make explicit from the beginning.

First of all, C is clearly the language providing the best control on
performance for such a low level implementation where each memory access
matters.
A slightly less obvious choice is determined by the typical usage of
{\em parallel} systems, tending to concentrate on bulk processing rather than
interactive applications: for this reason we consider {\em bandwidth}, and not
latency, to be a priority; this choice excludes most incremental schemes
 and favors a {\em stop-the-world} model where
many threads can mutate in parallel or collect in parallel, but without any
time overlap between the two phases --- all of which is similar to Boehm's solution.
Since we are interested in the allocation pattern of functional programs,
consisting in a large number of small objects, it is paramount to make a good use of the limited
space in the primary and secondary caches (henceforth simply \TDEF{L1} and
\TDEF{L2}), by tightly packing objects together: we want to avoid padding
space between heap objects and not to force alignment constraints not specified
by the user.
Anyway, even if functional programs are our first concern, we would like
our collector to be also useful for (human-written) C programs, which
encourages us to adopt a
non-moving strategy like {\em mark-sweep} and to  avoid safe-points and 
use {\em conservative pointer finding {\bf for roots}}; on the other hand there
is no reason why other heap objects should not be traced exactly. The collector
API should be usable by humans, but not necessarily similar to {\tt malloc()} 
--- an important difference with respect to Boehm's collector.

\subsection{High-level design}
Most of our implementation ideas rely on a variant of the 
classic BiBOP strategy \cite{bibop-in-maclisp, dont-stop-the-bibop}
which, despite its simplicity, has been exploited surprisingly little: the only
discussion of an actually implemented similar solution that we have found is in a
little-cited 1993 paper by E. Ulrich Kriegel, \cite{card-based-bibop}.

In the different context of today, we propose BiBOP as a good match for
modern multiprocessor architectures.

We cannot claim novelty for most ideas, some of which are variations of very
old implementation techniques, as it is understandable after fifty
years\footnote{We remark one last time how McCarthy \textit{also} introduced
  garbage collection, in his wonderful \cite{lisp-mccarthy}.} of
research.

Nonetheless, we feel that our organization may have at least some aesthetic
value, in terms of its data structures and C interface.



Our main idea is that \textit{the BiBOP scheme is appropriate for reducing
  memory pressure on machines with modern memory hierarchies}; we describe this
point by introducing the concept of \TDEF{data density} which we show to be at
least one reason for the good performance of our implementation.


\subsection{The functional hypothesis}
Functional programs tend to allocate many small objects, the great majority of
which have one of only a few possible ``shapes''; in practice, most heap objects
will be conses, nodes of balanced binary trees, or more generally
components of inductive data structures with fixed size and
layout, often containing some constant attributes which must be frequently
inspected at runtime, such as the tags of our sum types (\SECTION\ref{sum-types}).
Depending on the programming style closures might also be allocated in
quantity; allocating other objects tends to be statistically much less
frequent, hence less critical for performance.
\\
We define the above set of assumptions as the \TDEF{functional hypothesis}: our
system is designed to run most efficiently when such hypothesis is verified,
yet \EPSILONGC  can and does work with any language, and may even be directly
employed for user-written C programs.

\section{The user view: kinds, sources and pumps}
At a very high level, any automatic memory management system serves to provide
\textit{an illusion of infinity}: an unlimited stream of objects created on
demand, each satisfying some specified requirements such as size and alignment.

Objects which are not useful any longer can be simply ignored: there is no need,
in general, for a user interface to the recycling system itself as the whole
point of garbage collection is to make object reusing {\em invisible} to the user,
who just keeps creating more objects as if the memory were unlimited.

The user-level API is built upon three main data structures: the {\em kind},
each instance of which defines one particular set of requirements for a group
of homogeneous objects, the {\em source}, which arranges for the creation of
objects of one specified kind, and the {\em pump}, providing a single
mutator thread with objects from a given source on demand, one object
at a time.

\subsection{Kinds}
We define a \TDEF{kind} as the specific representation of a group of
homogeneous heap objects. Each kind is characterized by a given
\TDEF{object size}, \TDEF{object alignment}, a \TDEF{tracer} function specifying
how to mark the pointers contained in an object given its address,
and particular \TDEF{metadata} values:
metadata include\footnote{Even if they currently comprise only tag and pointer,
  more metadata can be easily added in the future if the need arises.} an 
integer \TDEF{tag} and a \TDEF{pointer}, sharing the same values for all the objects
of the same kind. Given a pointer to a heap object, mutators are permitted to
inspect, but not modify, its metadata.

In general a kind should not be confused with a {\em type}: rather than a type
it identifies one {\em case} among the potentially many variants which, together,
make up a type. For example a {\em cons} kind could be defined, but
{\bf not}
a {\em list} kind, which would also comprise the empty list case, having of course
a different representation --- by the way, reasonably unboxed, as in \SECTION\ref{sum-types}.



The tag could be usefully employed in a dynamically-typed language such as Lisp,
for example in order to test at runtime whether a given object is, effectively, a cons. In
a statically-typed language like ML the tag can encode the constructor
of tagged-sum objects.
The pointer metadatum can be useful to refer any reflection-related data not
fitting in a single integer.

All the needed kinds are typically defined at initialization time, as global
structures shared by all mutator threads.

\subsection{Sources}
From the user's point of view a \TDEF{source} can be seen as a global inexhaustible
source of objects of a given kind.
In the typical case
the user will define exactly one source per kind at initialization
time, as an object shared by all mutator threads; after initialization mutator
threads will only refer sources to create their pumps.

\subsection{Pumps}
A \TDEF{pump} is a {\em thread-local} data structure 
implementing but one user-level functionality, the creation of an object.

Each mutator thread will create its own pumps referring the shared, global
sources, then use its pumps to obtain new objects. Pumps have to be
explicitly destroyed at thread exit time.

\subsection{Kindless objects}
The strategy outlined above --- creating objects of some kind which has been
defined in advance --- suffices the great majority of
the objects ever created at runtime: for example in Lisp most heap-allocated
objects will be (s-)conses, and Prolog heaps will mostly be made of terms.  We call
\TDEF{kinded} all the objects created as shown above.

Some other heap-allocated objects do not fit so well in the picture as it
is not possible to foresee in advance their exact size: arrays and character strings come to
mind\footnote{Other slightly less obvious cases are {\em procedure activation
records}, which some runtimes of Scheme, Prolog and SML allocate
    on the heap; if the language supports dynamic code generation even {\em code blocks}
(either machine language or bytecode) might be heap-allocated and garbage
collected.
}.
We provide more ``traditional'' allocation primitives for such \TDEF{kindless} objects.

Notice how the kindless object API (see Figure~\ref{api-figure}) provides for less control:
vector elements can be either {\em all} potential pointers, or they can be
guaranteed by the user to include {\em no} pointers. There is not much control on metadata
either: all objects share the same\footnote{The actual values can be specified
  at initialization time, but nonetheless they must be the same for all
  kindless objects; it is typically reasonable to choose some values not
  used for kinds, so that at least kindless objects can be distinguished
  from kinded ones.
} tag and metadatum pointer; a user
requiring more expressive metadata has to explicitly encode them in the payload.
For reasons of general applicability and performance, \textit{we assume \textbf{not} to have boxedness tags
  available (\SECTION\ref{boxedness-tags}).}

\subsection{Miscellaneous user functionalities:}

Other primitives are provided to initialize and finalize the collector, to
register and unregister roots, to notify the memory system about new threads or
exited threads, to explicitly force a collection, and to temporarily disable
collections and re-enable them. 

As all of this is canonical and not particularly interesting, we will not
further 
pursue such details.

\begin{figure*}[htbp]
\small{
\begin{Verbatim}
/* A tracer is a pointer to a function taking a pointer as its parameter and
  returning nothing: */
typedef void (*epsilongc_tracer_t)(epsilongc_word_t);

/* Create a kind: */
epsilongc_kind_t  epsilongc_make_kind(const size_t object_size_in_words,
                                      const epsilongc_unsigned_integer_t
                                      pointers_per_object_in_the_worst_case,
                                      const size_t object_alignment_in_words,
                                      const epsilongc_metadatum_tag_t tag,
                                      const epsilongc_metadatum_pointer_t pointer,
                                      const epsilongc_tracer_t tracer);

/* Create a source from a kind: */
epsilongc_source_t epsilongc_make_source(epsilongc_kind_t k);

/* Initialize a (thread-local) pump from a source: */
void epsilongc_initialize_pump(epsilongc_pump_t pump,
                               epsilongc_source_t source);

/* Finalize a pump before exiting the thread: */
void epsilongc_finalize_pump(epsilongc_pump_t pump);

/* Allocate a kinded object from a thread-local pump: */
epsilongc_word_t epsilongc_allocate_from(epsilongc_pump_t pump);

/* Lookup metadata: */
epsilongc_tag_t epsilongc_object_to_tag(const epsilongc_word_t object);

epsilongc_metadatum_pointer_t
epsilongc_object_to_metadatum_pointer(const epsilongc_word_t object);

epsilongc_integer_t epsilongc_object_to_size_in_words(const epsilongc_word_t object);

/* Allocate kindless objects: */
epsilongc_word_t epsilongc_allocate_words_conservative(const epsilongc_integer_t size_in_words);
epsilongc_word_t epsilongc_allocate_words_leaf(const epsilongc_integer_t size_in_words);
epsilongc_word_t epsilongc_allocate_bytes_conservative(const epsilongc_integer_t size_in_bytes);
epsilongc_word_t epsilongc_allocate_bytes_leaf(const epsilongc_integer_t size_in_bytes);
\end{Verbatim}
}
\caption{\label{api-figure} \EPSILONGC's essential user-level API.}
The source above is directly copied from header files, with only GCC function
attributes (to force inlining and such) removed and comments eliminated.
Despite looking unconventional the interface is not particularly complex, and in fact is
conceived so that performance-critical operations such as
{\tt epsilongc\_{\allowbreak}allocate\_{\allowbreak}from()} and metadata lookup functions can be easily
re-implemented in assembly, to be generated by a compiler as intrinsics.
\end{figure*}

\section{Implementation}

Despite their visual intuitiveness, the data structures above were designed
primarily for efficiency, and the actual role of each
structure is not apparent to the user: in particular the central data structure,
the \TDEF{page}, is completely hidden.

\label{gc-alignment}
Pointers are essential in the implementation of any language requiring dynamic
memory allocation, and in order to make pointers easier to recognize at runtime
in the absence of boxedness tags
and their dereference more efficient\footnote{On many RISC architectures
  pointers to misaligned objects may not be just a performance concern: some processor
  families such as {\em MIPS} and {\em Sparc} simply raise an exception in response to any
  attempt to dereference a non-word-aligned pointer. Others, such as the {\em
    x86} family, execute the misaligned dereference, but imposing a heavy
  execution time penalty.

  We prefer to simply forbid such pointers for all architectures, which may improve
  performance and helps to avoid the misidentification of many false
  pointers.

  We also assume
  convertibility from integer to pointer and vice versa without loss of
  information: even if not mandated by the C Standard (the type {\tt intptr\_{\allowbreak}t}
  itself is optional in \cite{c99}) such an assumption is in practice true on all architectures.}, we
restrict the set of heap pointers considered valid to {\em word-aligned}
pointers; one word is also the minimum size of a heap object 
representable without space overhead, and all the integers internally used in
the implementation are of type {\tt intptr\_{\allowbreak}t}, so that the size of all
memory structures remains a multiple of a word size.

The description below will proceed {\em from the bottom up}: since many data
structures and operations are usable with different collection strategies
requiring little or no modifications, we illustrate the various
possible operations before our way of combining them, in the spirit of
separating policy from mechanism.

\subsection{Kinded objects}
We represent each kinded object as a buffer of words, with {\em no header}; the
rationale of this choice is discussed in more depth in \SECTION\ref{data-density},
but the main idea is simply to have long packed arrays of objects in memory,
without any padding unless absolutely necessary\footnote{Padding {\em must} to
  introduced sometimes in order to respect the alignment constraints stated
  by the user: for example the user might require a three-word structure to be
  aligned to two or four words; in such cases there is no way to avoid wasting
  some space for each object.}. 

\subsection{BiBOP pages}
All kinded objects are allocated from data structures called
\TDEF{pages}\footnote{There is no {\em a priori} relation between BiBOP pages
  and operating system pages, whose sizes may well be different: BiBOP pages
  will typically be at least a few times larger than operating system
  pages, but still smaller than the L2 cache. In the following we always use the term
  {\em page} to mean ``BiBOP page''.}, similar to Kriegel's ``STSS
cards'' \cite{card-based-bibop}:
whenever a pump returns a pointer to a new object, the resulting address will refer a word
contained in a page.

Each page can only contains objects of {\em one} kind.
For each kind any number of pages, including zero, may exist
at any given time.

All pages have the same size, which must be a power of two; the page size is
also equal to
its
{\em alignment}: the rightmost $log_2 \texttt{epsilongc\_{\allowbreak}PAGE\_{\allowbreak}SIZE\_{\allowbreak}IN\_{\allowbreak}BYTES}$
bits of a page pointer are always guaranteed to be zero.
\\
A page is divided into \TDEF{page header}, \TDEF{mark array} and \TDEF{object slot
array}.

\paragraph{Page header}
The page header contains a copy of the kind metadata, which of course are valid
for all the objects in the page; the object referred by the metadatum pointer,
if any, is shared by all the pages of the same kind: only the pointer is copied.

Other information contained in the header includes kind-dependant data
such as the object size and effective size, the payload offset, and the number
of object slots in the page. All of this is computed once and for all when
a kind is created, and simply copied at page initialization time.
The address of the first dead slot (see below) is also held in the header.
\\
\\
\label{interior-pointers-are-ok}\label{masking-magic}Since
the header has offset zero within the page, given a pointer to any kinded
object, \textit{even interior}, the address of its page header can be trivially obtained
by {\em bitwise and}ing the pointer and the \TDEF{page mask}, defined as 
the {\em bitwise negation} of $\texttt{epsilongc\_{\allowbreak}PAGE\_{\allowbreak}SIZE\_{\allowbreak}IN\_{\allowbreak}BYTES} - 1$.
This allows mutators to access metadata at runtime with an overhead of two to four
assembly instructions, when needed; on the other hand the negligible space
overhead of storing metadata once per page makes this solution completely
acceptable even for languages which don't make use of them.


\paragraph{Mark array}

The mark array is placed right after the header, with no padding: since the
header size is a multiple of the word size, the mark array is guaranteed
to always begin at a word boundary.

The mark array stores liveness information for each object\footnote{It is
  interesting to compare this with Boehm's collector, which stores one element
  per object {\em word}, thus making tracing simpler. We have chosen to slightly
  complicate the mapping from mark array elements to objects instead, to speed
  up the critical operation of page sweeping, and in general trading more
  computation for fewer memory accesses.}
 in the page: since
we currently need only one bit per object, the array could conceptually always be
implemented as a bit vector.

As marking is parallel, mark arrays are concurrently updated by several
threads, which requires some atomic memory accesses (see \SECTION\ref{marking}). On many
machines byte stores are always atomic, and
even when suitable atomic instructions for bitwise operations are provided
working with a {\em byte vector} may be more
efficient\footnote{\cite{fast-multiprocessor-boehm}, written in 2000, compares
  the solutions on several architectures, finding that the optimal solution
  depends on the machine. According to our recent tests, the best strategy
  between bit arrays and byte arrays remains machine-dependent.}. On (hypothetical)
architectures where the compiler did not support the required intrinsics, and
where an atomic byte store were not provided, one could use a {\em word vector}.
The implementation allows the user to choose at configuration time among bit, byte or
word, bit being the default.


Alternatively, it is possible to enable \TDEF{out-of-page mark arrays} at
configuration time, so that mark arrays are stored as separate {\tt malloc()}ed
buffers; in this case the mark array area
in a page degenerates to a single pointer, and accessing the mark array from a
page requires one indirection.
\label{cache-access-patterns-with-aligned-pages}
Our original rationale for
implementing this strategy was to avoid some cache conflict misses due to the fact that mark
arrays share the same alignment on all pages.
As benchmarks showed that this is not a problem in practice with
modern multi-way set associative caches, this strategy has not been
pursued further by separating headers from slot arrays.

\paragraph{Object slot array}
The \TDEF{object slot array} begins after the end of the mark array, at the first
word with the required alignment. Object slots contain the payload of each
page. At any given time each object slot may be either \TDEF{used} or
\TDEF{unused}: when used it contains an object payload; when unused, its first
word contains a pointer to the next unused object in the same page, or {\tt
NULL} in the case of the last unused slot.

For each page unused slots make up an independent free-list where elements are
always ordered by address.
\\\\
In order to avoid mistaking free list pointers in unused objects for pointers in
used objects during conservative pointer finding, free list pointers are stored in
\TDEF{concealed} form by default\footnote{Free list pointer concealing can be disabled at configuration time.}.

Concealing consists in applying some function
$c : \SET{A} \rightarrow \SET{A}^{\complement}$
to a free list pointer; it is important for $c$ to be bijective, as
concealing and then \TDEF{unconcealing} (i.e. applying $c^{-1}$ to) a pointer
must preserve information.

$c$ is trivially implemented as a C macro computing the successor function in
{\tt unsigned} (wrap-around) arithmetic: since
its domain consists of word-aligned pointers, the elements of its image are
guaranteed to be misaligned, hence they cannot be
mistaken for pointers. The cost of applying either $c$ or $c^{-1}$ is one
assembly instruction requiring no memory accesses\footnote{Assuming
  instructions such as either {\tt inc}/{\tt dec} or {\tt add}/{\tt sub}
  with a small {\em immediate} parameter; again, all modern machines satisfy
  this condition.}.

Depending on the kind, some unused space may be present between the
end of the mark array and the beginning of the slot array, and at the end of the
page; in either case these two padding spaces are strictly smaller than the
object effective size.

\paragraph{The global page table}
The global \TDEF{page table} serves to recognize which part of the address space
is being used for the garbage-collected heap; such information is important
for avoiding dereferencing false pointers when doing conservative pointer
finding.
\\
Moreover, the collector needs to be able to recognize whether a heap
pointer refers a kinded object in a page slot array or a large object --- no
particular provision is needed for kindless small objects, but we defer the
justification of this fact to \SECTION\ref{implicit-kinds}. Since we
\label{interior-pointers-are-ok-for-large-objects}
support interior pointers for large objects, it must also be possible to
efficiently map an arbitrary (word-aligned) interior pointer to an initial
pointer.

We call \TDEF{candidate pointer} a word which is suspected to be a
(possibly interior) object pointer at marking time, and \TDEF{candidate page}
the address of the hypothetical page which would contain the object referred by
a candidate pointer. Of course candidate pages have alignment $log_2 \texttt{epsilongc\_{\allowbreak}PAGE\_{\allowbreak}SIZE\_{\allowbreak}IN\_{\allowbreak}BYTES}$.
\\\\
At an abstract level, the table implements a function $f$ mapping a
non-{\tt NULL} candidate page $p$
to an element $s$ of the disjoint sum
$$Sort \triangleq \{kinded\} + \{nonheap\} + LargeObjects$$


If $f : p \mapsto kinded$ then the candidate page $p$ is actually a page;
if instead $f : p \mapsto nonheap$ then $p$ is a pointer referring some object
out of the garbage-collected heap, or a false pointer.
Otherwise $f : p \mapsto l$, where $l$ is the address of the beginning of the
large object containing the word pointed by $p$.
\\\\
Given a value for $p$ stored as a key, a simple encoding allows us to represent
any element of $Sort$ in a single word:
{\tt NULL} represents $nonheap$, $s = p$ stands for $kinded$, and any
other value of $s$ is interpreted as a large object pointer.

The table is implemented as a simple resizable chained hash where the first element of each
bucket is stored within the bucket pointer array itself\footnote{This optimization is the reason why we
  don't include {\tt NULL} in the domain of $f$: we use the value {\tt NULL} as
  a key in a hash table element out of the bucket to mean that the element
  is currently unused.}, as first described in
\cite{my-hash-optimization}; the hash function is modulo.

One essential optimization at mark time consists in {\em not} consulting the
page at all, which would be comparatively expensive, for {\tt NULL} or
misaligned candidate pointers.

It is interesting to notice how all {\em updates} to the global page table occur
at mutation time, when creating or destroying\footnote{See
  \SECTION\ref{page-destruction} for the reason why pages must be
  destroyed at {\em mutation} rather than collection time.} pages and large objects;
unfortunately such updates require critical sections which, short as they are,
may nonetheless limit scalability.
By contrast at collection time the table is only {\em read}, which allows us to
completely avoid critical sections for table access during that stage.

\subsubsection{Page creation}
\TDEF{Creating} a page involves allocating space from the C heap,
filling the header fields, initializing the mark and
object slot arrays and registering the page in global structures.


Because of the alignment requirements we currently allocate pages with
{\tt posix\_{\allowbreak}memalign()}\footnote{An interesting alternative to explore would involve
  using {\tt mmap()} to allocate a group of pages; for some (non-GNU) implementations
  of {\tt posix\_{\allowbreak}memalign()}, the {\tt mmap()} solution might incur a
  significantly lower space overhead, at the cost of always involving the
  kernel in page creation. Using {\tt mmap()} could in fact make deallocation more portable, as {\tt free()}ing
  buffers allocated with {\tt posix\_{\allowbreak}memalign()} is only permitted on GNU
  systems, as far as we know (\cite{gnu-c-library}, ``Allocating Aligned Memory Blocks'', currently at
  subsection~3.2.2.7).
  
  However the {\tt mmap()} solution has some issues of its own:  
  {\tt mmap()} only guarantees {\tt sysconf(\_{\allowbreak}SC\_{\allowbreak}PAGESIZE)} alignment,
  hence pages could only be reasonably {\tt mmap}ped in large groups, with some
  space overhead at the beginning and the end.
Making {\tt epsilongc\_{\allowbreak}PAGE\_{\allowbreak}SIZE\_{\allowbreak}IN\_{\allowbreak}BYTES} equal to {\tt sysconf(\_{\allowbreak}SC\_{\allowbreak}PAGESIZE)}
  would solve the space overhead problem, but at the
  price of forcing pages to be unacceptably small.
  {\tt unmmap}ping space from the
  middle of a {\tt mmap}ped buffer is supported, but deallocation of
  single pages would still be a problem unless {\tt epsilongc\_{\allowbreak}PAGE\_{\allowbreak}SIZE\_{\allowbreak}IN\_{\allowbreak}BYTES}
  were chosen to be a multiple of {\tt sysconf(\_{\allowbreak}SC\_{\allowbreak}PAGESIZE)}. Re-{\tt mmap}ping 
  a previously {\tt unmmap}ped part of a buffer is {\em typically}
  supported, even if such behavior is not mandated by POSIX.
  In addition we would need some data structure to keep track of which pages in a
  large buffer are {\tt mmap}ped at any given time.
\\Anyway, despite all the complexity, such an idea seems worthy of some exploration.
}; as this may involve a kernel call and/or synchronization in the C library,
such operation tends to be both expensive and hard to parallelize.

Filling the header involves little more than copying some fields from the kind
data structure, which is directly referred by the source, and making the free-list
head point to the payload beginning. Nothing of this is performance-critical.

The mark array has to be zeroed at creation, with a {\tt memset()} call.
This should be relatively efficient, just involving some evictions
from L1 --- however having the mark array in L1 at page
creation time does not buy us anything, as mark arrays are only touched
during collection.
If out-of-page mark arrays are enabled then we should add a {\tt malloc()} call
to the cost.

Building the free list involves some memory traffic, as all objects need to be
touched. Unless objects have effective size larger than a cache line the
complete object slot array has to be brought into cache. Even if this phase by itself
is expensive, it
may work like a sort of prefetching: if the page is used soon, all of it
will already be loaded at least in the L2 cache.
\\
We define \TDEF{backward free list building}\footnote{The actual direction of
  free list building, from higher addresses down to lower ones or from lower
  addresses up to higher ones, has no effect on performance as long as it is {\em the opposite}
  of the allocation direction: note in particular how automatic hardware
  prefetching works in either direction on modern processors 
  (\cite{memory--drepper}, section 3.3.2, ``Single Threaded Sequential Access'').}
the strategy of building the free list starting from the {\em last} slot which
will be used for allocation. This solution has locality advantages in case of
large page size, under the assumption that a just-created page will be used
soon for allocating: if the page size is larger than the L1 data cache,
building the free list backwards makes it very likely that the memory
touched first while allocating will be already in L1; the rest of the page
will be still in L2. It is possible to choose between forward and backward free
list building at configuration time.

The final step is registering the page in the page table, which requires a
critical section on the global mutex,
plus a {\tt malloc()} call within the critical section in case of hash collision.
\\
All of this makes page creation a relatively expensive and non-scalable operation.

\subsubsection{Page sweeping}

\TDEF{Sweeping} can be performed on an individual page without need for
synchronization or kernel calls. It simply involves scanning the mark array
and, for each $i$-th element, either clearing the corresponding element if
$array[i]$ is one, or making the $i$-th object slot {\em unused} by re-adding it
to the free-list if $array[i]$ is zero. Since the mark array is examined in order (either forward or
backward, as per the free-list building direction), free list elements
are kept ordered by address in the list. All the words of dead objects other
than the first one are overwritten\footnote{Each word is overwritten with a
  configuration-dependent value impossible to mistake for a pointer: either
  the {\tt 0xdead} constant (which is easy to recognize for humans) if the collector is configured in
  debug mode, or otherwise simply {\tt 0} (which might lead to a slightly more
    efficient implementation on some architectures, possibly saving a
    {\em load immediate} instruction). Overwriting dead slots can also be
  completely disabled at configuration time.}, to 
prevent future false pointers referring the slot to
keep alive the objects which were referred by the now dead slot.

Memory access patterns in sweeping are similar to the ones in mark array
initialization and free-list construction; in particular a just-swept page will
likely remain cached at least in L2 --- and the next lines to be used will be
in L1, if backward free list building is enabled.

\subsubsection{Page refurbishing}

It is possible to re-use an empty page of some kind for objects of another
kind: such operation is called \TDEF{refurbishing}, and involves reconstructing
the header, mark array and free list.

Refurbishing has essentially the same overhead as sweeping, and the cache
effects of the two operations are also comparable: allocations from a
just-refurbished page on the same thread which performed the refurbishing is
efficient as all the page cache lines will still be in L1 and L2.

\subsubsection{Page destruction}

Destroying a page involves its deallocation and removal from the global page
table: such operations are expensive and non-scalable, involving
synchronization and possibly kernel calls.

\subsection{Sources}
From the implementation point of view a source is quite a trivial structure,
serving as repository of pages. Each source simply contains two lists of pages,
the \TDEF{full pages list} and the \TDEF{non-full pages list}, plus a mutex
for synchronizing access to such lists.

\subsection{Pumps}
Pumps are performance-critical structures whose purpose at the implementation
level consists in caching frequently accessed data about the objects to
allocate. Such criticality is evident from the API in Figure~\ref{api-figure},
showing how existing pump data structures are {\em initialized} rather than
dynamically allocated, in an effort to save a pointer indirection at runtime:
pumps are conceived to be declared in programs as {\tt \_{\allowbreak}\_{\allowbreak}thread} variables of
type {\tt struct epsilongc\_{\allowbreak}pump}, rather than as pointers.

At any given moment a pump may conceptually ``contain'' a page reserved to the
allocating thread, or no page; of course at the implementation level such an inclusion is
represented with a page pointer field. Its other relevant field is the current
head of the page free list, again kept in the pump rather than in the contained page in
order to avoid a pointer indirection at allocation time: in fact the free-list
head field of the page is, counter-intuitively, {\em not} updated at each
allocation. The free-list head field of the {\em pump} is set to {\tt NULL} when the
pump contains no page.

\subsubsection{The allocation function}
Despite the allocation being the only user-level operation on a pump, such a
functionality is very performance-critical. Allocating from a given pump
involves unconcealing the free-list field into a temporary variable, if
non-{\tt NULL} dereferencing it, setting the free-list head to the just loaded
value and finally returning the temporary. This shorter and far more common
execution path is carefully optimized and costs about {\em ten assembly
instructions}, with no taken\footnote{It is worth to provide GCC with an
optimization hint via \tt{\_{\allowbreak}\_{\allowbreak}builtin\_{\allowbreak}expect}().} jumps; the other execution
path is taken in case of \TDEF{page change} time, when a page is filled and another
one must be acquired from the relevant pool, or at the first allocation for a
pump with no page: it involves synchronization with the pool mutex and access
to its lists. If no non-full pages are available, a page is taken from a
\TDEF{global empty pages list} (at the cost of one further synchronization) and
refurbished if needed. If no empty pages are available, an heuristic is
employed to decide whether to create a new page, or to trigger a
collection. Page change is also the taken as the occasion for destroying empty
pages, if an heuristic says that there are more than enough: the rationale here
is to avoid destroying pages too frequently, since they might be needed again
and both creation and destruction are expensive.

Repeatedly allocating from a page which was recently swept by the same thread
and which contains many unused slots should be cache-friendly: sweeping works
like a prefetch phase to load the page payload into the L1 or L2 cache, and
even without on-demand sweep the hardware automatic prefetch may be activated
when there is much free space on the page, as consecutive addresses are
generated. Using pumps automatically guarantees that a page is only used for
allocation by one CPU at a time, which avoids cache ping-pong.

\subsection{Kindless and large objects}
\label{implicit-kinds}
The data structures and primitives shown above provide no hints about the
implementation of kindless objects, yet the idea is quite simple. A set of
\TDEF{implicit kinds}, \TDEF{sources} and per-thread \TDEF{pumps}\footnote{Implicit pumps
are created at thread registration and destroyed at thread un-registration
time.}, of user-definable sizes, are automatically defined: in this sense most
kindless objects are just kinded objects ``in disguise'', only slightly less
efficient because of the need for mapping an object size to a pump at runtime,
and because of the possibility of internal fragmentation: not all possible
sizes will be realistically provided, so the allocation of an object of a given
size might be satisfied by using a larger buffer. For each size two kinds are
provided, one with a fully conservative tracer, and another one with a leaf
tracer (called ``atomic'' in the jargon of Boehm's collector).

It is easy to see how the solution above is not completely general, as it
cannot satisfy allocation requests for objects larger than a page or even
just larger than the maximum implicit kind size which has been fixed by the
user. A different mechanism is provided for \TDEF{large objects}, which are simply
allocated one by one with {\tt malloc()} and destroyed with {\tt free()}. Their
implementation is simple-minded and quite inefficient in both space and time,
which given the functional hypothesis should hopefully not be serious. Of
course the user-level API completely hides the difference between
implicitly-kinded and large objects.

\subsection{Garbage collection}

\label{page-destruction}
A collection is initiated by one mutator, which stops all the other mutators with a
signal. This choice has the advantage of allowing a simple user API, but
significantly complicates the collector implementation: any function not
reentrant with respect to signals, notably including {\tt malloc()} and {\tt
free()}, can not be used at collection time: this is the reason why empty pages
have to be destroyed at {\em mutation} rather than collection time.

The collection phase may internally proceed in two different orders according to a
configuration option: if \TDEF{on-demand sweeping} is enabled, as per the
default, the three 
sub-phases are {\em non-deferred sweeping, root marking and marking}, otherwise
they are {\em root marking, marking and sweeping}. In any case it is central to
maintain the invariant according to which a complete heap marking is
followed by a complete sweeping, before the next marking can begin.

On-demand sweeping consists in sweeping a page during mutation at page change
time, {\em just before allocation from it begins}: such a choice is more
cache-friendly than the traditional \TDEF{stop-the-world sweep}, but it may leave some
pages still to be swept when a 
collection begins: the non-deferred sweeping sub-phase, typically very short,
serves to sweep such remaining pages. 
Non-deferred sweeping and stop-the-world sweeping share the exact same implementation.

After collection all mutators are restarted with a second signal.

\paragraph{Root marking}
Root marking is very simple, and currently {\em sequential}. Just like Boehm's collector
in most of its configurations, it uses {\tt setjmp()} for finding register
roots in a portable way.

\paragraph{Marking}
\label{marking}
Given the atomicity of mark array stores parallel marking can easily proceed in
parallel without synchronization, if we accept
the possibility of some (statistically unlikely) duplicate work; our
implementation is quite canonical and closely follows Boehm's one
\cite{mostly-parallel--boehm}, with load balancing in the style of Taura and
Yonezawa \cite{sgc}. It should be noted that the BiBOP organization does not
affect marking in any significant way.

\paragraph{Sweeping}
Parallel sweeping is even simpler, with pages dictating the natural granularity
for the operation of each thread: pages are simply taken from a list, swept and
put back into another list.

\subsection{Synchronization}
One interesting and possibly original detail involves our locking style: in
order to prevent a collection from starting during a critical section at mutation
time, a global {\em read-write lock} is locked for reading at mutation, before
acquiring the relevant mutex: the collection triggering function, before
sending the signal, locks the same read-write lock {\em for writing}.

  

\subsection{Data density}
\label{data-density}

The system internally measures object size and alignment in {\em machine
words}, and one word is the minimum size of a kinded object which can be
represented without padding, in absence of alignment constraints specified by
the user; with an alignment greater than one word, it becomes necessary in some
cases to add some padding space right after the object payload; we call
the \TDEF{effective size} of an object the sum of its size and its alignment padding.

\label{effective-size}
Given a kind $k$ of objects with alignment $a_k$ and size
$s_k$, we define the effective size $e_k$ needed to store each object, and
the corresponding \TDEF{data density} $d_k$, the number of objects
representable per word, as:
$$e_k \triangleq a_k \cdot \left\lceil\frac{s_k}{a_k}\right\rceil \qquad\qquad d_k \triangleq \frac{1}{e_k}$$

The definitions above intentionally disregard all the sources of memory
overhead out of object slot arrays, including mark arrays and
all garbage collector data structures, the rationale being that density is not
meant as a measure of memory occupation, but rather as an index of
{\em the number of objects fitting in a cache line}: 
as mark arrays and other collector data
structures are mostly accessed
{at different times} 
from the objects {\em per se} and reside in different cache lines, optimizing
data density maximizes the amount of useful information stored in the
physically limited cache space at mutation time.

Data density may be reasonably defined in the same way independently from the
garbage collecting strategy, and indeed it is of some interest to compare
the values of $d_k$ in different memory management systems for two kinds
which are widely employed in functional programs, the {\em cons} (two words) and the 
non-empty {\em node} of an Red-Black binary tree of one given color\footnote{The
  example trivially generalizes to AVL trees, the idea being simply that the
  balance-related information can usefully be represented as {\em meta-}data
  rather than data.}
(three words: {\em left},
{\em datum} 
and {\em right}). Neither kind has
alignment requirements, hence $a_{cons}=a_{node}=1$.

Several systems such as the {\em GNU libc} {\tt malloc()} facility
\cite{gnu-c-library}, all the other allocators derived from Doug Lea's
{\tt malloc()} and --- even more interestingly --- Boehm's collector
\cite{conservative--boehm}, allocate all buffers at double-word-aligned
addresses and may also add some internal status information {\em to each buffer};
metadata, when needed, must be
represented as part of {\em each} object, adding to $s_k$. Instead many other
systems, including just for example OCaml, do not 
force any alignment but always add one header word per
object\footnote{Some systems add even more than one header word per
  object. Sun's JDK, MMTk \cite{mmtk} and Microsoft's CLR, for example,
  use two words.},
sufficient to include a short tag, which again we consider part of $s_k$.

If metadata are accessed at runtime, as it is the case with dynamically-typed
languages, with Boehm's collector we have $d_{cons}=d_{node}=\frac{1}{4}$.
When metadata are not needed Boehm has optimal density in the cons
case with $d_{cons}=\frac{1}{2}$, but again $d_{node}=\frac{1}{4}$.
In OCaml, with or without metadata, $d_{cons}=\frac{1}{3}$ and
$d_{node}=\frac{1}{4}$.

Independently of the need for metadata at runtime our model allows us to reach
optimal density for both kinds, with $d_{cons}=\frac{1}{2}$ and $d_{node}=\frac{1}{3}$.
\\
\\
The data density of a particular representation seems likely
to play a role in the {\em overall} efficiency of the system, even
ignoring the cost of allocation and collection and considering only object
accesses; anyway further empirical evidence will be needed to confirm this
supposition for real world programs.


\subsection{Closures}
\label{closures}
\label{closure}
Functional programs written in certain styles\footnote{And in particular
when using simple compilers or interpreters: higher-order code can be
simplified with flow analysis.} or CPS-transformed (\SECTION\ref{cps-transform})
create a considerable number of short-lived closures at runtime.
At a first look such a scenario does not seem to
respect the functional hypothesis, as in principle closures can have many
different shapes, depending on the number of non-locals captured in the
environment, and on the fact that each non-local can be a pointer or a
non-pointer.
\\
Even if allocating all closures as kindless objects would work, the overhead of
such a simple-minded solution is in fact easy to avoid.

First of all it should be observed that the great majority of functions
need either zero or one variable in their non-local environment; it may be
worth to add specific kinds for such common cases, and possibly also
for the most performance-critical functions with larger non-local
environments, when it is possible to recognize them with compile-time
heuristics or after profiling.
\\
The number of needed kinds can be reduced by establishing a convention for
ordering non-locals in their environment arrays, according to whether they are
pointers or not: either first all pointers then all non-pointers, or vice-versa.

The idea of \TDEF{normalizing the representation} is a sort of pattern in the
BiBOP scheme, generalizable to many other cases when using statically typed
languages or \EPSILON personalities: there is no reason why two cases of different concrete types,
possibly completely unconnected at a semantic label but with the same 
effective size and number of potential pointer fields, cannot be represented
in such a way to share the same kind.

\subsection{Lazy and object-oriented personalities}
\label{lazy}
\label{lazy-languages}
Lazy languages require a slightly more sophisticated data representation than
call-by-value languages, as in a realistic implementation it must be possible to
destructively update a still-unevaluated thunk, and replace it with the result
at the end of its computation.
\\
Unsurprisingly, \EPSILONGC\ does not provide any support for changing the kind
of an existing object while maintaining its identity; that could be possible
{\em at collection time} in a moving scheme, but not with
mark-sweep\footnote{In a moving BiBOP collector otherwise similar to
\EPSILONGC\ it might be reasonable to split each kind into a \TDEF{evaluated}
kind, plus a \TDEF{thunk-or-evaluated} one: all the alive evaluated objects of a
thunk-or-evaluated kind would be \TDEF{re-kinded} at collection time. This idea
does not look particularly hard to implement, but keeping the collector both
efficient and language-agnostic might be challenging. Moving-time hooks
definable by the user would solve the problem, at some cost; the overhead could
be reduced by allowing to re-compile the hooks {\em as part of} the collector,
to be called as inline functions, like described for example in \cite{gc-interface-in-evm}.}.

Any standard solution already employed by the collectors for lazy languages
such as Haskell can be adopted: unfortunately some of the cleanness of the
BiBOP model is lost in this case, as data
needs to be tagged with at least a boolean (two in a concurrent
environment: objects may be \TDEF{thunks}, \TDEF{in flux} or \TDEF{ready})
recording the evaluation state of an object; any unused bit sequence in the
payload or even the mark array entry of the object can do the job.

Accessing possibly still-to-be-evaluated objects will often require a
conditional at runtime, just like in conventional implementations of lazy
languages; after an object is known to be ready, BiBOP metadata can be
accessed just as for eager languages.

Such a solution also necessarily requires some form of synchronization if
the mutator threads are more than one: of course it is always possible to add a
synchronization word in the payload, if needed.
\\
From this point of view the situation is not different for ``managed''
languages such as Java, where {\em each} object contains a header word reserved
for that purpose; yet we believe that not forcing such an expensive
representation for {\em all} objects is preferable in the general case; the
user can always implement some additional logic where needed, out of the
memory management system {\em per se}.

For most runtimes there is no reason for keeping {\em one mutex per object},
and even lazy languages such as Haskell normally employ {\em strictness
analysis} to statically recognize many cases in which laziness is not
needed, and more efficient traditional representations can be safely used.

The work about {\em Prolific Types} \cite{prolific}
is relevant for object-oriented languages.

\section{Status}
\label{gc-status}
\label{gc-implementation-status}
\EPSILONGC's implementation totals around 5000 lines of heavily commented C
code, quite easy to understand for being such a low-level concurrent
piece of code not sparing C macros, {\tt \#ifdef}s, GCC function attributes and
intrinsics in order to be support {\em Autoconf} options and be as general and
efficient as possible.

In preliminary micro-benchmarks
(\url{http://ageinghacker.net/publications/gc-draft.pdf}) \EPSILONGC
appears to perform better than Boehm's collector; anyway no realistic
parallel workload has been measured, and we feel that our conclusions can only be
tentative in this respect.

Our collector is \textit{not} currently used by \EPSILON: since the current
implementation of \EPSILON still relies on Guile for the s-expression frontend
(\SECTION\ref{implementation-status})
it also shares its memory management system, which has been a custom sequential
mark-sweep garbage collector up to Guile 1.8.x, then replaced with Boehm's collector
in the new series Guile 2.0.x.
We expect to integrate \EPSILONGC into \EPSILON as soon as we drop the
dependency on Guile, or when we write a compiler --- which can be done even
while keeping the interactive system linked with Guile without re-implementing
the frontend, at the cost of not having access to the frontend from compiled code.

Support for mutexes and other imperative synchronization features is trivial to
add to \EPSILON's implementation with C primitives.
\\
\\
Exploiting the BiBOP organization from \EPSILONONE does not appear particularly
problematic.  It will be interesting to test the benefits of the BiBOP strategy
in code strongly based on sum-of-products (\SECTION\ref{sum-types}),
\textit{such as in complex transforms}; the necessary changes in the representation of sum tags do not seem
very involved.

\EPSILONGC will work as it is with \EPSILON, but in the longer term we plan to
turn the current mark-sweep collector into the old generation of a generational
system, where the younger generation is copying; this will be particularly
relevant for the allocation patterns of CPS code, which tends to produce
short-lived objects at a high rate.
Implementing a collector which can be interrupted at any time by signals has
been a fun and instructive challenge, but in the future we plan to seize the
opportunity of coping with a moving collector in the young generation to
introduce safe points.
\EPSILONGC\ also needs a couple of new functionalities, the most urgent of
which are support for {\em finalization} and {\em weak pointers}.
\\
\\
The \CODE{epsilongc} sources have been committed to the main \EPSILON
repository (see
\url{https://savannah.gnu.org/bzr/?group=epsilon})\footnote{\RED{2015 note: the
GNU epsilon repository no longer uses bzr, and is now managed with git: see \SECTION\ref{new-repo}.}}, as an
independent subdirectory with its own build system.

Like the rest of the system it is free software, released under the GNU GPL
version 3 or later \cite{gpl}.

\section{Summary}
We implemented \EPSILONGC, a parallel mark-sweep conservative-pointer-finding
garbage collector for multicore machines.  Conceived for \EPSILON, it is
general enough to be used by other systems as well.

In order to exploit the memory hierarchies of modern machines, we pack data in
a dense way without prefixing every object with a header, segregating objects
by memory representation in a BiBOP organization.

This solution is most appropriate for functional personalities, in which most
objects belong to one of a small set of kinds.

\UNNUMBEREDCHAPTER{Conclusion}

We formally specified and implemented a practical \textit{extensible
programming language} based on a very small first-order imperative core, plus
powerful syntactic abstraction features: Lisp-style \textit{macros} map user
s-expression syntax into expression data structures; user-specified \textit{transforms}
permit arbitrary code-to-code transformation, with the intent of supporting
extended syntactic features which are gradually ``transformed away'' into core
forms.  This open-ended approach enables research and experimentation.

As examples of the power of our extension mechanisms, we used transforms to
implement
\textit{higher-order lexically-scoped anonymous procedures}
and \textit{first-class continuations}, on top of a
core language only supporting named global procedures.
\\
\\
The language is very expressive and permits \textit{reflection} and
\textit{self-modification}; it is possible to update the global state of the
system by global modifications, possibly up until a state where the program is
``static'', convenient for analysis and compilable with traditional techniques.

We formally developed an analysis for static programs, and proved a \textit{soundness}
property about it with respect to the dynamic semantics.  We argue that such
formal reasoning is only possible thanks to the size and simplicity of the
core language.
\\
\\
The state of the system can be saved and restored with \textit{unexec} and
\textit{exec} facilities based on marshalling.

The language supports asynchronous threads and is suitable for modern multi-core
machines.  We implemented a parallel garbage collector, not yet integrated in
the system, to limit garbage collection bottlenecks.

The implementation is not mature yet, but can be played with.  The bulk of the
system is written in itself, using C for the runtime, and Guile as a temporary
dependency for bootstrapping.
\\
\\
An official part of the GNU project, epsilon is free software, released under
the GNU GPL version 3 or later \cite{gpl}.  Its home page is
\url{http://www.gnu.org/software/epsilon}.

The source code is managed on a public bzr server\footnote{\RED{2015 note: the repository
  switched from bzr to git in late 2013: see
  \SECTION\ref{new-repo}.}}, and a public mailing
list is available for discussion: see
\url{https://savannah.gnu.org/projects/epsilon} for more information.

\appendix


\nocite{scheme,rrs,r2rs,r3rs,r4rs,r5rs,r6rs}


\bibliographystyle{plain} 

\nocite{artificial-beings--pitrat}
\nocite{implementation-of-a-reflective-system--pitrat}

\bibliography{thesis}

\begin{thebibliography}{10}

\bibitem{r2rs}
Harold Abelson, Norman Adams, David Bartley, Gary Brooks, William Clinger, Dan
  Friedman, Robert Halstead, Chris Hanson, Chris Haynes, Eugene Kohlbecker, Don
  Oxley, Kent Pitman, Jonathan Rees, Bill Rozas, Gerald~Jay Sussman, and
  Mitchell Wand.
\newblock The {R}evised {R}evised {R}eport on {S}cheme or {A}n {U}n{C}ommon
  {L}isp.
\newblock AI Memo 848, MIT, 1985.

\bibitem{wizard}
Harold Abelson, Gerald~Jay Sussman, and Julie Sussman.
\newblock {\em Structure and Interpretation of Computer Programs}.
\newblock MIT Press, second edition, 1996.

\bibitem{dragon}
Alfred~V. Aho, Monica~S. Lam, Ravi Sethi, and Jeffrey~D. Ullman.
\newblock {\em Compilers: principles, techniques, and tools}.
\newblock Pearson/Addison Wesley, Boston, MA, USA, second edition, 2007.

\bibitem{common-lisp}
{ANSI}.
\newblock {\em {American National Standard} for information technology:
  programming language --- {Common LISP}: {ANSI X3.226-1994}}.
\newblock American National Standards Institute, 1996.
\newblock Available in a hyperlinked version as \textit{The Common Lisp
  HyperSpec} at \url{http://www.lispworks.com/documentation/HyperSpec/Front}.

\bibitem{compiling-with-continuations}
Andrew~W. Appel.
\newblock {\em Compiling with Continuations}.
\newblock Cambridge Univ. Press, 1991.

\bibitem{modern-compiler-implementation-in-ml}
Andrew~W. Appel.
\newblock {\em Modern Compiler Implementation in {ML}}.
\newblock Cambridge University Press, 1998.

\bibitem{quasiquotation-in-lisp}
Alan Bawden.
\newblock Quasiquotation in {Lisp}.
\newblock In {\em Partial Evaluation and Program Manipulation}, pages 4--12,
  1999.

\bibitem{compiling-standard-ml-to-java-bytecodes}
Nick Benton, Andrew Kennedy, and George Russell.
\newblock Compiling {S}tandard {M}{L} to {J}ava bytecodes.
\newblock {\em SIGPLAN Not.}, 34:129--140, September 1998.

\bibitem{esterel}
Gérard Berry and Laurent Cosserat.
\newblock The \textit{Esterel} synchronous programming language and its
  mathematical semantics.
\newblock In S.~D. Brookes, A.~W. Roscoe, and G.~Winskel, editors, {\em Seminar
  on Concurrency}, volume 197 of {\em Lecture Notes in Computer Science}, pages
  389--448. Springer-Verlag, 1984.

\bibitem{functional-programming}
Richard~J. Bird.
\newblock {\em Introduction to Functional Programming using {Haskell}}.
\newblock Prentice-Hall Series in Computer Science. Prentice-Hall Europe,
  London, UK, second edition, 1998.

\bibitem{mmtk}
Stephen~M. Blackburn, Perry Cheng, and Kathryn~S. McKinley.
\newblock Myths and realities: the performance impact of garbage collection.
\newblock In {\em Proceedings of the International Conference on Measurements
  and Modeling of Computer Systems, SIGMETRICS'04, Performance Evaluation
  Review (PER)}, pages 25--36, New York, NY, USA, June 2004. ACM
  SIGMETRICS/IFIP.

\bibitem{fast-multiprocessor-boehm}
Hans-Juergen Boehm.
\newblock Fast multiprocessor memory allocation and garbage collection.
\newblock Technical Report HPL-2000-165, Hewlett Packard Laboratories,
  December~21 2000.
\newblock Available at
  \url{http://www.hpl.hp.com/techreports/2000/HPL-2000-165.html}.

\bibitem{how-signals-are-really-used--boehm}
Hans-Juergen Boehm.
\newblock ``{R}e: {U}nregistering the main thread''.
\newblock Thread on Boehm's GC public mailing list, September 2008.
\newblock
  \\\url{http://www.hpl.hp.com/hosted/linux/mail-archives/gc/}\\\url{2008-September/002334.html}.

\bibitem{mostly-parallel--boehm}
Hans-Juergen Boehm, Alan~J. Demers, and Scott Shenker.
\newblock Mostly parallel garbage collection.
\newblock In {\em PLDI}, pages 157--164, 1991.

\bibitem{conservative--boehm}
Hans-Juergen Boehm and Mark Weiser.
\newblock Garbage collection in an uncooperative environment.
\newblock {\em Software -- Practice and Experience}, September 1988.

\bibitem{types-are-calling-conventions}
Maximilian~C. Bolingbroke and Simon~L. Peyton~Jones.
\newblock Types are calling conventions.
\newblock In {\em Proceedings of the 2nd ACM SIGPLAN symposium on Haskell},
  Haskell '09, pages 1--12, New York, NY, USA, 2009. ACM.

\bibitem{sl}
Frédéric Boussinot and Robert de~Simone.
\newblock The {SL} synchronous language.
\newblock {\em IEEE Transactions on Software Engineering}, 22(4):256--266,
  1996.

\bibitem{basic-polymorphic-type-checking}
Luca Cardelli.
\newblock Basic polymorphic typechecking.
\newblock {\em The Science of Programming}, 8(2):147--172, 1987.

\bibitem{Chailloux:2000}
Emmanuel Chailloux, Pascal Manoury, and Bruno Pagano.
\newblock {\em {D}eveloping {A}pplications with {O}bjective {C}aml.}
\newblock O'Reilly, 2000.
\newblock Full text available at
  \url{http://caml.inria.fr/pub/docs/oreilly-book}.

\bibitem{r4rs}
William Clinger and Jonathan Rees.
\newblock Revised\textsuperscript{4} {R}eport on the {A}lgorithmic {L}anguage
  {S}cheme.
\newblock {\em ACM SIGPLAN Lisp Pointers}, 4(3):1--55, July\slash September
  1991.
\newblock Available at
  \url{ftp://ftp.cs.indiana.edu/pub/scheme-repository/doc/standards/r4rs.ps.gz}.

\bibitem{guile}
Ludovic Courtès, Andy Wingo, Neil Jerram, Jim Blandy, et~al.
\newblock {\em Guile 2.0.5 {R}eference {M}anual}, January 2012.
\newblock Full text available at
  \url{http://www.gnu.org/software/guile/manual}. Also available on paper from
  Network Theory, edited by Brian Gough:
  \url{http://www.network-theory.co.uk/guile}.

\bibitem{damas-milner}
Luis Damas and Robin Milner.
\newblock Principal type-schemes for functional programs.
\newblock In {\em Conference Record of the Ninth Annual ACM Symposium on
  Principles of Programming Languages}, pages 207--212, 1982.
\newblock Available at
  \url{http://citeseerx.ist.psu.edu/viewdoc/summary?doi=10.1.1.18.7528}.

\bibitem{memory--drepper}
Ulrich Drepper.
\newblock What every programmer should know about memory.
\newblock Technical report, RedHat, November 2007.

\bibitem{dybvig-phd-thesis}
R.~Kent Dybvig.
\newblock {\em Three Implementations Models for Scheme}.
\newblock PhD thesis, University of North Carolina, April 1987.
\newblock Full text available at
  \url{http://www.cs.indiana.edu/~dyb/pubs/3imp.pdf}.

\bibitem{syntax-case}
R.~Kent Dybvig.
\newblock Syntactic abstraction: The \texttt{syntax-case} expander.
\newblock In Andy Oram and Greg Wilson, editors, {\em Beautiful Code: Leading
  Programmers Explain How They Think}. O'Reilly and Associates, 2007.
\newblock Available at \url{www.cs.indiana.edu/~dyb/pubs/bc-syntax-case.pdf}.
  Included in \cite{beautiful-code}.

\bibitem{dont-stop-the-bibop}
R.~Kent Dybvig, David Eby, and Carl Bruggeman.
\newblock Don't stop the {BIBOP}: Flexible, and efficient storage management
  for dynamically-typed languages.
\newblock Technical Report TR 400, Indiana University, Computer Science
  Department, March 1994.
\newblock Available at
  \url{http://www.cs.indiana.edu/cgi-bin/techreports/TRNNN.cgi?trnum=TR400}.

\bibitem{delimited-continuations}
R.~Kent Dybvig, Simon~Peyton Jones, and Amr Sabry.
\newblock A monadic framework for delimited continuations.
\newblock Technical report, Indiana University, 2005.
\newblock Available at
  \url{http://citeseer.ist.psu.edu/viewdoc/summary?doi=10.1.1.68.9352}.

\bibitem{sgc}
Toshio Endo, Kenjiro Taura, and Akinori Yonezawa.
\newblock A scalable mark-sweep garbage collector on large-scale shared-memory
  machines.
\newblock In {\em High Performance Computing and Networking (SC’97)}, 1997.

\bibitem{forth200x}
Anton Ertl et~al.
\newblock {Forth} 200x web page, 2012.
\newblock An underway effort to update the 1994 Forth standard.
  \url{http://www.forth200x.org/forth200x.html}.

\bibitem{forth94}
{ANS} {F}orth~{T}echnical {C}ommittee.
\newblock {\em {ANS} {F}orth {S}tandard {--} document {X3.215-1994}}.
\newblock American National Standards Institute, 1994.
\newblock Available at \url{www.openfirmware.info/data/docs/dpans94.pdf}. A new
  process is now underway to update the 1994 Standard \cite{forth200x}.

\bibitem{gpl}
{Free Software Foundation}.
\newblock {GNU General Public License}.
\newblock Web page, 2007.

\bibitem{lisp-1-vs-lisp-2}
Richard~P. Gabriel and Kent~M. Pitman.
\newblock Endpaper: Technical issues of separation in function cells and value
  cells.
\newblock {\em Lisp and Symbolic Computation}, 1(1):81--101, June 1988.
\newblock Available at
  \url{http://www.nhplace.com/kent/Papers/Technical-Issues.html}.

\bibitem{final-shift-for-call/cc}
Martin Gasbichler and Michael Sperber.
\newblock Final shift for {\tt call/cc}: direct implementation of shift and
  reset.
\newblock {\em ACM SIG{\-}PLAN Notices}, 37(9):271--282, September 2002.
\newblock Available at
  \url{http://citeseer.ist.psu.edu/viewdoc/summary?doi=10.1.1.11.3425}.

\bibitem{greenarrays}
{GreenArrays}.
\newblock Green{A}rrays web page, 2010-2012.
\newblock Headed by the father of Forth Chuck Moore \cite{chuck}, GreenArrays
  sells parallel chips for embedded applications implementing a custom Forth
  dialect \textit{in hardware}. Moore designs Forth processors, from transistor
  layout up, using CAD systems written by himself in Forth.
  \url{http://www.greenarraychips.com},
  \url{http://www.colorforth.com/vlsi.html}.

\bibitem{representing-type-information-in-dynamically-typed-languages--gudeman}
David Gudeman.
\newblock Representing type information in dynamically-typed languages.
\newblock Technical Report TR93-27, University of Arizona, Department of
  Computer Science, Tucson, AZ, 1993.
\newblock Available at
  \url{ftp://ftp.cs.indiana.edu/pub/scheme-repository/doc/pubs/typeinfo.ps.gz}.

\bibitem{hints-on-programming-language-design--hoare}
C.~A.~R. Hoare.
\newblock Hints on programming language design.
\newblock Technical Report AIM 224, Stanford AI Lab., December 1973.
\newblock From \SECTION3.3: \begin{small} \begin{quotation} {\textit{(2)}
  \CODE{precompile}}. This is a directive which can be given to the compiler
  after submitting any initial segment of a large program. It causes the
  compiler to make a complete dump of its workspace including dictionary and
  object code, [...] \\ \INVISIBLE{......}\textit{(3)} \CODE{dump}. This is an
  instruction which can be called by the user program during execution, and
  causes a complete binary dump of its code and workspace into a named user
  file. The dump can be restored and restarted at the instruction following the
  dump by an instruction to the operating system. \end{quotation} \end{small}
  This discusses a feature not unlike our \textit{unexec} facility, brought
  forward in a very different context with the purpose of optimizing
  single-pass compilers.

\bibitem{c99}
{ISO}.
\newblock The {A}{N}{S}{I} {C} {S}tandard ({C}99).
\newblock Technical Report WG14 N1124, ISO/IEC, 1999.

\bibitem{c++0x}
ISO.
\newblock {S}tandard for {P}rogramming {L}anguage {C}++.
\newblock Technical Report {ISO}/{IEC} JTC1/SC22/WG21, ISO, 2011.
\newblock Previously known as ``{C++0x}''. A recent draft close to the final
  version (N3337, January 2012) is available at
  \url{http://www.open-std.org/jtc1/sc22/wg21/docs/papers/2012/n3337.pdf}.

\bibitem{partial-evaluation}
N.D. Jones, C.K. Gomard, and P.~Sestoft.
\newblock {\em Partial Evaluation and Automatic Program Generation}.
\newblock Prentice-Hall International Series in Computer Science. Prentice
  Hall, 1993.
\newblock Full text available at
  \url{http://www.itu.dk/people/sestoft/pebook/}.

\bibitem{haskell98}
S.~Peyton Jones, editor.
\newblock {\em Haskell 98 Language and Libraries, the Revised Report}.
\newblock Cambridge Univ. Press, April 2003.

\bibitem{r5rs}
Richard Kelsey, William Clinger, and Jonathan Rees.
\newblock Revised\textsuperscript{5} {R}eport on the {A}lgorithmic {L}anguage
  {S}cheme.
\newblock {\em SIGPLAN Notices}, 33(9):26--76, 1998.
\newblock Available at
  \url{http://citeseerx.ist.psu.edu/viewdoc/summary?doi=10.1.1.37.2271}.

\bibitem{hygienic-macros--kohlbecker}
Eugene Kohlbecker, Daniel~P. Friedman, Matthias Felleisen, and Bruce Duba.
\newblock Hygienic macro expansion.
\newblock In Richard~P. Gabriel, editor, {\em Proceedings of the {ACM}
  Conference on {LISP} and Functional Programming}, pages 151--181, Cambridge,
  MA, August 1986. ACM Press.

\bibitem{stack-computers--koopman}
Philip~J. {Koopman, Jr.}
\newblock {\em Stack Computers}.
\newblock Ellis Horwood Limited, 1989.
\newblock Full text available at
  \url{http://www.ece.cmu.edu/~koopman/stack_computers/index.html}.

\bibitem{orbit--kranz-phd-thesis}
David~A. Kranz.
\newblock {\em {ORBIT}: An Optimizing Compiler For {S}cheme}.
\newblock PhD thesis, Yale University, New Haven, Connecticut, February 1988.

\bibitem{card-based-bibop}
E.~Ulrich Kriegel.
\newblock A conservative garbage collector for an eulisp to {ASM}/{C} compiler.
\newblock {OOPSLA} 1993 Workshop on Memory Management and Garbage Collection,
  September 1993.

\bibitem{transforms--leroy}
Xavier Leroy.
\newblock {F}unctional programming languages -- {P}art {I}{I}{I}: program
  transformations, 2007.
\newblock A set of slides from a functional programming class, covering several
  program transforms in a very clear and accessible style.
  \url{http://gallium.inria.fr/~xleroy/mpri/progfunc/transformations.2up.pdf}.

\bibitem{emacs-lisp}
Bil Lewis, Dan LaLiberte, Richard Stallman, and {the GNU Manual Group}.
\newblock {\em {GNU Emacs Lisp} Reference Manual, for {Emacs} Version 23.3}.
\newblock {Free Software Foundation, Inc.}, Boston, Massachusetts, 3.0 edition,
  2011.
\newblock Available at \url{http://www.gnu.org/software/emacs/manual}.

\bibitem{gnu-c-library}
Sandra Loosemore, Richard~M. Stallman, Roland McGrath, Andrew Oram, and Ulrich
  Drepper.
\newblock {\em The {GNU C Library} Reference Manual}.
\newblock GNU Press, 2006.

\bibitem{unexec-in-smlnj}
Dave MacQueen.
\newblock \texttt{heap2exec}, 2007.
\newblock \url{http://www.smlnj.org/doc/heap2exec/index.html}.

\bibitem{functional-programming-using-caml-light--mauny}
Michel Mauny.
\newblock {\em Functional programming using {Caml} {Light} (version 0.7)}.
\newblock INRIA, 1995.
\newblock Full text available at
  \url{http://www.mauny.net/data/papers/mauny-1995b.pdf}.

\bibitem{lisp-mccarthy}
John McCarthy.
\newblock {R}ecursive {F}unctions of {S}ymbolic {E}xpressions and {T}heir
  {C}omputation by {M}achine, {P}art {I}.
\newblock {\em Communications of the ACM}, 3(4):184--195, 1960.
\newblock The famous paper introducing Lisp. McCarthy's 1995 revision suggests
  looking at
  \cite{the-influence-of-the-designer-on-the-design-mccarthy-and-lisp--stoyan}
  for \CODE{eval}, including the version which was actually implemented.
  Available at \url{http://www-formal.stanford.edu/jmc/recursive.pdf}.

\bibitem{t-diagrams}
W.M. McKeeman, J.J. Horning, and D.B. Wortman.
\newblock {\em A compiler generator}.
\newblock Prentice-Hall series in automatic computation. Prentice-Hall, 1970.

\bibitem{pi-calculus--original--first-part--milner}
R.~Milner, J.~Parrow, and D.~Walker.
\newblock A calculus of mobile processes, part {I}.
\newblock {\em Information and Computation}, 100(1):1--40, 1992.

\bibitem{pi-calculus--original--second-part--milner}
R.~Milner, J.~Parrow, and D.~Walker.
\newblock A calculus of mobile processes, part {II}.
\newblock {\em Information and Computation}, 100(1):41--77, 1992.

\bibitem{a-theory-of-type-polymorphism-in-programming--milner}
Robin Milner.
\newblock A theory of type polymorphism in programming.
\newblock {\em Journal of Computer and System Sciences}, 17:348--375, 1978.
\newblock The famous paper introducing the $\mathscr{W}$ algorithm.

\bibitem{ccs--milner}
Robin Milner.
\newblock {\em A calculus of communicating systems}, volume~92 of {\em Lecture
  notes in computer science}.
\newblock Springer-Verlag, 1980.

\bibitem{commentary-on-sml}
Robin Milner and Mads Tofte.
\newblock {\em Commentary on {S}tandard {M}{L}}.
\newblock MIT Press, Cambridge, MA, USA, 1991.
\newblock This is a commentary of \cite{sml90}, and Tofte explicitly wrote on
  his home page
  (\url{http://www.itu.dk/people/tofte/publ/1990sml/1990sml.html}) that it does
  not apply to \cite{sml97}. Available at
  \url{http://www.itu.dk/people/tofte/publ/1990sml/frontcommentary.pdf} and
  \url{http://www.itu.dk/~tofte/publ/1990sml/1991commentaryBody.pdf}.

\bibitem{sml90}
Robin Milner, Mads Tofte, and Robert Harper.
\newblock {\em The Definition of Standard {ML}}.
\newblock MIT Press, August 1990.
\newblock Available at
  \url{http://www.itu.dk/people/tofte/publ/1990sml/front1990sml.pdf} and
  \url{http://www.itu.dk/people/tofte/publ/1990sml/1990sml.pdf}.

\bibitem{sml97}
Robin Milner, Mads Tofte, Robert Harper, and David MacQueen.
\newblock {\em The Definition of {S}tandard {ML}, {\rm Revised edition}}.
\newblock MIT Press, 1997.

\bibitem{chuck}
Charles~E. Moore.
\newblock \textit{colorforth.com}, 2012.
\newblock Chuck Moore's home page. \url{http://colorforth.com}.

\bibitem{funarg}
J.~Moses.
\newblock The function of {FUNCTION} in {LISP}, or, why the {FUNARG} problem
  should be called the environment problem.
\newblock Report MAC-M-428 and A. I. MEMO 199, Massachusetts Institute of
  Technology, A.I. Lab., Cambridge, Massachusetts, 1970.

\bibitem{beautiful-code}
Andy Oram and Greg Wilson, editors.
\newblock {\em Beautiful Code: Leading Programmers Explain How They Think}.
\newblock O'Reilly, 2007.
\newblock This includes \cite{syntax-case}.

\bibitem{type-theory--pierce}
Benjamin~C. Pierce.
\newblock {\em Types and Programming Languages}.
\newblock The {MIT} Press, Cambridge, Massachusetts, 2002.

\bibitem{against-fexprs--pitman}
Kent~M. Pitman.
\newblock Special forms in {LISP}.
\newblock In {\em LISP Conference}, pages 179--187, 1980.

\bibitem{implementation-of-a-reflective-system--pitrat}
Jacques Pitrat.
\newblock Implementation of a reflective system.
\newblock {\em Future Gener. Comput. Syst.}, 12(2-3):235--242, 1996.

\bibitem{artificial-beings--pitrat}
Jacques Pitrat.
\newblock {\em Artificial Beings {(the conscience of a conscious machine)}}.
\newblock Wiley/ISTE, march 2009.

\bibitem{lucid-synchrone}
Marc Pouzet.
\newblock Lucid synchrone - version 2.0: Tutorial and reference manual,
  October~22 2001.

\bibitem{a-library-of-high-level-control-operators--using-delimited-continuations}
Christian Queinnec.
\newblock A library of high-level control operators.
\newblock {\em Lisp Pointers, ACM SIGPLAN Special Interest Publ. on Lisp},
  6(4):11--26, October 1993.
\newblock Available at
  \url{http://citeseer.ist.psu.edu/viewdoc/summary?doi=10.1.1.29.4790}.

\bibitem{rabin-scott-construction}
M.~Rabin and D.~Scott.
\newblock Finite automata and their decision problems.
\newblock {\em IBM Journal of Research and Development}, 3(2):114--125, 1959.
\newblock This presents what is now known as the Rabin-Scott Construction.

\bibitem{r3rs}
Jonathan~A. Rees and William Clinger.
\newblock Revised\textsuperscript{3} {R}eport on the {A}lgorithmic {L}anguage
  {S}cheme.
\newblock {\em ACM Sigplan Notices}, 21(12), December 1986.
\newblock Available at
  \url{ftp://ftp.cs.indiana.edu/pub/scheme-repository/doc/standards/r3rs.ps.gz}.

\bibitem{RemyVouillon98}
Didier Remy and Jérôme Vouillon.
\newblock Objective {ML}: {A}n effective object-oriented extension to {ML}.
\newblock {\em Theory and Practice of Object Systems}, 4(1):27--50, 1998.

\bibitem{defunctionalization}
John~C. Reynolds.
\newblock Definitional interpreters for higher-order programming languages.
\newblock In {\em Proceedings of the ACM annual conference - Volume 2}, ACM
  '72, pages 717--740, New York, NY, USA, 1972. ACM.

\bibitem{sbcl}
Christophe Rhodes.
\newblock {SBCL}: {A} sanely-bootstrappable {C}ommon {L}isp.
\newblock In Robert Hirschfeld and Kim Rose, editors, {\em Self-Sustaining
  Systems, First Workshop, {S3} 2008, Potsdam, Germany, May 15-16, 2008,
  Revised Selected Papers}, volume 5146 of {\em Lecture Notes in Computer
  Science}, pages 74--86. Springer, 2008.

\bibitem{address-space-randomization}
Hovav Shacham, Eu~jin Goh, Nagendra Modadugu, Ben Pfaff, and Dan Boneh.
\newblock On the effectiveness of address-space randomization.
\newblock In {\em In CCS ’04: Proceedings of the 11th ACM Conference on
  Computer and Communications Security}, pages 298--307. ACM Press, 2004.
\newblock Available at
  \url{http://citeseerx.ist.psu.edu/viewdoc/summary?doi=10.1.1.118.4638}.

\bibitem{representation-analysis}
S.~T. Shebs and R.~R. Kessler.
\newblock Automatic design and implementation of language data types.
\newblock {\em SIGPLAN Not.}, 22(7):26--37, 1987.

\bibitem{history-of-t}
Olin Shivers.
\newblock History of {T}, 2001.
\newblock Available from \url{http://www.paulgraham.com/thist.html}.

\bibitem{prolific}
Yefim Shuf, Manish Gupta, Rajesh Bordawekar, and Jaswinder~Pal Singh.
\newblock Exploiting prolific types for memory management and optimizations.
\newblock {\em ACM SIG{\-}PLAN Notices}, 37(1):295--306, January 2002.

\bibitem{fexprs--shutt}
John~N. Shutt.
\newblock {\em Fexprs as the basis of {Lisp} function application or \$vau: the
  ultimate abstraction}.
\newblock PhD thesis, Worcester Polytechnic Institute, September 2010.
\newblock Available at
  \url{http://www.wpi.edu/Pubs/ETD/Available/etd-090110-124904/}.

\bibitem{r6rs}
Michael Sperber, R.~Kent Dybvig, Matthew Flatt, Anton van {S}traaten, Robby
  Findler, and Jacob Matthews.
\newblock Revised\textsuperscript{6} {R}eport on the {A}lgorithmic {L}anguage
  {S}cheme, September 2007.
\newblock Available at \url{http://www.r6rs.org}.

\bibitem{scheme-SRFI}
The {SRFI Editors}.
\newblock {S}cheme {R}equests {F}or {I}mplementation.
\newblock SRFIs, archived discussions and process documents are available at
  \url{http://srfi.schemers.org/}, 1998-2010.

\bibitem{emacs}
Richard Stallman and the GNU Emacs~contributors.
\newblock {\em {GNU Emacs} {M}anual}.
\newblock {Free Software Foundation, Inc.}, Boston, Massachusetts, {S}ixteenth
  (updated for {Emacs} version 23.3) edition, 2011.
\newblock Available at \url{http://www.gnu.org/software/emacs/manual}.

\bibitem{bibop-in-maclisp}
Guy~Lewis {Steele, Jr.}
\newblock Data representations in {PDP}-10 {MACLISP}.
\newblock Report A. I. MEMO 420, Massachusetts Institute of Technology, A.I.
  Lab., Cambridge, Massachusetts, 1977.
\newblock Available at \url{http://dspace.mit.edu/handle/1721.1/6278}.

\bibitem{rabbit--steele}
Guy~Lewis {Steele, Jr.}
\newblock Rabbit: {A} compiler for scheme.
\newblock AI Technical Report 474, MIT Artificial Intelligence Laboratory, May
  1978.

\bibitem{cltl2}
Guy~Lewis {Steele, Jr.}
\newblock {\em Common Lisp: the Language}.
\newblock Digital Press, Bedford, Massacusetts, second edition, 1990.

\bibitem{growing-a-language}
Guy~Lewis {Steele, Jr.}
\newblock Growing a language.
\newblock {\em Higher-Order and Symbolic Computation}, 12(3):221--236, October
  1999.

\bibitem{a-growable-language--abstract}
Guy~Lewis {Steele, Jr.}
\newblock A growable language.
\newblock In {\em OOPSLA '06: Companion to the 21st ACM SIGPLAN symposium on
  Object-oriented programming systems, languages, and applications}, pages
  505--505, New York, NY, USA, 2006. ACM.
\newblock Slides are available at
  \url{http://labs.oracle.com/projects/plrg/Publications/OOPSLA-Growable-Language-2006public.pdf}.

\bibitem{rrs}
Guy~Lewis {Steele, Jr.} and Gerald~Jay Sussman.
\newblock The revised report on {S}{C}{H}{E}{M}{E}, a dialect of {L}{I}{S}{P}.
\newblock AI Memo 452, MIT, 1978.
\newblock Available at
  \url{ftp://publications.ai.mit.edu/ai-publications/pdf/AIM-452.pdf}.

\bibitem{the-influence-of-the-designer-on-the-design-mccarthy-and-lisp--stoyan}
Herbert Stoyan.
\newblock {\em The influence of the designer on the design--J.~McCarthy and
  LISP}, pages 409--426.
\newblock Academic Press Professional, Inc., San Diego, CA, USA, 1991.
\newblock A detailed analysis on early Lisp memos by McCarthy; McCarthy cited
  this article in his 1995 revision of \cite{lisp-mccarthy}, stating that the
  versions of \CODE{eval} shown by Stoyan included the one which was actually
  implemented. Available at
  \url{http://www8.informatik.uni-erlangen.de/html/lisp/mcc91.html}.

\bibitem{scheme}
Gerald~Jay Sussman and Guy~Lewis {Steele, Jr.}
\newblock {S}{C}{H}{E}{M}{E}: An interpreter for extended lambda calculus.
\newblock Technical Report AI Memo No. 349, {Massachusetts Institute of
  Technology}, Cambridge, UK, December 1975.
\newblock Available at
  \url{ftp://publications.ai.mit.edu/ai-publications/pdf/AIM-349.pdf}.

\bibitem{the-free-lunch-is-over}
Herb Sutter.
\newblock The free lunch is over: a fundamental turn toward toward concurrency.
\newblock {\em Dr. Dobb's Journal}, March 2005.

\bibitem{thompson-construction}
Ken Thompson.
\newblock Regular expression search algorithm.
\newblock {\em Journal of the ACM}, 11(6):419--422, June 1968.

\bibitem{programming-paradigm-for-dummies}
Peter Van~Roy.
\newblock Programming paradigms for dummies: What every programmer should know.
\newblock In {\em New Computational Paradigms for Computer Music}, pages 9--47.
  Delatour, 2009.

\bibitem{oz}
Peter {Van Roy} and Seif Haridi.
\newblock {\em Concepts, Techniques, and Models of Computer Programming}.
\newblock The MIT Press, Cambridge, Mass., 2004.

\bibitem{mlton}
Stephen~T. Weeks.
\newblock {ML}ton user guide, September~07 2000.

\bibitem{gc-interface-in-evm}
Derek White and Alex Garthwaite.
\newblock The {GC} interface in the {EVM}.
\newblock Technical Report SML TR--98--67, Sun Microsystems Laboratories,
  December 1998.

\bibitem{personalities}
Cheryl~A. Wiecek.
\newblock A model and prototype of {VMS} using the {M}ach 3.0 kernel.
\newblock In {\em Proceedings of the Workshop on Micro-kernels and Other Kernel
  Architectures}, pages 187--204, Seattle, WA, USA, April 1992. USENIX
  Association.

\bibitem{my-hash-optimization}
F.~A. Williams.
\newblock Handling identifiers as internal symbols in language processors.
\newblock {\em Communications of the ACM}, 2(6), June 1959.

\bibitem{survey--wilson}
Paul~R. Wilson.
\newblock Uniprocessor garbage collection techniques ({L}ong {V}ersion).
\newblock Submitted to ACM Computing Surveys, 1994.

\bibitem{semantics--winskel}
Glynn Winskel.
\newblock {\em The Formal Semantics of Programming Languages}.
\newblock MIT Press, Cambridge, Massachusetts, 1993.

\end{thebibliography}



\WHENCOMPLETE{

{\cleartoleftpage}
{\thispagestyle{empty} 
\BETWEENTINYANDSMALL

\hrule
\paragraph*{Titre en français}
\FRENCHTITLE
\\\hrule

\paragraph*{Résumé en français}

\index{résumé}

Le \TDEF{réductionnisme} est une technique réaliste de conception et
implantation de vrais langages de programmation, et conduit à des
solutions plus faciles à étendre, expérimenter et analyser.

Nous spécifions formellement et implantons un langage de programmation
extensible, basé sur un \textit{langage-noyau minimaliste impératif du
  premier ordre}, équipé de \textit{mécanismes d'abstraction} forts et
avec des possibilités de \textit{réflexion} et
\textit{auto-modification}.  Le langage peut être étendu à des niveaux
très hauts~: en utilisant des \textit{macros} à la Lisp et des
\textit{transformations de code à code} réécrivant les expressions
étendues en expressions-noyau, nous définissons les clôtures et les
continuations de première classe au dessus du noyau.

Les programmes qui ne s'auto-modifient pas peuvent être analysés
formellement, grâce à la simplicité de la sémantique.  Nous
développons formellement un exemple d'\textit{analyse statique} et
nous prouvons une \textit{propriété de soundness} par apport à la
sémantique dynamique.

Nous développons un \textit{ramasse-miettes parallèle} qui convient aux machines
multi-cœurs, pour permettre l'exécution efficace de programmes parallèles.

\\\hrule

\paragraph*{Titre en anglais}
\TITLE\ --- \SUBTITLE
\\\hrule

\paragraph*{Résumé en anglais}

\index{abstract}

\textit{Reductionism} is a viable strategy for designing and implementing practical
programming languages, leading to solutions which are easier to extend,
experiment with and formally analyze.

We formally specify and implement an extensible programming language, based on
a \textit{minimalistic first-order imperative core language} plus strong
\textit{abstraction mechanisms}, \textit{reflection} and
\textit{self-modification} features.  The language can be extended to very high
levels: by using Lisp-style \textit{macros} and code-to-code
\textit{transforms} which automatically rewrite high-level expressions into
core forms, we define closures and first-class continuations on top of the core.

Non-self-modifying programs can be analyzed and formally reasoned upon, thanks
to the language simple semantics.  We formally develop a \textit{static
  analysis} and prove a \textit{soundness property} with respect to the
dynamic semantics.

We develop a \textit{parallel garbage collector} suitable to multi-core
machines to permit efficient execution of parallel programs.

\\\hrule\hrule

\paragraph*{Discipline}
Informatique
\\\hrule

\paragraph*{Mots-clés}
programmation, langage, extensibilité, macro, transformation, reflection, \textit{bootstrap}, interprétation, compilation, parallélisme, concurrence, ramasse-miettes

\\\hrule

\paragraph*{Intitulé et adresse du laboratoire}
\INVISIBLE{.}
\\
LIPN, UMR 7030 -- CNRS, Institut Galilée, Université Paris 13\\
99, avenue J.-B. Clément\\
93430 Villetaneuse\\
France
}\\
\hrule
}

\end{document}